\newcommand{\LoadPackagesNow}{}
\newcommand{\LoadPackageLater}[2][]{%
   \g@addto@macro{\LoadPackagesNow}{%
      \usepackage[#1]{#2}%
   }%
}
\g@addto@macro\bfseries{\boldmath}
\numberwithin{equation}{section}
\definecolor{pdfurlcolor}{rgb}{0,0,0.6}
\definecolor{pdffilecolor}{rgb}{0.7,0,0}
\definecolor{pdflinkcolor}{rgb}{0,0,0.6}
\definecolor{pdfcitecolor}{rgb}{0,0,0.6}
\newcommand{\ifargdef}[3][{}]{\ifthenelse{\equal{#2}{}}{#1}{#3}}
\newlength{\hangind}
\newcommand{\myhangindent}[1]{\settowidth{\hangind}{\widthof{#1}}\hangindent=\the\hangind}
\newenvironment{model}[2][]{%
\refstepcounter{theorem}%
\ifstrempty{#1}%
{\mdfsetup{%
frametitle={%
\tikz[baseline=(current bounding box.east),outer sep=0pt]
\node[anchor=east,rectangle,fill=gray!30]
{\strut Model~\arabic{section}.\arabic{theorem}};}}
}%
{\mdfsetup{%
frametitle={%
\tikz[baseline=(current bounding box.east),outer sep=0pt]
\node[anchor=east,rectangle,fill=gray!30]
{\strut Model~\arabic{section}.\arabic{theorem}:~#1};}}%
}%
\mdfsetup{innertopmargin=10pt,linecolor=gray!20,roundcorner=2pt,
linewidth=2pt,topline=true,backgroundcolor=gray!10,
frametitleaboveskip=\dimexpr-\ht\strutbox\relax
}
\begin{mdframed}[skipabove=5pt,skipbelow=2ex,nobreak=true]\relax%
\label{#2}}{\end{mdframed}}
\newenvironment{highlight}{\begin{quote}\itshape}{\end{quote}}
\newenvironment{expstep}
{\begin{itemize}[label={$\blacktriangleright$},leftmargin=1.5em]}
{\end{itemize}}
\newcommand{\expkwd}[1]{\noindent\textbf{#1}}
\newtheoremstyle{claim}
	{\topsep}{\topsep}%
	{\itshape}
	{}
	{}
	{}
	{.5em}
	{{\bfseries\boldmath\thmname{#1} \thmnumber{#2}} \thmnote{(#3)}}
\newtheoremstyle{definition}
	{\topsep}{\topsep}%
	{}
	{}
	{}
	{}
	{.5em}
	{\textbf{\thmname{#1} \thmnumber{#2}} \thmnote{(#3)}}
\newtheoremstyle{algorithm}
	{\topsep}{\topsep}%
	{}
	{}
	{\bfseries\boldmath}
	{}
	{\newline}
	{\thmname{#1} \thmnumber{#2} \thmnote{(#3)}}
\declaretheorem[style=claim,numberwithin=section]{theorem}
\declaretheorem[style=claim,sibling=theorem]{lemma}
\declaretheorem[style=claim,sibling=theorem]{corollary}
\declaretheorem[style=claim,sibling=theorem]{proposition}
\declaretheorem[style=definition,sibling=theorem]{definition}
\declaretheorem[style=definition,sibling=theorem,qed=$\Diamond$]{remark}
\declaretheorem[style=definition,sibling=theorem,qed=$\Diamond$]{example}
\declaretheorem[style=algorithm,sibling=theorem,%
	preheadhook={\begin{mdframed}[style=emphframe,nobreak=true] \setcounter{mpfootnote}{\value{footnote}}},%
	postfoothook=\setcounter{footnote}{\value{mpfootnote}}\end{mdframed}]{experiment}
\declaretheorem[style=algorithm,sibling=theorem,%
	preheadhook={\begin{mdframed}[style=emphframe] \setcounter{mpfootnote}{\value{footnote}}},%
	postfoothook=\setcounter{footnote}{\value{mpfootnote}}\end{mdframed}]{algorithm}
\newcommand{\opleft}[1]{\mathopen{}\left#1}
\newcommand{\opright}[1]{\right#1\mathclose{}}
\newcommandx{\braces}[4]{%
\ifstrequal{#3}{normal}{#1#4#2}{%
\ifstrequal{#3}{auto}{\left#1#4\right#2}{%
\ifstrequal{#3}{opauto}{\opleft#1#4\opright#2}{%
#3#1#4#3#2}}}%
}
\newcommandx{\opannot}[3][3=\downarrow]{\stackrel{\mathclap{\substack{#1 \\ #3 \vspace{2pt}}}}{#2}}
\newcommandx{\lineannot}[3][3=\rightarrow]{\mathllap{\boxed{\text{\textsmaller{#1}}} #3} #2}
\newcommandx{\multilineannot}[4][4=\rightarrow]{\mathllap{\boxed{\parbox{#1}{\RaggedRight\textsmaller{#2}}} #4} #3}
\newcommand{\N}{\mathbb{N}} 
\newcommand{\R}{\mathbb{R}} 
\newcommand{\E}{\mathbb{E}}
\newcommand{\Prob}{\mathbb{P}}
\newcommand{\suchthat}[1][normal]{\ifstrequal{#1}{normal}{\mid}{#1|}} 
\newcommand{\setcompl}[1]{#1^c} 
\newcommandx{\intvcl}[3][1=normal]{\braces{[}{]}{#1}{#2, #3}} 
\newcommandx{\intvop}[3][1=normal]{\braces{(}{)}{#1}{#2, #3}} 
\newcommandx{\intvclop}[3][1=normal]{\braces{[}{)}{#1}{#2, #3}} 
\newcommandx{\intvopcl}[3][1=normal]{\braces{(}{]}{#1}{#2, #3}} 
\DeclareMathOperator*{\argmin}{argmin} 
\DeclareMathOperator{\sign}{sign}
\newcommandx{\abs}[2][1=auto]{\braces{\lvert}{\rvert}{#1}{#2}} 
\newcommandx{\ceil}[2][1=normal]{\braces{\lceil}{\rceil}{#1}{#2}} 
\newcommandx{\floor}[2][1=normal]{\braces{\lfloor}{\rfloor}{#1}{#2}} 
\newcommandx{\round}[2][1=normal]{\braces{[}{]}{#1}{#2}} 
\newcommandx{\der}[1]{D^{#1}} 
\newcommandx{\gradient}{\nabla} 
\newcommandx{\partder}[4][1={},4={}]{\frac{\partial^{#4} #2}{\partial #3^{#4}}\ifargdef{#1}{\Big|_{#1}}} 
\newcommandx{\integ}[4][1={},2={}]{\int_{#1}^{#2} #3 \, #4} 
\newcommandx{\asympffaster}[2][1=normal]{o\braces{(}{)}{#1}{#2}} 
\newcommandx{\asympfaster}[2][1=normal]{O\braces{(}{)}{#1}{#2}} 
\newcommandx{\asympeq}[2][1=normal]{\Theta\braces{(}{)}{#1}{#2}} 
\newcommandx{\asympsslower}[2][1=normal]{\omega\braces{(}{)}{#1}{#2}} 
\newcommandx{\asympslower}[2][1=normal]{\Omega\braces{(}{)}{#1}{#2}} 
\newcommand{\convhull}[1]{\operatorname{conv}(#1)} 
\DeclareMathOperator{\Id}{\textbf{Id}} 
\newcommand{\matr}[1]{\begin{bmatrix} #1 \end{bmatrix}} 
\newcommandx{\norm}[2][1=auto]{\braces{\|}{\|}{#1}{#2}} 
\renewcommandx{\sp}[3][1=auto]{\braces{\langle}{\rangle}{#1}{#2, #3}} 
\newcommandx{\End}[2][2={}]{\mathcal{L}\opleft( #1 \ifargdef{#2}{, #2} \opright)} 
\newcommand{\orthcompl}[1]{{#1}^\perp} 
\DeclareMathOperator{\ran}{ran} 
\DeclareMathOperator{\spann}{\operatorname{span}} 
\renewcommand{\vec}[1]{\boldsymbol{#1}} 
\newcommandx{\measure}[2][1=normal]{\operatorname{vol}\braces{(}{)}{#1}{#2}} 
\DeclareMathOperator{\supp}{supp} 
\newcommandx{\Leb}[3][1={},3=normal]{L^{#2}\ifargdef{#1}{\braces{(}{)}{#3}{#1}}{}} 
\newcommandx{\Lebnorm}[4][1=normal,3={2},4={}]{\norm[#1]{#2}_{#3}} 
\renewcommandx{\l}[3][1={},3=normal]{\ell^{#2}\ifargdef{#1}{\braces{(}{)}{#3}{#1}}} 
\newcommandx{\lnorm}[4][1=normal,3={2},4={}]{\norm[#1]{#2}_{#3}} 
\newcommandx{\Smooth}[4][1={},3={},4=normal]{C_{#3}^{#2}\ifargdef{#1}{\braces{(}{)}{#4}{#1}}} 
\newcommandx{\Schwartz}[2][1={},2=normal]{\mathscr{S}\ifargdef{#1}{\braces{(}{)}{#2}{#1}}} 
\newcommandx{\Schwartzpoly}[2][1=normal]{\braces{\langle}{\rangle}{#1}{\abs[#1]{#2}} } 
\newcommandx{\Tempdistr}[2][1={},2=normal]{\mathscr{S}'\ifargdef{#1}{\braces{(}{)}{#2}{#1}}} 
\newcommandx{\distrinp}[3][1=normal]{\braces{\langle}{\rangle}{#1}{#2, #3}} 
\newcommandx{\ft}[3][1=default,2=auto]{
\ifstrequal{#1}{default}{\widehat{#3}}{
\ifstrequal{#1}{long}{{\braces{(}{)}{#2}{#3}}^{\wedge}}{}}} 
\newcommandx{\ift}[3][1=default,2=auto]{
\ifstrequal{#1}{default}{\check{#3}}{
\ifstrequal{#1}{long}{{\braces{(}{)}{#2}{#3}}^{\vee}}{}}} 
\newcommand{\I}{\vec{Id}}
\newcommand{\e}{\text{e}}
\newcommand{\sigNoisefree}{(\hyperref[eq:sig]{$\text{BP}_{\smash{\noiseparam=0}}^{\text{\smash{sig}}}$})}
\newcommand{\coefNoisefree}{(\hyperref[eq:coef]{$\text{BP}_{\smash{\noiseparam=0}}^{\text{\smash{coef}}}$})}
\newcommand{\x}{\vec{x}} 
\newcommand{\cc}{\vec{c}} 
\newcommand{\Xb}{\vec{X}} 
\newcommand{\eb}{\vec{e}} 
\newcommand{\xx}{\vec{\tilde{x}}} 
\newcommand{\gt}{\vec{x}_0} 
\newcommand{\gtn}{\vec{\tilde{x}_0}} 
\newcommand{\g}{\vec{g}} 
\newcommand{\f}{\vec{f}}
\newcommand{\ub}{\vec{u}} 
\newcommand{\vb}{\vec{v}} 
\newcommand{\sol}{\hat{\vec{x}}}
\newcommand{\ax}{\vec{\theta}}
\newcommand{\kk}{\vec{k}}
\newcommand{\y}{\vec{y}}
\newcommand{\noiseparam}{\eta}
\newcommand{\noise}{\vec{e}}
\newcommand{\solx}{\hat{\vec{x}}}
\renewcommand{\r}{\vec{r}}
\newcommand{\z}{\vec{z}}
\newcommand{\za}{\vec{z}^\ast}
\newcommand{\xa}{\vec{x}^\ast}
\newcommand{\zo}{\vec{z}^2}
\newcommand{\zt}{\vec{z}^1}
\newcommand{\zi}{\vec{z}^i}
\newcommand{\zz}{\vec{\bar{z}}}
\newcommand{\gtz}{\vec{z}_0}
\newcommand{\h}{\vec{h}}
\newcommand{\solz}{\hat{\vec{z}}}
\newcommand{\zl}{\vec{z}_{\ell^{\smash{1}}}}
\newcommand{\Supp}{\mathcal{S}}
\newcommand{\Suppc}{\mathcal{S}^c}
\newcommand{\vvec}{\vec{v}} 
\newcommand{\meas}{\vec{a}}
\newcommand{\Meas}{\vec{A}}
\newcommand{\MeasDict}{\vec{\Phi}}
\newcommand{\Dict}{\vec{D}}
\newcommand{\dict}{\vec{d}}
\newcommand{\TV}{\nabla}
\newcommand{\Dictw}{\vec{D}_{\texttt{Haar}}}
\newcommand{\Zset}{\hat{Z}}
\newcommand{\Zlset}{Z_{\ell^{\smash{1}}}}
\newcommand{\X}{\hat{X}}
\newcommandx{\cmw}[2][1={}]{w^{#1}_{\wedge}(#2)} 
\newcommand{\cone}[1]{\operatorname{cone}(#1)} 
\newcommand{\ds}[1]{\mathcal{D}(#1)} 
\newcommand{\dc}[1]{\mathcal{D}_{\wedge}(#1)} 
\newcommandx{\lmin}[3][1={}]{\lambda^{#1}_{\text{min}}\braces{(}{)}{auto}{#2; #3}}
\newcommandx{\lmax}[3][1={}]{\lambda^{#1}_{\text{max}}\braces{(}{)}{auto}{#2; #3}}
\newcommandx{\condi}[3][1={}]{\kappa^{#1}_{{#2, #3}}}
\newcommand{\Sn}[1]{\mathcal{S}^{\smash{#1}}}
\newcommandx{\Bn}[2][1={}]{\text{B}_{#1}^{\smash{#2}}}
\newcommand{\Onenorm}{\norm{\cdot}_1}
\newcommand{\proj}{\vec{P}} 
\newcommand{\gaussparam}{\gamma}
\newcommand{\constant}{c}
\newcommand{\one}{\mathbbm{1}}
\begin{document}

\pagestyle{scrheadings}

\begin{center}
{\bfseries\larger[2]{Sampling Rates for $\ell^1$-Synthesis}}
\end{center}

\vspace{1.5\baselineskip}
\begin{addmargin}[2em]{2em}
 
\noindent{\normalsize\bfseries{Maximilian März$^\S$, Claire Boyer$^*$, Jonas Kahn$^\dagger$, Pierre Weiss$^{\ddagger}$}}

{\smaller\noindent Affiliations: $^\S$Technische Universität Berlin; $^*$LPSM, Sorbonne Universit\'{e}, ENS Paris; $^\dagger$Universit\'{e} de Toulouse;  $^{\ddagger}$ITAV, CNRS, Universit\'{e} de Toulouse

\noindent E-Mail of corresponding author: $^\S$maerz@math.tu-berlin.de
}

\vspace{1\baselineskip}
{\smaller
\noindent\textbf{Abstract.}
This work investigates the problem of signal recovery from undersampled noisy sub-Gaussian measurements under the assumption of a synthesis-based sparsity model. Solving the $\ell^1$-synthesis basis pursuit allows for a simultaneous estimation of a coefficient representation as well as the sought-for signal. However, due to linear dependencies within redundant dictionary atoms it might be  impossible to identify a specific representation vector, although the actual signal is still successfully recovered. The present manuscript studies both estimation problems from a non-uniform, signal-dependent perspective. By utilizing recent results on the convex geometry of linear inverse problems, the sampling rates describing the phase transitions of each formulation are identified. In both cases, they are given by the conic Gaussian mean width of an $\ell^1$-descent cone that is linearly transformed by the dictionary. 
In general, this expression does not allow a simple calculation by following the polarity-based approach commonly found in the literature. Hence, two upper bounds involving the sparsity of coefficient representations are provided: The first one is based on a local condition number and the second one on a geometric analysis that makes use of the thinness of high-dimensional polyhedral cones with not too many generators. It is furthermore revealed that both recovery problems can differ dramatically with respect to robustness to measurement noise -- a fact that seems to have gone unnoticed in most of the related literature. All insights are carefully undermined by numerical simulations.   

\vspace{.5\baselineskip}
\noindent\textbf{Key words.}
Compressed sensing, inverse problems, sparse representations, redundant dictionaries, non-uniform recovery, Gaussian mean width, circumangle.  
}

\end{addmargin}
\newcommand{\shortauthor}{März, Boyer, Kahn \& Weiss: $\ell^1$-Synthesis}

\thispagestyle{plain}



\section{Introduction}
 
In the last two decades, the methodology of \emph{compressive sensing} promoted the use of sparsity based methods for many signal processing tasks. Following the seminal works of Cand\`{e}s, Donoho, Romberg and Tao  \cite{candes2004robust,candes2004near,donoho2006cs}, a vast amount of research has extended the understanding, how additional structure can be exploited for solving ill-posed inverse problems. The classical setup in this area considers a \emph{non-adaptive, linear measurement model,} which reads as follows: 

\begin{model}[Linear Noisy Measurements]{mod:noisy_meas}
 Let $\gt \in \R^n$ be a fixed vector, which is typically referred to as the \emph{signal}. Assume that we are given $m$ \emph{measurements} $\y \in \R^m$ of $\gt$ via the linear acquisition model
 \begin{equation}
  \y = \Meas \gt + \noise,
 \end{equation}
 where $\Meas \in \R^{m \times n}$ is the so-called \emph{measurement matrix} and $\noise \in \R^m$ models \emph{measurement noise} with $\norm{\noise}_2 \leq \noiseparam$ for some $\noiseparam \geq 0$. 
\end{model}

\enlargethispage{1\baselineskip}
The goal of compressive sensing is to solve this inverse problem by reconstructing an approximation of the signal $\gt$ from its indirect measurements $\y$. 
Remarkably, even if $m\ll n$, this task can be achieved by incorporating additional information during the reconstruction process. Most classical compressive sensing works directly assume that $\gt$ is \emph{$s$-sparse}, i.e., that at most $s \ll n$ entries of $\gt$ are nonzero or in symbols $\norm{\gt}_0 = \# \supp (\gt) \leq s$. However, this assumption is hardly satisfied in any real-world application. Nevertheless, many signals allow for sparse representations using specific transforms, such as Gabor dictionaries, wavelet systems or data-adaptive representations, which are inferred from a given set of training samples. Such a model is referred to as \emph{synthesis formulation}, since it assumes that there exists a matrix $\Dict \in \R^{n \times d}$ and a low-complexity representation $\gtz\in \R^d$ such that $\gt$ can be ``synthesized'' as
\begin{equation}
  \label{eq:sparse_rep}
 \gt = \Dict \cdot \gtz.
\end{equation}

Following the standard terminology of the field, the matrix $\Dict = [\dict_1,\dots,\dict_d]$ will be henceforth refered to as \emph{dictionary} and its columns as \emph{dictionary atoms}. It can be expected that the coefficient vector $\gtz$ is dominated by just a few large entries, provided that $\Dict$ allows to capture the signal's inherent structure reasonably well.  

The synthesis formulation of compressive sensing exploits such a representation model, for instance, by employing greedy-based reconstruction algorithms or by utilizing the sparsity-promoting effect of the $\ell^{\smash{1}}$-norm. In this work, we will consider the following convex program, which we refer to as \emph{synthesis basis pursuit for coefficient recovery}:
\begin{equation}
 \label{eq:coef}
 \tag{$\text{BP}_{\noiseparam}^{\text{\smash{coef}}}$}
 \Zset \coloneqq \argmin_{\z \in \R^d} \norm{\z}_1 \quad \mbox{ s.t. } \quad \norm{\y - \Meas  \Dict  \z}_2 \leq \noiseparam. 
\end{equation}%
Under suitable assumptions, one might hope that solutions $\solz$ of this minimization program approximate $\gtz$ reasonably well. Indeed, if $\Dict = \Id$, the formulation \eqref{eq:coef} turns into the classical basis pursuit. It allows to recover any $s$-sparse vector $\gtz$ with overwhelming probability, if $\Meas$ additionally follows a suitable random distribution and $m  \gtrsim s \cdot \log (2n/s)$~\cite{foucart2013cs}.

In many practical and theoretical situations, it turns out that using redundant dictionaries, i.e., choosing $d \gg n$, is beneficial. For instance, the stationary wavelet transform overcomes the lack of translation invariance and learned dictionaries typically infer a larger set of convolutional filters, which are adapted to a particular data distribution. If $\Dict$ does not form a basis, representations as in~\eqref{eq:sparse_rep} are not necessarily unique anymore. Hence, it is not to be expected that a specific representation can be identified by solving~\eqref{eq:coef}. However, in many situations of interest, the representation vector itself is irrelevant and a recovery of the actual signal $\gt$ is of primary interest. Thus, one rather cares about the \emph{synthesis basis pursuit for signal recovery}, which amounts to solving         
\begin{equation}
 \label{eq:sig}
 \tag{$\text{BP}_{\noiseparam}^{\text{\smash{sig}}}$}
 \X \coloneqq  \Dict \cdot \left( \argmin_{\z \in \R^d} \norm{\z}_1 \quad \mbox{ s.t. } \quad \norm{\y - \Meas  \Dict  \z}_2 \leq \noiseparam \right). 
\end{equation}%
In the noiseless case (i.e., when $\noise = \vec{0}$ and $\noiseparam = 0$), it might be the case that $\Zset \neq \left\{ \gtz \right\}$, but there is still hope that $\X = \Dict \cdot \Zset = \left\{ \gt \right\}$. In other words, although solving~\eqref{eq:coef} might fail in identifying a specific coefficient representation, it is still possible that the actual signal is successfully recovered by a subsequent synthesis with $\Dict$. 

\begin{figure}
	\centering
	\begin{subfigure}[t]{0.47\textwidth}
		\centering
		\includegraphics[width=\textwidth]{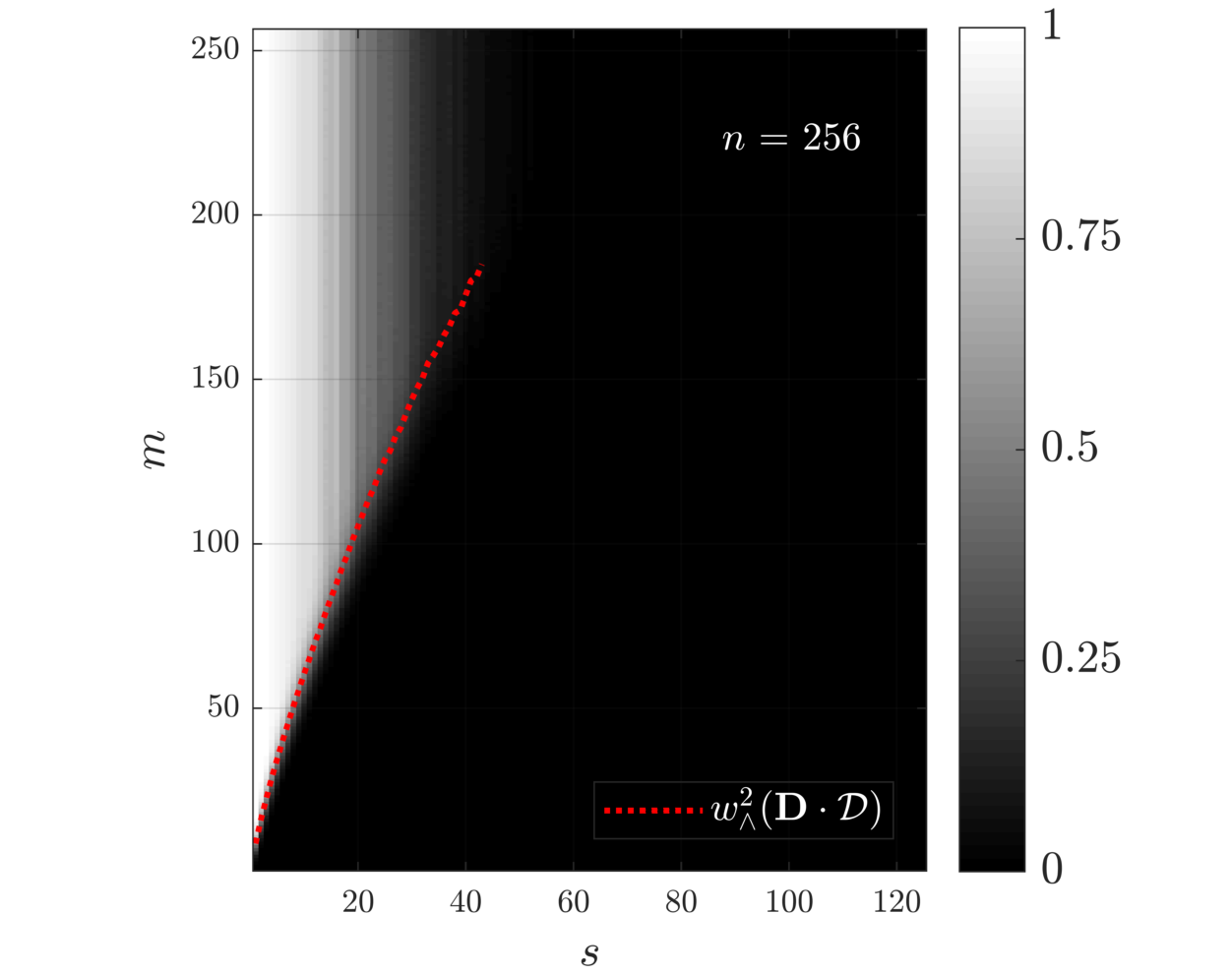}
		\caption{}
		\label{fig:sig:pt_2:1}
	\end{subfigure}%
	\qquad
	\begin{subfigure}[t]{0.47\textwidth}
		\centering
		\includegraphics[width=\textwidth]{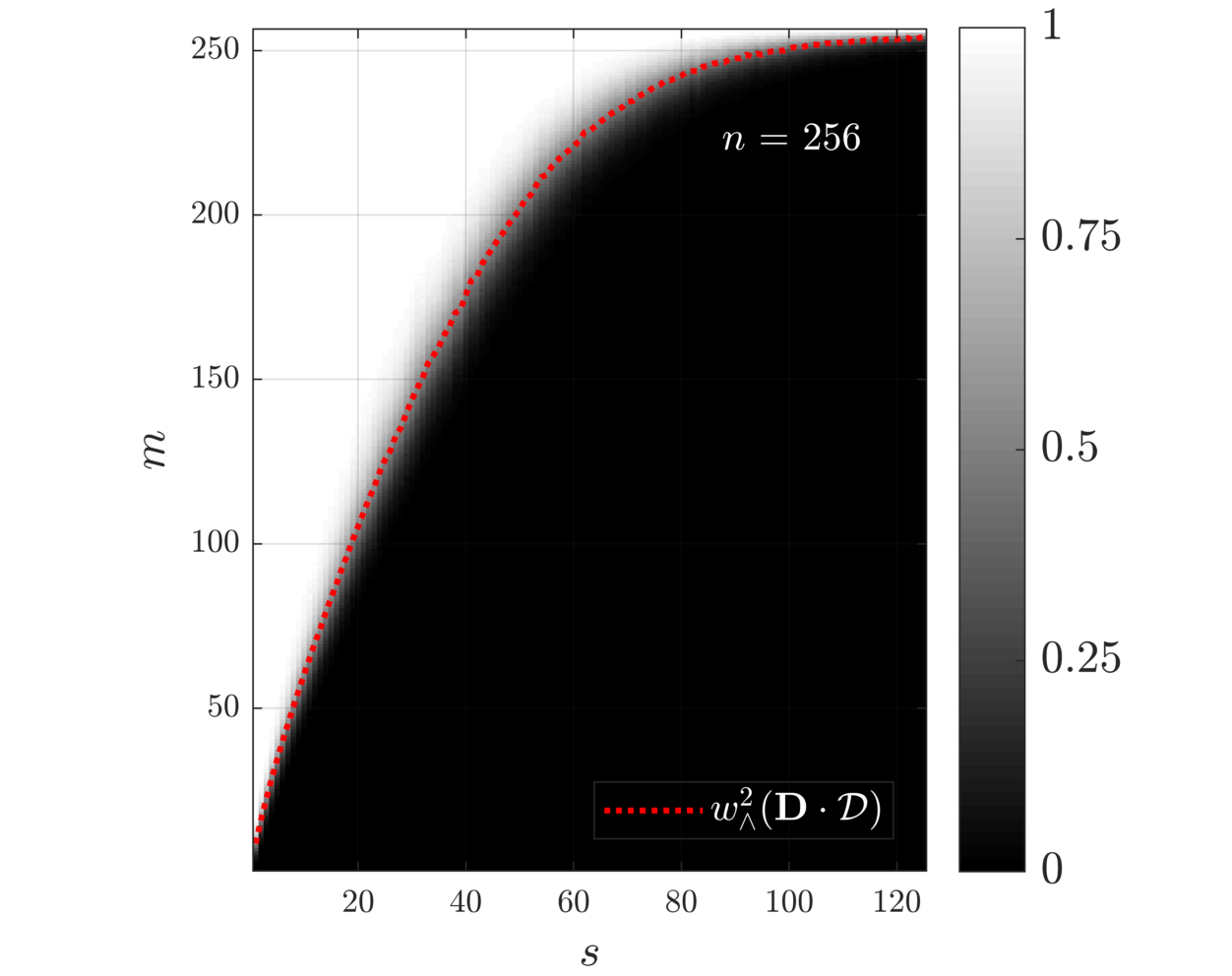}
		\caption{}
		\label{fig:sig:pt_2:2}
	\end{subfigure}%
	\caption{\textbf{Phase transitions of coefficient and signal recovery by $\ell^1$-synthesis.} Subfigure~\ref{fig:sig:pt_2:1} shows the empirical probability that atomic coefficient representations are successfully recovered via solving \coefNoisefree{}, whereas Subfigure~\ref{fig:sig:pt_2:2} shows the empirical probability for the associated signal reconstruction by \sigNoisefree{}. The underlying dictionary is a redundant Haar wavelet frame with three decomposition levels and the defining $s$-sparse coefficients are chosen at random; see Section~\ref{sec:2d_pt} for a precise documentation of the experiment. The brightness of each pixel reflects the observed probability of success, reaching from certain failure (black) to certain success (white). The dotted line shows our predictions for the location of the phase transitions, see Theorem~\ref{thm:coeff} and Theorem~\ref{thm:sig}, respectively. }
	\label{fig:sig:pt_2}
\end{figure}

\subsection{What This Paper Is About}

The goal of this work is to broaden the understanding of the conditions that guarantee \emph{coefficient} and \emph{signal recovery} by solving \eqref{eq:coef} and \eqref{eq:sig}, respectively. To that end, we believe that addressing the following, non-exhaustive list of questions will be of particular importance:
\begin{enumerate}[(Q1)]
 \item Under which circumstances does coefficient and signal recovery differ, i.e., when is it impossible to reconstruct a specific coefficient representation although the signal itself might still be identified?
  \item If possible, how many measurements are required to reconstruct a specific coefficient representation? Analogously, how many measurements are required to recover the associated signal?
 \item In case that coefficients and signals can both be identified, are there still differences between the two formulations, for instance with respect to robustness to measurement noise?
\end{enumerate}

Set out to find answers to these questions, we first restrict ourselves to the following sub-Gaussian measurement model, which will be considered in this work unless  stated o\-ther\-wise: 

\begin{model}[Sub-Gaussian Measurement Model]{mod:meas_mod} 
Let $\meas \in \R^n$ be an isotropic ($\E [\meas \meas^T] =\Id$), zero mean, sub-Gaussian\footnote{A random variable $a$ is \emph{sub-Gaussian} if $\norm{a}_{\psi_2} \coloneqq \sup_{q\geq 1} q^{-1/2} (\E [|a|^q])^{1/q} < \infty$, with $\norm{\cdot}_{\psi_2}$ being the \emph{sub-Gaussian norm of $a$.} For a random vector $\meas \in \R^n$ the sub-Gaussian norm is then given by $\norm{\meas}_{\psi_2} \coloneqq \sup_{\vec{v} \in \Sn{n-1}} \norm{\sp{\meas}{\vec{v}}}_{\psi_2}$ and $\meas$ is called sub-Gaussian if $\norm{\meas}_{\psi_2} < \infty$; see for instance \cite{vershynin2012random} for further details.} random vector with $\norm{\meas}_{\psi_2} \leq \gaussparam$. The sampling matrix $\Meas$ is formed by drawing $m$ independent copies $\meas_1, \dots, \meas_m$ of $\meas$ and setting

\begin{equation}
 \Meas =  \matr{- \meas_1^T -\\  \vdots \\ - \meas_m^T -}.
\end{equation}%
\end{model}
This model has been established as a somewhat classical benchmark setup in the context of compressive sensing. It allows us to follow the methodology initiated in \cite{rudelson2007,mendelson2007reconstruction} and extended in~\cite{stojnic2009gordon,chandrasekaran2012geometry,tropp2014convex,amelunxen2014edge}. In a nutshell, the aim is to determine the \emph{sampling rate} of a convex program (i.e., the number of required measurements for successful recovery) by calculating the so-called \emph{Gaussian mean width}. 

\enlargethispage{-2\baselineskip}
We now briefly outline our work and summarize its main contributions:
\begin{itemize}
\item[(C1)] A cornerstone of our analysis is formed by the set of \emph{minimal $\ell^1$-representers} of $\gt$:
\begin{equation}
\label{eq:minl11}
 \tag{$\text{BP}_{\ell^1}$}
 \Zlset \coloneqq \argmin_{\z \in \R^d} \norm{\z}_1 \quad \mbox{ s.t. } \quad \gt = \Dict  \z. 
\end{equation}
Independently of Model~\ref{mod:meas_mod}, Section~\ref{sec:min_l1} reveals that if $\Zlset = \left\{ \gtz \right\}$, exact recovery of $\gtz$ via \coefNoisefree{} is equivalent to perfect recovery of $\gt$ by solving \sigNoisefree{}. Furthermore, exact recovery of a coefficient vector $\gtz$ by~\coefNoisefree{} is only possible, if $\gtz$ is the unique minimal $\ell^{\smash{1}}$-representer of $\gt = \Dict \gtz$, i.e., if $\Zlset = \left\{\gtz\right\}$.

\item[(C2)] In Section~\ref{sec:sig_recov_cvx_gauge} and Section~\ref{sec:sig_and_coef_rates}, it will be shown that the sampling rate of both formulations can be expressed by the squared conic mean width $\cmw[2]{\Dict \cdot \mathcal{D}}$, where $\mathcal{D}$ denotes the descent cone of the $\ell^{\smash{1}}$-norm at any $\zl \in \Zlset$ (see Section~\ref{sec:primer} for a brief summary of the general recovery framework and definitions of these notions). This observation holds unconditionally true in the case of signal recovery by~\eqref{eq:sig}. For coefficient recovery the additional assumption that $\Zlset$ is a singleton needs to be satisfied.

\item[(C2')] While $\cmw[2]{\Dict \cdot \mathcal{D}}$ forms a precise description of the sampling rate, it is a quantity that is hard to analyze and compute, in general. Therefore, an important goal of our work is to derive more informative upper bounds for this expression. First, under the assumption that $\Zlset = \left\{ \zl \right\}$ is a singleton, we show a condition number bound that relates $\cmw[2]{\Dict \cdot \mathcal{D}}$ to the classical complexity $\cmw[2]{\mathcal{D}} \lesssim  s \cdot \log(2n / s)$, where $s=\norm{\zl}_0$ (see Section~\ref{sec:condbound}). 

The second upper bound of Section~\ref{sec:circumangle} is central to our work. It is based on a geometric analysis that makes use of generic arguments from high-dimensional convex geometry. In comparison to the first bound, it is more general since we do not assume that $\Zlset$ is a singleton. Hence, it particularly addresses the recovery of signals, without requiring the identification of a coefficient representation. The resulting upper bound on the conic mean width relies on the thinness of high-dimensional polyhedral cones with not exponentially many generators. We believe that such an argument might be of general interest beyond its application to the synthesis formulation of compressive sensing. Again, $\cmw[2]{\Dict \cdot \mathcal{D}}$ is related to the sparsity of a minimal $\ell^1$-representation and a further geometrical parameter (referred to as \emph{circumangle}) that measures the narrowness of the associated cone.        

\item[(C3)] Lastly, our recovery statements reveal that recovery of signals by $\eqref{eq:sig}$ is robust to measurement noise without any further restrictions. In contrast, the robustness of coefficient recovery via solving~\eqref{eq:coef} is influenced by an additional factor that is related to the convex program~\eqref{eq:minl11}.   
\end{itemize}
All our findings are underpinned by extensive numerical experiments; see Section~\ref{sec:num_exp}. As a first ``teaser'' we refer the reader to Figure~\ref{fig:sig:pt_2}, which displays two \emph{phase transition plots} and our sampling rates for a redundant Haar wavelet system $\Dict$.

\subsection{Related Literature}

In the following, we first briefly discuss some historical references that are of general interest for $\ell^{\smash{1}}$-norm minimization, sparse representations in redundant dictionaries and compressive sensing. Subsequently, we focus on the existing literature on the synthesis formulation in more depth.  

\subsubsection{Some Historical Landmarks} 

The idea of promoting sparsity in discrete or continuous dictionaries by $\ell^1$-norm minimization can be traced back to the works of Beurling~\cite{Beurling1938} and Krein \cite{Krein1938}. Motivated by questions in Banach space geometry, first theorems establishing sparsity of solutions of related minimization problems can be found in the 1940's~\cite{Zuhovickii1948}. In his PhD-thesis of 1965, Logan utilized $\ell^{\smash{1}}$-minimization for sparse frequency estimation~\cite{Logan1965} and in the 1970's it was employed for solving deconvolution problems in geophysics~\cite{Claerbout1973,Taylor1979}. Of particular importance became the so-called \emph{Rudin-Osher-Fatemi-model}~\cite{rudin1992}, which pioneered the use of total variation minimization for image processing tasks.  

The field of sparse representations arose with the development of (greedy) algorithms for finding expansions in redundant dictionaries such as time-frequency systems~\cite{mallat93,omp1993}. Subsequently, the work~\cite{chen98} triggered notable interest in achieving this task by solving the basis pursuit~\eqref{eq:minl1}; see for instance~\cite{elad2002generalized,donoho2001,Donoho2003}. A special emphasis was thereby given to unions of orthogonal bases~\cite{elad2002generalized,1255564,candes2006quantitative}. Next to the classical concepts of coherence and spark, which are \emph{uniform} across all $s$-sparse signals, also the \emph{non-uniform} notions of dual certificates and exact recovery conditions were progressively introduced~\cite{Fuchs2004,tropp2004,fuchs2005recovery,tropp2006just}.

Under the notion of \emph{compressive sensing}, Cand\`{e}s, Romberg and Tao \cite{candes2004robust,candes2006stable} and Donoho \cite{donoho2006cs} first proposed to capitalize on randomized models in the basis pursuit. In these works, the structured dictionary $\Dict$ is replaced by a random matrix $\Meas$, which follows for instance Model~\ref{mod:meas_mod}. Such a design allows to overcome severe shortcomings of previous results, in particular the \emph{quadratic/square root-bottleneck}; see next subsection or~\cite[Chapter~5.4]{foucart2013cs}. Indeed, under such a randomness assumption, it can be shown that any $s$-sparse vector can be recovered with overwhelming probability if the number of measurements obeys $m \gtrsim s \cdot \log (2n/s)$. These seminal works can furthermore be acknowledged for highlighting the remarkable potential of sparsity-based methods for many signal recovery tasks. 

\subsubsection{Results on the Synthesis Formulation of Compressed Sensing}

An important insight on solving the inverse problem of Model~\ref{mod:noisy_meas} by means of redundant dictionaries was provided by Elad, Milanfar and Rubinstein~\cite{Elad2006}. Therein, the authors compare two different formulations: The synthesis basis pursuit~\eqref{eq:sig} and an alternative formulation, which is referred to as \emph{$\ell^{\smash{1}}$-analysis basis pursuit}:
\begin{equation}
\label{eq:analysis}
 \min\nolimits_{\x \in \R^n} \norm{\vec{\Psi} \x }_1 \quad \mbox{ s.t. } \quad \norm{\y - \Meas \x}_2 \leq \noiseparam.
\end{equation}
The analysis operator $\vec{\Psi}\in \R^{d\times n}$ is thereby chosen in such a way that the coefficient vector $\vec{\Psi}\gt$ is of low-complexity. 
It turns out that the latter formulation and the program~\eqref{eq:sig} are only equivalent if $\vec{\Psi}$ (or $\Dict$) forms a basis. In particular for redundant choices of $\vec{\Psi}$ and $\Dict$, the geometry of both formulations departs significantly from each other. While the synthesis variant appears to be  more natural from a historical perspective, its analysis-based counterpart gained considerable attention in the past years~\cite{candes2011csdict,nam2013,giryes2014,krahmer2015,kabanava2015,rkz2015}. Recently, the non-uniform approach of \cite{genzel2017} revealed that the measure of ``low-complexity'' in the analysis model goes beyond pure sparsity of $\vec{\Psi}\gt$. Instead, a novel sampling-rate bound was proposed that is based on a generalized notion of sparsity, taking the support and the coherence structure of the underlying analysis operator into account.

The earliest reference that deals with the synthesis formulation for the recovery of coefficient vectors appears to be by Rauhut, Schnass and Vandergheynst \cite{rauhut2008}. Therein, the formulation \eqref{eq:coef} is studied under a randomized measurement model. The main result roughly reads as follows: Assume that the dictionary $\Dict$ satisfies a \emph{restricted isometry property (RIP)} with sparsity level $s$. If the random matrix $\Meas\in \R^{m \times n}$ follows Model~\ref{mod:meas_mod} and $m  \gtrsim s\cdot \log (n/s)$, then the composition $\Meas\Dict$ will also satisfy an RIP with sparsity level $s$ with high probability. This property then implies stable and robust recovery of all $s$-sparse coefficient vectors by solving~\eqref{eq:coef}. The assumption that $\Dict$ satisfies an RIP is crucial for the previous result. It can be for instance achieved if the dictionary is sufficiently incoherent, i.e., if it satisfies
\begin{equation}
 \label{eq:coh}
 \mu (\Dict) \coloneqq \max\nolimits_{i\neq j} \left|\sp{\dict_i}{\dict_j}\right| / ({\norm{\dict_i}_2\cdot\norm{\dict_j}_2} ) \leq 1/(16\cdot(s-1)). 
\end{equation}
However, as the authors of~\cite{rauhut2008} point out, such a coherence-based estimate is rather crude and suffers from the so-called square-root bottleneck: The Welch bound~\cite[Theorem 5.7]{foucart2013cs} reveals that condition~\eqref{eq:coh} can only be satisfied for mild sparsity values $s \lesssim\sqrt{n}$. 

In~\cite{chen2014}, Chen, Wang and Wang study conditions for signal recovery via a dictionary-based \emph{null space property (NSP)}: For a given dictionary $\Dict$, a matrix $\Meas$ is said to satisfy the $\Dict$-NSP of order $s$, if for any index set $S \subseteq [d]$ with $\# S \leq s$ and any $\h \in \Dict^{-1}(\ker{\Meas} \setminus \{\vec{0}\})$, there exists $\z \in \ker{\Dict}$, such that $\norm{\h_S + \z}_1 < \norm{\h_{\setcompl{S}}}_1$. It can be shown that this condition is necessary and sufficient for the uniform recovery of all signals $\gt = \Dict  \gtz$ with  $\norm{\gtz}_0 \leq s$ via \eqref{eq:sig}.
Note that the $\Dict$-NSP is in general weaker than requiring that $\Meas\Dict$ satisfies the standard NSP. This means that the previous result is addressing signal recovery without necessarily requiring coefficient recovery. However, the authors then show that under the additional assumption that $\Dict$ is of \emph{full spark} (i.e., every $n$ columns of $\Dict$ are linearly independent), both conditions are in fact equivalent. Hence, in this case, signal and coefficient recovery are also equivalent.  
In the recent work~\cite{chen2017}, this serves as a motivation to study coefficient recovery by analyzing how many measurements are required in order to guarantee that $\Meas\Dict$ has an NSP. To that end, a result is provided that is conceptually similar to~\cite{rauhut2008}, however, it reduces the assumptions on $\Dict$. Instead of requiring that $\Dict$ satisfies an RIP, the authors operate under the weaker assumption that $\Dict$ satisfies an NSP. The main result essentially reads as follows: Under a sub-Gaussian measurement setup similar to Model~\ref{mod:meas_mod} and under the assumption that $\Dict$ satisfies an NSP of order $s$, a number of $m \gtrsim  s\cdot \log (n/s)$ measurements guarantees that also $\Meas\Dict$ satisfies an NSP. This condition then allows for robust recovery of all $s$-sparse coefficient vectors by solving \eqref{eq:coef}. 

To the best of our knowledge, the only work that provides a bound on the required number of measurement for signal recovery (without necessarily requiring coefficient recovery) is the tutorial~\cite[Theorem 7.1]{Vershynin2015}: Assume that $\norm{\dict_i}_2 \leq 1$, $i\in [d]$ and that $\gt = \Dict \gtz$ for an $s$-sparse representation $\gtz \in \R^{\smash{d}}$. For a Gaussian measurement matrix $\Meas \in \R^{m \times n}$, Vershynin establishes the following recovery bound in expectation:
\begin{equation}
 \E \norm{\sol - \gt}_2 \leq c \cdot \sqrt{\tfrac{s\cdot \log(d)}{m}} \cdot \norm{\gtz}_2 + \sqrt{2\pi}\cdot\tfrac{\noiseparam}{\sqrt{m}},
\end{equation}
where $c$ is a constant and $\sol\in\X$ is a solution of \eqref{eq:sig}. Note that we have slightly adapted the statement of~\cite[Theorem 7.1]{Vershynin2015} for a better match with our setup. 
Due to the first summand on the right hand side, the previous error bound is suboptimal, cf.~Theorem~\ref{thm:sig}. In particular, it does not guarantee exact recovery from noiseless measurements. We emphasize that parts of our work are inspired by Vershynin, who also studies the gauge of the set $K = \Dict \cdot \Bn[1]{d}$ in \cite{Vershynin2015}.

We conclude by mentioning a few more works in the literature on synthesis based compressed sensing that appear to be of less relevance for this work. The influential paper~\cite{chandrasekaran2012geometry} studies signal recovery via atomic minimization, however, it does not provide specific insights when redundant dictionaries are used. In \cite{davenport2013}, a (theoretical) CoSaMP algorithm is adapted to the recovery of signals with sparse representations in  redundant dictionaries. Based on the $\Dict$-RIP~\cite{candes2011csdict} and on a connection to the analyis formulation with so-called optimal dual frames, \cite{liu2012b} derives a theorem concerning signal recoery. Finally, \cite{Figueiredo2009} provides  numerical experiments, which empirically compare the analysis and the synthesis formulation.

\subsubsection{The Gap that We Intend to Fill}

In order to obtain statements that are \emph{uniform} across all $s$-sparse signals, most existing results assume that the dictionary $\Dict$ satisfies strong assumptions, e.g., incoherent atoms, an NSP or an RIP~\cite{rauhut2008,chen2014,chen2017}. 
Such notions are well established and allow for appealing results that often resemble known principles of compressive sensing. However, in many situations of interest, these assumptions are too restrictive.
In particular, redundant representation systems (such as Gabor systems, wavelets, curvelets, \dots) or data-adaptive dictionaries do not satisfy any such property. Their atoms are typically highly coherent and share many linear dependencies. We aim to address this issue by following a local, non-uniform approach, which avoids strong assumptions on the dictionary. We believe that such a signal-dependent refinement is crucial for redundant representation systems, cf.~\cite{genzel2017}. 

Similarly, it is occasionally argued that distinguishing signal and coefficient recovery is of minor importance, cf.~\cite{chen2017}. This is justified by the observation that exact recovery of coefficients and signals is equivalent if $\Dict$ is in general position.  
However, due to the linear dependencies in many structured representation systems, such an argumentation is often not valid. Indeed, simple numerical experiments with popular dictionaries reveal that signal recovery can  be frequently observed without reconstructing a specific coefficient representation, see Figure~\ref{fig:sig:pt_2} and Section~\ref{sec:num_exp}. Hence, we believe that it is important to study both formulations and to identify under which conditions coefficient recovery might be expected, see (C1) above.

To the best of our knowledge, this is the first work that provides a precise description of the phase transition behavior of both formulations, see~(C2). While the identified conic mean width of a linearly transformed set $\cmw[2]{\Dict \cdot \mathcal{D}}$ is a rather implicit quantity, it constitutes an important step towards the understanding of $\ell^{\smash{1}}$-synthesis. By deriving more explicit upper bounds on the sampling rate, coefficient sparsity is identified as an important factor. However, additional properties that account for the local geometry are also taken into account, see~(C2').

Last but not least, we establish that both formulations behave differently with respect to robustness to measurement noise, see~(C3). To the best of our knowledge, this aspect has gone unnoticed in the literature so far, although it might have dramatic implications on the reconstruction quality of coefficient representations. 
\enlargethispage{1\baselineskip}

\subsection{Notation}

For the convenience of the reader, we have collected the most important and frequently used objects in Table~\ref{tab:notation}. 

\renewcommand{\arraystretch}{1.2}
\begin{table}
  \centering
    \begin{tabular}{l|l}
    \hline \hline
    \textbf{Notation} & \textbf{Term} \\
    \hline
    $\gt \in \R^n$  & (ground truth) signal vector \\
    \hdashline
    $\Meas \in \R^{m \times n}$ & measurement matrix  \\
    $\noise \in \R^m$, with $\norm{\noise}_2 \leq \noiseparam$ & (adversarial) noise \\
    $\y = \Meas\gt + \noise \in \R^m$ & linear, noisy measurements of $\gt$ \\
    \hdashline
    $\dict_1,\dots,\dict_d \in \R^n$ & dictionary atoms \\
    $\Dict = [\dict_1,\dots,\dict_d] \in \R^{n\times d}$ & dictionary \\
    \hdashline
    $\solx \in \R^n$ & a solution of \eqref{eq:sig} \\
    $\X = \Dict \cdot \Zset \subseteq \R^n$ & solution set of \eqref{eq:sig} \\
    \hdashline
    $\zl \in \R^d$ & a minimal $\ell^1$-decomposition of $\gt$ in $\Dict$, i.e., a solution of \eqref{eq:minl1}\\
    $\Zlset \subseteq \R^d$ & solution set of \eqref{eq:minl1} \\
    $\gtz \in \R^d$ & a sparse representation of $\gt$ in $\Dict$, without $\gtz \in \Zlset$, in general  \\
    $\solz \in \R^d$ & a solution of \eqref{eq:coef} \\
    $\Zset \subseteq \R^d$ & solution set of \eqref{eq:coef} \\
    \hline \hline
    \end{tabular}
    \caption{A summary of the central notations used in this work. }
        \label{tab:notation}
\end{table}

Throughout this manuscript we will use the following notation and conventions: for an integer $n \in \N$ we set $[n] \coloneqq \{1,2,\dots,n\}$. 
If $\mathcal{I} \subseteq [n]$, we let $\setcompl{\mathcal{I}} \coloneqq [n] \setminus \mathcal{I}$ denote the complement of $\mathcal{I}$ in $[n]$.  
Vectors and matrices are symbolized by lower- and uppercase bold letters, respectively. Let $\x = (x_1, \dots, x_n) \in \R^n$. 
For an index set $\mathcal{I} \subseteq [n]$, we let the vector $\x_{\mathcal{I}} \in \R^{\#\mathcal{I}}$ denote the restriction to the components indexed by $\mathcal{I}$. The \emph{support} of $\x$ is defined by the set of its non-zero entries $\supp(\x) \coloneqq \{ k \in [n] \suchthat x_k \neq 0 \}$ and the \emph{sparsity} of $\x$ is $\lnorm{\x}[0] \coloneqq \#\supp(\x)$. 
For $1 \leq p \leq \infty$, $\lnorm{\cdot}[p]$ denotes the \emph{$\l{p}$-norm} on $\R^n$. The associated \emph{unit ball} is given by $\Bn[p]{n} \coloneqq \{ \x \in \R^n \suchthat \lnorm{\x}[p] \leq 1 \}$ and the \emph{Euclidean unit sphere} is $S^{n-1} \coloneqq \{ \x \in \R^n \suchthat \lnorm{\x} = 1 \}$.  
The $i$-th standard basis vector of $\R^n$ is refered to as $\vec{e}_i$ and $\Id \in \R^{n\times n}$ denotes the identity matrix. 
Furthermore, let $\cone{K}$ denote the \emph{conic hull} of a set $K \subset \R^n$. If $L \subset \R^n$ is a linear subspace, the associated \emph{orthogonal projection onto $L$} is given by $\proj_{L} \in \R^{n \times n}$.
Then, we have $\proj_{\orthcompl{L}} = \Id - \proj_{L}$, where $\orthcompl{L} \subset \R^n$ is the orthogonal complement of $L$. The letter $\constant$ is usually reserved for a (generic) constant, whose value could change from time to time.	We refer to $\constant$ as a \emph{numerical constant} if its value does not depend on any other involved parameter. If an \mbox{(in-)equality} holds true up to a numerical constant $\constant$, we sometimes write $a \lesssim b$ instead of $a \leq \constant \cdot b$. 
For a matrix $\vec{A} \in \R^{m \times n}$ we let $\norm{\vec{A}}_2$ denote its \emph{spectral norm}. For a set $K\subseteq \R^n$, $\lambda \in \R$ and $\Meas \in \R^{m \times n}$ we set $\lambda \cdot K \coloneqq \{\lambda \vec{k} : \vec{k} \in K \}$ and  $\Meas \cdot K \coloneqq \{\Meas \vec{k} : \vec{k} \in K \}$. Lastly, the term \emph{orthonormal basis} is abbreviated by ONB. 


\section{A Primer on the Convex Geometry of Linear Inverse Problems}
\label{sec:primer}

In this section, we give a brief introduction to a well-established methodology that addresses the recovery of structured signals from independent linear random measurements. This summary mainly serves the purpose of introducing the required technical notions for our subsequent analysis of the $\ell^{\smash{1}}$-synthesis formulation. It is inspired by \cite{chandrasekaran2012geometry,tropp2014convex,amelunxen2014edge} and we refer the interested reader to these works for a more detailed discussion of the presented material.

\subsection{Minimum Conic Singular Value}
\label{sec:primer_conic_sing_value} 

Assume that Model \ref{mod:noisy_meas} is satisfied. For a robust recovery of $\gt$ from its linear, noisy measurements $\y$, we consider the \emph{generalized basis pursuit}
\begin{equation}
\label{eq:gen_bp}
\tag{$\text{BP}_{\noiseparam}^{\smash{f}}$}
\min_{\x \in \R^n} f(\x) \quad \mbox{ s.t. } \quad \norm{ \y - \Meas\x }_2 \leq \noiseparam,
\end{equation}
\newcommand{\genNoisefree}{(\hyperref[eq:gen_bp]{$\text{BP}_{\noiseparam=0}^{\smash{f}}$})}%
where $f \colon \R^n \to \R$ is a convex function that is supposed to reflect the ``low complexity'' of the signal $\gt$. Hence, the previous minimization problem searches for the most structured signal that is still consistent with the given measurements $\y$. 

The recovery performance of \eqref{eq:gen_bp} can be understood by a fairly standard geometric analysis. It seeks to understand the geometric interplay of the structure-promoting functional $f$ and the measurement matrix $\Meas$ in a neighborhood of the signal vector $\gt$. To that end, we first introduce the following notions of descent cones and minimum conic singular values.
%

\begin{definition}[Descent cone]
 Let $f \colon \R^n \to \R$ be a convex function and let $\gt \in \R^n$. The \emph{descent set} of $f$ at $\gt$ is given by
 \begin{equation}
  \ds{f,\gt} \coloneqq \left\{ \vec{h} \in \R^n  : f ( \gt + \vec{h} ) \leq f (\gt)  \right\},
  \end{equation}
  and the corresponding \emph{descent cone} is defined by $\dc{f,\gt} \coloneqq \cone{\ds{f,\gt}}$.
\end{definition}
  
The notion of minimum conic singular values describes the behavior of a matrix $\Meas$ when it is restricted to a cone $C \subseteq \R^n$.
  
\begin{definition}[Minimum conic singular value]
  Consider a matrix $\Meas \in \R^{m \times n}$ and a cone $C \subseteq \R^n$. The minimum conic singular value of $\Meas$ with respect to the cone $C$ is defined by 
  \begin{equation}
   \lmin{\Meas}{C} \coloneqq \inf_{\x \in C \cap \Sn{n-1}} \norm{\Meas \x}_2.
  \end{equation}
\end{definition}

The following result characterizes exact recoverability of the signal $\gt$ and provides a deterministic error bound for the solutions to \eqref{eq:gen_bp}. The statement is an adapted version of Proposition 2.1 and Proposition 2.2 in \cite{chandrasekaran2012geometry}; see also Proposition 2.6 in \cite{tropp2014convex}.

\begin{proposition}[A deterministic error bound for \eqref{eq:gen_bp}]
\label{prop:recover}
Assume that $\gt,\Meas,\y,\noise$ and $\noiseparam$ follow Model \ref{mod:noisy_meas} and let $f \colon \R^n \to \R$ be a convex function. Then the following holds true:

\begin{enumerate}
 \item[(a)] If $\eta=0$, exact recovery of $\gt$ by solving \genNoisefree{} is equivalent to $\lmin{\Meas}{\dc{f,\gt}}>0$. 
 \item[(b)] In addition, any solution $\sol$ of \eqref{eq:gen_bp} satisfies
\begin{equation}
\label{eq:gen_rec}
 \norm{\gt - \sol}_2 \leq \frac{2\noiseparam}{\lmin{\Meas}{\dc{f,\gt}}}.
\end{equation}
\end{enumerate}
\end{proposition}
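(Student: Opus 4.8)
\noindent\emph{Proof strategy.} The plan is to establish the quantitative bound in part~(b) first, by a short and direct geometric argument, and then to read off part~(a) from it. The guiding observation is that the error vector $\h \coloneqq \sol - \gt$ of any solution $\sol$ of \eqref{eq:gen_bp} is constrained in two complementary ways: on the one hand $\h$ must lie in the descent cone $\dc{f,\gt}$, because $\sol$ cannot increase the value of $f$ over that of the feasible competitor $\gt$; on the other hand $\h$ is almost annihilated by $\Meas$, because $\sol$ and $\gt$ are both consistent with the data $\y$ up to the noise level $\noiseparam$. Confronting these two facts with the definition of $\lmin{\Meas}{\dc{f,\gt}}$ will yield the estimate \eqref{eq:gen_rec}.

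For part~(b) I would first note that $\gt$ itself is feasible for \eqref{eq:gen_bp}, since Model~\ref{mod:noisy_meas} gives $\norm{\y - \Meas\gt}_2 = \norm{\noise}_2 \le \noiseparam$; by optimality of $\sol$ this forces $f(\gt + \h) = f(\sol) \le f(\gt)$, i.e.\ $\h \in \ds{f,\gt} \subseteq \dc{f,\gt}$. Next, feasibility of both $\sol$ and $\gt$ together with the triangle inequality give $\norm{\Meas\h}_2 \le \norm{\y - \Meas\sol}_2 + \norm{\y - \Meas\gt}_2 \le 2\noiseparam$. Finally, if $\h = \vec{0}$ there is nothing to prove, and if $\lmin{\Meas}{\dc{f,\gt}} = 0$ the right-hand side of \eqref{eq:gen_rec} is infinite; otherwise $\h / \norm{\h}_2 \in \dc{f,\gt} \cap \Sn{n-1}$, so the definition of the minimum conic singular value gives $\lmin{\Meas}{\dc{f,\gt}} \le \norm{\Meas\h}_2 / \norm{\h}_2 \le 2\noiseparam / \norm{\h}_2$, and rearranging yields the claim.

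For part~(a), the implication ``$\lmin{\Meas}{\dc{f,\gt}} > 0 \Rightarrow$ exact recovery'' is then immediate: setting $\noiseparam = 0$ in \eqref{eq:gen_rec} forces $\norm{\gt - \sol}_2 = 0$ for every solution $\sol$ of \genNoisefree{}. For the converse I would argue by contraposition. If $\lmin{\Meas}{\dc{f,\gt}} = 0$, then, since $\x \mapsto \norm{\Meas\x}_2$ is continuous and $\overline{\dc{f,\gt}} \cap \Sn{n-1}$ is compact, the vanishing infimum is attained at a unit vector $\xa$ in the closure of the descent cone with $\Meas\xa = \vec{0}$. Provided $\xa$ lies in $\dc{f,\gt}$ itself, write $\xa = t\h$ with $t > 0$ and $f(\gt + \h) \le f(\gt)$; then $\gt + \h \neq \gt$ satisfies $\Meas(\gt + \h) = \Meas\gt = \y$ and $f(\gt + \h) \le f(\gt)$, so it is a second minimizer of \genNoisefree{} and exact recovery fails.

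The step I expect to be the genuine obstacle — everything above being routine bookkeeping — is exactly this ``$\xa \in \dc{f,\gt}$'': the cone $\dc{f,\gt} = \cone{\ds{f,\gt}}$ need not be closed for a general convex $f$, so a priori the limit point $\xa$ only belongs to its closure. The standard remedies are to phrase the equivalence in part~(a) through the closed tangent cone $\overline{\dc{f,\gt}}$ (which is the object that $\lmin{\Meas}{\cdot}$ actually sees, so nothing is lost), or simply to note that whenever $f$ is a polyhedral norm — in particular for $f = \norm{\cdot}_1$, the case relevant to the rest of this paper — the descent cone is polyhedral and hence already closed, so the issue does not arise. Modulo this point, the statement coincides with Propositions~2.1 and~2.2 of \cite{chandrasekaran2012geometry} (see also Proposition~2.6 of \cite{tropp2014convex}), to which I refer for the remaining details.
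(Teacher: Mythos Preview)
Your argument is correct and is precisely the standard one; the paper does not supply its own proof but simply attributes the statement to \cite[Prop.~2.1--2.2]{chandrasekaran2012geometry} and \cite[Prop.~2.6]{tropp2014convex}, whose proofs proceed exactly as you outline. Your identification of the closure issue in the converse of~(a) is apt and is also how those references handle it (Tropp works directly with the closed tangent cone), so nothing further is needed.
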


\subsection{Conic Mean Width}
\label{sec:primer_gaussian_width}

While Proposition \ref{prop:recover} provides an elegant analysis of the solutions to the optimization problem \eqref{eq:gen_bp}, it can be difficult to apply. The notion of a minimum conic singular value is related to the concept of co-positivity \cite{hiriart2010variational} and its computation is known to be an NP-hard task for general matrices and cones \cite{murty1987some,hiriart2010variational}. 

However, when $\Meas$ is chosen at random, sharp estimates can be obtained by exploring a connection to the \emph{statistical dimension} or \emph{Gaussian mean width}. These geometric parameters stem from geometric functional analysis and convex geometry (e.g., see \cite{gordon1985gaussian,gordon1988escape,giannopoulos2004asymptotic,milman1984mstar}), but they also show up in \emph{Talagrand’s $\gamma_2$-functional} in stochastic processes \cite{talagrand2014chaining}, or under the name of \emph{Gaussian complexity} in statistical learning theory \cite{bartlett2003complexity}.
Their benefits for compressive sensing have first been exploited in \cite{rudelson2007,mendelson2007reconstruction}. More important for our work is their use in the more recent line of research \cite{stojnic2009gordon,chandrasekaran2012geometry,amelunxen2014edge,tropp2014convex}, which aims for non-uniform signal recovery statements.


\begin{definition}
 Let $K\subseteq \R^n$ be a set. 
 \begin{enumerate}
  \item[(a)] The \emph{(global) mean width} of $K$ is defined as
  \begin{equation}
   w (K) \coloneqq \E \left[ \sup_{\vec{h} \in K} \langle \g, \vec{h} \rangle \right],
  \end{equation}
  where $\g \sim \mathcal{N} (\vec{0}, \I)$ is a standard Gaussian random vector.
  \item[(b)] The \emph{conic mean width} of $K$ is given by 
  \begin{equation}
   \cmw{K} \coloneqq w (\cone{K} \cap \mathcal{S}^{n-1} ).
  \end{equation}
 \end{enumerate}
We refer to $\cmw{\ds{f,\gt}}$ as the \emph{conic mean width} of $f$ at $x_0$.
 \end{definition}

The next theorem is known as \emph{Gordon's Escape Through a Mesh} and dates back to \cite{gordon1988escape}. The version presented here follows from \cite{liaw2016randommat}.

\begin{theorem}[Theorem 3 in \cite{liaw2016randommat}]
\label{thm:versh}
 Assume that $\Meas$ satisfies the assumption in Model \ref{mod:meas_mod} and let $K\subseteq \Sn{n-1}$ be a set. Then there exists a numerical constant $\constant>0$ such that, for every $u>0$, we have
 \begin{equation}
 \label{eq:gordon}
  \inf_{\x \in K} \norm{\Meas \x}_2 > \sqrt{m-1} - \constant \cdot\gaussparam^2 \cdot (w (K) + u),
 \end{equation}
 with probability at least $1-e^{-u^2/2}$. If $\meas \sim \mathcal{N}(\vec{0},\I)$, we have $\constant=\gaussparam=1$. 
\end{theorem}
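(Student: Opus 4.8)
The stated inequality is exactly Theorem~3 of \cite{liaw2016randommat}, so the plan is to outline the short argument that deduces it from the \emph{matrix deviation inequality} proved there (together with a concentration step), and to explain how the sharp constants in the Gaussian case come from Gordon's classical comparison theorem. Since $K \subseteq \Sn{n-1}$, every $\x \in K$ satisfies $\norm{\x}_2 = 1$, so $\inf_{\x \in K}\norm{\Meas\x}_2$ is the minimum conic singular value of $\Meas$ over $K$, and the goal is to show it cannot drop much below $\sqrt{m}$.

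For a general isotropic, sub-Gaussian matrix as in Model~\ref{mod:meas_mod}, I would invoke the matrix deviation inequality in its high-probability form: there is a numerical constant $\constant > 0$ such that for every $u > 0$,
\begin{equation*}
 \sup_{\x \in K} \bigl| \norm{\Meas\x}_2 - \sqrt{m}\,\norm{\x}_2 \bigr| \;\leq\; \constant \cdot \gaussparam^2 \cdot \bigl( w(K) + u \bigr)
\end{equation*}
with probability at least $1 - e^{-u^2/2}$ (after, if necessary, rescaling $u$ and $\constant$ to match the exact tail constant; the boundedness term $\sup_{\x\in K}\norm{\x}_2 = 1$ enters here because $K$ lies on the sphere). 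On that event, using $\norm{\x}_2 = 1$ on $K$ and $\sqrt{m} > \sqrt{m-1}$,
\begin{equation*}
 \inf_{\x \in K} \norm{\Meas\x}_2 \;\geq\; \sqrt{m} - \constant \cdot \gaussparam^2 \cdot \bigl( w(K) + u \bigr) \;>\; \sqrt{m-1} - \constant \cdot \gaussparam^2 \cdot \bigl( w(K) + u \bigr),
\end{equation*}
which is the claimed bound.

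For the refined statement with $\constant = \gaussparam = 1$ when $\meas \sim \mathcal{N}(\vec{0},\I)$, I would instead use Gordon's Gaussian min--max comparison theorem \cite{gordon1988escape}. Writing $\norm{\Meas\x}_2 = \sup_{\vb \in \Sn{m-1}}\sp{\Meas\x}{\vb}$, one compares the canonical Gaussian process $X_{\x,\vb} = \sp{\Meas\x}{\vb}$ on $K \times \Sn{m-1}$ with the auxiliary process $Y_{\x,\vb} = \sp{\g}{\x} + \sp{\h}{\vb}$ for independent standard Gaussian vectors $\g \in \R^n$, $\h \in \R^m$; the (co)variance hypotheses of Gordon's inequality hold by isotropy of the rows of $\Meas$ and the normalizations $\norm{\x}_2 = \norm{\vb}_2 = 1$. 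This yields
\begin{equation*}
 \E \inf_{\x \in K} \norm{\Meas\x}_2 \;=\; \E \inf_{\x \in K}\sup_{\vb \in \Sn{m-1}} X_{\x,\vb} \;\geq\; \E \inf_{\x\in K}\sup_{\vb\in\Sn{m-1}} Y_{\x,\vb} \;=\; \E\norm{\h}_2 - w(K),
\end{equation*}
and $\E\norm{\h}_2 \geq \sqrt{m-1}$ since $\bigl(\E\norm{\h}_2\bigr)^2 = m - \mathrm{Var}(\norm{\h}_2) \geq m - 1$ (the map $\h \mapsto \norm{\h}_2$ being $1$-Lipschitz). Finally, the function $\Meas \mapsto \inf_{\x\in K}\norm{\Meas\x}_2$ is $1$-Lipschitz with respect to the Frobenius norm on $\R^{m\times n}$ (again because $K \subseteq \Sn{n-1}$), so Gaussian concentration of measure gives a deviation below its mean by at most $u$ with probability at least $1 - e^{-u^2/2}$, completing the Gaussian case.

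The only genuinely hard ingredient is the matrix deviation inequality itself (and, for the sharp-constant version, Gordon's comparison theorem), whose proofs rest on generic chaining / Gaussian comparison machinery; these are imported from \cite{liaw2016randommat} and \cite{gordon1988escape} rather than reproved here. Once they are granted, the remaining work is the elementary bookkeeping above: the uniform bound on $\norm{\x}_2$ over $K$, the harmless passage from $\sqrt m$ to $\sqrt{m-1}$, the estimate $\E\norm{\h}_2 \geq \sqrt{m-1}$, and the Lipschitz bound feeding the concentration step.
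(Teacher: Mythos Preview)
Your proposal is correct and, in fact, goes beyond what the paper does: the paper does not prove Theorem~\ref{thm:versh} at all but simply imports it verbatim from \cite{liaw2016randommat} (and, implicitly, \cite{gordon1988escape} for the sharp Gaussian constants). Your sketch---matrix deviation inequality for the sub-Gaussian case, Gordon's min--max comparison plus Gaussian Lipschitz concentration for the sharp-constant Gaussian case---is exactly the standard route taken in those references, so there is nothing to compare against in the paper itself.
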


Thus, a straightforward combination the error bound in \eqref{eq:gen_rec} and the estimate in \eqref{eq:gordon} for the set $K = \dc{f,\gt} \cap \Sn{n-1}$ reveals that robust recovery via \eqref{eq:gen_bp} is possible if the number of sub-Gaussian measurements obeys
\begin{equation}
 m  \geq  \constant^2 \cdot  \gaussparam^4 \cdot \cmw[2]{\ds{f,\gt}} + 1.
\end{equation}
In the case of Gaussian measurements, it is known that this bound yields a tight description of the so-called \emph{phase transition} of \genNoisefree. Indeed, for a convex cone $C \subseteq \R^n$ it can be shown that $\lmin{\Meas}{C} = 0$ with high probability when $m \leq \cmw[2]{C} - c\cdot\cmw{C}$, where $c>0$ denotes a numerical constant. Applying this statement to the descent cone $\dc{f,\gt}$ reveals that exact recovery of $\gt$ by solving \genNoisefree{} fails with high probability when
\begin{equation}
 m \leq \cmw[2]{\ds{f,\gt}} - c\cdot \cmw{\ds{f,\gt}}.
\end{equation}
Hence, exact signal recovery by solving \genNoisefree{} obeys a sharp phase transition at $m \approx \cmw[2]{\ds{f,\gt}}$ Gaussian measurements. We refer to \cite{amelunxen2014edge} and \cite[Remark 3.4]{tropp2014convex} for more details on this matter and conclude our discussion by the following summary:


\begin{highlight}
 Robust signal recovery via the generalized basis pursuit \eqref{eq:gen_bp} is characterized by the minimum conic singular value $\lmin{\Meas}{\dc{f,\gt}}$. The required number of sub-Gaussian random measurements can be determined by the conic mean width of $f$ at $\gt$, in symbols  $\cmw[2]{\ds{f,\gt}}$. 
\end{highlight}

\section{Coefficient and Signal Recovery}
\label{sec:coef_sig}

Our study of the synthesis formulation in this section is based on the differentiation between coefficient and signal recovery. First, we introduce the set of minimal $\ell^{\smash{1}}$-representers in Section~\ref{sec:min_l1} and discuss its importance for the relationship between both formulations. Section~\ref{sec:sig_recov_cvx_gauge} is then dedicated to the fact that signal recovery via \eqref{eq:sig} can be cast as an instance of atomic norm minimization, in which the gauge of the synthesis defining polytope is minimized. Finally, in Section~\ref{sec:sig_and_coef_rates}, we derive two non-uniform recovery theorems that determine the sampling rates of robust coefficient and signal recovery, respectively. 

\subsection{Recovery and Minimal $\ell^1$-Representers}
\label{sec:min_l1}

In this section, we discuss how the uniqueness of a minimal $\ell^1$-representer impacts coefficient and signal recovery. 

\begin{definition}[Minimal $\ell^1$-representers]
The set of \emph{minimal $\ell^1$-representers} of a signal $\gt$ with respect to a dictionary $\Dict$ is defined by
\begin{equation}
 \label{eq:minl1}
 \tag{$\text{BP}_{\ell^1}$}
 \Zlset \coloneqq \argmin_{\z \in \R^d} \norm{\z}_1 \quad \mbox{ s.t. } \quad \gt = \Dict  \z. 
\end{equation} 
\end{definition}

In general, $\Zlset$ may not be a singleton. Indeed, a coefficient vector $\zl$ can only be the unique minimal $\ell^{\smash{1}}$-representer of the associated signal $\gt = \Dict \zl$, if the set of atoms $\{ \dict_i : i \in \supp (\zl) \}$ is linearly independent \cite[Theorem 3.1]{foucart2013cs}. However, many dictionaries of practical interest possess linear dependent and coherent atoms.  Hence, typical notions that would certify uniqueness for all signals with sparse representations in $\Dict$ (e.g.,~the NSP~\cite[Theorem 4.5]{foucart2013cs}) are not expected to hold for such dictionaries.  

%

The following simple lemma shows that exact coefficient recovery by solving \coefNoisefree\ requires $\Zlset$ to be a singleton. 
Otherwise, it is impossible to recover a specific coefficient representation, while a retrieval of the signal by \sigNoisefree\ might still be possible.
\begin{lemma}
\label{lem:observation}
Assume that $\gt,\Meas$ and $\y$ follow Model \ref{mod:noisy_meas} with $\noiseparam=0$. 
Let $\Dict \in \R^{n\times d}$ be a dictionary such that $\gt \in \ran (\Dict)$. 

\begin{enumerate}

 \item[(a)] Assume that $\gt=\Dict \gtz$ and that we wish to reconstruct $\gtz$. If $\Zlset \neq \left\{ \gtz \right\}$, then recovering $\gtz$ by solving \coefNoisefree\ is impossible.
 
 \item[(b)]  Signal recovery by solving \sigNoisefree, i.e., having $\X = \left\{ \gt \right\}$, is equivalent to the condition $ \Zlset = \Zset$. 
 
\end{enumerate}
\end{lemma}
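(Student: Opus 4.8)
The plan is to unwind the definitions of the three convex programs \eqref{eq:minl1}, \eqref{eq:coef} and \eqref{eq:sig} in the noiseless regime, where all three share the same feasibility constraint structure up to composition with $\Meas\Dict$ versus $\Dict$. The crucial observation, to be used repeatedly, is that when $\noiseparam = 0$ the constraint in \eqref{eq:coef} reads $\Meas\Dict\z = \y = \Meas\gt$, i.e.\ $\Dict\z \in \gt + \ker\Meas$; so a coefficient vector $\z$ is feasible for \eqref{eq:coef} if and only if $\Dict\z - \gt \in \ker\Meas$. The objective $\norm{\z}_1$ is identical across \eqref{eq:minl1} and \eqref{eq:coef}, the only difference being the feasible set: equality $\Dict\z = \gt$ for the former versus the relaxed $\Dict\z \in \gt + \ker\Meas$ for the latter. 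In particular every feasible point of \eqref{eq:minl1} is feasible for \eqref{eq:coef}, hence the optimal value of \eqref{eq:coef} is at most that of \eqref{eq:minl1}, and $\Zset = \Zlset$ precisely when this inequality is an equality \emph{and} the relaxation introduces no new minimizers.

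For part (a), I would argue by contraposition on the failure of recovery. Suppose $\Zlset \neq \{\gtz\}$. There are two cases. If $\gtz \notin \Zlset$, then $\gtz$ is not even a minimal $\ell^1$-representer: there exists $\z'$ with $\Dict\z' = \gt$ and $\norm{\z'}_1 < \norm{\gtz}_1$; since $\z'$ is then feasible for \eqref{eq:coef} (with value strictly below $\norm{\gtz}_1$), $\gtz$ cannot be a minimizer of \eqref{eq:coef}, so $\gtz \notin \Zset$ and recovery fails. If instead $\gtz \in \Zlset$ but $\Zlset$ has another element $\z'' \neq \gtz$, then $\norm{\z''}_1 = \norm{\gtz}_1$ and $\Dict\z'' = \gt$, so $\z''$ is feasible for \eqref{eq:coef} with the same objective value as $\gtz$; hence whenever $\gtz$ solves \eqref{eq:coef}, so does $\z''$, which means $\Zset \supseteq \{\gtz, \z''\} \neq \{\gtz\}$ and recovery of a \emph{specific} representation fails. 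Either way, $\Zset = \{\gtz\}$ is impossible, which is exactly the claim.

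For part (b), I would prove the two implications separately. First assume $\X = \{\gt\}$, i.e.\ $\Dict\cdot\Zset = \{\gt\}$, so every minimizer $\solz$ of \eqref{eq:coef} satisfies $\Dict\solz = \gt$; thus every such $\solz$ is feasible for \eqref{eq:minl1}, and since the optimal value of \eqref{eq:coef} is no larger than that of \eqref{eq:minl1}, each $\solz$ attains the \eqref{eq:minl1}-optimal value, giving $\Zset \subseteq \Zlset$. Conversely, any $\zl \in \Zlset$ is feasible for \eqref{eq:coef} with objective value equal to the common optimal value, hence $\zl \in \Zset$; so $\Zset = \Zlset$. For the other direction, assume $\Zset = \Zlset$. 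Since $\gt \in \ran(\Dict)$, the program \eqref{eq:minl1} is feasible, so $\Zlset$ is nonempty and every $\zl \in \Zlset$ satisfies $\Dict\zl = \gt$ by definition; therefore $\X = \Dict\cdot\Zset = \Dict\cdot\Zlset = \{\gt\}$, which is signal recovery.

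The routine part is the bookkeeping of feasible sets and optimal values; the only point that requires a moment of care is that ``$\Zset = \Zlset$'' is genuinely equivalent to ``$\Dict\cdot\Zset = \{\gt\}$'' and not merely to the weaker ``$\gt \in \Dict\cdot\Zset$'' — the forward direction of (b) must rule out the possibility that \eqref{eq:coef} has extra minimizers $\solz$ with $\Dict\solz \neq \gt$ (which would still allow $\gt \in \X$ but not $\X = \{\gt\}$), and this is precisely what the hypothesis $\X = \{\gt\}$ delivers. I do not anticipate a substantive obstacle here: the lemma is essentially a reorganization of definitions, and the proof should fit in a few lines once the relaxation relationship between \eqref{eq:minl1} and \eqref{eq:coef} at $\noiseparam = 0$ is made explicit.
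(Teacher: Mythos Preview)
Your proposal is correct and essentially matches the paper's proof: for part (b) the argument is identical (compare optimal values and feasible sets of \eqref{eq:minl1} and \coefNoisefree, then use $\X = \Dict\cdot\Zset$). The only cosmetic difference is in part (a): you give a direct two-case argument, whereas the paper proves the contrapositive in one line by invoking (b) --- if $\Zset = \{\gtz\}$ then $\X = \{\gt\}$, hence $\Zlset = \Zset = \{\gtz\}$ by (b).
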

A short proof of the previous result is given in Appendix \ref{sec:proof_lem_observation}.
Thus, under the assumption that $\gt$ has a unique minimal $\ell^{\smash{1}}$-representer, exact coefficient recovery by \coefNoisefree\ and signal recovery by \sigNoisefree\ are equivalent. 

\subsection{Signal Recovery and the Convex Gauge}
\label{sec:sig_recov_cvx_gauge}


The literature on compressive sensing predominantly focuses on a recovery of coefficient representations. However, if the goal is to recover the associated signal, this approach may be insufficient for structured dictionaries, as argued previously. In this section, we express the initial optimization problem  over the coefficient domain  \eqref{eq:sig} as a minimization problem over the signal space. The $\ell^{\smash{1}}$-ball $\Bn[1]{d}$ in the coefficient domain is thereby mapped to the convex body $\Dict\cdot\Bn[1]{d}$, which is referred to as \emph{synthesis defining polytope} in \cite{Elad2006}. The formulation \eqref{eq:sig} can be equivalently expressed as a constrained minimization of its corresponding \emph{convex gauge}.

\begin{definition}[Convex gauge]
Let $K \subseteq \R^n$ be a closed convex set that contains the origin. The \emph{gauge} of $K$ (also referred to as \emph{Minkowski functional}) is defined as
\[
p_K (\x) \coloneqq \inf \left\{ \lambda >0 : \x \in \lambda\cdot K \right\}.
\]
For a symmetric set (i.e., $-K = K$) the gauge defines a semi-norm on $\R^n$, which becomes a norm if $K$ is additionally bounded.  
\end{definition}

The following lemma provides an alternative characterization of the solutions $\X$ to \eqref{eq:sig}. 
\begin{lemma}
\label{lem:gauge_formulation}
Assume that $\gt,\Meas,\y,\noise$ and $\noiseparam$ follow Model \ref{mod:noisy_meas} and let $\Dict\in \R^{n\times d}$ be a dictionary. Then we have:
\begin{equation}
 \label{eq:gauge}
\X= \argmin_{\x \in \R^n} p_{\Dict \cdot \Bn[1]{d}} (\x) \quad \mbox{ s.t. } \quad \norm{\y - \Meas \x}_2 \leq \noiseparam.
\end{equation}
\end{lemma}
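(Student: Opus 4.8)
The plan is to show the two optimization problems have the same feasible sets and the same objective value on each feasible point, modulo the linear map $\Dict$. The only real content is identifying the gauge $p_{\Dict\cdot\Bn[1]{d}}$ with the "quotient norm" $\x\mapsto\inf\{\norm{\z}_1 : \Dict\z=\x\}$ on $\ran(\Dict)$, and checking that this infimum is attained (so that the image of the coefficient minimizers coincides with the signal minimizers rather than merely having the same infimal value).

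First I would record the elementary identity: for $\x\in\R^n$,
\[
 p_{\Dict\cdot\Bn[1]{d}}(\x) = \inf\{\lambda>0 : \x\in\lambda\,\Dict\,\Bn[1]{d}\} = \inf\{\norm{\z}_1 : \z\in\R^d,\ \Dict\z=\x\},
\]
with the convention that the infimum of the empty set is $+\infty$ (i.e., $p_{\Dict\cdot\Bn[1]{d}}(\x)=+\infty$ when $\x\notin\ran(\Dict)$). This follows by the substitution $\z = \lambda\w$ with $\w\in\Bn[1]{d}$: $\x\in\lambda\,\Dict\,\Bn[1]{d}$ iff there is $\w$ with $\norm{\w}_1\le 1$ and $\Dict(\lambda\w)=\x$, i.e.\ iff there is $\z$ with $\Dict\z=\x$ and $\norm{\z}_1\le\lambda$. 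Taking the infimum over admissible $\lambda$ gives the claim. Next I would note that when $\x\in\ran(\Dict)$ the set $\{\z : \Dict\z=\x\}$ is a nonempty closed affine subspace, and $\norm{\cdot}_1$ is coercive on it, so the infimum is attained; hence $p_{\Dict\cdot\Bn[1]{d}}(\x)=\min\{\norm{\z}_1 : \Dict\z=\x\}$, which in particular makes $\Dict\cdot\Bn[1]{d}$ a closed set (a symmetric bounded convex polytope, since $\Bn[1]{d}$ is), so the gauge is well-defined and a genuine semi-norm.

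Then I would compare the two programs directly. Both constraints $\norm{\y-\Meas\Dict\z}_2\le\noiseparam$ (over $\z$) and $\norm{\y-\Meas\x}_2\le\noiseparam$ (over $\x$) are nonempty (the second because $\y=\Meas\gt+\noise$ with $\gt\in\ran(\Dict)$ and $\norm{\noise}_2\le\noiseparam$, and the first likewise via a minimal $\ell^1$-representer), so both minima exist. For the inclusion $\X\subseteq$ (right-hand side): if $\x=\Dict\z$ with $\z$ a minimizer of \eqref{eq:sig}'s inner program, then $\x$ is feasible for \eqref{eq:gauge} and $p_{\Dict\cdot\Bn[1]{d}}(\x)\le\norm{\z}_1$; any competitor $\x'$ feasible for \eqref{eq:gauge} can be written $\x'=\Dict\z'$ with $\norm{\z'}_1=p_{\Dict\cdot\Bn[1]{d}}(\x')$ (attainment), and $\z'$ is then feasible for the inner program of \eqref{eq:sig}, so $\norm{\z}_1\le\norm{\z'}_1=p_{\Dict\cdot\Bn[1]{d}}(\x')$; hence $p_{\Dict\cdot\Bn[1]{d}}(\x)\le p_{\Dict\cdot\Bn[1]{d}}(\x')$, i.e.\ $\x$ is optimal for \eqref{eq:gauge}. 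The reverse inclusion is symmetric: given an optimal $\x$ for \eqref{eq:gauge}, pick $\z$ with $\Dict\z=\x$, $\norm{\z}_1=p_{\Dict\cdot\Bn[1]{d}}(\x)$; feasibility and optimality of $\x$ translate back, via the same use of attainment, into optimality of $\z$ for the inner program of \eqref{eq:sig}, so $\x=\Dict\z\in\X$. Also note all feasible $\x$ for \eqref{eq:gauge} automatically satisfy $\x\in\ran(\Dict)$ is not required, but the optimal ones do since their gauge value is finite, so $p_{\Dict\cdot\Bn[1]{d}}$ being $+\infty$ off $\ran(\Dict)$ causes no issue.

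I do not anticipate a serious obstacle; the one point that needs a little care is the attainment of the infimum in the gauge (otherwise one would only get equality of optimal values, and the image of the argmin set could be a proper subset of the argmin of the gauge program). This is handled by coercivity of $\norm{\cdot}_1$ on the affine fiber $\Dict^{-1}(\x)$, equivalently by the polytope $\Dict\cdot\Bn[1]{d}$ being closed. A second, purely bookkeeping point is to confirm both feasible sets are nonempty so that the minima exist and the "argmin" notation is meaningful; this uses $\gt\in\ran(\Dict)$ from Model~\ref{mod:noisy_meas} together with the noise bound.
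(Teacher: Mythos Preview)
Your proposal is correct and follows essentially the same approach as the paper: both proofs rest on the identification $p_{\Dict\cdot\Bn[1]{d}}(\x)=\inf\{\lambda>0:\x\in\lambda\,\Dict\cdot\Bn[1]{d}\}=\inf\{\norm{\z}_1:\Dict\z=\x\}$ and the ensuing change of variable $\x=\Dict\z$. The paper's argument is a terse three-line chain of equalities, whereas you spell out the attainment of the infimum (via coercivity of $\norm{\cdot}_1$ on the affine fiber) and run an explicit two-inclusion argument on the argmin sets---this is a genuine improvement in rigor, since without attainment one would only equate optimal values, not minimizer sets. One small bookkeeping correction: Model~\ref{mod:noisy_meas} does \emph{not} assert $\gt\in\ran(\Dict)$; that hypothesis appears separately in later statements (e.g.\ Lemma~\ref{lem:dc}), so your feasibility remark should either drop that claim or note that the equality holds vacuously when $\Zset=\emptyset$.
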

A short proof for his equivalence is given in Appendix \ref{sec:proof_gauge_formulation}.
Under the heading of \emph{atomic norm minimization}, problems of the form~\eqref{eq:gauge} were previously considered in greater generality in \cite{chandrasekaran2012geometry}: Given a collection of atoms $\mathcal{A} \subseteq \R^n$, Chandrasekaran et al.~study the geometry of signal recovery based on minimizing the associated gauge $p_{\convhull{\mathcal{A}}}$ in \eqref{eq:gauge}. It turns out that many popular methods such as classical $\ell^{\smash{1}}$-, or nuclear norm-minimization can be cast in such a form, e.g.,~by choosing $\mathcal{A}$ as the set of one-sparse unit-norm vectors, or the set of rank-one matrices with unit-Euclidean-norm. Note that in the considered case of signal recovery via \eqref{eq:sig}, one would choose the atoms $\mathcal{A} = \left\{\pm \dict_i : i \in [d] \right\}$ to obtain $\convhull{\mathcal{A}} = \Dict \cdot \Bn[1]{d}$. With this reformulation it is evident that dictionary atoms that are convex combinations of the remaining atoms in $\mathcal{A}$ can be removed without altering $\X$~\cite[Corollary 1]{Elad2006}.

For some specific problem instances novel sampling rate bounds are derived in \cite{chandrasekaran2012geometry}. Although this work plays a key role for the foundation of our work, we wish to emphasize that no explicit insights or bounds are derived in the case of signal recovery with dictionaries. In particular, the connection between \eqref{eq:sig} and \eqref{eq:coef} has not been studied.  


With regard to the recovery framework of Section \ref{sec:primer_conic_sing_value}, it is of interest to determine the descent cone of the functional $p_{\Dict \cdot \Bn[1]{d}}$ at $\gt$. The following lemma shows how this cone is related to descent cones of the $\ell^{\smash{1}}$-norm in the coefficient space.
\begin{lemma}
 \label{lem:dc}
 Let $\Dict \in \R^{n \times d}$ be a dictionary and let $\gt \in \ran (\Dict)$. For any $\zl \in \Zlset$ we have 
 \[
  \dc{p_{\Dict\cdot\Bn[1]{d}},\gt} = \Dict \cdot  \dc{\norm{\cdot}_1,\zl} \quad \mbox{ and } \quad \ds{p_{\Dict\cdot\Bn[1]{d}},\gt} = \Dict \cdot  \ds{\norm{\cdot}_1,\zl}.
 \]
\end{lemma}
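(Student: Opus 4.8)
The plan is to prove the two claimed identities by a double inclusion argument, reducing the second to the first. Recall that by definition $\dc{f,\gt} = \cone{\ds{f,\gt}}$ and $\cone{\Dict\cdot K} = \Dict\cdot\cone{K}$ for any set $K$ (since $\Dict$ is linear), so once the descent-\emph{set} identity $\ds{p_{\Dict\cdot\Bn[1]{d}},\gt} = \Dict\cdot\ds{\norm{\cdot}_1,\zl}$ is established, applying $\cone{\cdot}$ to both sides immediately yields the descent-\emph{cone} identity. Hence I would prove only the set identity and deduce the cone identity as a one-line corollary.

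First I would record the key fact that makes everything work: for any $\zl\in\Zlset$ and any $\h\in\R^n$, we have $p_{\Dict\cdot\Bn[1]{d}}(\gt+\h) = \min\{\norm{\z}_1 : \Dict\z = \gt+\h\}$, i.e. the gauge of the synthesis polytope evaluated at a signal equals the minimal $\ell^1$-norm of a representation of that signal. This follows from the definition of the gauge: $p_{\Dict\cdot\Bn[1]{d}}(\x)=\inf\{\lambda>0 : \x\in\lambda\Dict\cdot\Bn[1]{d}\}=\inf\{\lambda>0 : \x=\Dict\z,\ \norm{\z}_1\le\lambda\}$, which equals $\inf\{\norm{\z}_1 : \Dict\z=\x\}$ when $\x\in\ran(\Dict)$. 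In particular $p_{\Dict\cdot\Bn[1]{d}}(\gt)=\norm{\zl}_1$ because $\zl$ is by definition a minimal $\ell^1$-representer. (This is essentially the content behind Lemma~\ref{lem:gauge_formulation}, and I may just cite that reasoning.)

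Now for the inclusion $\Dict\cdot\ds{\norm{\cdot}_1,\zl}\subseteq\ds{p_{\Dict\cdot\Bn[1]{d}},\gt}$: take $\h'\in\ds{\norm{\cdot}_1,\zl}$, so $\norm{\zl+\h'}_1\le\norm{\zl}_1$, and set $\h=\Dict\h'$. Then $\Dict(\zl+\h')=\gt+\h$, so $p_{\Dict\cdot\Bn[1]{d}}(\gt+\h)\le\norm{\zl+\h'}_1\le\norm{\zl}_1=p_{\Dict\cdot\Bn[1]{d}}(\gt)$, i.e. $\h\in\ds{p_{\Dict\cdot\Bn[1]{d}},\gt}$. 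For the reverse inclusion, take $\h\in\ds{p_{\Dict\cdot\Bn[1]{d}},\gt}$; then $\gt+\h\in\ran(\Dict)$ (since the gauge is finite there) and $p_{\Dict\cdot\Bn[1]{d}}(\gt+\h)\le\norm{\zl}_1$. Choose a minimal $\ell^1$-representer $\z^\star$ of $\gt+\h$, so $\norm{\z^\star}_1 = p_{\Dict\cdot\Bn[1]{d}}(\gt+\h)\le\norm{\zl}_1$, and put $\h'=\z^\star-\zl$. Then $\Dict\h' = (\gt+\h)-\gt=\h$ and $\norm{\zl+\h'}_1=\norm{\z^\star}_1\le\norm{\zl}_1$, so $\h'\in\ds{\norm{\cdot}_1,\zl}$ and $\h=\Dict\h'\in\Dict\cdot\ds{\norm{\cdot}_1,\zl}$. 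This gives equality of the descent sets, and applying $\cone{\cdot}$ finishes the proof.

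The argument is essentially bookkeeping, so there is no serious obstacle; the one point that requires a little care is the reverse inclusion, where one must produce an explicit preimage $\h'$ under $\Dict$ with the right $\ell^1$-bound. The resolution is simply to invoke the minimal-$\ell^1$ representer $\z^\star$ of $\gt+\h$ — whose existence follows because the feasible set $\{\z : \Dict\z=\gt+\h\}$ is a nonempty closed affine subspace and $\norm{\cdot}_1$ is coercive on it — and to note that the difference $\z^\star-\zl$ maps to $\h$ and has the correct norm, which works for \emph{every} choice of $\zl\in\Zlset$ since all minimal representers share the same $\ell^1$-norm $\norm{\zl}_1 = p_{\Dict\cdot\Bn[1]{d}}(\gt)$. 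One should also remark that a symmetric computation with the roles reversed is not needed, as the two inclusions above already close the argument.
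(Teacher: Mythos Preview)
Your proof is correct and follows essentially the same approach as the paper: both arguments hinge on the identity $p_{\Dict\cdot\Bn[1]{d}}(\x)=\min\{\norm{\z}_1:\Dict\z=\x\}$ (hence $p_{\Dict\cdot\Bn[1]{d}}(\gt)=\norm{\zl}_1$) and then establish the two inclusions by passing a representation through $\Dict$ in one direction and pulling back to a minimal $\ell^1$-representer in the other. The only organizational difference is that you prove the descent-\emph{set} identity first and take conic hulls, while the paper proves the descent-\emph{cone} identity directly (carrying a scaling factor $\tau$) and remarks that the set identity is analogous; this is a cosmetic distinction, not a different route.
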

The proof is given in Appendix \ref{sec:proof_lem_dc}.

\subsection{Sampling Rates for Signal and Coefficient Recovery}
\label{sec:sig_and_coef_rates}

The purpose of this section is to determine the sampling rates for robust coefficient and signal recovery from sub-Gaussian measurements. 
\paragraph{Coefficient recovery}

With Lemma \ref{lem:observation} in mind, studying coefficient recovery is meaningful only if the signal $\gt$ has a unique minimal $\ell^1$-representer $\zl$ with respect to $\Dict$. Proposition~\ref{prop:recover} implies that this condition can be equivalently expressed by
\begin{equation}
\label{eq:central_coef}
 \lmin{\Dict}{\dc{\Onenorm,\zl}} > 0.
\end{equation}
Equipped with this assumption, we now state our main theorem regarding the recovery of coefficient vectors via~\eqref{eq:coef}. 
\begin{theorem}[Coefficient recovery]
\label{thm:coeff}
 Assume that $\gt, \Meas, \y, \noise$ and $\noiseparam$ follow Model~\ref{mod:noisy_meas}, where $\Meas$  is drawn according to the sub-Gaussian Model~\ref{mod:meas_mod} with sub-Gaussian norm $\gamma$. 
 Let $\Dict \in \R^{n\times d}$ be a dictionary and $\zl \in \R^d$ be a coefficient vector for the signal $\gt = \Dict \zl \in \R^n$, such that 
 \begin{equation*}
\lmin{\Dict}{\dc{\Onenorm,\zl}}> 0.  
 \end{equation*}
Then there exists a numerical constant $\constant>0$ such that for every $u>0$, the following holds true with probability at least $1-\e^{-u^2/2}$: If the number of measurements obeys 
\begin{equation}
\label{eq:m_coef}
 m>m_0 \coloneqq \constant^2 \cdot \gaussparam^4 \cdot \left(  \cmw{\Dict \cdot \ds{\Onenorm;\zl}}  + u \right)^2 + 1,
\end{equation}
then any solution $\solz$ to the program \eqref{eq:coef} satisfies 
\begin{equation}
\label{eq:rec_coef}
 \norm{\zl - \solz}_2 \leq \frac{2\noiseparam}{\lmin{\Dict}{\dc{\Onenorm;\zl}} \cdot (\sqrt{m-1} - \sqrt{m_0 - 1})}.
\end{equation}
If $\meas \sim \mathcal{N} (\vec{0},\I)$, then $\constant = \gaussparam = 1$.
\end{theorem}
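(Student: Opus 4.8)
The plan is to recognize \eqref{eq:coef} as the generalized basis pursuit \eqref{eq:gen_bp} for the functional $f = \Onenorm$ on $\R^d$ and the \emph{composite} measurement matrix $\Meas\Dict \in \R^{m\times d}$, and then to run the machinery of Section~\ref{sec:primer}. Indeed, $\y = \Meas\gt + \noise = (\Meas\Dict)\zl + \noise$ with $\norm{\noise}_2 \leq \noiseparam$, so the data $\zl,\Meas\Dict,\y,\noise,\noiseparam$ satisfy Model~\ref{mod:noisy_meas} with $\Meas\Dict$ in place of $\Meas$. Since $\zl$ is feasible for \eqref{eq:coef} and $\Onenorm$ is coercive, a minimizer $\solz$ exists, and Proposition~\ref{prop:recover}(b) gives
\[
 \norm{\zl - \solz}_2 \;\leq\; \frac{2\noiseparam}{\lmin{\Meas\Dict}{\dc{\Onenorm,\zl}}} ,
\]
so everything reduces to a high-probability lower bound on $\lmin{\Meas\Dict}{\dc{\Onenorm,\zl}}$.

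The main obstacle is that $\Meas\Dict$ is \emph{not} an isotropic sub-Gaussian matrix in the sense of Model~\ref{mod:meas_mod} --- its rows $\Dict^\T\meas_i$ have covariance $\Dict^\T\Dict$, which is singular as soon as $d > n$ --- so Gordon's escape theorem (Theorem~\ref{thm:versh}) cannot be invoked for $\Meas\Dict$ directly. Instead I would decouple the deterministic action of $\Dict$ from the randomness of $\Meas$. Set $C \coloneqq \dc{\Onenorm,\zl}$; this is a polyhedral, hence closed, cone, so is its image $\Dict\cdot C$, and all infima below are attained over compact sets. For $\z \in C \cap \Sn{d-1}$ the hypothesis $\lmin{\Dict}{C} > 0$ forces $\Dict\z \neq \vec{0}$ with $\norm{\Dict\z}_2 \geq \lmin{\Dict}{C}$, while $\Dict\z/\norm{\Dict\z}_2 \in (\Dict\cdot C) \cap \Sn{n-1}$. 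Factoring $\norm{\Meas\Dict\z}_2 = \norm{\Dict\z}_2 \cdot \norm{\Meas\bigl(\Dict\z/\norm{\Dict\z}_2\bigr)}_2$ and taking the infimum over $\z \in C \cap \Sn{d-1}$ yields the submultiplicative estimate
\[
 \lmin{\Meas\Dict}{C} \;\geq\; \lmin{\Dict}{C} \cdot \lmin{\Meas}{\Dict\cdot C} .
\]

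It then remains to bound $\lmin{\Meas}{\Dict\cdot C} = \inf_{\vb \in (\Dict\cdot C)\cap\Sn{n-1}} \norm{\Meas\vb}_2$ via Theorem~\ref{thm:versh} applied to $K \coloneqq (\Dict\cdot C)\cap\Sn{n-1} \subseteq \Sn{n-1}$. By linearity of the conic hull under $\Dict$ together with the definition of the descent cone, $\cone{\Dict\cdot\ds{\Onenorm,\zl}} = \Dict\cdot\cone{\ds{\Onenorm,\zl}} = \Dict\cdot C$, so $w(K) = \cmw{\Dict\cdot\ds{\Onenorm,\zl}}$. Hence, with probability at least $1 - \e^{-u^2/2}$,
\[
 \lmin{\Meas}{\Dict\cdot C} \;>\; \sqrt{m-1} - \constant\cdot\gaussparam^2\cdot\bigl(\cmw{\Dict\cdot\ds{\Onenorm,\zl}} + u\bigr) \;=\; \sqrt{m-1} - \sqrt{m_0-1} ,
\]
where the last equality is exactly the definition of $m_0$ in \eqref{eq:m_coef}, and this bound is strictly positive once $m > m_0$. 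Chaining this with the submultiplicative estimate and with Proposition~\ref{prop:recover}(b) produces $\norm{\zl - \solz}_2 \leq 2\noiseparam / \bigl(\lmin{\Dict}{C}\cdot(\sqrt{m-1}-\sqrt{m_0-1})\bigr)$, i.e.\ \eqref{eq:rec_coef}; the constants $\constant=\gaussparam=1$ in the Gaussian case are inherited from the last sentence of Theorem~\ref{thm:versh}. The only points requiring a little care beyond these essentially routine manipulations are the polyhedrality/closedness claims guaranteeing that the infima are attained and that $\cone{\Dict\cdot\ds{\Onenorm,\zl}} = \Dict\cdot C$ exactly; the decoupling inequality $\lmin{\Meas\Dict}{C} \geq \lmin{\Dict}{C}\,\lmin{\Meas}{\Dict\cdot C}$ is the real crux of the argument.
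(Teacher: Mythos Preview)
Your proof is correct and follows essentially the same route as the paper: both reduce to Proposition~\ref{prop:recover} applied to $\Meas\Dict$, then establish the decoupling inequality $\lmin{\Meas\Dict}{C} \geq \lmin{\Dict}{C}\cdot\lmin{\Meas}{\Dict\cdot C}$ via the factorization $\norm{\Meas\Dict\z}_2 = \norm{\Dict\z}_2\cdot\norm{\Meas(\Dict\z/\norm{\Dict\z}_2)}_2$, and finish with Theorem~\ref{thm:versh} on $(\Dict\cdot C)\cap\Sn{n-1}$. Your write-up is in fact slightly more explicit than the paper's (you spell out why $\Meas\Dict$ fails Model~\ref{mod:meas_mod} and why the infima are attained), but the argument is the same.
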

A proof is given in Appendix \ref{sec:proof_coeff}.
Before turning towards signal recovery, let us highlight a few observations regarding the previous theorem.
\begin{remark}
\label{rem:coef}
 \begin{enumerate}
  \item[(a)] Note that Theorem~\ref{thm:coeff} does not assume anything on the dictionary $\Dict$ and the coefficient representation $\zl$, except for $\lmin{\Dict}{\dc{\Onenorm,\zl}} > 0$, which is a necessary condition for the theorem to hold true.
 As pointed out above, it reflects that $\zl$ is a unique $\ell^{\smash{1}}$-representer of $\gt$ with respect to $\Dict$, i.e., that $\zl$ is the unique solution to \eqref{eq:minl1}. In general, verifying this property is involved (cf.~the discussion in Section~\ref{sec:primer_gaussian_width}) and forms a trail of research on its own, e.g., see \cite[Chapter 12]{mallat09} or \cite[Chapter 9]{casazza2012finite}. In this regard, we think that an important contribution of Theorem~\ref{thm:coeff} is that it allows to isolate the minimum prerequisite of a unique $\ell^{\smash{1}}$-representer in $\Dict$ from the actual task of compressive coefficient recovery.

  \item[(b)] Equation \eqref{eq:m_coef} identifies $\cmw[2]{\Dict \cdot \ds{\Onenorm;\zl}}$ as the essential component of the sampling rate for coefficient recovery by~\eqref{eq:coef}. Indeed, the proof reveals (in combination with the discussion subsequent to Theorem~\ref{thm:versh}) that $m_0$ is a tight description of the required number of noiseless Gaussian measurements for exact recovery. 

  \item[(c)] Lastly, the error bound \eqref{eq:rec_coef} shows that coefficient recovery is robust to measurement noise, provided that $\lmin{\Dict}{\dc{\Onenorm,\zl}} \gg 0$; cf.~the numerical experiments in Section \ref{sec:num_exp}, which confirm this observation. However, we note that this bound might not be tight, in general (cf.~the intermediate inequality \eqref{eq:esti} in the proof, which is not necessarily sharp). 
  
 \end{enumerate}
\end{remark}

\paragraph{Signal recovery}

Considering signal recovery by \eqref{eq:sig}, 
a combination of the gauge formulation \eqref{eq:gauge}, its description of the descent cone in Lemma \ref{lem:dc}, and Theorem \ref{thm:versh} directly yields the next result.

\begin{theorem}[Signal recovery]
 \label{thm:sig}
 Assume that $\gt, \Meas, \y, \noise$ and $\noiseparam$ follow the measurement Model~\ref{mod:noisy_meas}, where $\Meas$  is drawn according to the sub-Gaussian Model~\ref{mod:meas_mod} with sub-Gaussian norm $\gamma$. Let $\Dict \in \R^{n\times d}$ be a dictionary with $\gt \in \ran (\Dict)$ and pick any $\zl \in \Zlset$. 
 
 \noindent Then there exists a numerical constant $\constant>0$ such that for every $u>0$, the following holds true with probability at least $1-\e^{-u^2/2}:$ If the number of measurements obeys 
\begin{equation}
\label{eq:m_sig1}
 m>m_0 \coloneqq \constant^2 \cdot \gaussparam^4 \cdot \left(  \cmw{\Dict \cdot \ds{\Onenorm;\zl}}  + u \right)^2 + 1,
\end{equation}
then any solution $\sol$ to the program \eqref{eq:sig} satisfies 
\begin{equation}
\label{eq:rec_sig1}
 \norm{\gt - \sol}_2 \leq \frac{2\noiseparam}{\sqrt{m-1} - \sqrt{m_0 - 1}}.
\end{equation}
If $\meas \sim \mathcal{N} (\vec{0},\I)$, then $\constant = \gaussparam = 1$.
\end{theorem}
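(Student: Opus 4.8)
The plan is to recognise \eqref{eq:sig} as an instance of the generalized basis pursuit \eqref{eq:gen_bp} for the gauge functional $p_{\Dict\cdot\Bn[1]{d}}$ and then to run the standard two-step argument of Section~\ref{sec:primer}: a deterministic error bound via Proposition~\ref{prop:recover}, followed by a probabilistic lower bound on the associated minimum conic singular value via Gordon's escape theorem (Theorem~\ref{thm:versh}). The substantive translation between the coefficient and signal pictures has already been carried out in Lemmas~\ref{lem:gauge_formulation} and~\ref{lem:dc}, so the proof mostly amounts to assembling these pieces.

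First, I would invoke Lemma~\ref{lem:gauge_formulation} to identify the solution set $\X$ of \eqref{eq:sig} with the solution set of \eqref{eq:gauge}, i.e.\ of \eqref{eq:gen_bp} with $f = p_{\Dict\cdot\Bn[1]{d}}$. Proposition~\ref{prop:recover}(b) then applies and gives, for \emph{every} realization of $\Meas$, the deterministic estimate $\norm{\gt-\sol}_2 \leq 2\noiseparam/\lmin{\Meas}{\dc{f,\gt}}$ for any solution $\sol$ of \eqref{eq:sig}. Everything thus reduces to a high-probability lower bound on $\lmin{\Meas}{\dc{f,\gt}}$.

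Next, for the chosen $\zl\in\Zlset$ — and, by Lemma~\ref{lem:dc}, for \emph{any} minimal $\ell^1$-representer, so that the choice is immaterial — one has $\dc{f,\gt} = \Dict\cdot\dc{\Onenorm,\zl}$. Since the conic hull commutes with the linear map $\Dict$, we get $\cone{\Dict\cdot\ds{\Onenorm,\zl}} = \Dict\cdot\dc{\Onenorm,\zl}$, and hence $w\bigl(\dc{f,\gt}\cap\Sn{n-1}\bigr) = \cmw{\Dict\cdot\ds{\Onenorm;\zl}}$ by the definition of the conic mean width. Applying Theorem~\ref{thm:versh} to the set $K \coloneqq \dc{f,\gt}\cap\Sn{n-1}$ then yields, with probability at least $1-\e^{-u^2/2}$,
\[ \lmin{\Meas}{\dc{f,\gt}} \;=\; \inf_{\x\in K}\norm{\Meas\x}_2 \;>\; \sqrt{m-1} - \constant\gaussparam^2\bigl(\cmw{\Dict\cdot\ds{\Onenorm;\zl}}+u\bigr) \;=\; \sqrt{m-1} - \sqrt{m_0-1}, \]
with $\constant=\gaussparam=1$ in the Gaussian case (this last clause being inherited directly from Theorem~\ref{thm:versh}). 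The hypothesis $m>m_0$ forces $\sqrt{m-1}>\sqrt{m_0-1}$, so on this event $\lmin{\Meas}{\dc{f,\gt}}>\sqrt{m-1}-\sqrt{m_0-1}>0$; substituting into the deterministic bound above produces exactly \eqref{eq:rec_sig1}.

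I do not expect a genuine obstacle: the heavy lifting lives in Lemmas~\ref{lem:gauge_formulation} and~\ref{lem:dc} and in Theorem~\ref{thm:versh}, all already available, and the above only glues them together. The one place deserving a sentence of care is the degenerate case $K=\emptyset$, which happens precisely when $\gt=\vec{0}$ (then the unique minimal $\ell^1$-representer is $\zl=\vec{0}$ and $\dc{\Onenorm,\zl}=\{\vec{0}\}$): there $\X=\{\vec{0}\}=\{\gt\}$ and \eqref{eq:rec_sig1} holds trivially, so one may harmlessly assume $\gt\neq\vec{0}$. A second minor point is the bookkeeping identity $\cone{\Dict\cdot\ds{\Onenorm,\zl}}=\Dict\cdot\dc{\Onenorm,\zl}$, which is needed only to line up the statement with the $\ds{\Onenorm;\zl}$-based form of the Gaussian width appearing in \eqref{eq:m_sig1}.
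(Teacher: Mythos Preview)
Your proposal is correct and follows essentially the same route the paper indicates: combine the gauge reformulation (Lemma~\ref{lem:gauge_formulation}), the descent-cone identity (Lemma~\ref{lem:dc}), the deterministic error bound (Proposition~\ref{prop:recover}), and Gordon's escape (Theorem~\ref{thm:versh}). Your additional remarks on the degenerate case $\gt=\vec{0}$ and on the identity $\cone{\Dict\cdot\ds{\Onenorm,\zl}}=\Dict\cdot\dc{\Onenorm,\zl}$ are harmless bookkeeping beyond what the paper spells out.
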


Let us discuss the previous result in view of its counterpart for coefficient recovery, Theorem~\ref{thm:coeff}.

\begin{remark}
\begin{enumerate}
 \item[(a)] Similarly as for coefficient recovery, \eqref{eq:m_sig1} identifies $\cmw[2]{\Dict \cdot \ds{\Onenorm;\zl}}$ as the main quantity of the sampling rate for signal recovery by~\eqref{eq:sig}. An important difference is that the set minimal of $\ell^{\smash{1}}$-representers is not required to be a singleton: The descent cone in the signal space may be evaluated at any possible $\zl \in \Zlset$ and the resulting sampling rate for signal recovery does not depend on this choice.
 
 \item[(b)] In the case of noiseless Gaussian measurements, the number $m_0$ is a tight description of the phase transition of signal recovery, cf. the discussion subsequent to Theorem \ref{thm:versh}. 
 
 
 \item[(c)] While the sampling rates for coefficient and signal recovery are similar, the error bounds of the two theorems differ. The inequality~\eqref{eq:rec_sig1} does not involve the minimal conic singular value as in Theorem~\ref{thm:coeff}. This suggests the following noteworthy consequence: In the case of simultaneous coefficient and signal recovery, the robustness to noise of \eqref{eq:coef} and \eqref{eq:sig} might still be different. Indeed, while a reconstruction of $\gt$ is independent of the value of $\lmin{\Dict}{\dc{\Onenorm,\zl}}$ -- in fact, even 0 is allowed--, the error with respect to $\zl$ is directly influenced by it. We emphasize that the bound~\eqref{eq:rec_sig1} cannot be retrieved from the analysis conducted for coefficient recovery. Indeed, the estimate~\eqref{eq:rec_coef} of Theorem~\ref{thm:coeff} only implies that
\begin{equation}
 \norm{\gt -\sol}_2 = \norm{\Dict (\zl - \solz)}_2 \leq \frac{\|\Dict\|_2}{\lmin{\Dict}{\dc{\Onenorm,\zl}}} \cdot \frac{2\noiseparam}{\sqrt{m-1} - \sqrt{m_0-1}},
\end{equation}
which is worse than \eqref{eq:rec_sig1}, in general. 
\end{enumerate}
\end{remark}

While the bound \eqref{eq:m_sig1} is accurate for an exact recovery from noiseless measurements, it can be improved when an approximate recovery of $\gt$ is already sufficient. This is reflected by the following proposition on \emph{stable recovery}, which is an adaptation of a result in \cite{genzel2017}; see Appendix~\ref{sec:proof_stable} for a proof. Note that such an argumentation does not allow for a similar statement about stable coefficient recovery, due to the product $\Meas\Dict$ in \eqref{eq:coef}.  

\begin{proposition}[Stable signal recovery\label{prop:epsilongaussian}]
   Assume that $\gt, \Meas, \y, \noise$ and $\noiseparam$ follow the measurement Model~\ref{mod:noisy_meas}, where $\Meas$  is drawn according to the sub-Gaussian Model~\ref{mod:meas_mod} with sub-Gaussian norm $\gamma$. Let $\Dict \in \R^{n\times d}$ be a dictionary with $\gt = \Dict \gtz$. For a desired precision $\varepsilon>0$ let 
   \begin{equation}
   \label{eq:suro}
    \za \in \argmin_{\substack{\z: \norm{\gt - \Dict\z}_2\leq \varepsilon \\ \norm{\gtz}_1 = \norm{\z}_1}} \cmw{\Dict \cdot \ds{\norm{\cdot}_1;\z}}.
   \end{equation}

\noindent Then there exists a numerical constant $c>0$ such that for every $r>0$ and $u>0$ the following holds true with probability at least $1-\e^{-u^2/2}$:  
If the number of measurements obeys 
 \begin{equation}
 \label{eq:improvedsamplingrate}
 m > \tilde m_0 \coloneqq \constant^2 \cdot \gaussparam^4 \cdot \left( \frac{r+1}{r} \cdot [\cmw{\Dict \cdot \ds{\Onenorm;\za}} +1] +u \right)^2 +1,
 \end{equation}
 then any solution $\sol$ to \eqref{eq:sig} satisfies
 \begin{equation}
 \label{eq:maxerror}
  \norm{\gt - \sol}_2 \leq \max\left(r\varepsilon,\frac{2\noiseparam}{\sqrt{m-1} - \sqrt{\tilde m_0 - 1}}\right).
 \end{equation}
 If $\meas \sim \mathcal{N} (\vec{0},\I)$, then $\constant = \gaussparam = 1$.
\end{proposition}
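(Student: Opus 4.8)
The plan is to run the deterministic-plus-Gordon scheme of Section~\ref{sec:primer}, but with the descent cone at $\gt$ replaced by a mildly enlarged ``effective'' cone that absorbs the approximation residual $\r := \Dict\za - \gt$, which obeys $\norm{\r}_2 \le \varepsilon$ since $\za$ is feasible in~\eqref{eq:suro}. (The minimum in~\eqref{eq:suro} is attained: the $\ell^1$-descent cone $\dc{\Onenorm;\z}$ depends only on the support and sign pattern of $\z$, so $\z \mapsto \cmw{\Dict \cdot \ds{\Onenorm;\z}}$ takes finitely many values, and the feasible set is nonempty as it contains $\gtz$.) I would fix a solution $\sol = \Dict\solz$ of~\eqref{eq:sig}, so that $\solz$ solves~\eqref{eq:coef}, and set $\h := \sol - \gt$ and $\tilde{\h} := \h - \r = \Dict(\solz - \za)$.

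First I would record two elementary facts. Since $\gtz$ is feasible for~\eqref{eq:coef} -- indeed $\norm{\y - \Meas\Dict\gtz}_2 = \norm{\noise}_2 \le \noiseparam$ -- optimality of $\solz$ gives $\norm{\solz}_1 \le \norm{\gtz}_1 = \norm{\za}_1$, hence $\solz - \za \in \ds{\Onenorm;\za}$ and therefore $\tilde{\h} \in \Dict \cdot \ds{\Onenorm;\za} \subseteq \Dict \cdot \dc{\Onenorm;\za}$. Second, feasibility of $\sol$ together with $\norm{\y - \Meas\gt}_2 = \norm{\noise}_2 \le \noiseparam$ yields, via the triangle inequality, the ``tube'' bound $\norm{\Meas\h}_2 \le 2\noiseparam$.

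Then I would split into cases. If $\norm{\h}_2 \le r\varepsilon$ we are done. Otherwise $t := \norm{\h}_2 > r\varepsilon \ge r\norm{\r}_2$, and dividing $\h = \tilde{\h} + \r$ by $t$ writes the unit vector $\h/t$ as the sum of an element of the cone $\Dict \cdot \dc{\Onenorm;\za}$ of norm $< \tfrac{r+1}{r}$ (because $\norm{\tilde{\h}}_2 \le \norm{\h}_2 + \norm{\r}_2 < \tfrac{r+1}{r}\norm{\h}_2$) and a positive multiple of the \emph{fixed} unit vector $\r/\norm{\r}_2$ of norm $< \tfrac1r$. Hence $\h/t$ lies in
\[ E := \left(\Dict \cdot \dc{\Onenorm;\za} \cap \tfrac{r+1}{r}\Bn[2]{n}\right) + \left[0,\tfrac1r\right]\cdot\tfrac{\r}{\norm{\r}_2} \subseteq \R^n , \]
and in particular in $E \cap \Sn{n-1}$; if $\r = \vec{0}$ the second summand degenerates to $\{\vec{0}\}$ and the argument only simplifies. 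The crucial point is that $\r$ induces a \emph{one-dimensional} fattening of the cone, not a full ball, which is what keeps the Gaussian width of $E$ under control: subadditivity of the supremum, homogeneity of the cone, the identity $w(C \cap \Bn[2]{n}) = \E\,(\,\sup_{\vec w \in C \cap \Sn{n-1}}\sp{\g}{\vec w}\,)_+$ for a cone $C$ (here $(a)_+ := \max\{a,0\}$), and $\E\,(\sp{\g}{\vec v})_+ = (2\pi)^{-1/2}$ for a unit vector $\vec v$, give $w(E) \le \tfrac{r+1}{r}\,\E\,(X)_+ + \tfrac{1}{2r}$, where $X := \sup_{\vec w \in \Dict \cdot \dc{\Onenorm;\za} \cap \Sn{n-1}}\sp{\g}{\vec w}$ has $\E X = \cmw{\Dict \cdot \ds{\Onenorm;\za}} \ge 0$. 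Since $X$ is $1$-Lipschitz in $\g$, Gaussian concentration gives $\operatorname{Var}(X) \le 1$, hence $\E\,(X)_+ = \tfrac12(\E X + \E\abs{X}) \le \E X + \tfrac12$, and a short computation then yields $w(E) \le \tfrac{r+1}{r}\bigl(\cmw{\Dict \cdot \ds{\Onenorm;\za}} + 1\bigr)$.

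Finally I would invoke Gordon's escape bound. Applying Theorem~\ref{thm:versh} to $K := E \cap \Sn{n-1}$ (whose width is at most $w(E)$), with probability at least $1 - \e^{-u^2/2}$ we have
\[ \inf_{\x \in K}\norm{\Meas\x}_2 \;>\; \sqrt{m-1} - \constant\,\gaussparam^2\bigl(w(E) + u\bigr) \;\ge\; \sqrt{m-1} - \sqrt{\tilde m_0 - 1} \]
by the definition of $\tilde m_0$ in~\eqref{eq:improvedsamplingrate}. On this event, in the remaining case $\norm{\h}_2 > r\varepsilon$ we have $\h/t \in K$, so $\norm{\Meas\h}_2 = \norm{\h}_2\,\norm{\Meas(\h/t)}_2 > \norm{\h}_2\bigl(\sqrt{m-1} - \sqrt{\tilde m_0 - 1}\bigr)$; combining this with the tube bound $\norm{\Meas\h}_2 \le 2\noiseparam$ and with $m > \tilde m_0$ gives $\norm{\h}_2 < 2\noiseparam/(\sqrt{m-1} - \sqrt{\tilde m_0 - 1})$. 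Together with the trivial case this is precisely~\eqref{eq:maxerror}, and the Gaussian specialization $\constant = \gaussparam = 1$ is inherited from Theorem~\ref{thm:versh}. I expect the only genuinely delicate step to be the width estimate for $E$: keeping the residual's contribution one-dimensional (so that no dimension-dependent term appears) and paying the harmless additive ``$+1$'' when passing from $\E X$ to $\E\,(X)_+$; everything else is the bookkeeping already provided by Proposition~\ref{prop:recover} and Theorem~\ref{thm:versh}.
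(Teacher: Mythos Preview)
Your proof is correct and complete. The paper's own proof is not self-contained: it invokes \cite[Theorem~6.4 and Lemma~A.2]{genzel2017} (the ``local mean width'' framework) as black boxes, and its main work is verifying the hypotheses --- in particular establishing the set identity $\Dict \cdot \ds{\Onenorm;\gtz} + \gt - \Dict\za = \Dict \cdot \ds{\Onenorm;\za}$, which is where the normalization $\norm{\za}_1 = \norm{\gtz}_1$ enters. Your argument unpacks precisely that machinery: instead of the abstract local width $w_t$, you work directly with the enlarged set $E$ obtained by adding a one-dimensional segment to a scaled cone-ball intersection, and bound its Gaussian width by hand. The delicate step you flagged --- passing from $\E X$ to $\E(X)_+$ at the cost of an additive $\tfrac12$ via the Gaussian Poincar\'e variance bound for $1$-Lipschitz functionals --- is essentially the content of \cite[Lemma~A.2]{genzel2017}. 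So the two proofs are structurally equivalent, but yours is more elementary (no external reference beyond Theorem~\ref{thm:versh}) and makes transparent why the residual $\r$ contributes only a dimension-free term; the paper's version is shorter on the page only because the heavy lifting is outsourced. Your remark on attainment of the minimum in~\eqref{eq:suro} is also a small addition not present in the paper.
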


The previous result extends Theorem~\ref{thm:sig} by an intuitive trade-off regarding stable signal recovery: By allowing for a lower recovery precision $\varepsilon>0$, the number of required measurements $\tilde{m}_0$ can be significantly lowered in comparison to $m_0$ in~\eqref{eq:m_sig1}. Indeed, \eqref{eq:suro} searches for surrogate representations $\za$ of $\gt$ in $\Dict$ that yield a minimal sampling rate. Note that the original coefficient vector $\gtz$ is not required to be a minimal $\ell^1$-representer of $\gt$ with respect to $\Dict$. Thus, Proposition~\ref{prop:epsilongaussian} enables to trade off the required number of measurements against the desired recovery accuracy. The factor $r>0$ is an additional oversampling parameter that may assist in balancing out this trade-off.   

We emphasize that this approach to stability is centered around a Euclidean approximation in the signal domain $\R^n$. This is in stark contrast to a stability theory in the coefficient domain, which is typically based on an approximation of compressible vectors by ordinary \emph{best $s$-term approximations}. We refer to Section 2.4 and 6.1 in~\cite{genzel2017} as well as Section 2.4 in \cite{Genz2020} for more details on the presented approach to stable recovery and related results in the literature. 

\begin{remark}
The normalization condition $\norm{\gtz}_1 = \norm{\z}_1$ in \eqref{eq:suro} can be discarded at the expense of a slightly worse error bound. Under the same conditions as in Proposition \ref{prop:epsilongaussian}, we can also derive the following result. Set $\varepsilon>0$ and let 

$$
\xa \in \mathop{\mathrm{argmin}}_{\|\x-\gt\|_2\leq \varepsilon} \cmw{\ds{p_{\Dict \cdot \Bn[1]{d}};\x}} \quad \mbox{ and } \quad K=\ds{p_{\Dict \cdot \Bn[1]{d}},\xa}.
$$
For every $u>0$ and
$$
m > \tilde m_0 := c^2\cdot \gamma^4 \cdot (\cmw{K} + u)^2 +1,
$$
the inequality  
\begin{equation}\label{eq:second_epsilon_width}
\| \gt - \sol\|_2 \leq \frac{4(\eta + \varepsilon (\sqrt{m} +C \gamma^2 + u))}{\sqrt{m-1} - \sqrt{\tilde m_0-1}} + \varepsilon
\end{equation}
holds true with probability larger than $1-2\e^{-u^2}$ for some constant $C$ depending only on the distribution of $\Meas$. For instance, letting $m=r \tilde m_0$ with $r>1$ as an oversampling factor, the latter bound essentially becomes:
	$$
	\|\gt -\sol \|_2 \lesssim \frac{4\eta}{(\sqrt{r}-1) \tilde m_0 } + \frac{4\varepsilon \sqrt{r}}{(\sqrt{r}-1)} + \varepsilon.
	$$
\end{remark}

We conclude this section by an illustration of stable recovery in two simple examples.

\begin{example}
\begin{enumerate}[(a)]
 
 \item Assume that $\Dict=\Id$ and let $\gt \in \R^n$ denote a fully populated vector, which is without loss of generality assumed to be positive and nonincreasing. Standard results on the computation of the conic mean width (see for instance~\cite[Example 4.3]{tropp2014convex}) stipulate that $\cmw[2]{\ds{\Onenorm;\gt}}=n$. Hence, it is impossible to exactly recover $\gt$ from noiseless compressive measurements. However, if we are satisfied with an approximate recovery of $\gt$, we can set the precision for instance to $\varepsilon=3\cdot\sigma_s(\gt)_1/\sqrt{s}$, where $\sigma_s(\gt)_p$ denotes the $\ell^p$-error of the best $s$-term approximation to $\gt$. Then, the surrogate vector $\xa \in \R^n$ defined as $x^\ast_{i} \coloneqq x_{0,i} + \sigma_s(\gt)_1 / s$ for $i=1,\dots,s$ and $x^\ast_i \coloneqq 0$ for $i=s+1,\dots,n$, satisfies $\norm{\xa}_1 = \norm{\gt}_1$.  A straightforward calculation shows that $\norm{\xa  - \gt}_2^2 \leq \sigma_s(\gt)_1^2/s + \sigma_s(\gt)_2^2$, which eventually leads to $\norm{\xa - \gt}_2 \leq 3\cdot\sigma_s(\gt)_1/\sqrt{s}$. Furthermore, a computation of the conic mean width yields that $ \cmw[2]{\ds{\Onenorm;\xa}} \lesssim 2s\log(n/s)$.
 Hence, Proposition~\ref{prop:epsilongaussian} shows that \sigNoisefree{} allows for the reconstruction of an approximation $\sol$ from $m \gtrsim 2s\log(n/s)$ noiseless sub-Gaussian measurements that satisfies $\norm{\gt - \sol}_2 \lesssim 3 \cdot\sigma_s(\gt)_1/\sqrt{s}$. A comparison with \cite[Theorem 4.22 and Section 11.1]{foucart2013cs} shows that such a stability result is essentially optimal.  
 
 \item Let us provide another simple result highlighting the important difference between signal and coefficient recovery. Consider a dictionary consisting of a convolution with a low pass filter $\h\in \R^n$, i.e., for any $\z\in \R^d$, $\Dict \z = \h \star \z$. The problem \eqref{eq:sig} then becomes a \emph{deconvolution} problem and it could be turned to a super-resolution problem by considering non integer shifts of the kernel. In this setting, it is well known \cite{candes2012} that coefficients of the form $\z_0=[1,-1,0,\hdots,0]$ are hard to recover by solving \eqref{eq:coef} since $\h\star \z_0\simeq 0$. There is a minimum separation distance to respect to guarantee the recovery of sparse spikes with arbitrary signs. We can however use the result \eqref{eq:second_epsilon_width} by setting $\varepsilon=\|\Dict \z_0\|_2$. In that case, we obtain $\tilde{m}_0=\constant^2\gamma^4u^2+1$ by picking $\x^\star=0$. Hence, we can recover an $\varepsilon$-approximation of $\gt=\Dict \z_0$ with a few measurements.  

\end{enumerate}
\end{example}

\section{Upper Bounds on the Conic Gaussian Width}
\label{sec:bounds}

The previous results identify the conic mean width $\cmw[2]{\Dict \cdot \ds{\Onenorm;\zl}}$ as the key quantity that controls coefficient and signal recovery by $\ell^{\smash{1}}$-synthesis. However, this expression does not convey an immediate understanding without further simplification. While tight and informative upper bounds are available for simple dictionaries such as orthogonal matrices, the situation becomes significantly more involved for general, possibly redundant transforms. 
Indeed, note that the polar cone of $\Dict \cdot \ds{\Onenorm;\zl}$ is given by $(\Dict \cdot \ds{\Onenorm;\zl})^\circ = (\Dict^T)^{-1} (\ds{\Onenorm;\zl}^\circ)$. The appearance of the preimage $(\Dict^T)^{-1}$ hinders the application of the standard approach based on polarity; see for instance \cite[Recipe 4.1]{amelunxen2014edge}.

Hence, the goal of this section is to provide two upper bounds for $\cmw[2]{\Dict \cdot \ds{\Onenorm;\zl}}$ that are more accessible and intuitive: Section~\ref{sec:condbound} is based on a local conditioning argument, and addresses recovery when a unique minimal $\ell^{\smash{1}}$-representer exists. The second bound of Section~\ref{sec:circumangle} follows a geometric analysis that explores the thinness of high-dimensional polyhedral cones with not too many generators. This approach possesses a broader scope and plays a central role in our work.

\subsection{A Condition Number Bound}
\label{sec:condbound}
In this section, we aim at ``pulling'' the dictionary $\Dict$ ``out of'' the expression $\cmw[2]{\Dict \cdot \ds{\Onenorm;\zl}}$, in order to make use of the fact that $\cmw[2]{\ds{\Onenorm;\zl}}$ is well understood. 
We begin by introducing the following notation of a local condition number.
\begin{definition}[Local condition number]
Let $\Dict \in \R^{n \times d}$ be a dictionary and let $C\subseteq\R^d$ be closed convex cone. Then, we define the \emph{local condition number of $\Dict$ with respect to $C$} by
\begin{equation}
 \condi{\Dict}{C} \coloneqq \frac{\norm{\Dict}_2}{\lmin{\Dict}{C}},
\end{equation}
with the convention $\condi{\Dict}{C}=+\infty$ if $\lmin{\Dict}{C}=0$. We also use the notation $\condi{\Dict}{\gtz} \coloneqq \condi{\Dict}{\dc{\Onenorm;\gtz}}$, which we refer to as \emph{local condition number of $\Dict$ at $\gtz$ with respect to the $\ell^{\smash{1}}$-norm}.
\end{definition}
Before the previous quantity will be used to simplify $\cmw[2]{\Dict \cdot \ds{\Onenorm;\zl}}$, we first comment on the origin of its name and give an intuitive interpretation of its meaning in the following remark.
\begin{remark}
\label{rem:cond}
\begin{enumerate}[(a)]
 \item First, recall that the classical, generalized condition number of a matrix is defined as the ratio of the largest and the smallest nonzero singular value. Hence, referring to $\condi{\Dict}{C}$ as a local condition number is motivated by the fact that it can also be written as
\begin{equation}
 \condi{\Dict}{C} = \frac{\norm{\Dict}_2}{\lmin{\Dict}{C}} = \frac{\lmax{\Dict}{\R^d}}{\lmin{\Dict}{C}},
\end{equation}
where $\lmax{\Dict}{\R^d} \coloneqq \max_{ \z \in \R^d \cap \Sn{d-1}} \norm{\Dict \z}_2 = \norm{\Dict}_2$ is the largest singular value of $\Dict$.   

\item Furthermore, note that $\condi{\Dict}{\gtz}$ acts as a local measure for the conditioning of $\Dict$ at $\gtz$ with respect to the $\ell^{\smash{1}}$-norm. It quantifies how robustly $\gtz$ can be recovered as the minimal $\ell^{\smash{1}}$-representer of $\gt = \Dict \gtz$: Consider the perturbation $\widehat{\gtz} = \gtz + \hat{\noise}$, where $\hat{\noise} \in \R^d$ with $\norm{\hat{\noise}}_2 \leq \hat{\noiseparam}$. Thus, in the signal domain we obtain $\norm{\Dict \gtz - \Dict \cdot \widehat{\gtz}}_2 = \norm{\Dict \hat{\noise}}_2 \leq \norm{\Dict}_2 \cdot \hat{\noiseparam}$. Proposition~\ref{prop:recover} then yields that any solution $\solz$ of the program
\begin{equation}
 \min_{z \in \R^d} \norm{\z}_1 \quad \mbox{ s.t. } \quad \norm{\Dict \widehat{\gtz} - \Dict \z}_2 \leq \norm{\Dict}_2 \cdot \hat{\noiseparam}
\end{equation}
satisfies 
\begin{equation}
\norm{\gtz - \solz}_2 \leq \frac{2 \cdot \norm{\Dict}_2 \cdot \hat{\noiseparam}}{\lmin{\Dict}{\dc{\Onenorm;\gtz}}} \lesssim \condi{\Dict}{\gtz} \cdot \hat{\noiseparam},
\end{equation}
which shows that $\condi{\Dict}{\gtz}$ can be seen as a measure for the stability of $\gtz$ with respect to $\ell^1$-minimization with $\Dict$.
\end{enumerate}
\end{remark}

The following proposition provides a generic upper bound for the conic mean width of a linearly transformed cone.
\begin{proposition}\label{prop:conditiongaussian}
Let $C \subseteq \R^d$ denote a closed convex cone. For any dictionary $\Dict \in \R^{n\times d}$, we have 
\begin{equation}
\label{eq:condbound}
 \cmw[2]{\Dict \cdot C} \leq \condi[2]{\Dict}{C} \cdot \left( \cmw[2]{C} + 1 \right).
\end{equation} 
\end{proposition}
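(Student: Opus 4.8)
The plan is to decouple the roles of the dictionary $\Dict$ and the cone $C$ inside $\cmw[2]{\Dict\cdot C}$ by routing the estimate through a Euclidean ball in the coefficient space $\R^d$. Two degenerate situations are disposed of first: if $\lmin{\Dict}{C}=0$, then $\condi{\Dict}{C}=+\infty$ by convention and the claim is vacuous since $\cmw[2]{C}+1\ge 1$; and if $C=\{0\}$, both sides vanish. So assume from now on that $C\neq\{0\}$ and set $\lambda:=\lmin{\Dict}{C}>0$.

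The first real step is a \emph{domain reduction}. Since $C$ is a convex cone, so is $\Dict\cdot C$, hence $\cone{\Dict\cdot C}=\Dict\cdot C$ and $\cmw{\Dict\cdot C}=w((\Dict\cdot C)\cap\Sn{n-1})$. If $\Dict\cc\in\Sn{n-1}$ with $\cc\in C$, then $\lambda\norm{\cc}_2\le\norm{\Dict\cc}_2=1$, so $\cc\in C\cap\lambda^{-1}\Bn[2]{d}$; therefore $(\Dict\cdot C)\cap\Sn{n-1}\subseteq\Dict\cdot(C\cap\lambda^{-1}\Bn[2]{d})$, and monotonicity of the mean width gives $\cmw{\Dict\cdot C}\le w(\Dict\cdot(C\cap\lambda^{-1}\Bn[2]{d}))$. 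The second step \emph{absorbs $\Dict$ into its spectral norm}: for any bounded $L\subseteq\R^d$ one has $w(\Dict\cdot L)\le\norm{\Dict}_2\cdot w(L)$. I would prove this by a Gaussian comparison — write $w(\Dict\cdot L)=\E\sup_{\cc\in L}\sp{\g}{\Dict\cc}$, compare the centred Gaussian processes $\cc\mapsto\sp{\g}{\Dict\cc}$ and $\cc\mapsto\norm{\Dict}_2\sp{\g'}{\cc}$ indexed by $\cc\in L$ (with $\g'$ an independent standard Gaussian vector), note that their increment variances satisfy $\norm{\Dict(\cc-\cc')}_2^2\le\norm{\Dict}_2^2\norm{\cc-\cc'}_2^2$, and invoke the Sudakov–Fernique inequality. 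Applying this with $L=C\cap\lambda^{-1}\Bn[2]{d}$ and using the positive homogeneity $w(tL)=t\,w(L)$ yields $\cmw{\Dict\cdot C}\le(\norm{\Dict}_2/\lambda)\,w(C\cap\Bn[2]{d})=\condi{\Dict}{C}\cdot w(C\cap\Bn[2]{d})$.

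It then remains to control $w(C\cap\Bn[2]{d})$ by $\cmw{C}$. Here I would invoke Moreau's conic decomposition: for a closed convex cone $C$ one has $\sup_{\x\in C\cap\Bn[2]{d}}\sp{\g}{\x}=\norm{\Pi_C\g}_2$, where $\Pi_C$ denotes the Euclidean projection onto $C$. Consequently $w(C\cap\Bn[2]{d})=\E\norm{\Pi_C\g}_2\le(\E\norm{\Pi_C\g}_2^2)^{1/2}=:\sqrt{\delta(C)}$ by Jensen's inequality, where $\delta(C)$ is the statistical dimension of $C$. Plugging in the well-known sandwich $\delta(C)\le\cmw[2]{C}+1$ (see \cite{amelunxen2014edge}) gives $w(C\cap\Bn[2]{d})^2\le\cmw[2]{C}+1$. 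Squaring the bound obtained in the previous paragraph and substituting this estimate finishes the proof.

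I expect the Gaussian comparison of the second step to be the only genuinely nontrivial ingredient: it is exactly what converts the awkward linear image $\Dict\cdot L$ into a clean scalar multiple $\norm{\Dict}_2\,w(L)$, and it must be set up carefully (centred processes, the correct increment inequality, and boundedness of $L$ so that the suprema are finite). Everything else — the domain reduction, positive homogeneity of the mean width, Moreau's identity, and the relation between statistical dimension and squared conic mean width — is routine bookkeeping.
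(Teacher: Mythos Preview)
Your proof is correct and follows essentially the same strategy as the paper's: the key set inclusion $(\Dict\cdot C)\cap\Sn{n-1}\subseteq \lambda^{-1}\Dict\cdot(C\cap\Bn[2]{d})$, a Sudakov--Fernique comparison to extract $\norm{\Dict}_2$, and the relation between the statistical dimension and $\cmw[2]{C}$. The only notable difference is that you work with first moments throughout and apply Jensen at the end, whereas the paper passes to the statistical dimension $\delta(\Dict\cdot C)$ at the outset and then needs a Gaussian Poincar\'e (variance--Lipschitz) lemma to convert $\E[(\sup\ldots)^2]$ back to $(\E\sup\ldots)^2$; your ordering is a mild simplification that sidesteps that auxiliary lemma.
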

\begin{proof}
See Appendix \ref{sec:proof_conditiongaussian}.
\end{proof}

Note that for sparse coefficient vectors $\gtz$ the quantity $\cmw[2]{\ds{\Onenorm;\gtz}}$ is well understood and has been frequently calculated in the literature, see for instance \cite[Example 4.3]{tropp2014convex}. It turns out that it can be bounded from above by
 \begin{equation}
  \cmw[2]{\ds{\Onenorm;\gtz}} \leq 2s \log (d/s) + 2s,
 \end{equation}
where $s = \# \supp (\gtz)$. Hence, we directly obtain the following corollary.

\begin{corollary}
\label{cor:coef}
If $\zl$ is the unique minimal $\ell^1$-representer of the associated signal $\gt = \Dict \zl$, the critical number of measurements $m_0$ in \eqref{eq:m_coef} and \eqref{eq:m_sig1} satisfies
\begin{equation}
 m_0 \leq c^2 \cdot \gamma^4 \cdot \left(  \condi{\Dict}{\zl} \cdot \left( \cmw[2]{\ds{\Onenorm;\zl}} + 1\right)^{1/2} +u \right)^2 + 1  \lesssim \condi[2]{\Dict}{\zl} \cdot s\log (d/s),
\end{equation}
where $s = \# \supp (\zl)$.
\end{corollary}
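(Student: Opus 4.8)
The plan is to combine the generic transformation bound of Proposition~\ref{prop:conditiongaussian} with the standard estimate for the conic mean width of the $\ell^1$-descent set, and then substitute the outcome into the definition of $m_0$. Before doing so, I would record that the hypothesis ``$\zl$ is the unique minimal $\ell^1$-representer of $\gt=\Dict\zl$'' is, by~\eqref{eq:central_coef} together with the definition of the local condition number, equivalent to $\lmin{\Dict}{\dc{\Onenorm;\zl}}>0$, i.e.\ to $\condi{\Dict}{\zl}<\infty$. This makes the claimed right-hand side finite, so the bound is not vacuous.

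For the first inequality, I would note that by linearity of $\Dict$ one has $\cone{\Dict\cdot\ds{\Onenorm;\zl}}=\Dict\cdot\cone{\ds{\Onenorm;\zl}}=\Dict\cdot\dc{\Onenorm;\zl}$, so that $\cmw{\Dict\cdot\ds{\Onenorm;\zl}}=\cmw{\Dict\cdot\dc{\Onenorm;\zl}}$ and likewise $\cmw{\ds{\Onenorm;\zl}}=\cmw{\dc{\Onenorm;\zl}}$. Applying Proposition~\ref{prop:conditiongaussian} with the convex cone $C=\dc{\Onenorm;\zl}$ (which is closed, since $\Onenorm$ is polyhedral) then gives
\[
\cmw[2]{\Dict\cdot\ds{\Onenorm;\zl}}\leq\condi[2]{\Dict}{\zl}\bigl(\cmw[2]{\ds{\Onenorm;\zl}}+1\bigr),
\]
hence $\cmw{\Dict\cdot\ds{\Onenorm;\zl}}\leq\condi{\Dict}{\zl}\bigl(\cmw[2]{\ds{\Onenorm;\zl}}+1\bigr)^{1/2}$. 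Substituting this into $m_0=\constant^2\gaussparam^4\bigl(\cmw{\Dict\cdot\ds{\Onenorm;\zl}}+u\bigr)^2+1$ from~\eqref{eq:m_coef} and~\eqref{eq:m_sig1}, and using that $\cmw{\Dict\cdot\ds{\Onenorm;\zl}}\geq0$ together with monotonicity of $t\mapsto(t+u)^2$ on $t\geq0$, produces exactly the first displayed inequality.

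For the second ($\lesssim$) inequality I would invoke the well-known estimate $\cmw[2]{\ds{\Onenorm;\zl}}\leq 2s\log(d/s)+2s$ with $s=\#\supp(\zl)$, see e.g.~\cite[Example~4.3]{tropp2014convex}, so that $\cmw[2]{\ds{\Onenorm;\zl}}+1\lesssim s\log(d/s)$ in the sparse regime where $s\log(d/s)\gtrsim s\gtrsim1$ (replacing $\log(d/s)$ by $\log(ed/s)$ makes this uniform in $s$). Since moreover $\condi{\Dict}{\zl}=\norm{\Dict}_2/\lmin{\Dict}{\dc{\Onenorm;\zl}}\geq1$, expanding $\bigl(\condi{\Dict}{\zl}(s\log(d/s))^{1/2}+u\bigr)^2$ and absorbing the terms $\gaussparam^4u^2$ and $1$ into $\condi[2]{\Dict}{\zl}\,s\log(d/s)$ — treating $\gaussparam$ and the fixed confidence parameter $u$ as constants, and recalling that $\gaussparam=1$ in the Gaussian case — yields $m_0\lesssim\condi[2]{\Dict}{\zl}\,s\log(d/s)$, as claimed.

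There is essentially no genuine obstacle here: the substantive content lies entirely in Proposition~\ref{prop:conditiongaussian} (whose proof is deferred to the appendix) and in the classical width estimate for the $\ell^1$-descent set, and the corollary itself is a bookkeeping step. The only points deserving a word of care are the finiteness observation above, which keeps the statement non-vacuous, and the mild implicit assumption of a genuinely sparse regime $s\ll d$, so that the additive $2s$ and $1$ can be folded into $s\log(d/s)$.
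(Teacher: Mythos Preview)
Your proposal is correct and follows exactly the paper's approach: apply Proposition~\ref{prop:conditiongaussian} with $C=\dc{\Onenorm;\zl}$ to bound $\cmw{\Dict\cdot\ds{\Onenorm;\zl}}$, substitute into the definition of $m_0$, and then invoke the standard estimate $\cmw[2]{\ds{\Onenorm;\zl}}\leq 2s\log(d/s)+2s$ to obtain the asymptotic form. Your additional remarks on finiteness of $\condi{\Dict}{\zl}$ and on $\condi{\Dict}{\zl}\geq 1$ are correct refinements that the paper leaves implicit.
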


We have assumed $\zl$ to be a unique minimal $\ell^1$-representer since otherwise the previous statement becomes meaningless due to $\condi{\Dict}{\zl}=+\infty$. Thus, the condition number bound of Corollary~\ref{cor:coef} foremost addresses coefficient recovery via~\eqref{eq:coef}, as well as a reconstruction of signals with unique minimal $\ell^{\smash{1}}$-representers in $\Dict$ by \eqref{eq:sig}.
In both cases,  $m \gtrsim \condi[2]{\Dict}{\zl} \cdot s \log (d/s)$ sub-Gaussian measurements are sufficient (recall that the two formulations might nevertheless differ with respect to robustness to measurement noise). 
Hence, the results of this section identify the following three decisive factors for successful recovery: 
\begin{enumerate}[(i)]
 \item The uniqueness of $\zl$ as the minimal $\ell^1$-representer of $\gt = \Dict \zl$;
 \item The complexity of $\zl$ with respect to $\ell^1$-norm, which is measured by $\cmw[2]{\ds{\Onenorm;\zl}}$, or by its sparsity $s = \# \supp (\zl)$;
 \item The quantity $\condi{\Dict}{\zl}$, which resembles a local measure for the conditioning of $\Dict$ at $\zl$.
\end{enumerate}

We demonstrate in the numerical experiments of Section~\ref{sec:num_coef} that such a condition number approach might be accurate for some specific problems, however, it is overly pessimistic in general. Indeed, it is possible that $\cmw[2]{\Dict \cdot \ds{\Onenorm;\zl}} \leq \cmw[2]{\ds{\Onenorm;\zl}}$, but $\condi{\Dict}{\zl} \gg 1$. We suspect that a more accurate description might require a detailed analysis of random conic spectra \cite{seeger2003eigenvalues}.
 
\begin{remark}
 \begin{enumerate}[(a)]
  \item In the case $\Dict = \Id$, observe that Corollary~\ref{cor:coef} is consistent with standard compressed sensing results. Indeed, in this situation, it holds true that 
  \begin{equation}
   \condi{\Id}{\zl}  = 1 = \norm{\Id}_2  = \lmin{\Id}{\dc{\Onenorm,\zl}},
  \end{equation}
  implying that $m  \gtrsim s \log (n/s)$ measurements are sufficient for robust recovery of $s$-sparse signals.  
  
  \item During completion of this work, we discovered that similar bounds as~\eqref{eq:condbound} were recently derived in~\cite{amelunxen2018}. Amelunxen et al.~do not address the synthesis formulation of compressed sensing, but they study the statistical dimension of linearly transformed cones in a general setting. Their results are based on a notion of \emph{Renegar's condition number}, which can be defined as
\begin{equation}
\label{eq:ren}
\mathcal{R}_C (\Dict) = \min \left\{ \frac{\norm{\Dict}_2}{\lmin{\Dict}{C}}, \frac{\norm{\Dict}_2}{\sigma_{\R^n\to C}(-\Dict^T)}\right\}, 
\end{equation}
where $C\subseteq \R^d$ is a closed, convex cone, $\sigma_{\R^n\to C}(-\Dict^T) \coloneqq \min_{\x \in \Sn{n-1} } \norm{\vec{\Pi}_C (-\Dict^T \x)}_2$ and $\vec{\Pi}_C$ denotes the orthogonal projection on $C$. \cite[Theorem A]{amelunxen2018} then establishes the bound $\delta (\Dict \cdot C) \leq \mathcal{R}^2_C(\Dict) \cdot \delta (C)$, where $\delta$ denotes the \emph{statistical dimension}, which is essentially equivalent to the conic mean width; see proof of Proposition~\ref{prop:conditiongaussian} in Appendix~\ref{sec:proof_conditiongaussian} for details.

Additionally, the authors of \cite{amelunxen2018} provide a ``preconditioned'', probabilistic version of the latter bound: For $m \leq n$ let $\vec{P}_m$ denote the projection onto the first $m$ coordinates and define the quantity $\mathcal{R}_{C,m}^2 (\Dict) \coloneqq \E_{\vec{Q}} [\mathcal{R}_C (\vec{P}_m\vec{Q} \Dict)^2 ] $, where the expectation is with respect to a random orthogonal matrix $\vec{Q}$, distributed according to the normalized Haar measure on the orthogonal group.  \cite[Theorem B]{amelunxen2018} then states that for any parameter $\nu \in (0,1)$ and $m \geq \delta (C)  + 2\sqrt{\log (2 / \nu)m}$, we have that $\delta (\Dict \cdot C) \leq  \mathcal{R}_{C,m}^2 (\Dict) \cdot \delta(C) + (n-m) \cdot \nu$.
Due to the second term in~\eqref{eq:ren}, both versions of Renegar's condition number will be not greater than $\condi{\Dict}{C}$, in general. Hence, ignoring the dependence on $\nu$ and the condition on $m$ for simplicity, the bound on the required samples of Corollary~\ref{cor:coef} could also be formulated with $\mathcal{R}_{\dc{\Onenorm,\zl}}^2 (\Dict)$ or  $\mathcal{R}_{\dc{\Onenorm,\zl},m}^2 (\Dict)$ instead of $\condi{\Dict}{\zl}$. 
 \end{enumerate}
\end{remark}

\subsection{A Geometric Bound}
\label{sec:circumangle}

In this section, we derive an upper bound for $\cmw[2]{\Dict \cdot \ds{\Onenorm;\zl}}$ that is based on generic arguments from high-dimensional convex geometry. We exploit the fact that the cone $\Dict \cdot \dc{\Onenorm;\zl}$  is finitely generated by at most $2d$ vectors (see the proof of Proposition~\ref{prop:lindc} in Appendix~\ref{sec:proof_geom}) -- a number that is typically significantly smaller than exponential in the ambient dimension $n$. The resulting upper bound depends on the maximal sparsity of elements in $\Zlset$ and on a single geometric parameter that we refer to as \emph{circumangle}, whereas the number of generators only has a logarithmic influence. This is comparable to the mean width of a convex polytope, which is mainly determined by its diameter (cf.~Lemma~\ref{lem:convexpolywidth}) and by the logarithm of its number of vertices. 

In Section~\ref{sec:circumstart}, we first introduce the required notation and show an upper bound on the conic mean width of pointed polyhedral cones. We then focus on the geometry of the descent cone $\dc{p_{\Dict\cdot\Bn[1]{d}},\gt}$ (see Section~\ref{sec:geom_desc}) in order to derive the desired upper bound on the expression $\cmw[2]{\Dict \cdot \ds{\Onenorm;\zl}}$ in Section~\ref{sec:sr}. Finally, we show how this bound can be used in practical examples; see Section~\ref{sec:examples}. 

\subsubsection{The Circumangle}
\label{sec:circumstart}

The goal of this section is to relate the conic mean width of a pointed polyhedral cone to its \emph{circumangle}, which describes the angle of an enclosing circular cone. To that end, recall that a \emph{circular cone} (also referred to as \emph{revolution cone}) with \emph{axis} $\ax\in \Sn{n-1}$ and \emph{(half-aperture) angle} $\alpha \in [0,\pi/2]$ is defined as 
    \begin{equation}
     C(\alpha,\ax)\coloneqq \{\x\in \R^n, \langle \x,\ax\rangle \geq \norm{\x}_2 \cdot \cos(\alpha)\}.\label{eq:def_circular_cone}
    \end{equation}
The conic mean width of a circular cone depends linearly on the ambient dimension $n$, i.e., $\cmw[2]{C(\alpha,\ax)} = n\cdot \sin^2(\alpha)  + O(1)$, see for instance~\cite[Proposition 3.4]{amelunxen2014edge}. Although not directly related, it will be insightful to compare this result with the subsequent upper bound of Proposition~\ref{prop:circ}.

The following definition introduces the so-called circumangle of a nontrivial (different from $\left\{\vec{0}\right\}$ and $\R^n$) closed convex cone $C$. It describes the angle of the smallest circular cone that contains $C$. 
\begin{definition}[Circumangle]
    Let $C\subset \R^n$ denote a nontrivial closed convex cone. Its \emph{circumangle} $\alpha$ is defined by
    \begin{equation}
     \alpha \coloneqq \inf \left\{\hat{\alpha} \in [0,\pi/2] : \exists \ax \in \Sn{n-1}, C \subseteq C(\hat{\alpha},\ax)\right\}.\label{eq:def_circumangle}
    \end{equation}
\end{definition}

The previous notion can be found under various names in the literature, see for instance~\cite{Freund99,Renegar1993,Iusem2008, Henrion2010a}. In particular, the previous quantity arises in the definition of an outer center of a cone~\cite{Henrion10}. It turns out that the circumangle satisfies
\begin{equation}
 \cos(\alpha) = \sup_{\ax \in \Sn{n-1}} \inf_{\x\in C \cap \Sn{n-1}} \langle \ax , \x\rangle,
\end{equation}
where a vector $\ax$ that maximizes the right hand side is referred to as \emph{circumcenter} (or \emph{outer center}\footnote{Note that the notions of circumcenter and outer centers generally differ, however, in the Euclidean setting of this work they are equivalent \cite[Section 5]{Henrion10}.}) of $C$~\cite{Henrion10,Iusem2008}. Furthermore, if $C$ is pointed (i.e., if it does not contain a line), the circumcenter is unique and $\alpha \in [0,\pi/2)$~\cite{Henrion10}. 

Note that the function $\ax\mapsto\inf_{\x\in C \cap \Sn{n-1}} \langle \ax,\x\rangle$ is concave as a minimum of concave functions. Hence, if $C$ is pointed, it is easy to see that determining the circumcenter and the circumangle amounts to solving the following convex optimization problem:  
\begin{equation}
 \cos(\alpha) = \sup_{\ax \in \Bn[2]{n}} \inf_{\x\in C \cap \Sn{n-1}} \langle \ax , \x\rangle.
\end{equation}
We now show that this characterization can be further simplified for pointed polyhedral cones. The simple characterization of the following proposition makes it possible to numerically compute the circumangle of such cones. We emphasize that this stands in contrast to previously discussed notions such as the minimum conic singular value, which is intractable in general. A short proof is included in Appendix~\ref{sec:simple_circumcenter}.
\begin{proposition}[Circumangle and circumcenter of polyhedral cones\label{prop:simple_circumcenter}]
Let $\x_i \in \Sn{n-1}$ for $i\in[k]$ and let $C = \cone{\x_1,\hdots,\x_k}$ be a nontrivial pointed polyhedral cone.
Finding the circumcenter and circumangle of $C$ amounts to solving the convex problem:
\begin{equation}
    \cos(\alpha)= \sup_{\ax \in \Bn[2]{n}} \inf_{i \in [k]} \langle \ax,\x_i\rangle\label{eq:kpolyhedral}.
\end{equation}
\end{proposition}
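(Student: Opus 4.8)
The plan is to reduce the statement to the characterization already recorded above for pointed cones, namely $\cos(\alpha) = \sup_{\ax \in \Bn[2]{n}} \inf_{\x\in C \cap \Sn{n-1}} \sp{\ax}{\x}$, and then to show that for a polyhedral $C$ the inner infimum over all of $C\cap\Sn{n-1}$ may be replaced by the finite minimum over the generators $\x_1,\dots,\x_k$, which already lie on the sphere. To that end I would introduce the two functions $g(\ax) \coloneqq \inf_{\x\in C\cap\Sn{n-1}} \sp{\ax}{\x}$ and $h(\ax) \coloneqq \min_{i\in[k]} \sp{\ax}{\x_i}$ on $\Bn[2]{n}$; here $h$ is concave (a minimum of linear functions), continuous, and positively homogeneous of degree one. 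Since every generator $\x_i$ belongs to $C\cap\Sn{n-1}$, we have $g\leq h$ pointwise, hence $\sup_{\ax\in\Bn[2]{n}} g(\ax) \leq \sup_{\ax\in\Bn[2]{n}} h(\ax)$; the content of the proof lies in the reverse inequality.

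The key step is the pointwise identity $g(\ax) = h(\ax)$ on the region $\{\ax : h(\ax)\geq 0\}$. I would fix such an $\ax$ and an arbitrary $\x\in C\cap\Sn{n-1}$, and write $\x = \sum_{i\in[k]}\lambda_i\x_i$ with all $\lambda_i\geq 0$, which is possible because $C = \cone{\x_1,\dots,\x_k}$. Combining $\sp{\ax}{\x_i}\geq h(\ax)$ (valid since $\lambda_i \geq 0$) with the triangle inequality $\sum_{i\in[k]}\lambda_i = \sum_{i\in[k]}\lambda_i\norm{\x_i}_2 \geq \norm{\sum_{i\in[k]}\lambda_i\x_i}_2 = \norm{\x}_2 = 1$, and finally with $h(\ax)\geq 0$, one obtains
\[
 \sp{\ax}{\x} = \sum_{i\in[k]}\lambda_i\,\sp{\ax}{\x_i} \;\geq\; h(\ax)\sum_{i\in[k]}\lambda_i \;\geq\; h(\ax).
\]
Taking the infimum over $\x\in C\cap\Sn{n-1}$ gives $g(\ax)\geq h(\ax)$, and with the trivial inequality above this forces $g(\ax)=h(\ax)$.

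To conclude, $h$ attains its maximum over the compact set $\Bn[2]{n}$ at some $\bar\ax$; since $h(\vec{0})=0$, we have $h(\bar\ax)\geq 0$, so the previous step applies and $g(\bar\ax)=h(\bar\ax)$. Therefore $\sup_{\ax\in\Bn[2]{n}} g(\ax) \geq g(\bar\ax) = h(\bar\ax) = \sup_{\ax\in\Bn[2]{n}} h(\ax)$, which together with $g\leq h$ yields $\sup_{\ax\in\Bn[2]{n}} g(\ax) = \sup_{\ax\in\Bn[2]{n}} h(\ax)$. By the quoted characterization for pointed cones this common value equals $\cos(\alpha)$, which is exactly \eqref{eq:kpolyhedral}. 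For the circumcenter claim, note that $\bar\ax$ also maximizes $g$ over $\Bn[2]{n}$; since $C$ is pointed we have $\alpha\in[0,\pi/2)$, so $h(\bar\ax)=\cos(\alpha)>0$, and positive homogeneity of $h$ then forces $\norm{\bar\ax}_2=1$ (otherwise $\bar\ax/\norm{\bar\ax}_2\in\Bn[2]{n}$ would give a strictly larger value). Hence $\bar\ax$ maximizes $\inf_{\x\in C\cap\Sn{n-1}}\sp{\ax}{\x}$ over $\Sn{n-1}$, i.e., $\bar\ax$ is a circumcenter of $C$, and uniqueness of the circumcenter for pointed cones identifies it.

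I do not expect a genuine obstacle here; the one point that requires care is the sign bookkeeping in the pointwise identity. The triangle inequality only provides $\sum_{i\in[k]}\lambda_i\geq\norm{\sum_{i\in[k]}\lambda_i\x_i}_2$, so multiplying through by $h(\ax)$ preserves the direction of the inequality precisely when $h(\ax)\geq 0$. This is why one must first restrict to the region $\{h\geq 0\}$, which is harmless because $h(\vec{0})=0$ guarantees that a maximizer of $h$ already lies there.
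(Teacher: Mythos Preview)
Your proof is correct and follows essentially the same route as the paper: both arguments pass from the infimum over $C\cap\Sn{n-1}$ to the minimum over the generators by writing $\x=\sum_i\lambda_i\x_i$, invoking the triangle inequality to get $\sum_i\lambda_i\geq 1$, and then using the nonnegativity condition $h(\ax)\geq 0$ (equivalently $\Xb^*\ax\geq \vec{0}$) to push the bound through. The paper phrases the final step as a linear program attaining its minimum at a vertex of $\{\cc\geq\vec{0},\ \langle\one,\cc\rangle\geq 1\}$, whereas you multiply through directly; your packaging via the functions $g$ and $h$ and the observation $h(\vec{0})=0$ to locate the maximizer in the region $\{h\geq 0\}$ is slightly cleaner, but the substance is the same.
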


The goal of this section is to upper bound the conic mean width of all polyhedral cones $C \subset \R^n$ with $k$ generators that are contained in a circular cone of angle $\alpha$. To that end, we first introduce the following notation:
\begin{definition}
    A \emph{$k$-polyhedral $\alpha$-cone} $C \subset\R^n$ is a nontrivial pointed polyhedral cone generated by $k$ vectors that is included in a circular cone with angle $\alpha\in [0,\pi/2)$.
    Furthermore, we let $\mathcal{C}_{k}^{\alpha}$ denote the set of all $k$-polyhedral $\alpha$-cones.
\end{definition}
Note that $C$ being a nontrivial pointed polyhedral cone implies that such an encompassing circular cone with angle $\alpha\in [0,\pi/2)$ exists.
The next result provides a simple upper bound on the quantity
\begin{equation}
    W(\alpha, k, n) := \sup_{C \in \mathcal{C}_k^{\alpha}, C\subset \R^n} \cmw{C}.    
\end{equation}
The underlying geometric idea is explained in Figure \ref{fig:alphacone} and its proof is detailed in Appendix \ref{sec:proof_circ}. Note that the bound does not depend on the ambient dimension $n$, which is in contrast to the conic width of a circular cone.

\begin{figure}
    \begin{center}
    \begin{subfigure}[t]{0.45\textwidth}
        \includegraphics{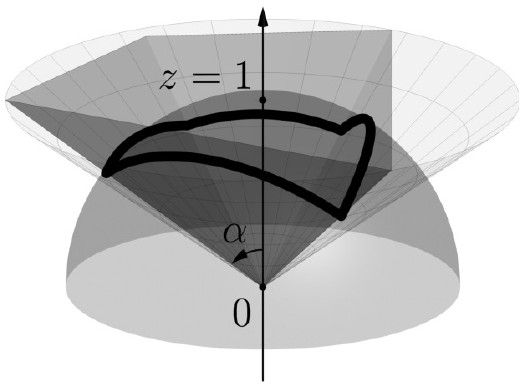}
  \end{subfigure}%
  \qquad
  \begin{subfigure}[t]{0.45\textwidth}
        \includegraphics{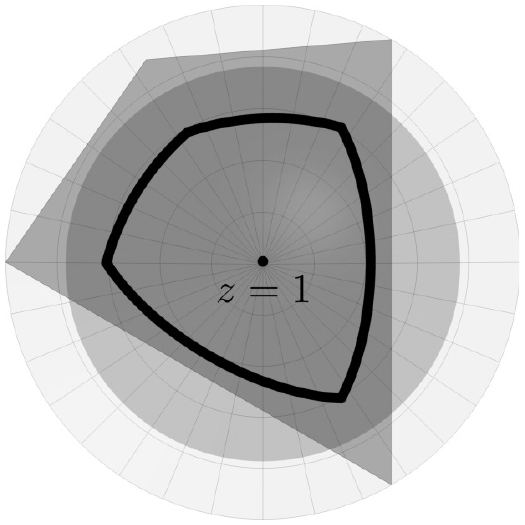}
    \end{subfigure}%
        \caption{\textbf{Geometry of Proposition~\ref{prop:circ}.} The figure shows a polyhedral cone (transparent gray) truncated at $z=1$ and included in a circular cone with angle $\alpha$ (wire-frame). The standard basis vector $\vec{e}_z$ corresponds to the circumcenter of the polyhedral cone and $\alpha$ is chosen as its circumangle. The thick line is the intersection of the unit sphere with the faces of the polyhedral cone. Right view is from above, or equivalently, the projection on the plane $z=1$.         
        The conic mean width of the polyhedral cone can be bounded by evaluating the mean width of any set containing the thick line plus 1 (as a subset of the plane). The proposed bound is based on using the intersection of the polyhedral cone and the plane $z=1$. Notice that this convex body is included in the disk with radius $\tan \alpha$. In high dimensions, it will be thin if the polyhedral cone does not have overwhelmingly many extremal rays.\label{fig:alphacone}}
    \end{center}
\end{figure}

\begin{proposition}
\label{prop:circ} 
   For $k\geq 5$, the conic mean width of a $k$-polyhedral $\alpha$-cone $C$ in $\R^n$ is bounded by
 \begin{align}
 \label{eq:polybound}
     W(\alpha, k, n)  &\leq  \tan \alpha \cdot \left( \sqrt{2\log \left(k/\sqrt{2\pi}\right)} + \frac{1}{\sqrt{2\log \left( k/\sqrt{2\pi}\right)}}  \right) +\frac{1}{\sqrt{2\pi}}. \\
 \end{align}
\end{proposition}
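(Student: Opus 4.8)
The plan is to follow the cross-section picture of Figure~\ref{fig:alphacone}: express $\cmw{C}$ as a Gaussian supremum over the thin polytope obtained by slicing $C$ with a hyperplane transverse to an enclosing circular cone, and then estimate that supremum by peeling off the component along the axis. Concretely, given a $k$-polyhedral $\alpha$-cone $C=\cone{\x_1,\dots,\x_k}$ in $\R^n$ with $\x_i\in\Sn{n-1}$, fix $\ax\in\Sn{n-1}$ with $C\subseteq C(\alpha,\ax)$; since $\cmw{}$ is rotation invariant we may take $\ax=\vec{e}_n$. As $\alpha<\pi/2$, every nonzero $\x\in C$ satisfies $\langle\x,\ax\rangle>0$, so the hyperplane $H=\{\x:\langle\x,\ax\rangle=1\}$ meets each ray of $C$ exactly once and $K:=C\cap H=\convhull{\vb_1,\dots,\vb_k}$ with $\vb_i:=\x_i/\langle\x_i,\ax\rangle$. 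Writing $\vb_i=\ax+\vb_i'$ with $\vb_i'\perp\ax$, the inclusion $C\subseteq C(\alpha,\ax)$ forces each $\vb_i$ to make angle at most $\alpha$ with $\ax$, hence $\norm{\vb_i'}_2\le\tan\alpha$ and, by convexity, $\norm{\y-\ax}_2\le\tan\alpha$ for all $\y\in K$. Because $C$ is pointed, every $\x\in C\cap\Sn{n-1}$ equals $\y/\norm{\y}_2$ for a unique $\y\in K$, so with $\g\sim\mathcal{N}(\vec{0},\I)$,
\[
 \cmw{C}=\E\,\sup_{\y\in K}\frac{\langle\g,\y\rangle}{\norm{\y}_2}.
\]

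Next I would prove a pointwise estimate that removes the denominator. For $\y=\ax+\y'\in K$ one has $\langle\g,\y\rangle=\langle\g,\ax\rangle+\langle\g,\y'\rangle$ and $\norm{\y}_2^2=1+\norm{\y'}_2^2\ge1$. Splitting on the sign of $\langle\g,\ax\rangle+\langle\g,\y'\rangle$ and using $\norm{\y}_2\ge1$ shows the ratio is at most $(\langle\g,\ax\rangle+\langle\g,\y'\rangle)_+\le(\langle\g,\ax\rangle)_++(\langle\g,\y'\rangle)_+$; and since a linear functional on $K-\ax=\convhull{\vb_1',\dots,\vb_k'}$ attains its maximum at a vertex, $(\langle\g,\y'\rangle)_+\le(\max_{i\in[k]}\langle\g,\vb_i'\rangle)_+$. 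Taking the supremum over $\y\in K$ and then expectations,
\[
 \cmw{C}\le\E(\langle\g,\ax\rangle)_++\E\Big(\max_{i\in[k]}\langle\g,\vb_i'\rangle\Big)_+ .
\]
The first term equals $1/\sqrt{2\pi}$ because $\langle\g,\ax\rangle\sim\mathcal{N}(0,1)$, and this is the additive constant in \eqref{eq:polybound}.

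It then remains to bound the Gaussian-maximum term. Each $X_i:=\langle\g,\vb_i'\rangle$ is centered Gaussian with standard deviation $\norm{\vb_i'}_2\le\tan\alpha$, so a union bound gives, with $\phi$ and $\overline\Phi$ the standard Gaussian density and tail,
\[
 \E\Big(\max_i X_i\Big)_+=\int_0^\infty\Prob\Big(\max_i X_i>t\Big)\,dt\le\tan\alpha\int_0^\infty\min\big(1,k\,\overline\Phi(u)\big)\,du .
\]
Setting $\beta:=\sqrt{2\log(k/\sqrt{2\pi})}$ one has $k\phi(\beta)=1$; splitting the last integral at $\beta$, using $\int_\beta^\infty\overline\Phi(u)\,du=\phi(\beta)-\beta\overline\Phi(\beta)$ and the Mills-type inequality $\overline\Phi(\beta)\ge\tfrac{\beta}{1+\beta^2}\phi(\beta)$, the integral is at most $\beta+\tfrac{1}{1+\beta^2}\le\beta+\tfrac1\beta$, i.e. $\tan\alpha$ times the bracketed factor of \eqref{eq:polybound}. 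Adding the two contributions and taking the supremum over all $C\in\mathcal{C}_k^\alpha$ in $\R^n$ — nothing above depends on $n$ or on the particular $C$ — yields the claim; the hypothesis $k\ge5$ is what makes $\beta\ge1$, the regime in which this tail estimate is applied.

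The step I expect to be the main obstacle is this last one: once the geometry is stripped away everything reduces to estimating $\E(\max_i X_i)_+$, but the off-the-shelf sub-Gaussian maximal bound only gives roughly $\tan\alpha\sqrt{2\log(k+1)}$, which is slightly larger than the stated $\tan\alpha\big(\sqrt{2\log(k/\sqrt{2\pi})}+1/\sqrt{2\log(k/\sqrt{2\pi})}\big)$; recovering the exact logarithmic factor together with the clean additive $1/\sqrt{2\pi}$ requires the explicit tail-integral evaluation and the use of both sides of the Mills ratio. A more elementary point that still needs care is the first reduction — checking that the slice of the pointed cone is exactly $\convhull{\vb_i}$, that it sits inside a radius-$\tan\alpha$ disk of the hyperplane, and that $\cmw{C}$ equals the displayed expectation.
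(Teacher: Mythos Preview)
Your proposal is correct and follows essentially the same route as the paper: slice the cone by the affine hyperplane $\{\langle\cdot,\ax\rangle=1\}$, split the Gaussian into the axial and transverse components to peel off the $1/\sqrt{2\pi}$ term, and then bound the width of the projected polytope (contained in a disk of radius $\tan\alpha$) by a union-bound tail integral, splitting at $\beta=\sqrt{2\log(k/\sqrt{2\pi})}$. The only cosmetic differences are that the paper explicitly invokes $\ax\in C$ (using the circumcenter) to ensure $\vec{0}\in\proj_{\perp}^{\ax}(K)$ whereas you sidestep this via the $(\cdot)_+$, and the paper reaches $\beta+1/\beta$ directly from the upper Mills bound $\overline\Phi(x)\le\phi(x)/x$ rather than your integration-by-parts plus lower Mills route---but both yield the same constant.
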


We conclude this section with the following remark:
\begin{remark}
\label{rem:poly}
    The previous upper bound is based on Lemma~\ref{lem:convexpolywidth}, which provides a basic bound on the Gaussian mean width of a convex polytope; see also \cite[Ex.~7.5.10 \& Prop.~7.5.2]{vershynin2018}. Using a tighter estimate there (possibly an implicit description as in~\cite[Proposition 4.5]{amelunxen2014edge}) would in turn also improve Proposition~\ref{prop:circ}. 
\end{remark}

\subsubsection{Geometry of the Descent Cone}
\label{sec:geom_desc}

In order to derive an upper bound for the quantity $\cmw[2]{\Dict \cdot \ds{\Onenorm;\zl}}$ based on the previous result, we first need a more geometrical description of the descent cones of the $\ell^{\smash{1}}$-norm and of the gauge $p_{\Dict\cdot\Bn[1]{d}}$. 
\begin{lemma}
\label{lem:dc_lone}
 Let $\z \in \R^d$ with support $\supp{\z} = \Supp$ and $\# \Supp = s$. Then,
 \begin{equation}
  \dc{\norm{\cdot}_1,\z} = \cone{\pm s \cdot \vec{e}_i - \vec{v} : i \in [d]}.
 \end{equation}
 where $\vec{v}$ is any vector such that $\|\vec{v}\|_1=s$ and $\sign \vec{v}=\sign \z$, e.g., $\vec{v}=\sign \z$ or $\vec{v}=s\cdot \z/\|\z\|_1$.
\end{lemma}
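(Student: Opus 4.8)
The plan is to reduce the claim to a tangent-cone computation on a cross-polytope by replacing $\z$ with the normalised vector $\vec v$ (which has the same sign pattern but $\ell^1$-norm equal to $s$), exploiting that the $\ell^1$-descent cone is insensitive to the magnitudes of the entries. Write $\Supp:=\supp(\z)$. The two standard ingredients I would invoke are: (i) the well-known description $\dc{\norm{\cdot}_1,\x}=\{\vec h\in\R^d:\sum_{i\in\supp(\x)}\sign(x_i)h_i+\sum_{i\notin\supp(\x)}|h_i|\le 0\}$, valid for every $\x\in\R^d$; and (ii) $\cone{K}=\cone{\convhull{K}}$ for any set $K$. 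From (i) it is immediate that $\dc{\norm{\cdot}_1,\x}$ depends on $\x$ only through $\sign\x$; in particular, since $\sign\vec v=\sign\z$ (so also $\supp\vec v=\Supp$ and $\vec v\ne\vec 0$ whenever $\z\ne\vec 0$, the case $\z=\vec 0$ being trivial), one gets $\dc{\norm{\cdot}_1,\z}=\dc{\norm{\cdot}_1,\vec v}$.

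It then remains to compute $\dc{\norm{\cdot}_1,\vec v}$. Because $\norm{\vec v}_1=s$, the descent set is $\ds{\norm{\cdot}_1,\vec v}=\{\vec h:\norm{\vec v+\vec h}_1\le s\}=s\Bn[1]{d}-\vec v$, and $s\Bn[1]{d}=\convhull{\pm s\vec e_i:i\in[d]}$. Using the definition $\dc{f,\gt}=\cone{\ds{f,\gt}}$ together with (ii), $\dc{\norm{\cdot}_1,\vec v}=\cone{s\Bn[1]{d}-\vec v}=\cone{\convhull{\pm s\vec e_i-\vec v:i\in[d]}}=\cone{\pm s\vec e_i-\vec v:i\in[d]}$, which is exactly the asserted formula. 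As a check, $\vec v=\sign\z$ and $\vec v=s\,\z/\norm{\z}_1$ both satisfy the hypotheses $\norm{\vec v}_1=s$ and $\sign\vec v=\sign\z$, so the formula is independent of the particular choice of $\vec v$.

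If one wants to avoid citing (i), the equality can be proved directly by double inclusion. For ``$\subseteq$'' it suffices to verify that each generator is a descent direction: for $j\in\Supp$ the descent functional of $\pm s\vec e_j-\vec v$ equals $\pm s\,\sign(z_j)-s\le 0$, and for $j\notin\Supp$ it equals $s-s=0$, using $\sum_{i\in\Supp}|v_i|=\norm{\vec v}_1=s$ and $\sign v_i=\sign z_i$. For the reverse inclusion, given a descent direction $\vec h$ I would set $\Lambda:=\max_{i\in\Supp}|h_i|/|v_i|$ (with $\Lambda:=0$ if $\Supp=\emptyset$): shifting by $\Lambda\vec v$ ``straightens'' the signs on $\Supp$, so that $\norm{\vec h+\Lambda\vec v}_1$ equals the descent functional $\sum_{i\in\Supp}\sign(z_i)h_i+\sum_{i\notin\Supp}|h_i|$ of $\vec h$ plus $\Lambda s$, hence is $\le\Lambda s$; thus $\vec h+\Lambda\vec v\in s\Lambda\Bn[1]{d}=\convhull{\pm s\Lambda\vec e_i:i\in[d]}$, and expanding this convex combination and substituting $\vec h=(\vec h+\Lambda\vec v)-\Lambda\vec v$ exhibits $\vec h$ as a nonnegative combination of the generators $\pm s\vec e_i-\vec v$.

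The conceptually nontrivial step is the normalisation in the first paragraph — passing from $\z$ to a representative $\vec v$ with $\norm{\vec v}_1=s$ — which is what makes the generators come out as $\pm s\vec e_i-\vec v$ rather than $\pm\norm{\z}_1\vec e_i-\z$. In the self-contained variant the only real work is the reverse inclusion, where the key device is the shift by a large multiple of $\vec v$: it linearises the $\ell^1$-norm on $\Supp$ and converts the descent inequality into membership in a scaled cross-polytope with vertices $\pm s\Lambda\vec e_i$; everything else is routine bookkeeping with signs.
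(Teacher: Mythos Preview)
Your proposal is correct and follows essentially the same approach as the paper: both reduce from $\z$ to a representative $\vec v$ with the same sign pattern and $\|\vec v\|_1=s$ via the explicit half-space description of $\dc{\|\cdot\|_1,\z}$, then identify $\ds{\|\cdot\|_1,\vec v}=s\Bn[1]{d}-\vec v=\convhull{\pm s\vec e_i-\vec v}$ and take the conic hull. Your optional self-contained double-inclusion argument is extra relative to the paper; note only that the labels are swapped there (verifying each generator is a descent direction gives the inclusion $\cone{\pm s\vec e_i-\vec v}\subseteq\dc{\|\cdot\|_1,\z}$, i.e.\ ``$\supseteq$'' for the stated equality), but the mathematics is fine.
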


A proof of the previous lemma can be found in Appendix~\ref{sec:proof_geom}. The statement is illustrated in Figure \ref{fig:descentconel1} for dimension $d=3$.  Observe that sliding $\z$ along the edge linking $\eb_2$ with $\eb_3$ leaves the descent cone unchanged.

\begin{figure}
    \begin{center}
    \begin{subfigure}[t]{0.33\textwidth}
        \includegraphics[height=4.5cm,trim={1.5cm 60 50 55},clip]{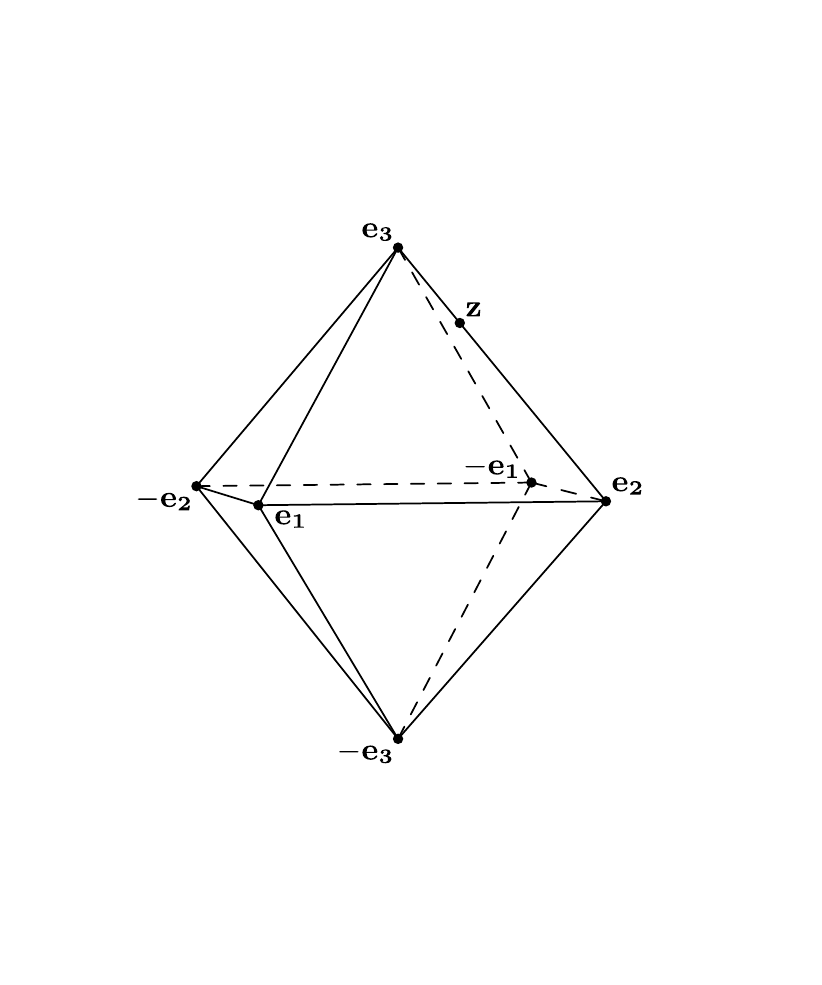}
  \end{subfigure}%
  \quad
  \begin{subfigure}[t]{0.33\textwidth}
        \includegraphics[height=4.5cm,trim={1.5cm 60 50 55},clip]{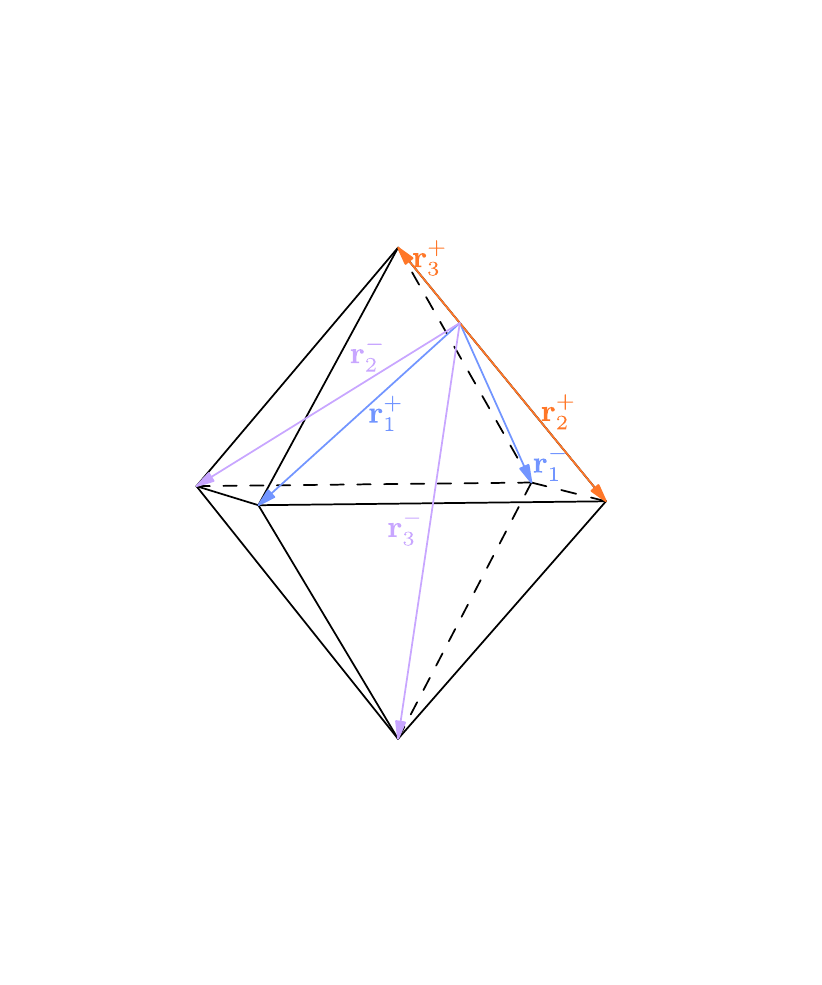}
    \end{subfigure}%
  \hspace{-0.9cm}
  \begin{subfigure}[t]{0.33\textwidth}
        \includegraphics[height=4.5cm,trim={0cm 60 50 55},clip]{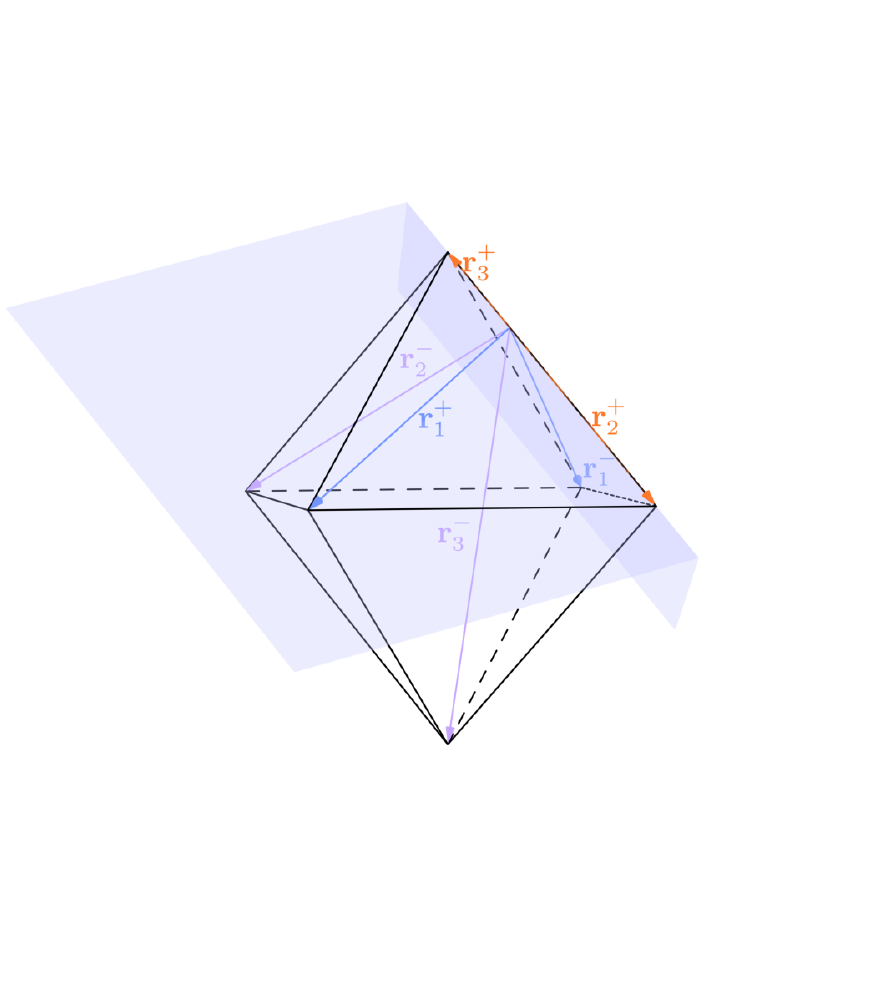}
    \end{subfigure}%
        \caption{\textbf{Illustration of Lemma~\ref{lem:dc_lone}.} Left: $\ell^1$-ball in $\R^3$ and a $2$-sparse vector $\z$. Center: The rays of the descent cone are supported by the vectors $\r_i^{\pm}=\pm \eb_i - \z$, which corresponds to the generators of Lemma~\ref{lem:dc_lone} with $\vec{v} = s\cdot\z /\norm{\z}_1$, multiplied by $1/2$. Right: The resulting descent cone (shifted by $\z$). Note that it contains a linear subspace spanned by  $\r_2^+$ and $\r_3^+$. \label{fig:descentconel1}}
    \end{center}
\end{figure}


Whenever a convex cone contains a subspace, its circumangle is $\pi/2$ and the bound of Proposition \ref{prop:circ} is not applicable. 
As can be seen in Figure \ref{fig:descentconel1}, the descent cone of the $\ell^1$-norm at $\vec{z}$ contains the subspace spanned by the face of minimal dimension containing $\vec{z}$.
To avoid this pitfall, let us recall the notion of lineality.

\begin{definition}[Lineality \cite{rockafellar2015convex}]
For a non-empty convex set $C \subseteq \R^n$, the \emph{lineality space $C_L$ of $C$} is defined as 
\begin{equation}
 C_L \coloneqq \left\{ \x \in \R^n :  \forall \xx \in C: \left\{ \xx + \alpha \cdot \x : \alpha \in \R \right\} \subseteq C \right\}.
\end{equation}
It defines a subspace of $R^n$ and its dimension is referred to as the \emph{lineality} of $C$.
\end{definition}

Any non-empty convex set $C$ can be expressed as the direct sum 
\begin{equation}\label{eq:decompotisionconvexset}
 C = C_L \oplus C_R \mbox{ with } C_R\coloneqq \proj_{C_L^\perp} (C),
\end{equation}
where $\proj_{\smash{C_L^\perp}}$ is the orthogonal projection onto $C_L^\perp$. The notation $C_R$ is used in analogy to the \emph{range} in linear algebra.  The set $C_R$ is ``line-free'' (i.e., its lineality is $\left\{\vec{0}\right\}$) and its dimension is called the \emph{rank} of $C$.  If $C$ is a convex cone, the lineality space is the largest subspace contained in $C$ and the circumangle of the range $C_R$ is less than $\pi/2$. In the following, we will therefore apply Proposition \ref{prop:circ} to the latter set only. Figure \ref{fig:decompositiondescentcone} illustrates the orthogonal decomposition of~\eqref{eq:decompotisionconvexset} for the descent cone of Figure~\ref{fig:descentconel1}. 

The following lemma characterizes the lineality space and the range for descent cones of the $\ell^1$-norm. A proof is given in Appendix~\ref{sec:proof_geom}.
\begin{lemma}
\label{lem:lineality}
 Let $\z =(z_1,\hdots,z_d)$ be a vector with support $\supp{\z} = \Supp$ and $\# \Supp = s \geq 1$. The lineality space of $C=\dc{\norm{\cdot}_1,\z}$ is then given by
 \begin{equation}
  C_L = \spann \left( s \cdot \sign (z_i)   \cdot \vec{e}_i  -\sign( \z): i \in \Supp \right),
 \end{equation}
 with lineality $\dim ( C_L ) = s-1$.
\end{lemma}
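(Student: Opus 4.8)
The lineality space of a closed convex cone $C$ is $C_L = C\cap(-C)$, i.e. the largest linear subspace contained in $C$; here $C = \dc{\Onenorm,\z}$, which by Lemma~\ref{lem:dc_lone} is a (closed) polyhedral cone. The plan is first to obtain an explicit coordinate description of $C_L$, and then to match it with $\spann(\vec{w}_i : i\in\Supp)$ where $\vec{w}_i := s\,\sign(z_i)\,\vec{e}_i - \sign(\z)$. The starting point is the local affine identity for the $\ell^1$-norm along a ray: since $|z_i + th_i| = |z_i| + t\,\sign(z_i)h_i$ for $i\in\Supp$ as soon as $|t|$ is small, while $|z_i + th_i| = |t|\,|h_i|$ for $i\notin\Supp$, one has, for all $t$ in a neighbourhood of $0$,
\begin{equation*}
 \norm{\z + t\h}_1 = \norm{\z}_1 + t\sum_{i\in\Supp}\sign(z_i)h_i + |t|\sum_{i\notin\Supp}|h_i|.
\end{equation*}

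Next I would translate membership in $C_L$ into a statement about this function. We have $\h\in C_L$ iff $\pm\h\in\dc{\Onenorm,\z} = \cone\{\h' : \norm{\z+\h'}_1\leq\norm{\z}_1\}$; writing $\h=\lambda\h'$ with $\lambda>0$ and $\norm{\z+\h'}_1\le\norm{\z}_1$ (and likewise for $-\h$) and using convexity of $t\mapsto\norm{\z+t\h}_1$ shows that this is equivalent to $\norm{\z+t\h}_1\leq\norm{\z}_1$ for all $t$ near $0$, hence — a convex function dominated by its value at an interior point is constant there — to $t\mapsto\norm{\z+t\h}_1$ being constant in a neighbourhood of $0$. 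By the displayed identity this holds iff the coefficients of $t$ and of $|t|$ both vanish, that is,
\begin{equation*}
 C_L = \Bigl\{\h\in\R^d : h_i = 0 \text{ for all } i\notin\Supp, \ \sum_{i\in\Supp}\sign(z_i)h_i = 0\Bigr\}.
\end{equation*}
(An alternative derivation uses $\dc{\Onenorm,\z}^{\circ} = \clos{\cone{\partial\norm{\z}_1}}$ together with the identity $C_L = \bigl(\spann\bigl(\dc{\Onenorm,\z}^{\circ}\bigr)\bigr)^{\perp}$, inserting the well-known description of the $\ell^1$-subdifferential.)

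From this description everything follows by counting. The space $C_L$ is the intersection of the $s$-dimensional coordinate subspace $\{\h : h_i = 0 \text{ for } i\notin\Supp\}$ with the hyperplane $\langle\sign(\z),\h\rangle = 0$, which is a proper hyperplane of that subspace because $\sign(\z)$ has a nonzero entry on $\Supp$ (here $s\geq 1$ is used); hence $\dim C_L = s-1$. On the other hand, each $\vec{w}_i$ is supported on $\Supp$ and satisfies $\sum_{j\in\Supp}\sign(z_j)(\vec{w}_i)_j = s\,\sign(z_i)^2 - \sum_{j\in\Supp}\sign(z_j)^2 = s - s = 0$, so $\vec{w}_i\in C_L$ and $\spann(\vec{w}_i : i\in\Supp)\subseteq C_L$. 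To see this inclusion is an equality, read a relation $\sum_{i\in\Supp}c_i\vec{w}_i = 0$ in the $j$-th coordinate for $j\in\Supp$: it gives $\sign(z_j)\bigl(s\,c_j - \sum_{i\in\Supp}c_i\bigr) = 0$, forcing all $c_j$ equal, so the relations among the $s$ vectors $\vec{w}_i$ form a one-dimensional space and $\dim\spann(\vec{w}_i : i\in\Supp) = s-1 = \dim C_L$. An $(s-1)$-dimensional subspace contained in an $(s-1)$-dimensional subspace coincides with it, which proves the claim together with the dimension count.

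The argument is largely bookkeeping, and I do not expect a genuine obstacle. The only points that require some care are: (i) the equivalence ``$\h\in C_L$ iff $t\mapsto\norm{\z+t\h}_1$ is locally constant'', which relies on $\Onenorm$ being piecewise linear near $\z$ (so that $\dc{\Onenorm,\z}$ is closed and coincides with the cone of non-increasing directions); and (ii) not dropping the term $\sum_{i\notin\Supp}|h_i|$, since it is precisely this non-smooth contribution that confines $C_L$ to the coordinate subspace indexed by $\Supp$.
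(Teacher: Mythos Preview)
Your proposal is correct and follows essentially the same route as the paper: both arrive at the explicit description $C_L=\{\h: h_{\Suppc}=\vec{0},\ \langle\sign(\z),\h\rangle=0\}$ by exploiting that $\pm\h\in\dc{\Onenorm,\z}$ forces $\norm{\z\pm\tau\h}_1\le\norm{\z}_1$ for small $\tau$ and hence $\sum_{i\in\Suppc}|h_i|=0$, then check that the claimed generators lie in this space and have the right rank. Your dimension count via the one-dimensional space of linear relations among the $\vec{w}_i$ is a bit cleaner than the paper's, which instead shows that a certain $(s-1)\times(s-1)$ Gram matrix is strictly diagonally dominant.
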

Notice that the lineality space is nothing but the face of the $\ell^{\smash{1}}$-ball of smallest dimension containing $\z$.

\begin{figure}
    \begin{center}
      \begin{subfigure}[t]{0.3\textwidth}
        \includegraphics[height=4cm,trim={0.5cm 60 50 50},clip]{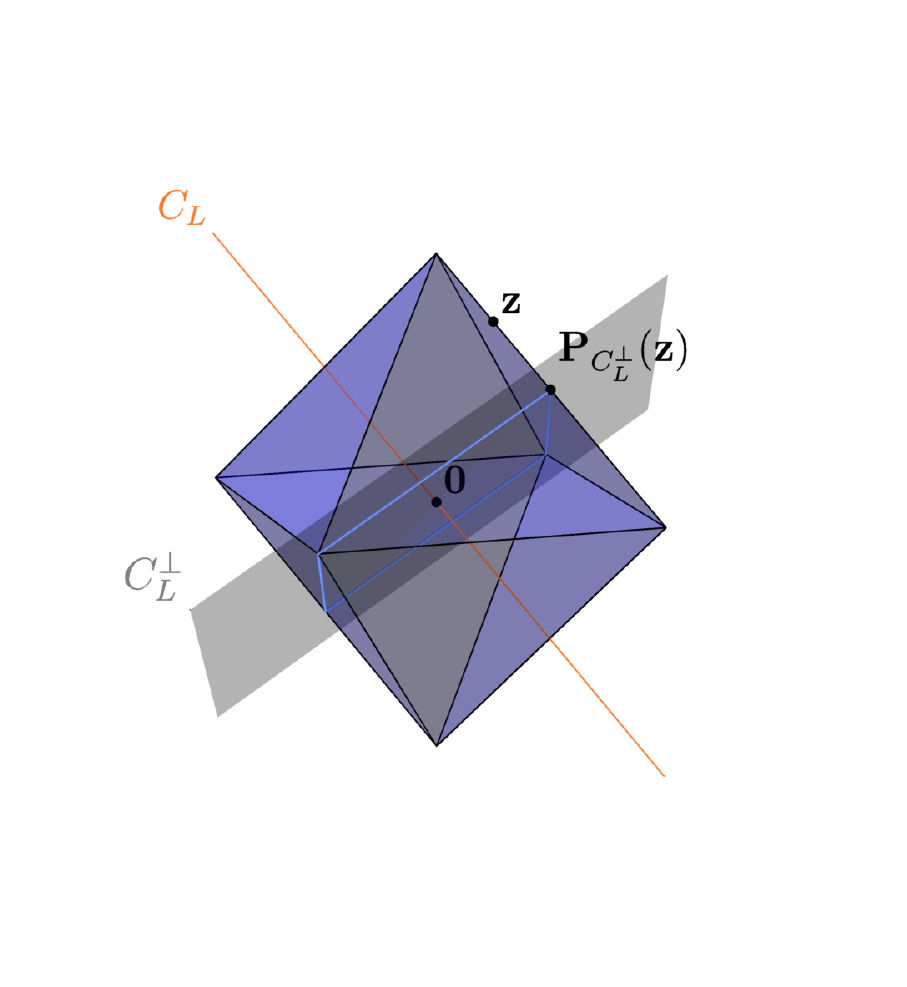}
    \end{subfigure}%
    \qquad
    \begin{subfigure}[t]{0.3\textwidth}
        \includegraphics[height=4cm,trim={0cm 0 50 100},clip]{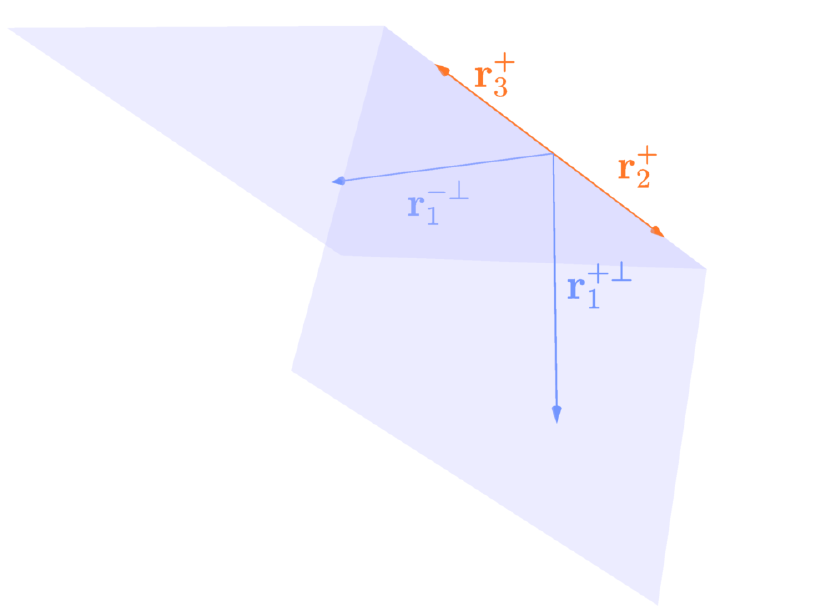}
  \end{subfigure}%
    \qquad
  \begin{subfigure}[t]{0.3\textwidth}
        \includegraphics[height=4cm,trim={0.0cm 0 50 100},clip]{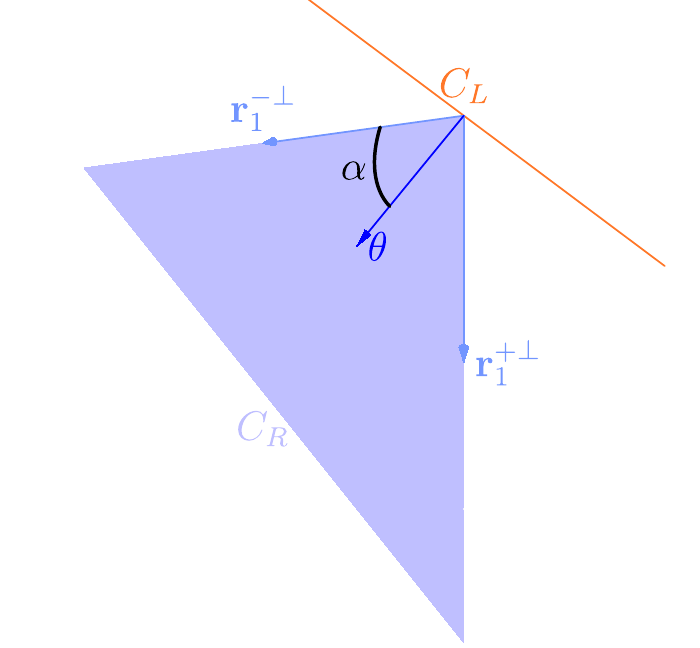}
    \end{subfigure}%
        \caption{\textbf{Decomposition into the direct sum of the lineality space and range}. Left: Decomposition of $\R^3$ into the lineality space $C_L$ and its orthogonal complement $C_L^\perp$, where $C$ is the descent cone of Figure~\ref{fig:descentconel1}. Middle: Different view of $C$, where $\r_i^{\pm \perp} \coloneqq \proj_{\smash{C_L^\perp}} (\r_i^\pm)$. Right: Visualization of the orthogonal decomposition of $C$ into its lineality space and range $C_R=\proj_{\smash{C_L^\perp}}(C)$. The angle $\alpha$ corresponds to the circumangle of $C_R$ and $\ax$ denotes its circumcenter. \label{fig:decompositiondescentcone}}
    \end{center}
\end{figure}

We now turn to the decomposition of the descent cone of the gauge $p_{\Dict\cdot\Bn[1]{d}}$ into lineality space and range. 
To that end, let us first make the following simple observation.
\begin{lemma}[Sign pattern of $\ell^1$-representers]
All minimal $\ell^1$-representers of $\gt$ with respect to $\Dict$ share the same sign pattern, in the sense that for all $\zl^1$, $\zl^2\in \Zlset$, the coordinate-wise product $\zl^1\cdot \zl^2$ is nonnegative.
\end{lemma}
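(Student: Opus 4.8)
The plan is to exploit the convexity of the solution set $\Zlset$ together with the equality case of the triangle inequality, applied coordinate by coordinate.

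First I would note that $\Zlset$ is a convex set: it is the set of minimizers of the convex function $\z \mapsto \norm{\z}_1$ over the affine subspace $\{\z \in \R^d : \Dict\z = \gt\}$, and hence is closed under convex combinations. Now fix $\zl^1, \zl^2 \in \Zlset$, write $\zl^j = (z^j_1,\dots,z^j_d)$ for $j \in \{1,2\}$, and set $\zz \coloneqq \tfrac12(\zl^1 + \zl^2)$. Since $\Dict\zz = \tfrac12(\Dict\zl^1 + \Dict\zl^2) = \gt$, the midpoint $\zz$ is feasible for \eqref{eq:minl1}, and by the triangle inequality $\norm{\zz}_1 \leq \tfrac12\norm{\zl^1}_1 + \tfrac12\norm{\zl^2}_1 = \min\{\norm{\z}_1 : \Dict\z = \gt\}$. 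Therefore $\zz \in \Zlset$ and, in particular, equality holds: $\norm{\zz}_1 = \tfrac12\norm{\zl^1}_1 + \tfrac12\norm{\zl^2}_1$.

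Next I would read this identity off coordinate-wise. For each $i \in [d]$ one has the scalar inequality $\abs{\tfrac12(z^1_i + z^2_i)} \leq \tfrac12\abs{z^1_i} + \tfrac12\abs{z^2_i}$, and summing over $i$ gives back exactly the identity just obtained. Consequently equality must hold in every single coordinate, i.e. $\abs{z^1_i + z^2_i} = \abs{z^1_i} + \abs{z^2_i}$ for all $i$. Since equality in the scalar relation $\abs{a+b} = \abs{a} + \abs{b}$ holds precisely when $a\cdot b \geq 0$ (allowing for the case that one term vanishes), we conclude $z^1_i \cdot z^2_i \geq 0$ for every $i$, which is the claimed compatibility of sign patterns.

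There is essentially no obstacle here; the one point worth emphasizing is that it is the convexity of $\Zlset$ that makes the midpoint argument available, after which everything reduces to the elementary equality analysis of the $\ell^1$-triangle inequality. An alternative route would go through the subgradient optimality conditions for \eqref{eq:minl1}: any dual certificate $\vec{w} \in \partial\norm{\cdot}_1(\zl^1) \cap \partial\norm{\cdot}_1(\zl^2)$ forces each coordinate $z^j_i \neq 0$ to have $\sign(z^j_i) = w_i$, hence the same sign across $j$; but the midpoint argument above is shorter and self-contained.
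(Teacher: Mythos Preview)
Your proof is correct. The paper actually states this lemma as a ``simple observation'' and does not supply a proof at all, so your midpoint argument fills in exactly what the paper omits; the convexity-plus-equality-in-the-triangle-inequality reasoning you give is the standard and shortest route to the claim.
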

This lemma allows us to define the maximal $\ell^1$-support.
\begin{definition}[Maximal $\ell^1$-support]
Let $\gt \in \R^n$ and $\Dict\in\R^{n\times d}$ be a dictionary. 
The maximal $\ell^1$-support $\bar \Supp$ of $\gt$ in $\Dict$ (or simply maximal support) is defined as $\bar \Supp \coloneqq \cup_{\zl \in \Zlset} \mathrm{supp}(\zl)$. 
In what follows, we let $\bar s=\#\bar\Supp$ denote its cardinality. 
\end{definition}
Since all solutions $\zl \in \Zlset$ have the same sign pattern, any point $\zl$ in the relative interior of $\Zlset$ has maximal support. The next decomposition forms the main result of this section; see Appendix~\ref{sec:proof_geom} for a proof.
\begin{proposition}
\label{prop:lindc}
Let $\Dict\in\R^{n\times d}$ be a dictionary and let $\gt \in \ran (\Dict)\setminus \{\vec{0}\}$.
Let $C=\dc{p_{\Dict\cdot\Bn[1]{d}},\gt}$ denote the descent cone of the gauge at $\gt$. 
Let $\zl \in \mathrm{ri}(\Zlset)$ be any minimal $\ell^{\smash{1}}$-representer of $\gt$ in $\Dict$ with maximal support and set $\bar \Supp = \supp (\zl)$ as well as $\bar s=\#\bar\Supp$. Assume $\bar{s} < d$. Then we have: 
\begin{enumerate}
 \item[(a)] The lineality space of $C$ has a dimension less than $\bar s-1$ and is given by
\begin{equation}
 C_L=\spann(\bar s \cdot \sign(z_{\ell^1,i}) \cdot \dict_i - \Dict \cdot \sign(\zl) : i \in \bar \Supp).
\end{equation}
\item[(b)] The range of $C$ is a $2(d-\bar{s})$-polyhedral $\alpha$-cone given by: 
\begin{equation}
  C_R=\mathrm{cone}(\r_j^{\pm \perp} : j\in \bar \Supp^c) \mbox{ with } \r_j^{\pm\perp}\coloneqq \proj_{C_L^\perp}\left(\pm\bar s \cdot \dict_j - \Dict \cdot \sign(\zl)\right).
\end{equation}
\end{enumerate}
\end{proposition}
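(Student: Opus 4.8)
The plan is to transport the question into the coefficient space via Lemma~\ref{lem:dc}, and then to pin down exactly how the linear map $\Dict$ acts on the lineality of a descent cone of the $\ell^{1}$-norm. Set $C' \coloneqq \dc{\Onenorm,\zl}$, and let $C'_L$ be its lineality space and $C'_R = \proj_{(C'_L)^\perp}(C')$ its range, so $C' = C'_L \oplus C'_R$ with $C'_R$ line-free (cf.~\eqref{eq:decompotisionconvexset}). Since $\zl\in\mathrm{ri}(\Zlset)$ has maximal support $\bar\Supp$ with $\#\bar\Supp = \bar s\geq 1$, Lemma~\ref{lem:dc_lone} applied with $\vec v=\sign(\zl)$ (admissible, as $\norm{\sign(\zl)}_1=\bar s$) gives $C'=\cone{\pm\bar s\,\vec e_i-\sign(\zl):i\in[d]}$, whence by Lemma~\ref{lem:dc}
\[
 C=\dc{p_{\Dict\cdot\Bn[1]{d}},\gt}=\Dict\cdot C'=\cone{\r_i^{\pm}:i\in[d]},\qquad \r_i^{\pm}\coloneqq\pm\bar s\,\dict_i-\Dict\sign(\zl),
\]
a polyhedral (hence closed) cone with at most $2d$ generators. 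By Lemma~\ref{lem:lineality}, $C'_L=\spann(\bar s\,\sign(z_{\ell^1,i})\,\vec e_i-\sign(\zl):i\in\bar\Supp)$ has dimension $\bar s-1$, so $\Dict\cdot C'_L=\spann(\bar s\,\sign(z_{\ell^1,i})\,\dict_i-\Dict\sign(\zl):i\in\bar\Supp)$ is a linear subspace of $C$ of dimension at most $\bar s-1$, giving $\Dict\cdot C'_L\subseteq C_L$. Part~(a) will then follow from the reverse inclusion $C_L\subseteq\Dict\cdot C'_L$.

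The heart of the argument — and the only place where $\zl\in\mathrm{ri}(\Zlset)$ is really used — is the inclusion $C'\cap\ker\Dict\subseteq C'_L$. To establish it, take $w\in C'\cap\ker\Dict$; rescaling (descent sets are star-shaped about $0$) we may assume $\norm{\zl+w}_1\leq\norm{\zl}_1$, and since $\Dict(\zl+w)=\gt$ the vector $\zl+w$ is feasible for \eqref{eq:minl1} with minimal objective value, so $\zl+w\in\Zlset$. As $\zl\in\mathrm{ri}(\Zlset)$, there is $\varepsilon>0$ with $\zl-\varepsilon w=\zl+\varepsilon(\zl-(\zl+w))\in\Zlset$, hence $-\varepsilon w$ is a descent direction of $\Onenorm$ at $\zl$ and $-w\in C'$; thus $w\in C'\cap(-C')=C'_L$. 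Granting this, let $h\in C_L$, write $h=\Dict v$ and $-h=\Dict w$ with $v,w\in C'$; then $v+w\in C'$ and $\Dict(v+w)=0$, so $v+w\in C'\cap\ker\Dict\subseteq C'_L$. Decomposing $v=v_L+v_R$ along $C'=C'_L\oplus C'_R$, the $(C'_L)^\perp$-component of $w=(v+w)-v$ is $-v_R$; as $w\in C'$ this forces $-v_R\in C'_R$, and since $v_R\in C'_R$ with $C'_R$ line-free we get $v_R=0$. Hence $v\in C'_L$ and $h=\Dict v\in\Dict\cdot C'_L$, which proves $C_L=\Dict\cdot C'_L$ and part~(a).

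For part~(b), project the generators of $C$ onto $C_L^\perp$: by linearity $C_R=\proj_{C_L^\perp}(C)=\cone{\r_i^{\pm\perp}:i\in[d]}$, with $\r_i^{\pm\perp}\coloneqq\proj_{C_L^\perp}(\r_i^{\pm})$. For $i\in\bar\Supp$, whichever of $\r_i^{+},\r_i^{-}$ has sign equal to $\sign(z_{\ell^1,i})$ belongs to $C_L=\Dict\cdot C'_L$ by~(a), so its projection vanishes; and since $\r_i^{+}+\r_i^{-}=-2\,\Dict\sign(\zl)$, the other one projects to $-2\,\proj_{C_L^\perp}(\Dict\sign(\zl))$. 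Because $\bar s<d$ the complement $\bar\Supp^{c}$ is nonempty, and for any $j_0\in\bar\Supp^{c}$ we have $-2\,\proj_{C_L^\perp}(\Dict\sign(\zl))=\r_{j_0}^{+\perp}+\r_{j_0}^{-\perp}\in\cone{\r_j^{\pm\perp}:j\in\bar\Supp^{c}}$. Thus the generators coming from $\bar\Supp$ are redundant, and $C_R=\cone{\r_j^{\pm\perp}:j\in\bar\Supp^{c}}$ is a polyhedral cone with $2(d-\bar s)$ generators. Finally, $C_R$ is the range of $C$, hence line-free, and it is nonzero (else $C=C_L$ would be a subspace, which is impossible since $-\gt$ is a descent direction of $p_{\Dict\cdot\Bn[1]{d}}$ at $\gt$ while $\gt$ is not). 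A nontrivial pointed polyhedral cone lies in a circular cone of some angle $\alpha\in[0,\pi/2)$, so $C_R$ is a $2(d-\bar s)$-polyhedral $\alpha$-cone, completing the proof.

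The step I expect to be the main obstacle is the reverse inclusion in~(a). The tempting approach — lift a perturbation $\gt+th$ to a minimal $\ell^{1}$-representer and compare it to $\zl$ — fails because minimal representers of nearby signals need not stay close to $\zl$ ($\Zlset$ may well be a positive-dimensional polytope). The conic argument above sidesteps this by confining $C'\cap\ker\Dict$ to $C'_L$ using relative interiority of $\zl$, and then invoking the elementary fact that a line-free cone cannot be mapped onto a line — which is precisely what forbids $\Dict$ from manufacturing extra lineality.
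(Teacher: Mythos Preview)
Your proof is correct, and for part~(a) it takes a genuinely different route from the paper. Both arguments aim at the reverse inclusion $C_L\subseteq\Dict\cdot C'_L$, but they organize it differently.

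The paper argues by contradiction: it picks $\x\in C_L\setminus\Dict\cdot C'_L$, lifts $\x=\Dict\zt$ and $-\x=\Dict\zo$ with $\zt,\zo\in C'\setminus C'_L$, and then computes explicit $\ell^1$-inequalities for $\|\zl\pm\varepsilon\zi\|_1$ to force $\sum_{j\in\bar\Supp}\sign(z_{\ell^1,j})(z^1_j+z^2_j)<0$. From this it constructs $\z^\delta=\zl+\delta(\zt+\zo)\in\Zlset$ with nonzero mass on $\bar\Supp^{c}$, contradicting maximality of the support. Your argument instead isolates the structural fact $C'\cap\ker\Dict\subseteq C'_L$ via the relative-interior hypothesis (if $w\in C'\cap\ker\Dict$ then $\zl+w\in\Zlset$, and $\zl\in\mathrm{ri}(\Zlset)$ gives $-w\in C'$), and then runs a clean cone-decomposition step: any $h\in C_L$ lifts to $v,w\in C'$ with $v+w\in C'\cap\ker\Dict\subseteq C'_L$, whence the $C'_R$-components of $v$ and $w$ are negatives of one another in the line-free cone $C'_R$, forcing them to vanish. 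This avoids all coordinatewise computations and makes transparent \emph{why} maximality (equivalently, $\zl\in\mathrm{ri}(\Zlset)$) is exactly what prevents $\Dict$ from creating extra lineality. The paper's approach, by contrast, produces an explicit new minimizer with larger support, which is perhaps more constructive but also more laborious.

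For part~(b) your argument coincides with the paper's: both observe $\r_i^{+}+\r_i^{-}=-2\,\Dict\sign(\zl)$ and use this to express the projected generators from $\bar\Supp$ as conic combinations of those from $\bar\Supp^{c}$. You additionally verify that $C_R$ is nontrivial (via $-\gt\in C$ but $\gt\notin C$), which the paper leaves implicit.
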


\subsubsection{Consequence for the Sampling Rates}
\label{sec:sr}

We now combine the main results of the previous two sections to derive an upper bound on the conic mean width of $\dc{p_{\Dict\cdot\Bn[1]{d}},\gt}$.
\begin{theorem} \label{thm:decomp}
If $\bar{s} \leq d -3$, we obtain that
\begin{equation*}
\cmw[2]{\dc{p_{\Dict\cdot\Bn[1]{d}},\gt}} \leq \bar s + \left( \tan \alpha \cdot \left(\sqrt{2\log \left( \frac{2(d-\bar s)}{\sqrt{2\pi}}\right)} + 1 \right) + \frac{1}{\sqrt{2\pi}} \right)^2,
\end{equation*}
where we have used the same notation and assumptions as in Proposition~\ref{prop:lindc}.
\end{theorem}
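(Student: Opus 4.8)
The plan is to exploit the orthogonal decomposition $C \coloneqq \dc{p_{\Dict\cdot\Bn[1]{d}},\gt} = C_L \oplus C_R$ provided by Proposition~\ref{prop:lindc}: it reduces the conic mean width of $C$ to the dimension of its lineality space plus the conic mean width of its range $C_R$, and the latter is then controlled by Proposition~\ref{prop:circ}. Concretely, I would first establish the elementary ``additivity'' estimate $\cmw[2]{C} \leq \dim(C_L) + \cmw[2]{C_R} + 1$. To see it, write $\g \sim \mathcal{N}(\vec{0},\I)$ as $\g = \g_L + \g_R$ with $\g_L = \proj_{C_L}\g$ and $\g_R = \projc\g$, and split any $\x \in C\cap\Sn{n-1}$ as $\x = \x_L + \x_R$ with $\x_L \in C_L$ and $\x_R \in C_R \subseteq \orthcompl{C_L}$. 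Orthogonality and Cauchy--Schwarz then give
\[
\langle\g,\x\rangle = \langle\g_L,\x_L\rangle + \langle\g_R,\x_R\rangle \leq \norm{\g_L}_2\norm{\x_L}_2 + (Y_R)_+\norm{\x_R}_2 \leq \sqrt{\norm{\g_L}_2^2 + (Y_R)_+^2},
\]
where $Y_R \coloneqq \sup_{\vec y\in C_R\cap\Sn{n-1}}\langle\g_R,\vec y\rangle$, $(Y_R)_+ \coloneqq \max(Y_R,0)$, and the last step uses $\norm{\x_L}_2^2 + \norm{\x_R}_2^2 = 1$. Taking the supremum over $\x$, then the expectation, and using Jensen's inequality for the concave square root, one obtains $\cmw[2]{C} \leq \E\norm{\g_L}_2^2 + \E(Y_R)_+^2 = \dim(C_L) + \E(Y_R)_+^2$.

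Next I would bound $\E(Y_R)_+^2 \leq \E Y_R^2 = \cmw[2]{C_R} + \operatorname{Var}(Y_R) \leq \cmw[2]{C_R} + 1$, using that $\langle\g,\vec y\rangle = \langle\g_R,\vec y\rangle$ for $\vec y\in C_R\subseteq\orthcompl{C_L}$ (whence $\E Y_R = \cmw{C_R}$, as $C_R=\cone{C_R}$) and that $\g\mapsto Y_R$ is $1$-Lipschitz as a supremum of unit-norm linear functionals, so the Gaussian Poincaré inequality yields $\operatorname{Var}(Y_R)\leq 1$. This gives $\cmw[2]{C} \leq \dim(C_L) + \cmw[2]{C_R} + 1$. (Equivalently, this is the additivity of the statistical dimension, $\delta(C) = \dim(C_L) + \delta(C_R)$, together with $\cmw[2]{C}\leq\delta(C)$ and $\delta(C_R) = \E\norm{\vec{\Pi}_{C_R}\g_R}_2^2 \leq \cmw[2]{C_R}+1$, which is the route matching the statistical-dimension language used elsewhere in the paper.)

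It then remains to insert the structural facts of Proposition~\ref{prop:lindc} --- namely $\dim(C_L)\leq\bar s-1$, so the $+1$ is absorbed into $\bar s$ and $\cmw[2]{C}\leq \bar s + \cmw[2]{C_R}$, and the fact that $C_R$ is a $2(d-\bar s)$-polyhedral $\alpha$-cone --- and to apply Proposition~\ref{prop:circ} with $k = 2(d-\bar s)$. Here the hypothesis $\bar s\leq d-3$ does two jobs at once: it guarantees $k = 2(d-\bar s)\geq 6\geq 5$ so that Proposition~\ref{prop:circ} is applicable, and it forces $2\log(k/\sqrt{2\pi})\geq 1$, hence $1/\sqrt{2\log(k/\sqrt{2\pi})}\leq 1$, so the bound of Proposition~\ref{prop:circ} simplifies to $\cmw{C_R}\leq\tan\alpha\bigl(\sqrt{2\log(2(d-\bar s)/\sqrt{2\pi})}+1\bigr)+1/\sqrt{2\pi}$. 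Squaring and substituting into $\cmw[2]{C}\leq\bar s + \cmw[2]{C_R}$ produces exactly the claimed inequality. The only genuinely non-routine step is the additivity estimate of the first paragraph (or, abstractly, additivity of $\delta$ over orthogonal summands combined with the two-sided comparison between $\delta$ and $\cmw[2]$); the remainder is checking that the three ``$+1$'' contributions --- from $\dim C_L\leq\bar s-1$, from the Poincaré variance bound, and from the $1/\sqrt{2\log}\leq 1$ simplification --- assemble into precisely the stated form, which I expect to be the only point requiring care.
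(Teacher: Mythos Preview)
Your proposal is correct and matches the paper's proof essentially line for line: the paper also decomposes $C=C_L\oplus C_R$ via Proposition~\ref{prop:lindc}, establishes $\cmw[2]{C}\leq \dim(C_L)+\cmw[2]{C_R}+1$ (through the statistical-dimension route you mention parenthetically), absorbs the $+1$ using $\dim(C_L)\leq\bar s-1$, and then invokes Proposition~\ref{prop:circ} with $k=2(d-\bar s)$. Your explicit Cauchy--Schwarz/Poincar\'e derivation of the additivity estimate is just an unpacking of the same inequality $\cmw[2]{C}\leq\delta(C)=\delta(C_L)+\delta(C_R)\leq\dim(C_L)+\cmw[2]{C_R}+1$ that the paper cites, and your observation that $\bar s\leq d-3$ simultaneously ensures $k\geq5$ and $1/\sqrt{2\log(k/\sqrt{2\pi})}\leq1$ is exactly the simplification needed.
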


A proof of the previous result is given in Appendix~\ref{sec:proof_decomp}. As a direct consequence, we get the following upper bound on the sampling rates for coefficient and signal recovery.
\begin{corollary}\label{cor:criticalnumberalpha}
The critical number of measurements $m_0$ in \eqref{eq:m_coef} and \eqref{eq:m_sig1} satisfies
\begin{equation}
 m_0 \lesssim \bar s + \tan^2 \alpha \cdot \log (2(d-\bar s)/\sqrt{2\pi}).
\end{equation}
\end{corollary}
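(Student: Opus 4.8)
The plan is to combine Theorem~\ref{thm:decomp} with the sampling-rate formulas \eqref{eq:m_coef} and \eqref{eq:m_sig1} directly. Recall that both theorems (Theorem~\ref{thm:coeff} and Theorem~\ref{thm:sig}) express the critical number of measurements as
\begin{equation*}
 m_0 = \constant^2 \cdot \gaussparam^4 \cdot \left( \cmw{\Dict \cdot \ds{\Onenorm;\zl}} + u \right)^2 + 1,
\end{equation*}
and that, by Lemma~\ref{lem:dc}, the set $\Dict \cdot \ds{\Onenorm;\zl}$ generates the same cone as the descent set $\ds{p_{\Dict\cdot\Bn[1]{d}},\gt}$, so $\cmw{\Dict \cdot \ds{\Onenorm;\zl}} = \cmw{\ds{p_{\Dict\cdot\Bn[1]{d}},\gt}}$. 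Hence the task is simply to substitute the bound of Theorem~\ref{thm:decomp} for the squared conic mean width.

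First I would fix $\zl \in \mathrm{ri}(\Zlset)$ so that $\supp(\zl) = \bar\Supp$ has maximal cardinality $\bar s$, which is the hypothesis under which Theorem~\ref{thm:decomp} applies (together with $\bar s \leq d-3$). Squaring out $m_0$, we have $m_0 \leq 2\constant^2\gaussparam^4 \cmw[2]{\ds{p_{\Dict\cdot\Bn[1]{d}},\gt}} + 2\constant^2\gaussparam^4 u^2 + 1$; absorbing the constants $\constant, \gaussparam$ and the $u$-dependent term into the $\lesssim$ notation (which hides numerical constants and, as is standard in these non-uniform statements, a fixed confidence parameter), this reads $m_0 \lesssim \cmw[2]{\ds{p_{\Dict\cdot\Bn[1]{d}},\gt}}$. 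Then plugging in Theorem~\ref{thm:decomp} gives
\begin{equation*}
 m_0 \lesssim \bar s + \left( \tan\alpha \cdot \left( \sqrt{2\log\left(\tfrac{2(d-\bar s)}{\sqrt{2\pi}}\right)} + 1\right) + \tfrac{1}{\sqrt{2\pi}}\right)^2.
\end{equation*}

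The only remaining step is to simplify the square on the right-hand side. Expanding, the dominant contribution is $\tan^2\alpha \cdot 2\log(2(d-\bar s)/\sqrt{2\pi})$, while the cross terms and the additive $1/\sqrt{2\pi}$ contribute lower-order pieces of size $O(\tan^2\alpha \sqrt{\log(\cdot)})$, $O(\tan\alpha\sqrt{\log(\cdot)})$ and $O(1)$, all of which are dominated (up to numerical constants) by $\tan^2\alpha \log(2(d-\bar s)/\sqrt{2\pi})$ plus a constant, and the constant in turn is absorbed into $\bar s \geq 1$. Collecting terms yields $m_0 \lesssim \bar s + \tan^2\alpha \cdot \log(2(d-\bar s)/\sqrt{2\pi})$, which is the claim. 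I do not anticipate any genuine obstacle here: the corollary is a bookkeeping consequence of Theorem~\ref{thm:decomp}, and the mild subtlety is merely to note that the constants $\constant,\gaussparam$ and the deviation parameter $u$ are swept into $\lesssim$, exactly as in the analogous Corollary~\ref{cor:coef} following Proposition~\ref{prop:conditiongaussian}.
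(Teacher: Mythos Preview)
Your proposal is correct and matches the paper's approach: the paper states the corollary as ``a direct consequence'' of Theorem~\ref{thm:decomp} without further proof, and what you wrote is precisely that direct consequence spelled out, including the use of Lemma~\ref{lem:dc} to identify $\cmw{\Dict\cdot\ds{\Onenorm;\zl}}$ with $\cmw{\ds{p_{\Dict\cdot\Bn[1]{d}},\gt}}$ and the absorption of $\constant,\gaussparam,u$ into $\lesssim$ as in Corollary~\ref{cor:coef}. Your handling of the lower-order terms is fine once one notes that $\bar s\leq d-3$ forces $\log(2(d-\bar s)/\sqrt{2\pi})$ to be bounded below by a positive numerical constant, so the stray $\tan^2\alpha$ term can be absorbed into $\tan^2\alpha\cdot\log(\cdot)$.
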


This result shows that robust coefficient and signal recovery is possible, when the number of measurements obeys $m \gtrsim \bar s + \cdot \tan^2 \alpha \cdot\log (d)$. Hence, the sampling rate is mainly governed by the sparsity of maximal support $\ell^1$-representations of $\gt$ in $\Dict$ and the ``narrowness'' of the remaining cone $C_R$, which is captured by its circumangle $\alpha \in [0,\pi/2)$. The number of dictionary atoms only has a logarithmic influence. The next section is devoted to applying the previous result to various examples.  

\begin{remark}
\label{rmk:sig}
For the sake of clarity, the previous results are given in terms of the maximal sparsity. However, (potentially) more precise bounds can be achieved when replacing $\bar s$ by $\dim (C_L)$. 
  Furthermore note, that the proof of Theorem~\ref{thm:decomp} reveals that $\dim (C_L)$ is a necessary component in the required number of measurements. Indeed, since $\cmw[2]{\dc{p_{\Dict\cdot\Bn[1]{d}},\gt}}$ is a sharp description for the required number of measurements, equation~\eqref{eq:intermed} shows that the number of measurements for successful recovery is lower bounded by $\dim (C_L)$. 
\end{remark}

\subsubsection{Examples}
\label{sec:examples}

In this section, we discuss four applications of the previous upper bound on the conic mean width. First, we show that prediction for the required number of measurements agrees with the standard theory of compressed sensing. We then analytically compute the sampling rate of Corollary~\ref{cor:criticalnumberalpha} for a specific scenario, in which the dictionary is formed by a concatenation of convolutions. The third example focuses on a numerical simulation in the case of 1D total variation.  Lastly, we demonstrate how the circumangle can be controlled by the classical notion of coherence.

\paragraph{The Standard Basis}\label{subsec:standardbasis}
Our first example is dedicated to showing that the result of Corollary~\ref{cor:criticalnumberalpha} is consistent with the standard theory of compressed sensing when $\Dict = \Id$. Hence, assume that we are given a sparse vector $\gt \in \R^n$ with $\Supp = \supp (\gt)$ and $s = \# \Supp \leq n-3$. Trivially, $\gt$ is then its own, unique $\ell^1$-representation with respect to $\Id$. According to Lemma~\ref{lem:lineality}, the $(s-1)$-dimensional lineality space of $C = \dc{\norm{\cdot}_1,\gt}$ is given by
\begin{equation}
 C_L = \spann \left( \r_i^+ : i \in \Supp \right),
\end{equation}
where $\r_i^+ = s \cdot \sign (x_{0,i})  \cdot \vec{e}_i -\sign( \gt)$. For $i \in \Suppc$ a simple calculation shows that $\ax, \r^\pm_i \in C_L^\perp,$ where $\ax = - \sign (\gt) / \sqrt{s} \in \Sn{n-1}$ and $\r^\pm_i = \pm s \cdot \vec{e}_i - \sign (\gt)$. Furthermore, for $i \in \Suppc$ it holds true that
\begin{equation}
 \sp{\ax}{\r^\pm_i} = \sqrt{s} = (1/\sqrt{s+1}) \cdot \norm{\r^\pm_i},
\end{equation}
so that the vectors $\r^\pm_i$ generate a $2(n-s)$-polyhedral $\alpha$-cone $C_R$ with $\tan^2 \alpha = s$.
Hence, Corollary~\ref{cor:criticalnumberalpha} states that robust recovery of $\gt$ is possible for $m \gtrsim 2s \log (2(n-s)/\sqrt{2\pi})$ measurements. This bound is to be compared with the classical compressed sensing result, which prescribes to take $m \gtrsim 2s \log (n/s)$ measurements. Note that the slight difference in the logarithmic factor is due to our simple bound on $ W(\alpha, k, n)$, cf.~Remark~\ref{rem:poly}. 

\paragraph{A Convolutional Dictionary}

Consider a dictionary $\Dict$ defined by the concatenation of two convolution matrices $\mathbf{H}_1$ and $\mathbf{H}_2$ with convolution kernels $\h_1=[1,1]$ and $\h_2=[1,-1]$, respectively. For instance, in dimension $n=4$, this would yield the following matrix:
\begin{equation}
 \Dict=\begin{pmatrix}
    1 & 1 & 0 & 0  & 1  & -1 & 0  & 0 \\
    0 & 1 & 1 & 0  & 0  & 1  & -1 & 0  \\
    0 & 0 & 1 & 1  & 0  & 0  & 1  & -1  \\
    1 & 0 & 0 & 1  & -1 & 0  & 0  & 1  
   \end{pmatrix}.
\end{equation}
In particular for imaging applications, popular signal models are based on sparsity in such concatenations of convolutional matrices, e.g., translation invariant wavelets~\cite{mallat09} or learned filters in the convolutional sparse coding model~\cite{wohlberg2014,bristow2013}.  Note that the resulting dictionary is highly redundant and correlated, so that existing coherence- and RIP-based arguments cannot provide satisfactory recovery guarantees. For the same reason, a recovery of a unique minimal $\ell^1$-representer by~\eqref{eq:coef} is unlikely, cf.~the numerical simulation in Section~\ref{sec:sampling_sig}. However, in the following, we will show how the previous upper bound based on the circumangle can be used in order to analyze signal recovery by~\eqref{eq:sig}.

To that end, we consider the recovery of a simple vector $\gt\in\R^n$ supported on the first and the last component only, i.e., $x_{0,i}=0$ for all $2\leq i\leq n-1$. A generalization to vectors supported on supports made of pairs of contiguous indices separated by pairs of contiguous zeros is doable, but we prefer this simple setting for didactic reasons.
In this case, the set of minimal $\ell^1$-representers can be completely characterized. Assuming additionally that $x_{0,1}>x_{0,n}>0$, one can show that
\begin{align*}
  Z_{\ell^1} = \big\{ & \zl = [\z^{(1)}; \z^{(2)}]\in \R^{2n}, \textrm{ with } \supp(\z^{(1)})=\supp(\z^{(2)})=\{1,2\}, \\
 & z_1^{(1)}=\frac{x_{0,1}+x_{0,n}}{2} -\delta, z_1^{(1)}=\delta,  z_2^{(1)}=\frac{x_{0,1}-x_{0,n}}{2} -\delta, z_2^{(1)}=-\delta, \\
 & 0\leq \delta \leq \frac{x_{0,1}-x_{0,n}}{2}\big\}.
\end{align*}
Let $\zl \in Z_{\ell^1}$ denote any representer with maximal support $S=\supp (\zl) = \left\{1,2,n+1,n+2\right\}$ and set $\vvec=\Dict\cdot\sign(\zl)$. 
According to Proposition~\ref{prop:lindc}, we then decompose the descent cone $C=\dc{p_{\Dict\cdot\Bn[1]{d}},\gt}$ into $C = C_L \oplus C_R$, where $C_L$ is the lineality space given by
\begin{equation*}
 C_L= \spann(4\cdot \sign(z_{\ell^1,i}) \cdot \dict_i - \vvec : i\in S),
\end{equation*}
and the range is given by $C_R=\cone{\proj_{\smash{C_L^\perp}}( \pm 4 \cdot \dict_j - \vvec : j \in S^c)}$.
It is easy to see that $\dim (C_L) = 2$, and that the projection on $C_L^\perp$ can be expressed as 
\begin{equation*}
 (\proj_{C_L^\perp}(\x))_i=\begin{cases}
				   0, & \textrm{ if } i \in \{2,n\}, \\
                       x_i,  & \textrm{ otherwise}.
                      \end{cases}
\end{equation*}
The goal is now to show that $C_R$ is contained in a circular cone with angle $\alpha = \arccos (1/\sqrt{3})$ and axis $\ax = - \proj_{\smash{C_L^\perp}}(\vvec) / \norm{\proj_{\smash{C_L^\perp}}(\vvec)}_2 =  -\vec{e}_1$. Indeed, a straightforward computation shows that for $j \in S^c$ we have
 \[
\left( \proj_{C_L^\perp} (\pm 4 \cdot \dict_j - \vvec) / \norm{\proj_{C_L^\perp} (\pm 4 \cdot \dict_j-\vvec)}_2  \right)_1  \in \left\{ -1/\sqrt{2}, -1/\sqrt{3}\right\}.
 \]
Hence, Corollary~\ref{cor:criticalnumberalpha} implies that robust recovery of $\gt$ is possible for $m\gtrsim 2 + 2\log(4n)$ measurements.

\paragraph{1D Total Variation}

As a third example we consider the problem of total variation minimization in 1D. Assume that $\gt, \Meas, \y, \noise$ and $\noiseparam$ follow Model~\ref{mod:noisy_meas} with $\noiseparam = 0$ and that $\Meas$ obeys Model \ref{mod:meas_mod}. Total variation minimization is based on the assumption that $\gt$ is gradient-sparse, i.e., that $\# \supp (\TV \gt) \leq s \ll n$, where  $\TV \in \R^{n-1\times n}$ denotes a discrete gradient operator, which is for instance based on forward differences with von Neumann boundary conditions. In order to recover $\gt$ from its noiseless, compressed measurements $\y$, one solves the program
\begin{equation}
 \label{eq:tv1}
 \min_{\x\in\R^n} \norm{\TV\x}_1 \quad \mbox{ s.t. } \quad \y = \Meas \x.
\end{equation}
For signals with $\mathbf{1}^T \cdot \gt = 0$ it is easy to see that the previous formulation is equivalent to solving the synthesis basis pursuit~\sigNoisefree{} with $\Dict = \TV^\dagger$, where $\TV^\dagger \in \R^{n\times n-1}$ denotes the Moore-Penrose inverse of $\TV$. 

The research of the past three decades demonstrates that encouraging a small total variation norm often efficaciously reflects the inherent structure of real-world signals. Although not as popular as its counterpart in 2D, total variation methods in one spatial dimension find application in many practical applications, see for instance~\cite{Little2011}.
Somewhat surprisingly, Cai and Xu have shown  that a \emph{uniform} recovery of all $s$-gradient-sparse signals is possible if and only if the number of (Gaussian) measurements obeys $m \gtrsim \sqrt{sn} \cdot \log(n)$; see~\cite{cai_guarantees_2015}. Recently, \cite{Genz2020} has proven that this square-root bottleneck can be broken for signals with well separated jump discontinuities. This result is also based on establishing a non-trivial upper bound on the conic mean width. For such ``natural'' signals, $m\gtrsim s \cdot \log^{\smash{2}} (n)$ measurements are already sufficient for exact recovery. See also ~\cite{guntuboyina2020,dalalyan2017} for closely related results in a denoising context. 

We want to demonstrate that the upper bound on the conic mean width based on the circumangle is capable of breaking the square-root bottleneck of the synthesis-based reformulation above. A theoretical analysis appears to be doable, however, it is beyond the scope of this work. Instead, we restrict ourselves to a simple numerical simulation.  We consider signals that are defined by the pointwise discretization of a function on an interval with a few equidistant discontinuities and zero average. Note that for such a signal $\gt$ the unique minimal $\ell^1$-representer with respect to $\TV^\dagger$ is simply given by $\TV\gt$. Hence, we are only left with numerically computing the circumangle $\alpha$ of the range $C_R$ in Proposition~\ref{prop:lindc}, which is done by means of Proposition~\ref{prop:simple_circumcenter}.  In order to confirm that required number of measurements scales logarithmically in the ambient dimension $n$, we analyze the behavior of $\tan^2\alpha$ when the resolution is increased, i.e., for $n=500,1000, \dots 10000$. The result is shown in Figure~\ref{fig:alpha}. The logarithmic scaling of $\tan^2 \alpha$ (note that the $n$-axis is logarithmic) indeed suggests that the bound of Corollary~\ref{cor:criticalnumberalpha} predicts  that $m \gtrsim s\cdot \log^{\smash{2}} (n)$ measurements suffice for exact recovery. Hence, the presented upper bound based on the circumangle appears to be sharp enough to break the square-root bottleneck of total variation minimization in 1D. 

\begin{figure}
    \begin{center}
        \includegraphics[height=5cm]{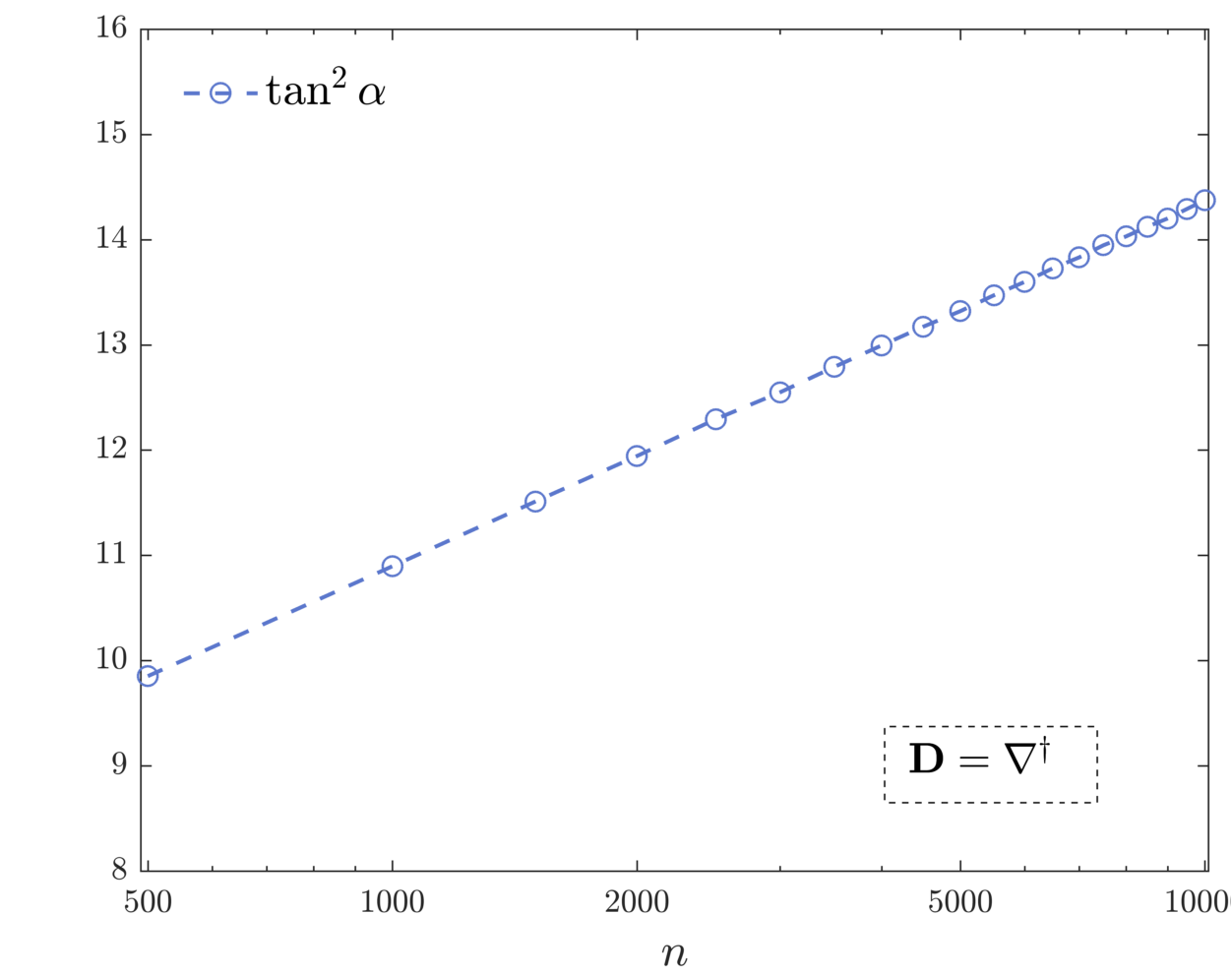}
        \caption{\textbf{Logarithmic scaling of $\tan^2 \alpha$.} The figure displays the behavior of $\tan^2 \alpha$ for increasing ambient dimension $n$, where $\alpha$ denotes the circumangle of the range $C_R$ in Proposition~\ref{prop:circ}. Here, the dictionary is chosen as $\Dict = \TV^\dagger$ and the considered signals $\gt$ have equidistant jump discontinuities and zero average. The plot indicates that the upper bound based on the circumangle is sharp enough to break the square-root bottleneck of~\cite{cai_guarantees_2015}.} 
        \label{fig:alpha}
    \end{center}
\end{figure}

\paragraph{Coherence and Circumangle}
In our last example, we show that the circumangle of $C_R$ of Proposition~\ref{prop:lindc} can be controlled in terms of the \emph{mutual coherence} of the dictionary (see Equation~\eqref{eq:coh}). This notion is a classical concept in the literature on sparse representations, which is frequently used to derive uniform recovery statements; see for instance~\cite[Chapter 5]{foucart2013cs} for an overview. Note that the assumption $s < \tfrac{1}{2}(1+ \mu^{-1})$ of the following result guarantees that every $s$-sparse $\zl$ is the unique minimal $\ell^{\smash{1}}$-representer of its associated signal $\Dict \zl$~\cite{Donoho2003,1255564}. Hence, in this situation, coefficient and signal recovery are equivalent, and both formulations are governed by the conic mean width of the cone $C =  \dc{p_{\Dict\cdot\Bn[1]{d}},\Dict\zl} =   \Dict \cdot \dc{\norm{\cdot}_1,\zl}$.

\begin{proposition}\label{prop:coherence_bound}
Let $\Dict\in\R^{n\times d}$ be a dictionary that spans $\R^n$ with $\|\dict_i\|_2 \leq 1$ for $i\in [d]$ and mutual coherence  $ \mu = \mu(\Dict)$. Let $\zl\in \R^d \setminus \left\{ \vec{0}\right\}$ denote an arbitrary $s$-sparse vector with $s < \frac{1}{2}(1 + \mu^{-1})$.
Then the circumangle $\alpha$ of the range $C_R$ of the descent cone $C = \dc{p_{\Dict\cdot\Bn[1]{d}},\Dict\zl}$ obeys:
\begin{align*}
    \tan^2 \alpha & \leq \frac{s(1 - s\mu)}{(1 - 2 s \mu)^2}.
\end{align*}
\end{proposition}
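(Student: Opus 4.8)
The plan is to combine the explicit polyhedral description of the descent cone from Proposition~\ref{prop:lindc} with the variational picture of the circumangle (Proposition~\ref{prop:simple_circumcenter}), and to exhibit a concrete near-optimal circumcenter generalising the one appearing in the standard-basis example of Section~\ref{subsec:standardbasis}. First, since $s<\tfrac12(1+\mu^{-1})$, the atoms indexed by $\supp(\zl)$ are linearly independent, so by \cite{Donoho2003,1255564} the vector $\zl$ is the \emph{unique} minimal $\ell^{\smash1}$-representer of $\gt\coloneqq\Dict\zl$; hence $\Zlset=\{\zl\}$, its maximal $\ell^{\smash1}$-support is $\bar\Supp=\supp(\zl)$ and $\bar s\coloneqq\#\bar\Supp\leq s$ (we may assume $\bar s<d$, otherwise $C_R=\{\vec0\}$ and there is nothing to prove). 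Applying Proposition~\ref{prop:lindc} with this $\zl\in\mathrm{ri}(\Zlset)$ gives $C=C_L\oplus C_R$ with
\[
C_R=\cone{\r_j^{\pm\perp}:j\in\bar\Supp^c},\qquad \r_j^{\pm\perp}=\projc\!\left(\pm\bar s\,\dict_j-\vvec\right),\qquad \vvec\coloneqq\Dict\sign(\zl).
\]
By the definition of the circumangle together with Proposition~\ref{prop:simple_circumcenter}, \emph{any} unit vector $\ax\in C_L^\perp$ with $\inf_{j,\pm}\sp{\ax}{\r_j^{\pm\perp}}>0$ certifies $C_R\subseteq C(\alpha_0,\ax)$ for $\cos\alpha_0=\inf_{j,\pm}\sp{\ax}{\r_j^{\pm\perp}}/\norm{\r_j^{\pm\perp}}_2$, whence, decomposing each generator along and orthogonal to $\ax$,
\[
\tan^2\alpha\ \leq\ \tan^2\alpha_0\ =\ \max_{j\in\bar\Supp^c,\ \pm}\ \frac{\norm{\r_j^{\pm\perp}-\sp{\ax}{\r_j^{\pm\perp}}\ax}_2^2}{\sp{\ax}{\r_j^{\pm\perp}}^2}.
\]

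Mimicking the case $\Dict=\Id$ — where the circumcenter is $-\sign(\gt)/\sqrt{s}$, $\vvec\perp C_L$, and $\projc\vvec=\vvec$ — I would take $\ax\coloneqq-\projc\vvec/\norm{\projc\vvec}_2\in C_L^\perp$, which is well defined once $\projc\vvec\neq\vec0$. Since $\projc$ is self-adjoint and idempotent and $\projc\ax=\ax$, one has $\projc\vvec=-\norm{\projc\vvec}_2\,\ax$, so the $\vvec$-part of every generator points along $\ax$ and drops out of the orthogonal component; a short computation then yields
\[
\r_j^{\pm\perp}-\sp{\ax}{\r_j^{\pm\perp}}\ax=\pm\bar s\bigl(\projc\dict_j-\sp{\ax}{\dict_j}\ax\bigr),\qquad \sp{\ax}{\r_j^{\pm\perp}}=\norm{\projc\vvec}_2\mp\frac{\bar s\,\sp{\projc\vvec}{\dict_j}}{\norm{\projc\vvec}_2}.
\]
Hence $\norm{\r_j^{\pm\perp}-\sp{\ax}{\r_j^{\pm\perp}}\ax}_2^2=\bar s^2\bigl(\norm{\projc\dict_j}_2^2-\sp{\ax}{\dict_j}^2\bigr)\leq\bar s^2\norm{\dict_j}_2^2\leq\bar s^2$, and the task reduces to the single scalar lower bound
\[
\sp{\ax}{\r_j^{\pm\perp}}^2\ \geq\ \frac{\bar s\,(1-2\bar s\mu)^2}{1-\bar s\mu}\qquad(j\in\bar\Supp^c),
\]
because then the numerator $\leq\bar s^2$ over the denominator $\geq\bar s(1-2\bar s\mu)^2/(1-\bar s\mu)$ gives $\tan^2\alpha\leq\bar s(1-\bar s\mu)/(1-2\bar s\mu)^2$, and since $\bar s\leq s$ and $t\mapsto t(1-t\mu)/(1-2t\mu)^2$ is nondecreasing on $[0,1/(2\mu))$ (its derivative equals $(1-2t\mu)^{-3}$) this is at most $s(1-s\mu)/(1-2s\mu)^2$, as claimed.

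Finally, using $\sp{\ax}{\r_j^{\pm\perp}}\geq\norm{\projc\vvec}_2-\bar s\,\abs{\sp{\projc\vvec}{\dict_j}}/\norm{\projc\vvec}_2$ and elementary algebra, the displayed lower bound follows from two scalar estimates: $\norm{\projc\vvec}_2^2\geq\bar s(1-\bar s\mu)$ (which also rules out $\projc\vvec=\vec0$) and $\abs{\sp{\projc\vvec}{\dict_j}}\leq\bar s\mu$ for $j\in\bar\Supp^c$. Both are perturbations of identities that are \emph{exact} at $\Dict=\Id$ (there $\norm{\vvec}_2^2=\bar s$, $\vvec\perp C_L$, and $\dict_j\perp\spann(\dict_i:i\in\bar\Supp)$), and I would prove them by Gram-matrix bookkeeping: expand $\norm{\vvec}_2^2=\norm{\sum_{i\in\bar\Supp}\sign(z_{\ell^1,i})\dict_i}_2^2$ and $\sp{\vvec}{\dict_j}$ using $\abs{\sp{\dict_i}{\dict_{i'}}}\leq\mu$ and $\norm{\dict_i}_2\leq1$, and control $\norm{\proj_{C_L}\vvec}_2$ and $\proj_{C_L}\dict_j$ via $C_L\subseteq\spann(\dict_i:i\in\bar\Supp)$ and the fact that the Gram matrix of $\{\dict_i:i\in\bar\Supp\}$ — hence also of the generators $\bar s\,\sign(z_{\ell^1,i})\dict_i-\vvec$ of $C_L$ — is well conditioned precisely because $s<\tfrac12(1+\mu^{-1})$. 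The step I expect to be the real obstacle is exactly this last one: the $\Dict=\Id$ case leaves no slack, so to land on \emph{exactly} $s(1-s\mu)/(1-2s\mu)^2$ rather than a looser rational expression one must track every correction term and balance the crude coherence bounds against the factor $\bar s$ — this is where the hypothesis $s<\tfrac12(1+\mu^{-1})$ is genuinely consumed. An alternative that sidesteps the explicit generators would be to bound $\cos\alpha$ from the dual form $\cos\alpha=\sup_{\ax}\inf_{\x\in C_R\cap\Sn{n-1}}\sp{\ax}{\x}$ directly, but the candidate-axis computation above keeps the book-keeping most transparent.
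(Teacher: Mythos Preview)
Your overall strategy coincides with the paper's: same candidate axis $\ax=-\projc\vvec/\norm{\projc\vvec}_2$, same three scalar estimates (upper bound $\bar s^2$ on the orthogonal component, upper bound $\bar s\mu$ on $\abs{\sp{\projc\vvec}{\dict_j}}$, lower bound $\bar s(1-\bar s\mu)$ on $\norm{\projc\vvec}_2^2$), and the same final algebra. The paper even uses the identical $\tan^2(\angle(\vec a,\vec a+\vec b))$ inequality you implicitly invoke.

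The one point where your proposal diverges is exactly the step you flag as the obstacle, the lower bound $\norm{\projc\vvec}_2^2\geq\bar s(1-\bar s\mu)$. You propose to attack it by writing $\norm{\projc\vvec}_2^2=\norm{\vvec}_2^2-\norm{\proj_{C_L}\vvec}_2^2$ and controlling $\norm{\proj_{C_L}\vvec}_2$ through Gram-matrix conditioning. The paper instead exploits a structural observation that makes the bound drop out with no slack: each generator $\bar s\,\sign(z_{\ell^1,i})\dict_i-\vvec$ of $C_L$ is a combination $\sum_{k\in\bar\Supp}c_k\sign(z_{\ell^1,k})\dict_k$ whose coefficients sum to zero, so $C_L$ lies in the affine-zero hyperplane of the coefficient simplex. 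Since $\vvec$ itself has all coefficients equal to $1$ (sum $=\bar s$), the orthogonal residual $\projc\vvec=\vvec-\proj_{C_L}\vvec$ is forced to be of the form $\sum_{i\in\bar\Supp}c_i\sign(z_{\ell^1,i})\dict_i$ with $\sum_i c_i=\bar s$. This reduces the lower bound to the constrained quadratic program
\[
\norm{\projc\vvec}_2^2\ \geq\ \inf_{\sum_i c_i=\bar s}\ \Bigl\|\sum_{i\in\bar\Supp}c_i\sign(z_{\ell^1,i})\dict_i\Bigr\|_2^2\ \geq\ \inf_{\sum_i c_i=\bar s}\Bigl(\norm{\vec c}_2^2-\mu\!\!\sum_{i\neq j}c_ic_j\Bigr),
\]
whose Lagrange conditions give $c_i\equiv 1$ and hence the value $\bar s-\mu\bar s(\bar s-1)\geq\bar s(1-\bar s\mu)$. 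The same coefficient representation of $\projc\vvec$ is what the paper invokes for the cross-term bound $\abs{\sp{\projc\vvec}{\dict_j}}\leq\bar s\mu$. This affine-constraint trick is the missing idea that lets you hit the exact constant without the delicate bookkeeping you anticipate; your Gram-conditioning route would likely yield a looser rational expression.
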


A proof of the previous result can be found in Appendix~\ref{sec:coherence_bound}. The statement allows us to retrieve a bound of the order $m \gtrsim s \log (d)$ for the needed number of measurements. For example, with $s\mu = 1/10$, the bound of Corollary~\ref{cor:criticalnumberalpha} results in a sampling rate of  $ m_0 \leq 4s \log (d)$. Observe that this is essentially the same result as~\cite[Corollary II.4]{rauhut2008}, however, the constants of Proposition~\ref{prop:coherence_bound} are better controlled. 

Note that the mutual coherence of a dictionary (sometimes also referred to as \emph{worst-case coherence}~\cite[Chapter 9]{casazza2012finite}) is a global quantity that is usually used to derive recovery guarantees that are uniform across all $s$-sparse signals. Approaches based on this notion suffer from the so-called square-root bottleneck: The Welch bound~\cite[Theorem 5.7]{foucart2013cs} reveals that the condition $s < \tfrac{1}{2}(1+ \mu^{-1})$ can only be satisfied for mild sparsity values $s \lesssim\sqrt{n}$. We emphasize that this is in contrast to the strategy of this work, which is tailored for a non-uniform recovery of individual signals. Indeed, the circumangle is a signal-dependent notion that allows to describe the local geometry.

\section{Numerical Experiments}
\label{sec:num_exp}

In this section, we illustrate our main findings regarding the $\ell^1$-synthesis formulation by performing numerical simulations. First, we study the recovery of coefficient representations by \coefNoisefree{}; see Section~\ref{sec:num_coef}. In Section~\ref{sec:sampling_sig}, we then focus on signal recovery by \sigNoisefree{} in situations, where the identification of a coefficient representation is impossible. Section~\ref{sec:2d_pt} is dedicated to the experiment of Figure~\ref{fig:sig:pt_2}, in which both formulations are compared. Lastly, we investigate the differences concerning robustness to measurement noise in Section~\ref{sec:robustness_noise}. For general design principles and more details on our simulations we refer the interested reader to Appendix~\ref{sec:num_details}.  

To the best of our knowledge, all other compressed sensing results on the $\ell^{\smash{1}}$-synthesis formulation with redundant dictionaries describe the sampling rate as an asymptotic order bound.  Hence, these results are not compatible with the experiments in this section and will not be further considered. 

\subsection{Sampling Rates for Coefficient Recovery}
\label{sec:num_coef}

In order to study coefficient recovery by~\coefNoisefree, we create \emph{phase transition plots} by running Experiment~\ref{exp:coef} for different dictionary and signal combinations reported below. 

\begin{experiment}[Phase transition for a fixed coefficient vector]\leavevmode
\label{exp:coef}
\vspace{-.25\baselineskip}\hrule\vspace{.5\baselineskip}

\myhangindent{Input: \ }
\expkwd{Input:} Dictionary $\Dict \in \R^{n \times d}$, coefficient vector $\zl \in \R^d$.

\vspace{.5\baselineskip}
\expkwd{Compute:} Repeat the following procedure $100$ times for every $m =1,2,\dots,n$:
\begin{expstep}
	 \item 
		Draw a standard i.i.d.\ Gaussian random matrix $\Meas \in \R^{m\times n}$ and determine the measurement vector $\y = \Meas\Dict\zl$.
	\item 
		Solve the program \coefNoisefree{} to obtain an estimator $\solz \in \R^d$. 
	\item
		Compute and store the recovery error $\norm{\zl - \solz}_2$. Declare success if $\norm{\zl - \solz}_2 < 10^{-5}$.
\end{expstep}
\end{experiment}

\paragraph{Simulation Settings} Our first two examples are based on a redundant Haar wavelet frame, which can be seen as a typical representation system in the field of applied harmonic analysis, see~\cite{mallat09} for more details on wavelets and Section 3.1 in~\cite{genzel2017} for a short discussion in the context of compressed sensing. As a back-end for defining the wavelet transform, we are using the \texttt{Matlab} software package \texttt{spot}~\cite{spot}, which is in turn based on the \texttt{Rice Wavelet Toolbox}~\cite{rwt}. We set the ambient dimension to $n=256$ and consider a Haar system with $3$ decomposition levels and normalized atoms. The resulting dictionary is denoted by $\Dict = \Dictw \in \R^{256 \times 1024}$. 
The first coefficient vector $\zl^{\smash{1}} \in \R^{1024}$ is obtained by selecting a random support set of cardinality $s=16$, together with random coefficients; see Subfigure~\ref{fig:coef:wave:3} for a visualization of $\zl^{1}$ and Subfigure~\ref{fig:coef:wave:2} for the resulting signal $\vec{x}_1 = \Dictw \cdot \zl^{1}$. The second coefficient vector $\zl^{2}$ is created by defining two contiguous blocks of non-zero coefficients in the low frequency part, again with $s=16$; see Subfigure~\ref{fig:coef:wave:6} for a plot of $\zl^{2}$ and Subfigure~\ref{fig:coef:wave:5} for the resulting signal $\x_2 = \Dictw \cdot \zl^{\smash{2}}$. For each signal we run Experiment~\ref{exp:coef} and report the empirical success rate in the Subfigures \ref{fig:coef:wave:1}, \ref{fig:coef:wave:4}, respectively.

In the third example, the dictionary is chosen as a Gaussian random matrix, which is a typical benchmark system for compressed sensing with redundant frames, see~for instance \cite{genzel2017,kabanava2015,chen2014}. Also in this case, we set $n= 256$, but we choose $d=512$. The resulting dictionary is denoted by $\Dict_{\texttt{rand}} \in \R^{256\times 512}$. The coefficient vector $\zl^{3}$ is  defined in the same manner as $\zl^1$ above (see Subfigure~\ref{fig:coef:wave:9}), where we again have $\|\zl^{3}\|_0=16$. The resulting signal $\x_3$ is shown in Subfigure~\ref{fig:coef:wave:8} and the empirical success rate in Subfigure~\ref{fig:coef:wave:7}.       

Our fourth and last dictionary is inspired by super-resolution; see for instance~\cite{candes2012}. We again set $n=256$ and choose the dictionary $\Dict_{\texttt{super}} \in \R^{256 \times 256}$ as a convolution with a discrete Gaussian function of large variance. This example can therefore be considered as a finely discretized super-resolution problem. The coefficient vector $\zl^{4}$ is then chosen as a sparse vector with $z_{\ell^{\smash{1}},128}^{\smash{4}} = 1$ and $z_{\ell^{\smash{1}},129}^{\smash{4}} = -1$, see Subfigure~\ref{fig:coef:wave:12}. Hence, in the signal $\x_4$, the two neighboring peaks almost cancel out and result in the low amplitude signal shown in Subfigure~\ref{fig:coef:wave:11}. Finally, the empirical success rate is depicted in Subfigure~\ref{fig:coef:wave:10}.  
Note that for each example we have verified the condition $\lmin{\Dict}{\dc{\Onenorm,\zl^{i}}} > 0$ heuristically by verifying that $\Zlset = \left\{ \zl^{\smash{i}} \right\}$, respectively.

\paragraph{Results} Let us now analyze the empirical success rates of Figure~\ref{fig:coef:wave} and compare them with the estimates of $\cmw[2]{\Dict \cdot \ds{\Onenorm;\zl^i}}$ and $\cmw[2]{\ds{\Onenorm;\zl^i}}$. Our findings are summarized in the following:
\begin{enumerate}[(i)]
 \item The convex program \coefNoisefree{} obeys a sharp phase transition in the number of measurements $m$: Recovery of a coefficient vector fails if $m$ is below a certain threshold and succeeds with overwhelming probability once a small transition region is surpassed. This observation could have been anticipated, given for instance the works~\cite{amelunxen2014edge,tropp2014convex}. However, note that the product structure of the matrix $\Meas\Dict$ in \coefNoisefree{} does not allow for a direct application of these results. 
 
 \item The quantity $\cmw[2]{\Dict \cdot \ds{\Onenorm;\zl}}$ accurately describes the sampling rate of \coefNoisefree{}. Indeed, in all four simulation settings of Figure~\ref{fig:coef:wave}, the phase transition occurs near the estimated conic mean width of $\Dict \cdot \dc{\Onenorm;\zl^i}$, as predicted  by~Theorem~\ref{thm:coeff}.
 
  \item In contrast, $\cmw[2]{\ds{\Onenorm;\zl}}$ does not describe the sampling rate of \coefNoisefree{}, in general. Indeed, $\cmw[2]{\Dict \cdot \ds{\Onenorm;\zl^2}} \ll \cmw[2]{\ds{\Onenorm;\zl^2}}$ and $\cmw[2]{\Dict \cdot \ds{\Onenorm;\zl^4}} \gg \cmw[2]{\ds{\Onenorm;\zl^4}}$; see Subfigure~\ref{fig:coef:wave:4} and~\ref{fig:coef:wave:10}, respectively.
 
 \item For two minimal $\ell^1$-representations $\zl^1,\zl^2$ with the same sparsity, but with different supports, the quantities $\cmw[2]{\Dict \cdot \ds{\Onenorm;\zl^1}}$  and $\cmw[2]{\Dict \cdot \ds{\Onenorm;\zl^2}}$ might differ significantly, while $\cmw[2]{\ds{\Onenorm;\zl^1}} = \cmw[2]{\ds{\Onenorm;\zl^2}}$; see Subfigures~\ref{fig:coef:wave:1} and \ref{fig:coef:wave:4}. Hence, sparsity alone does not appear to be a good proxy for the sampling complexity of \coefNoisefree{}. A refined understanding of coefficient recovery requires a theory that is non-uniform across the class of all $s$-sparse signals. 
 
 \item The local condition number $\condi{\Dict}{\zl}$ might explode, which often renders a condition bound as in Proposition~\ref{prop:conditiongaussian} unusable. Indeed, we report upper bounds for $\smash{\lmin{\Dict}{\dc{\Onenorm,\zl^{\smash{i}}}}}$ in the first column of Figure~\ref{fig:coef:wave}. Since the norms of each dictionary are well controlled, this quantity is responsible for the large values of the local condition number. 
 
\end{enumerate}

\begin{figure}[!t]
	\centering
	\begin{subfigure}[t]{0.3\textwidth}
		\centering
		\includegraphics[width=\textwidth]{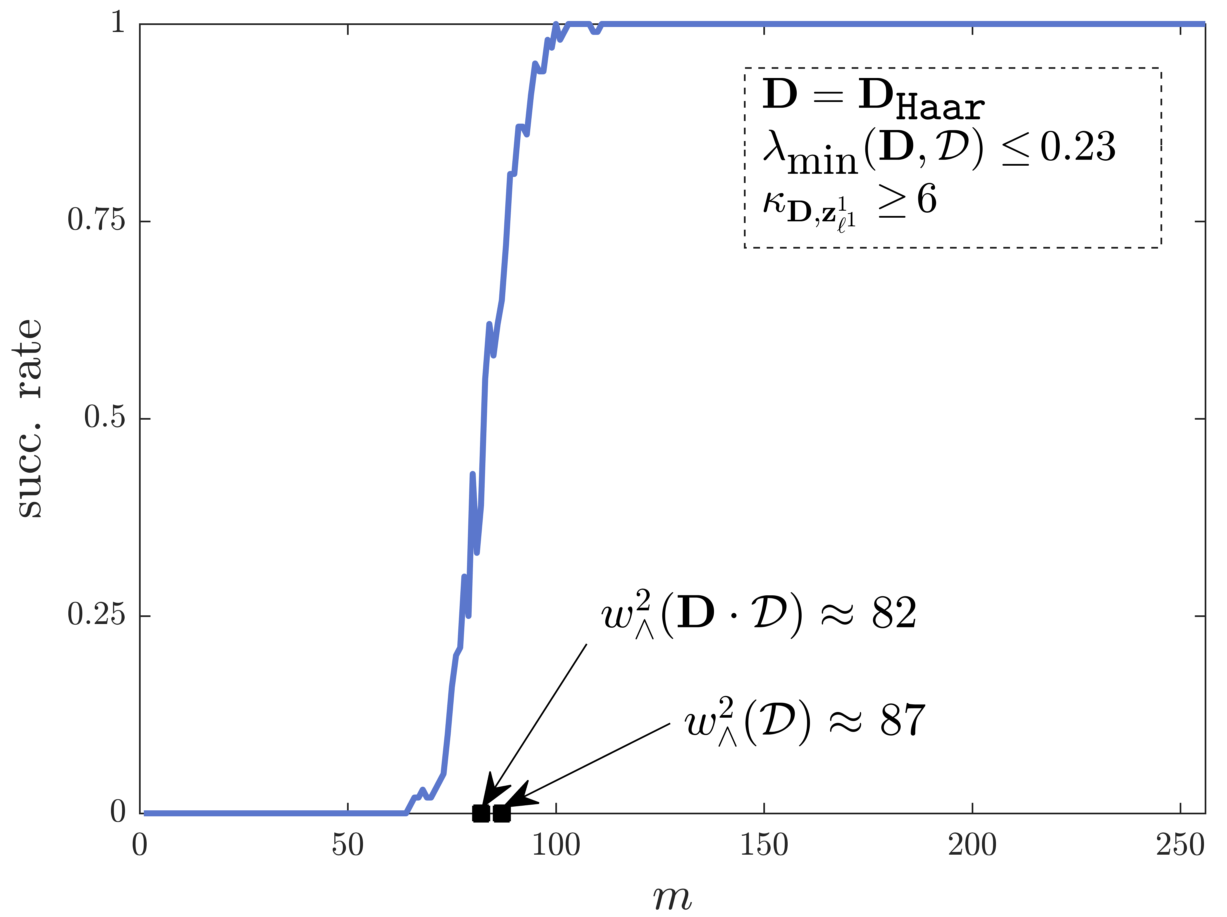}
		\caption{}
		\label{fig:coef:wave:1}
	\end{subfigure}%
	\qquad
	\begin{subfigure}[t]{0.3\textwidth}
		\centering
		\includegraphics[width=\textwidth]{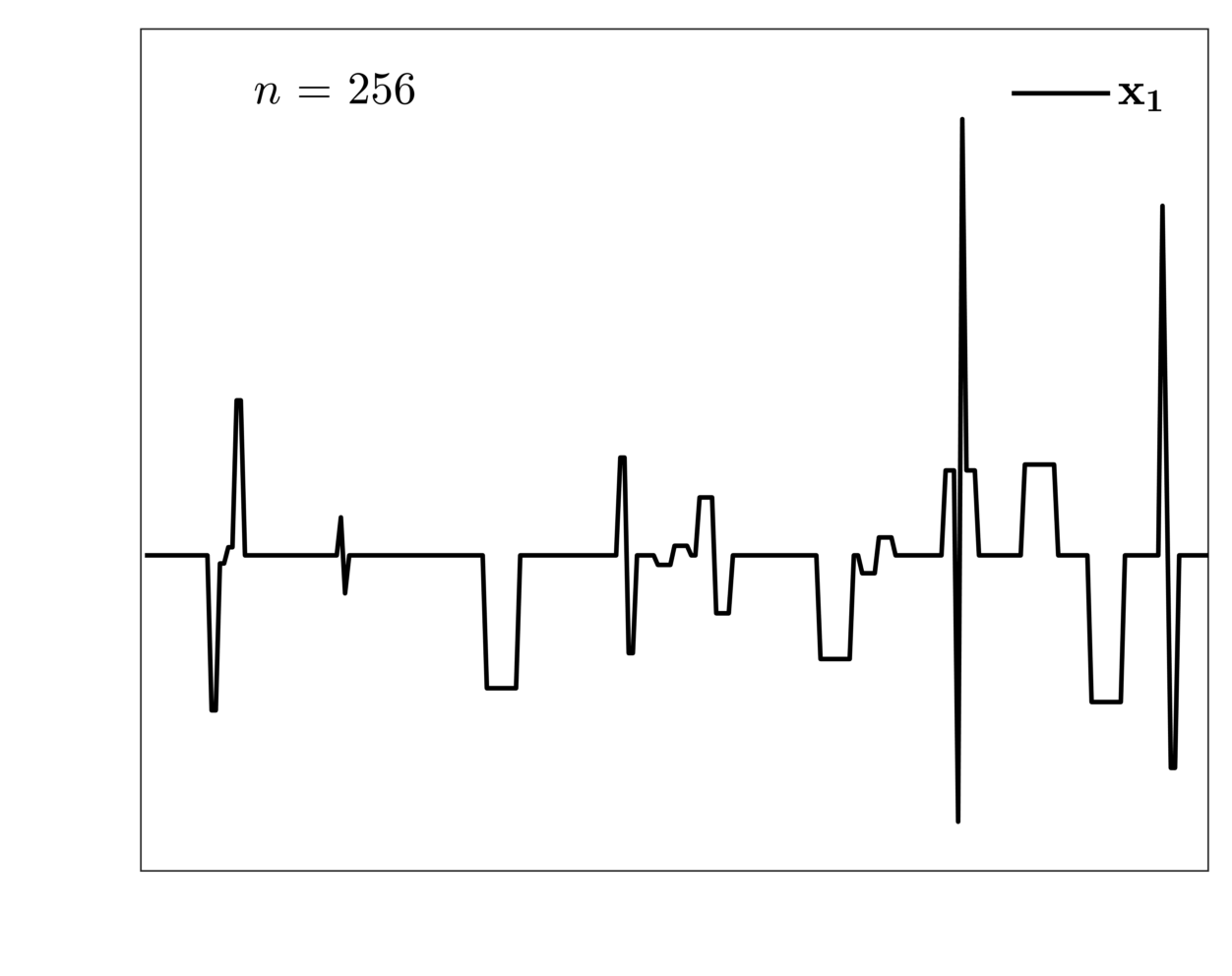}
		\caption{}
		\label{fig:coef:wave:2}
	\end{subfigure}%
	\qquad
	\begin{subfigure}[t]{0.3\textwidth}
		\centering
		\includegraphics[width=\textwidth]{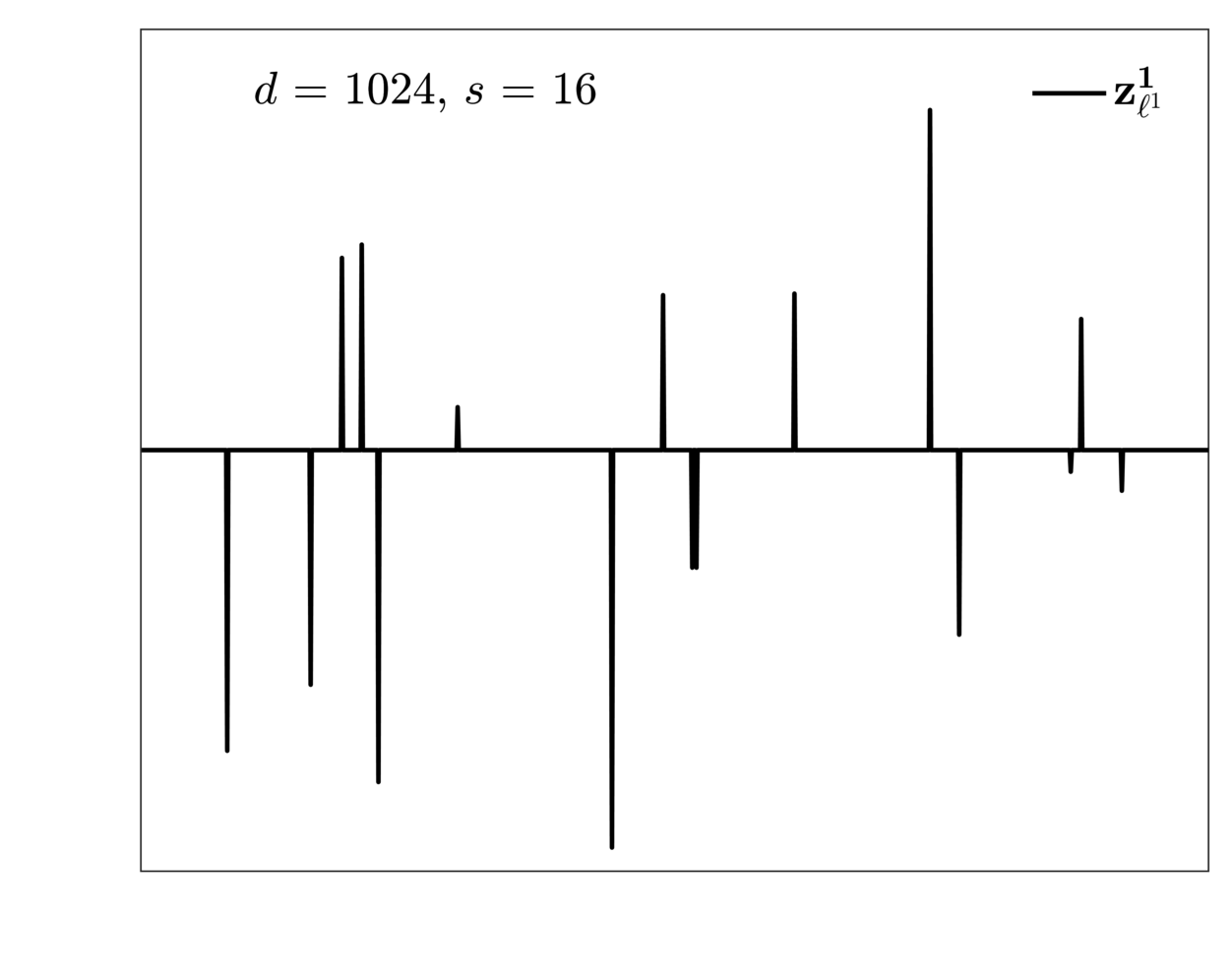}
		\caption{}
		\label{fig:coef:wave:3}
	\end{subfigure}%
	\vspace{.5\baselineskip}
		\begin{subfigure}[t]{0.3\textwidth}
		\centering
		\includegraphics[width=\textwidth]{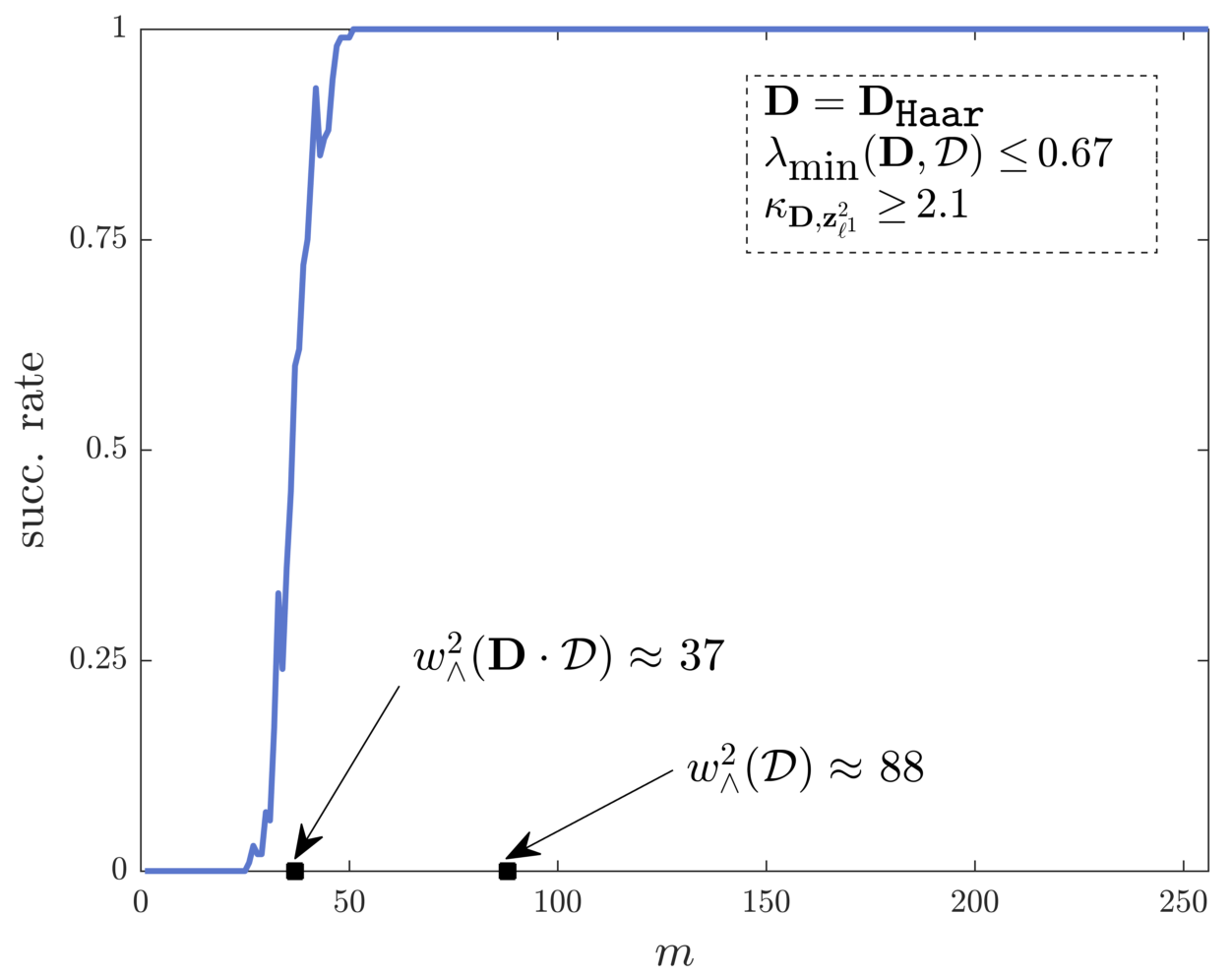}
		\caption{}
		\label{fig:coef:wave:4}
	\end{subfigure}%
	\qquad
	\begin{subfigure}[t]{0.3\textwidth}
		\centering
		\includegraphics[width=\textwidth]{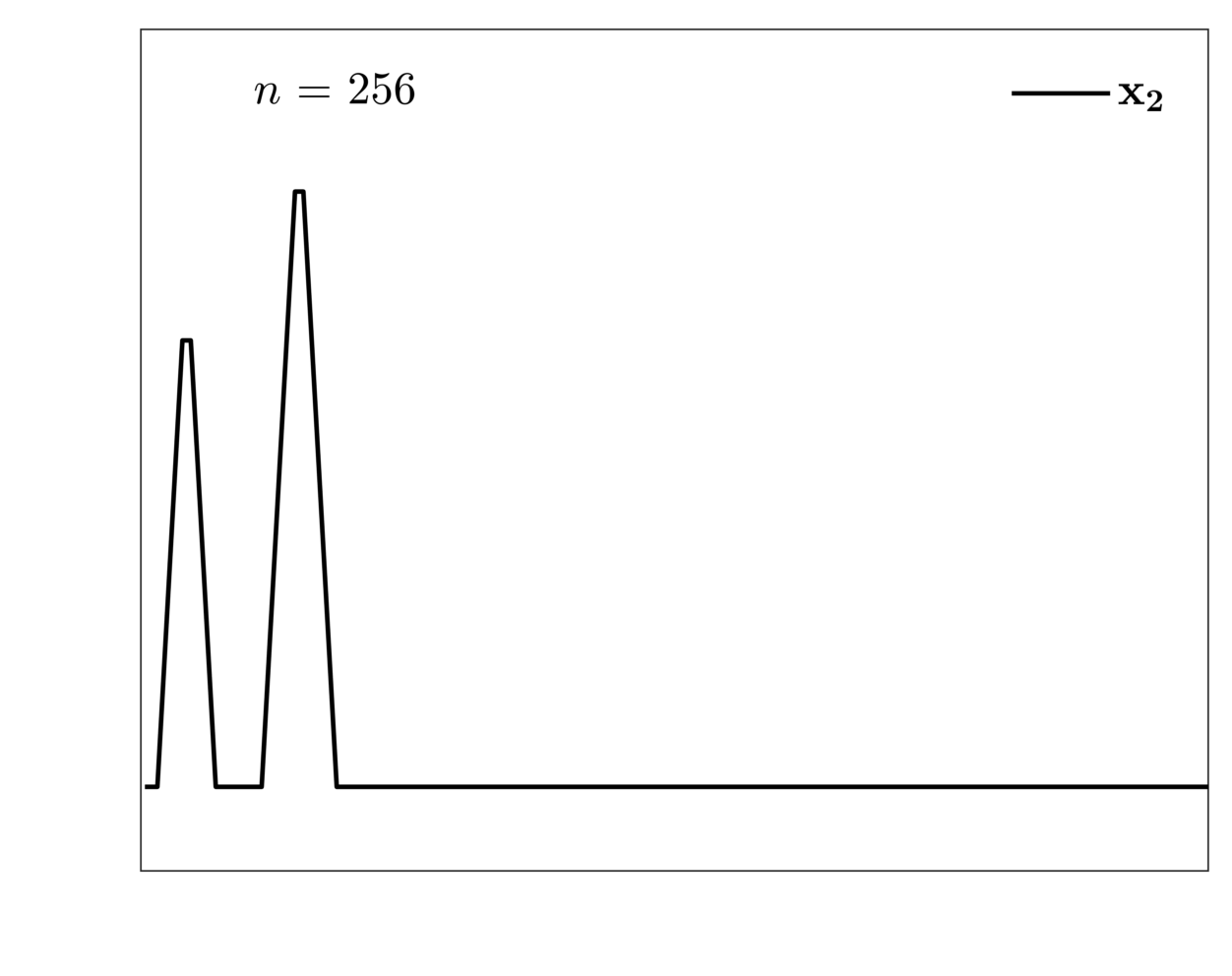}
		\caption{}
		\label{fig:coef:wave:5}
	\end{subfigure}%
	\qquad
	\begin{subfigure}[t]{0.3\textwidth}
		\centering
		\includegraphics[width=\textwidth]{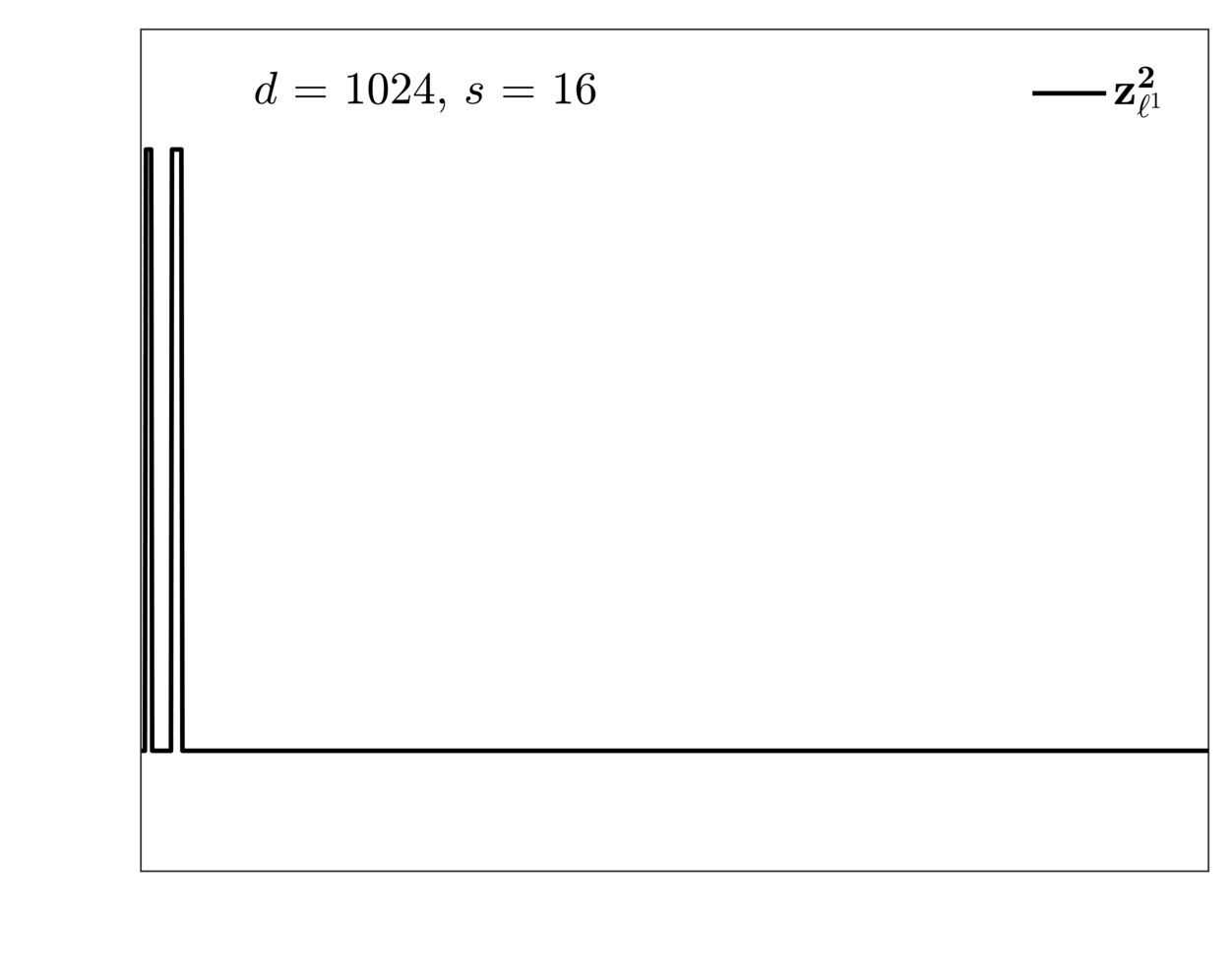}
		\caption{}
		\label{fig:coef:wave:6}
	\end{subfigure}%
	\vspace{.5\baselineskip}
	\begin{subfigure}[t]{0.3\textwidth}
		\centering
		\includegraphics[width=\textwidth]{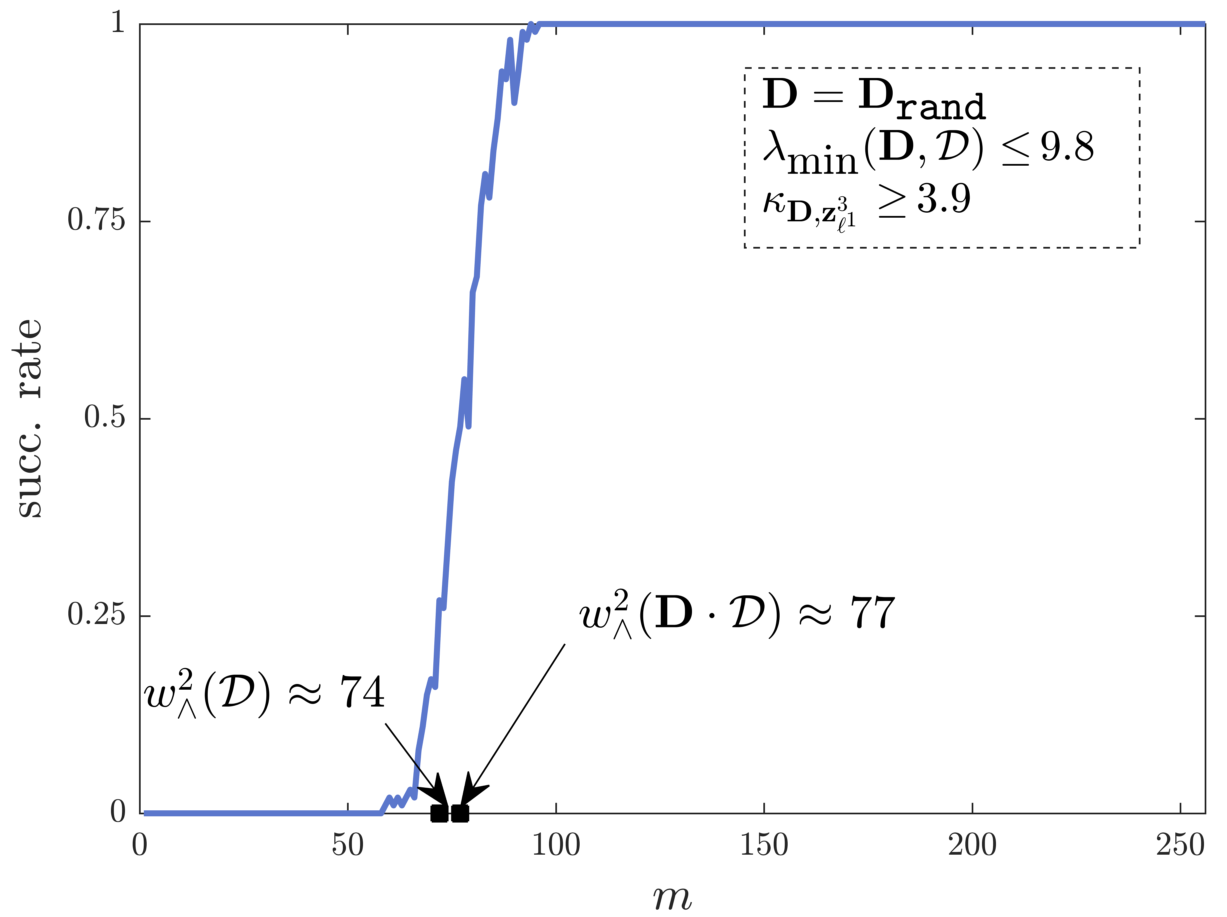}
		\caption{}
		\label{fig:coef:wave:7}
	\end{subfigure}%
	\qquad
	\begin{subfigure}[t]{0.3\textwidth}
		\centering
		\includegraphics[width=\textwidth]{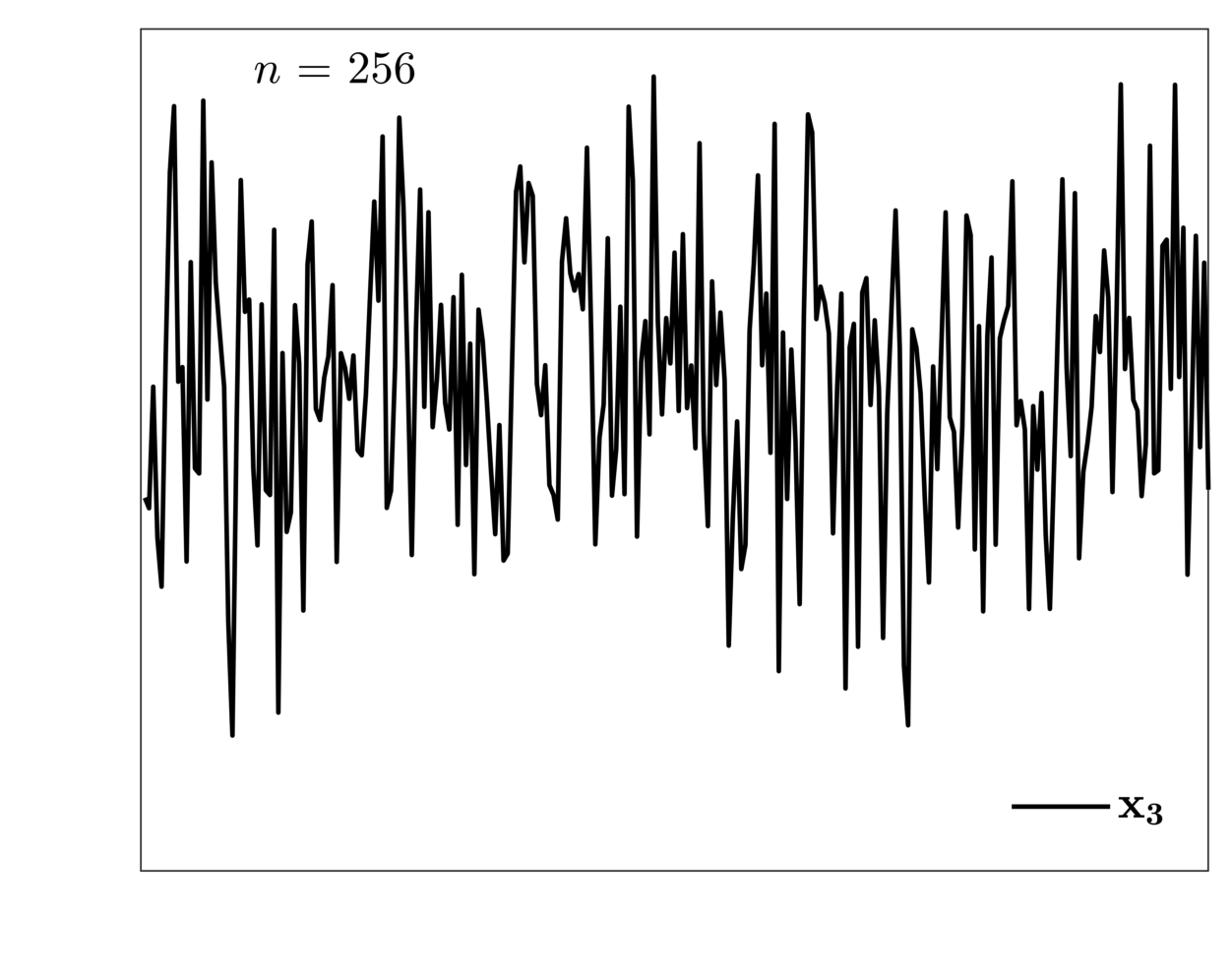}
		\caption{}
		\label{fig:coef:wave:8}
	\end{subfigure}%
	\qquad
	\begin{subfigure}[t]{0.3\textwidth}
		\centering
		\includegraphics[width=\textwidth]{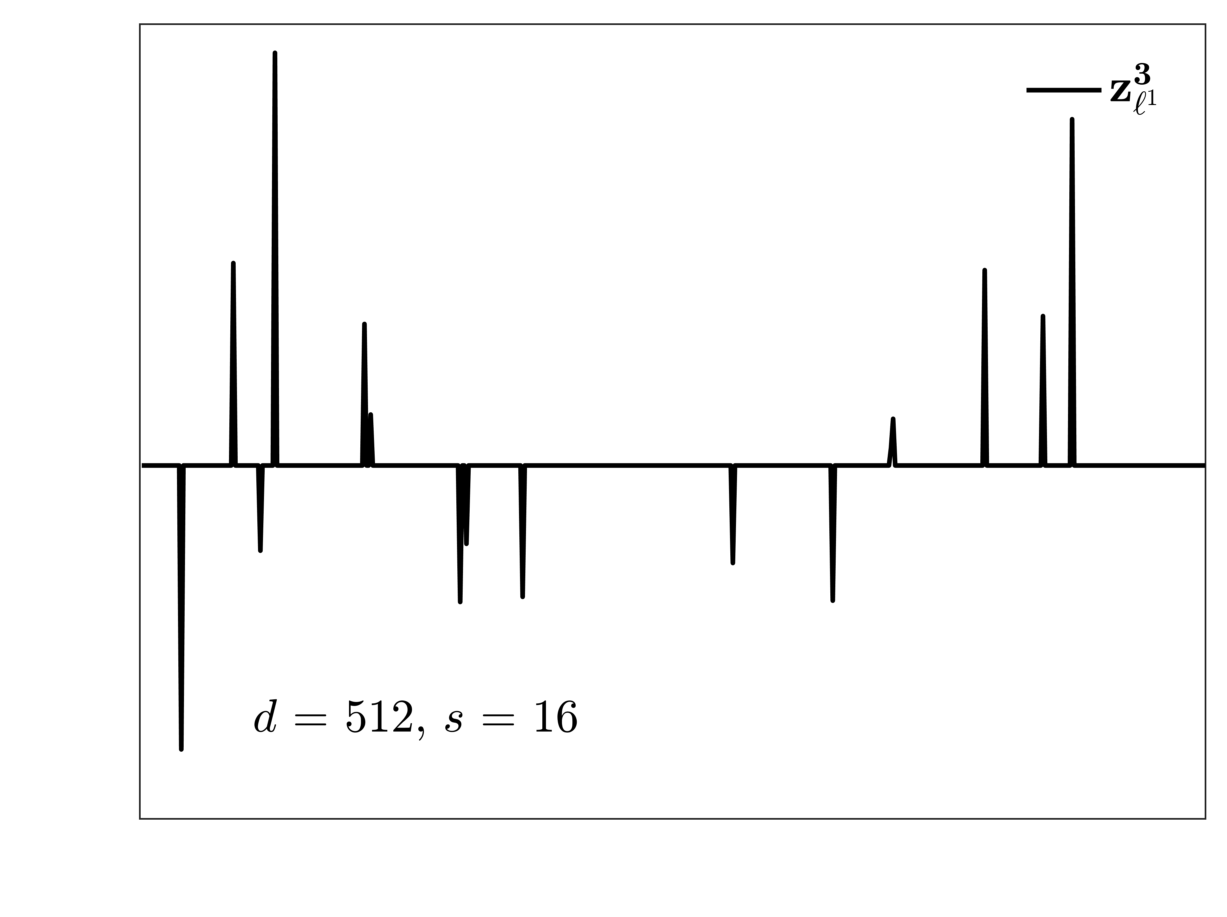}
		\caption{}
		\label{fig:coef:wave:9}
	\end{subfigure}%
	\vspace{.5\baselineskip}
	\begin{subfigure}[t]{0.3\textwidth}
		\centering
		\includegraphics[width=\textwidth]{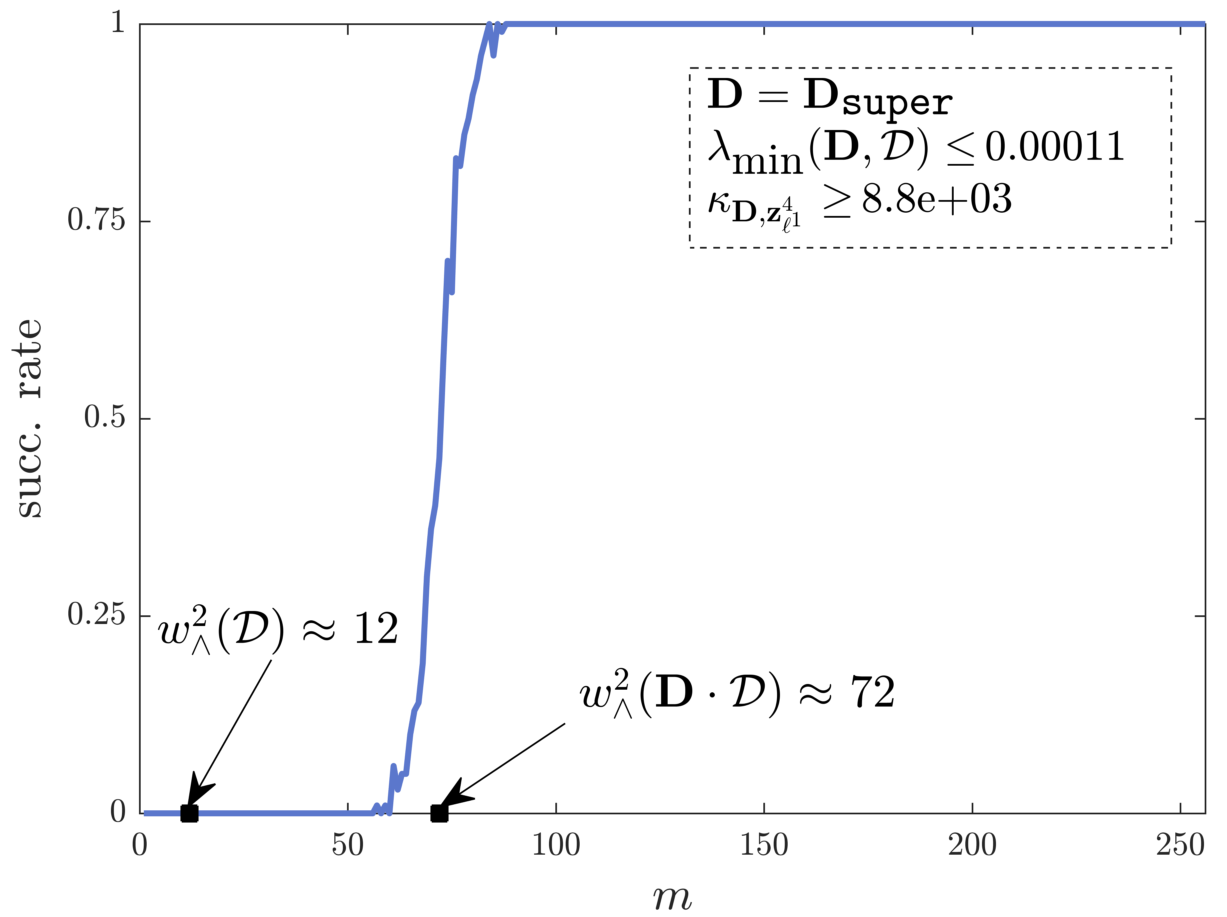}
		\caption{}
		\label{fig:coef:wave:10}
	\end{subfigure}%
	\qquad
	\begin{subfigure}[t]{0.3\textwidth}
		\centering
		\includegraphics[width=\textwidth]{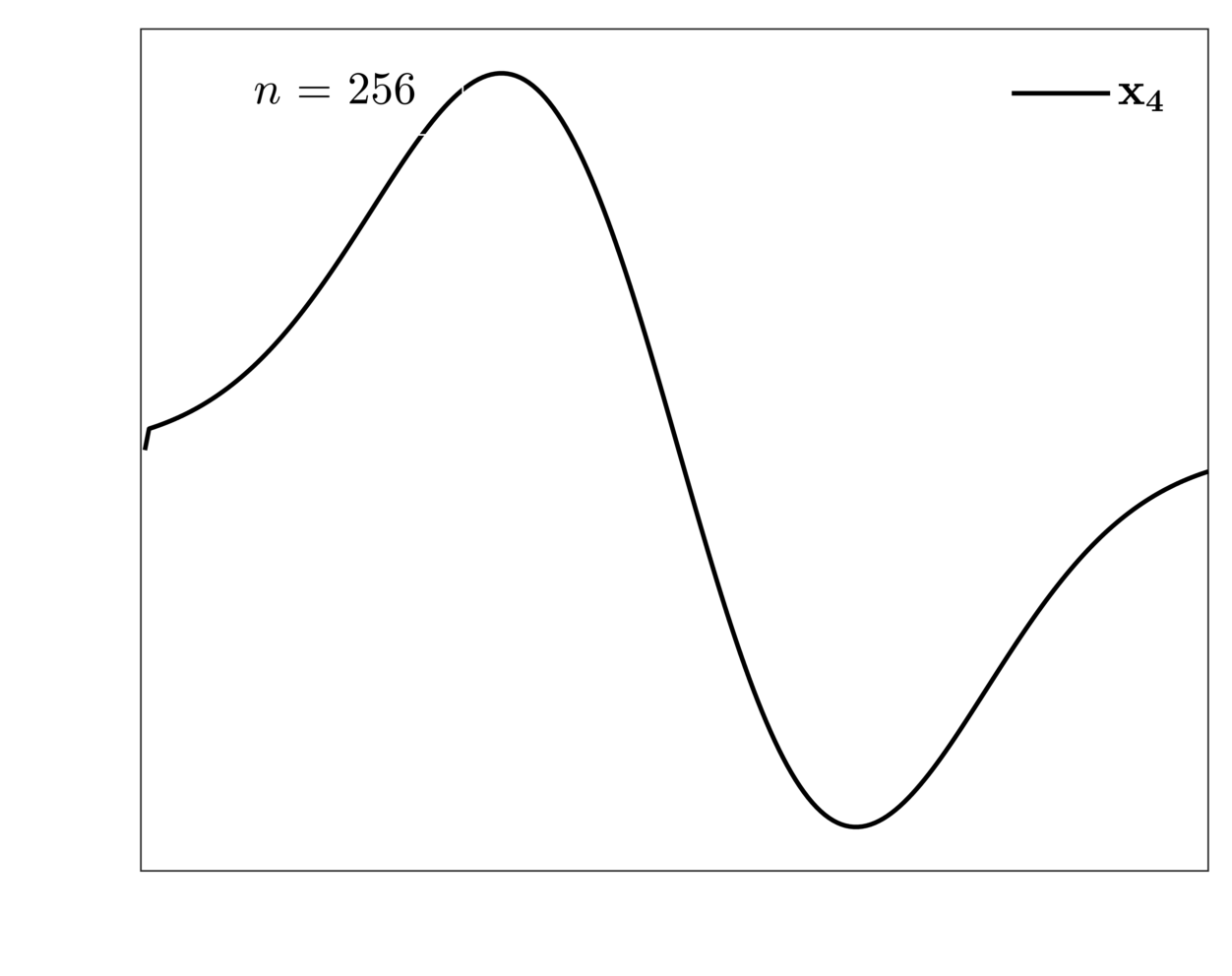}
		\caption{}
		\label{fig:coef:wave:11}
	\end{subfigure}%
	\qquad
	\begin{subfigure}[t]{0.3\textwidth}
		\centering
		\includegraphics[width=\textwidth]{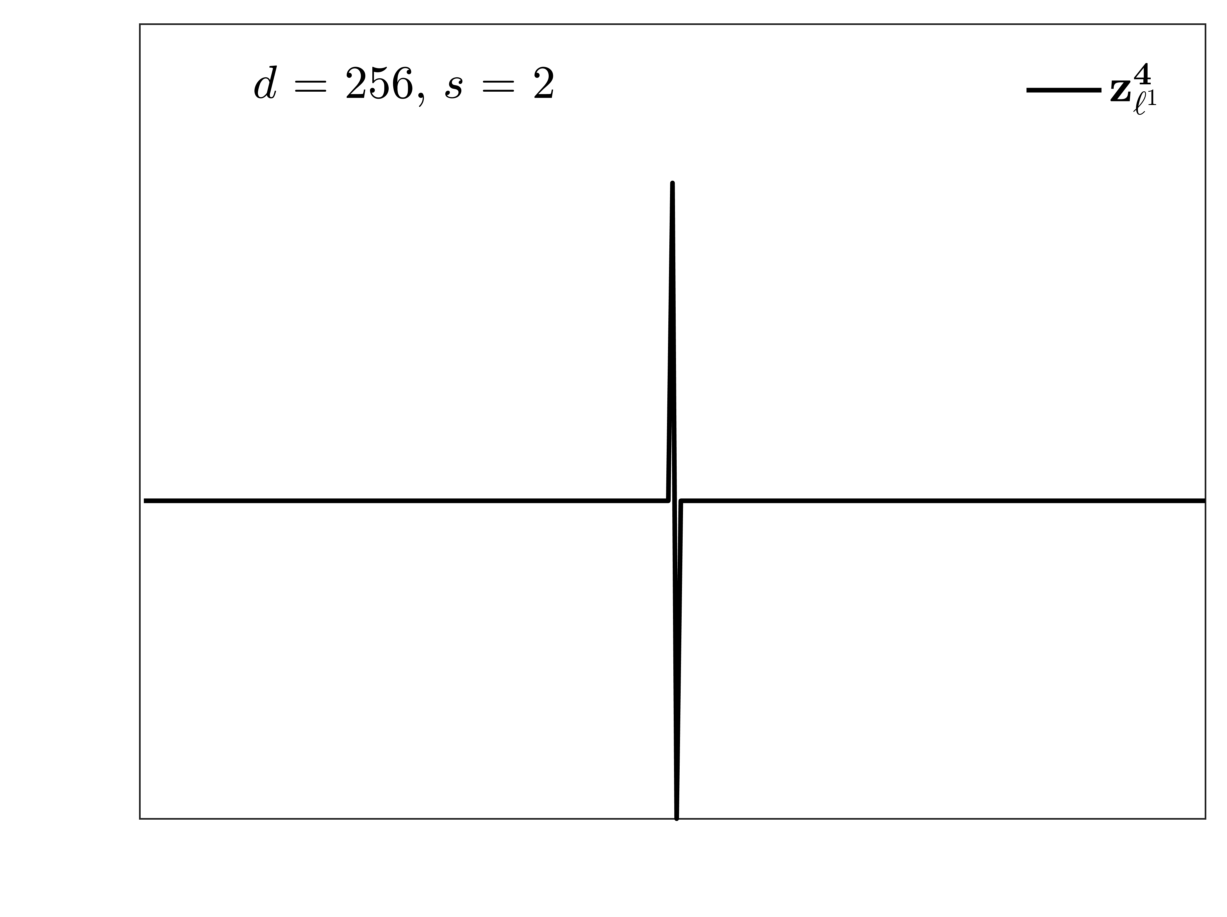}
		\caption{}
		\label{fig:coef:wave:12}
	\end{subfigure}%
	\caption{\textbf{Phase transitions of coefficient recovery by solving \coefNoisefree{}.} Empirical success rates and other key figures are reported in the first column, where we use the notation $\mathcal{D} =\ds{\Onenorm,\zl^{\smash{i}}}$. The coefficient vectors $\zl^i$ that are used in each experiment are shown in the third column. The associated signal vectors $\x_i = \Dict \zl^{\smash{i}}$ are displayed in the second column. The first two rows are relying on a redundant Haar wavelet frame, the third row is based on a Gaussian random matrix and the last row is using a dictionary inspired by super-resolution. }
	\label{fig:coef:wave}
\end{figure}

%

\subsection{Sampling Rates for Signal Recovery}
\label{sec:sampling_sig}

For the investigation of signal recovery via~\sigNoisefree, we also create phase transition plots by running Experiment~\ref{exp:coef} for different dictionary and signal combinations. Note that we also compute and store the signal error $\|\x_i - \sol\|_2 = \|\Dict \zl^{\smash{{i}}} - \Dict\solz\|_2$ in the third step of the experiment (in addition to $\|\zl^{\smash{{i}}} -\solz\|_2$). Recovery is declared successful if $\norm{\x_i - \sol}_2 < 10^{-5}$. 

\paragraph{Simulation Settings} 
Our first two examples are based on the same Haar wavelet system with 3 decomposition levels that is used in Section~\ref{sec:num_coef}.
The first signal is constructed by defining a coefficient vector $\z_1 \in \R^{1024}$ with a random support set of cardinality $s=35$ and random coefficients; see Subfigure~\ref{fig:sig:wave:10} for a visualization of $\z_1$ and Subfigure~\ref{fig:sig:wave:4} for the resulting signal $\x_1 = \Dictw \cdot \z_1$. In order to apply the result of Theorem~\ref{thm:sig}, we compute a minimal $\ell^1$-representer $\zl^{\smash{1}} \in \Zlset$ of $\x_1$; see Subfigure~\ref{fig:sig:wave:7}. 
The second coefficient vector $\z_2$ is created by defining two contiguous blocks of non-zero coefficients in a lower frequency decomposition scale, again with $s=35$; see Subfigure~\ref{fig:sig:wave:11} for a plot of $\z_2$ and Subfigure~\ref{fig:sig:wave:5} for the resulting signal $\x_2 = \Dictw  \cdot \z_2$. A corresponding minimal $\ell^1$-representer $\zl^{\smash{2}} \in \Zlset$ of $\x_2$ is shown in Subfigure~\ref{fig:sig:wave:8}. 

Finally, for the third setup we are choosing a simple example in 2D. In order to keep the computational burden manageable, we restrict ourselves to a $28 \times 28$-dimensional digit from the MNIST data set~\cite{mnist}, i.e., the vectorized image is of size $n=28^2 = 784$. As a sparsifying system we utilize a dictionary that is based on the 2D discrete cosine transform (dct-2). It makes use of \texttt{Matlab}'s standard dct-2 transform as convolution filters on $3 \times 3$ patches. The resulting operator is denoted by $\Dict = \Dict_{\text{dct-2}} \in \R^{n \times 9 n}$, i.e., $d = 9n$. Note that such a convolution sparsity model is frequently used in the literature, in particular also with learned filters, e.g., see \emph{convolutional sparse coding} in \cite{bristow2013}. Although the dct-2 filters might not be a perfect match for MNIST digits, we consider them as a classical representative that is well suited to demonstrate the predictive power of our results. In order to construct a suitable sparse representation $\z_3\in\R^d$ of an arbitrarily picked digit in the database, we make use of the orthogonal matching pursuit algorithm~\cite{omp1993}; see Subfigure~\ref{fig:sig:wave:12} for a visualization of $\z_3$. The resulting digit $\x_3 = \Dict_{\text{dct-2}} \cdot \z_3$ is displayed in Figure~\ref{fig:sig:wave:6}. A minimal $\ell^1$-representer $\zl^{\smash{3}}$, which is needed to apply Theorem~\ref{thm:sig}, is shown in Subfigure~\ref{fig:sig:wave:9}.

\paragraph{Results} Our conclusions on the numerical experiments shown in Figure~\ref{fig:sig:wave}
are reported in the following:

\begin{enumerate}[(i)] 
\setcounter{enumi}{5}

  \item The convex program \sigNoisefree{} obeys a sharp phase transition in the number of measurements. Due to the equivalent, gauge-based reformulation~\eqref{eq:gauge} of \sigNoisefree{}, this observation is predicted by the works~\cite{amelunxen2014edge,tropp2014convex}. However, a recovery of a coefficient representation via solving \coefNoisefree{} is impossible in all three examples, even for $m=n$.

  \item For any $\zl \in \Zlset$, the conic mean width $\cmw[2]{\Dict \cdot \ds{\Onenorm;\zl}}$ accurately describes the sampling rate of \sigNoisefree{}, as predicted by Theorem~\ref{thm:sig}. Indeed,  in all three cases of Figure~\ref{fig:sig:wave}, the estimated $\cmw[2]{\Dict \cdot \ds{\Onenorm;\zl^{{i}}}}$ matches precisely the 50\% recovery rate.
 
  \item In contrast, for any other sparse representation $\z \not\in \Zlset$, the conic width ${\cmw[2]{\Dict \cdot \ds{\Onenorm;\z}}}$ does not describe the sampling rate of \sigNoisefree{}, in general. Indeed, observe that we have $\cmw[2]{\Dict \cdot \ds{\Onenorm;\z_i}} \approx n$ in all three examples. Also note that $\norm{\vec{z}_1}_0 = 35 = \norm{\vec{z}_2}_0$, however, the locations of the phase transitions deviate considerably. Similarly, although $\|\zl^{{1}}\|_0 < \|\zl^{{2}}\|_0$, we have that $\cmw[2]{\Dict \cdot \ds{\Onenorm;\zl^{\smash{1}}}} > \cmw[2]{\Dict \cdot \ds{\Onenorm;\zl^{\smash{2}}}}$. This observation is yet another indication that sparsity alone is not a good proxy for the sampling rate of $\ell^{\smash{1}}$-synthesis, in general. 
  
 
\end{enumerate}

\subsection{Creating a ``Full'' Phase Transition}
\label{sec:2d_pt}

Let us now focus on the phase transition plots shown in Figure~\ref{fig:sig:pt_2}. 
Up to now, we have only considered one specific signal at a time. However, it is also of interest to assess the quality of our results if the ``complexity'' of the underlying signals is varied. In the classical situation of $\Dict$ being an ONB, the location of the phase transition is entirely determined by the sparsity of the underlying signal. Hence, it is a natural choice to create phase transitions over the sparsity, as it is for instance done in \cite{amelunxen2014edge}. Recalling Claims (iv) and (viii), it might appear odd to do the same in the case of a redundant dictionary. However, as the result of Figure~\ref{fig:sig:pt_2} shows, if the support is chosen uniformly at random, sparsity is still a somewhat reasonable proxy for the sampling rate.  Indeed, these plots are created by running Experiment~\ref{exp:pt} with $\Dict = \Dictw \in \R^{256\times 1024}$, maximal sparsity $s_0 = 125$ and displaying the empirical success rates of coefficient and signal recovery, respectively. Additionally the dotted line shows the averaged conic mean width values $\cmw[2]{\Dict \cdot \ds{\Onenorm;\zl}}$.

\begin{experiment}[Phase transition of Figure~\ref{fig:sig:pt_2}]\leavevmode
\label{exp:pt}
\vspace{-.25\baselineskip}\hrule\vspace{.5\baselineskip}

\myhangindent{Input: \ }
\expkwd{Input:} Dictionary $\Dict \in \R^{n \times d}$, maximal sparsity $s_0 \in [n]$. 

\vspace{.5\baselineskip}
\expkwd{Compute:} Repeat the following procedure $500$ times for each $s \in \{1,\dots,s_0\}$:
\begin{expstep}
	\item 
	 Select a set $S \subset [n]$ uniformly at random with $\# S = s$. Then draw a standard Gaussian random vector $\vec{c} \in \R^s$ and define $\gtz$ by setting $(\gtz)_S = \vec{c}$ and $(\gtz)_{\setcompl{S}} = \vec{0}$. 
	\item 
	 Define the signal $\gt = \Dict \gtz$ and compute a minimal $\ell^1$-representation $\zl\in\Zlset$ of $\gt$. Compute the conic mean width $\cmw[2]{\Dict \cdot \ds{\Onenorm;\zl}}$. 
	\item 
     Run Experiment~\ref{exp:coef} with $\Dict$ and $\zl$ as input, where the number of repetitions is lowered to 5. In the third step, coefficient/signal recovery is declared successful if $\norm{\zl-  \solz}_2 < 10^{-5}$ or if $ \norm{\Dict  \zl - \sol}_2 = \norm{\gt - \Dict  \solz}_2 < 10^{-5}$, respectively.
\end{expstep}
\end{experiment}

First note, that the averaged values of the conic mean width perfectly match the center of the phase transition of\sigNoisefree{} in Subfigure~\ref{fig:sig:pt_2:2}, as it is predicted by Theorem~\ref{thm:sig}. However, observe that for sparsity values between $s \approx 20$ and $s\approx 80$ the transition region is spread out in the vertical direction, cf.~\cite{amelunxen2014edge}. This phenomenon can be related to Claim (viii): Given that sparsity alone is not a good proxy for the sample complexity of \sigNoisefree{}, averaging over different instances with the same sparsity necessarily results in a smeared out transition area.  

Regarding the phase transition in Subfigure~\ref{fig:sig:pt_2:1}, we observe that the location of the phase transition is also determined by $\cmw[2]{\Dict \cdot \ds{\Onenorm;\zl}}$; provided that coefficient recovery is possible, i.e., if $\lmin{\Dict}{\dc{\Onenorm;\zl}} > 0$. For sparsity values $s\geq50$, recovery of coefficients appears to be impossible, whereas signal and coefficient recovery seems to be equivalent for very small sparsity values (i.e., $s\leq 5$). The interval in between forms a transition region, in which coefficient recovery becomes gradually less likely. We suspect that with more repetitions in Experiment~\ref{exp:pt}, the empirical success rates on this interval would eventually smooth out. 

We conclude that:
\begin{enumerate}[(i)]
	\setcounter{enumi}{8}
  \item Whether $\lmin{\Dict}{\dc{\Onenorm;\zl}} \neq 0$, i.e., whether coefficient recovery is possible, is again a property that is non-uniform in the sparsity of $\zl$.
\end{enumerate}

\begin{figure}[!t]
	\centering
	\begin{subfigure}[t]{0.30\textwidth}
		\centering
		\includegraphics[width=\textwidth]{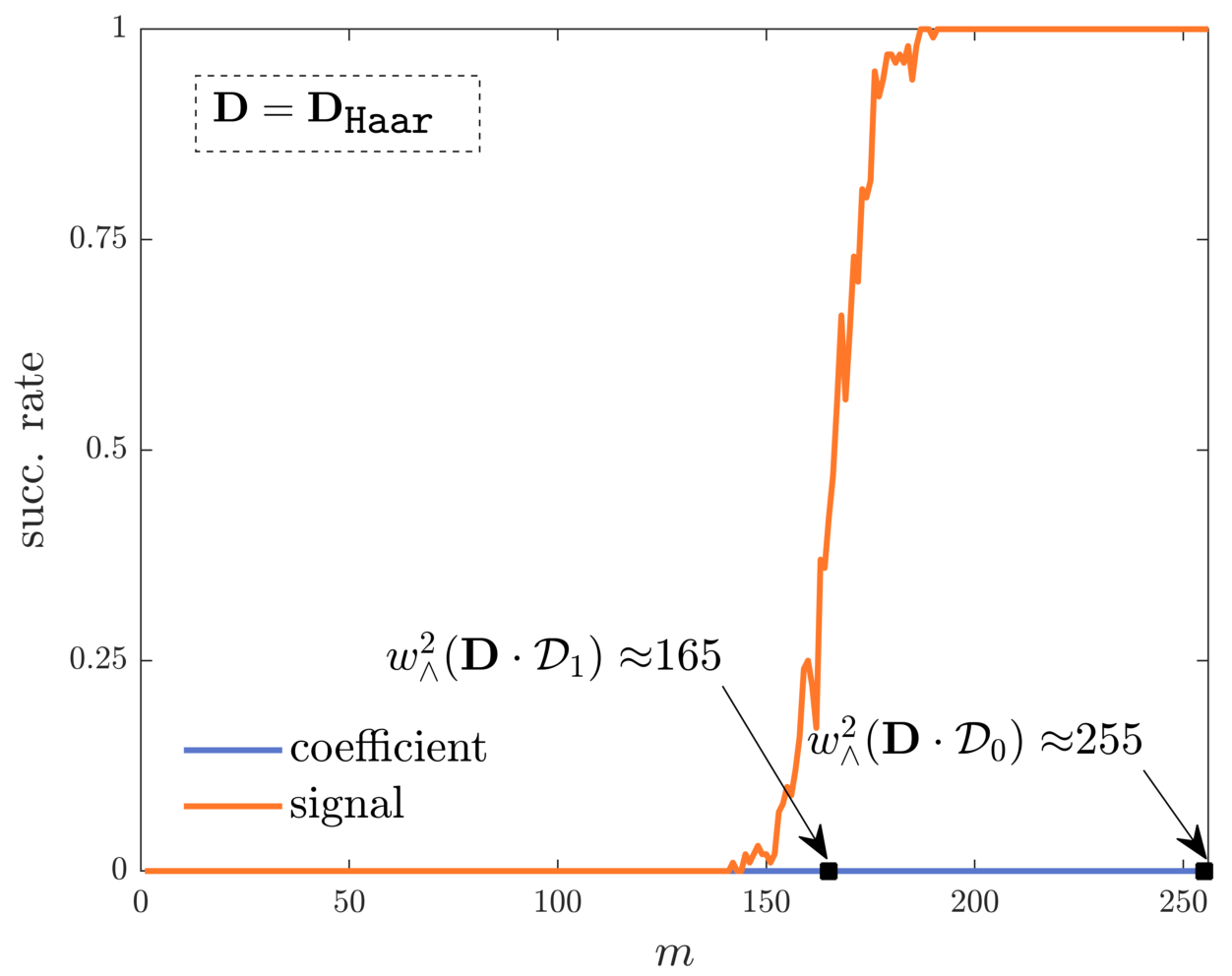}
		\caption{}
		\label{fig:sig:wave:1}
	\end{subfigure}%
	\qquad
	\begin{subfigure}[t]{0.30\textwidth}
		\centering
		\includegraphics[width=\textwidth]{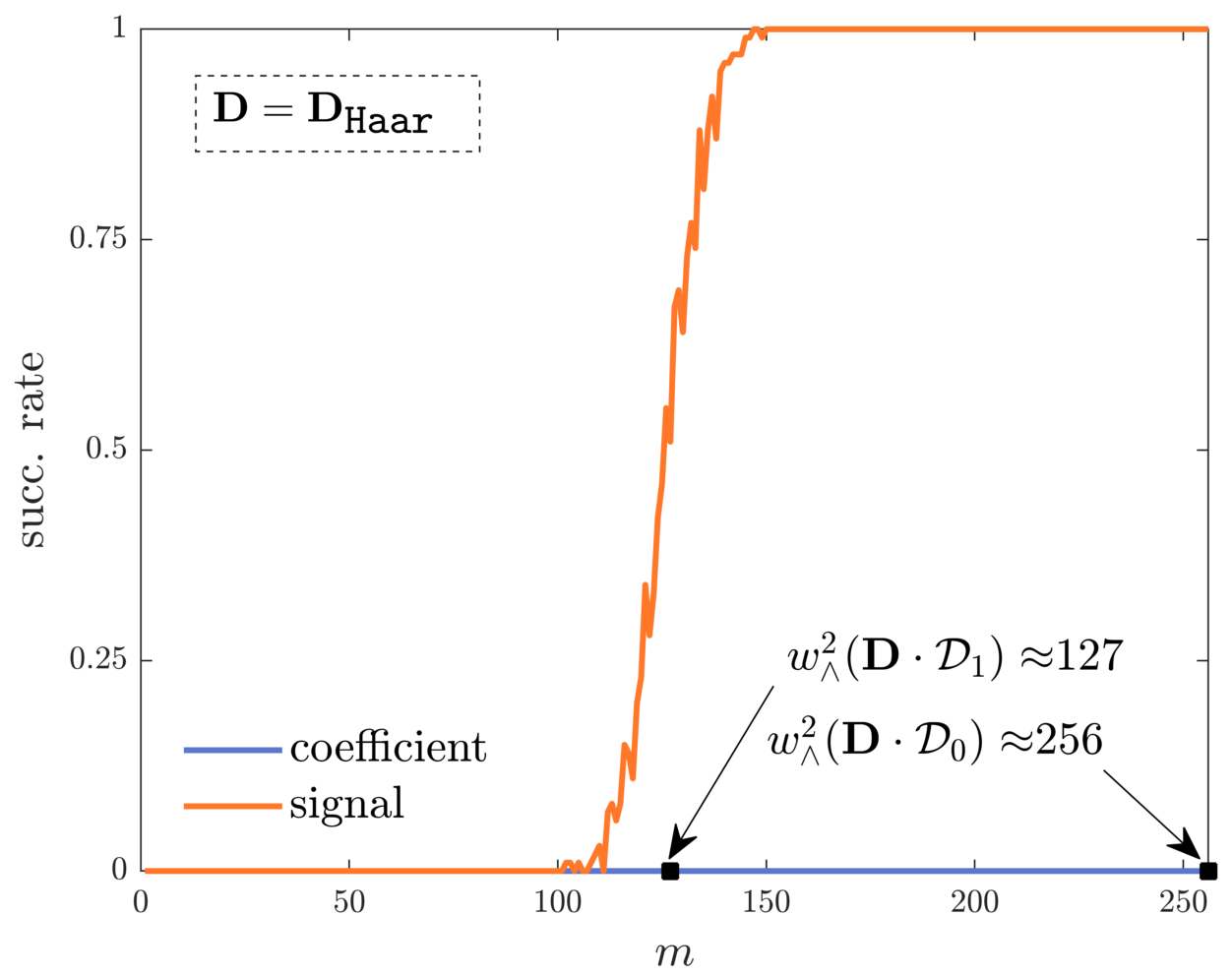}
		\caption{}
		\label{fig:sig:wave:2}
	\end{subfigure}%
	\qquad
	\begin{subfigure}[t]{0.30\textwidth}
		\centering
		\includegraphics[width=\textwidth]{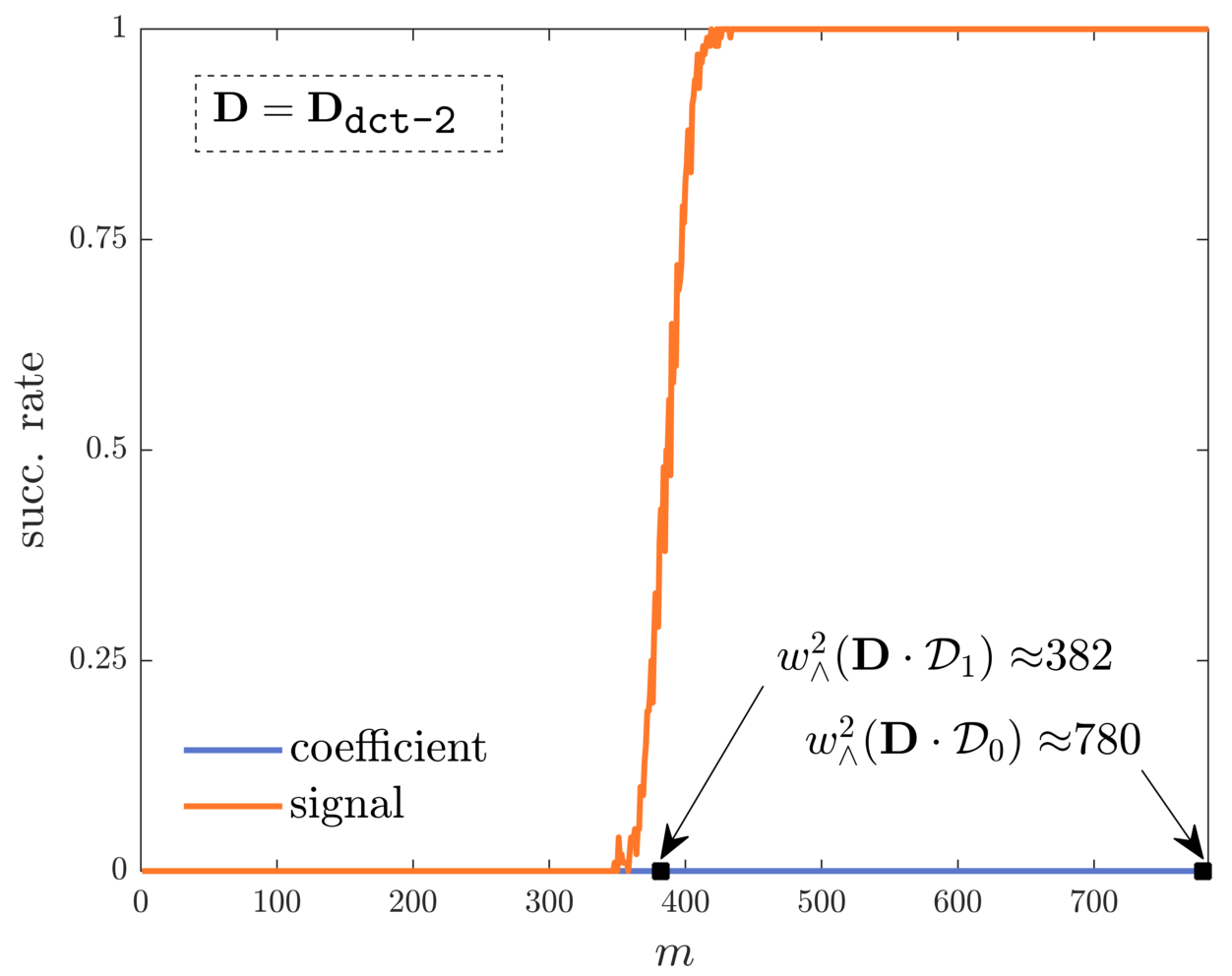}
		\caption{}
		\label{fig:sig:wave:3}
	\end{subfigure}%
	\vspace{.5\baselineskip}
		\begin{subfigure}[t]{0.3\textwidth}
		\centering
		\includegraphics[width=\textwidth]{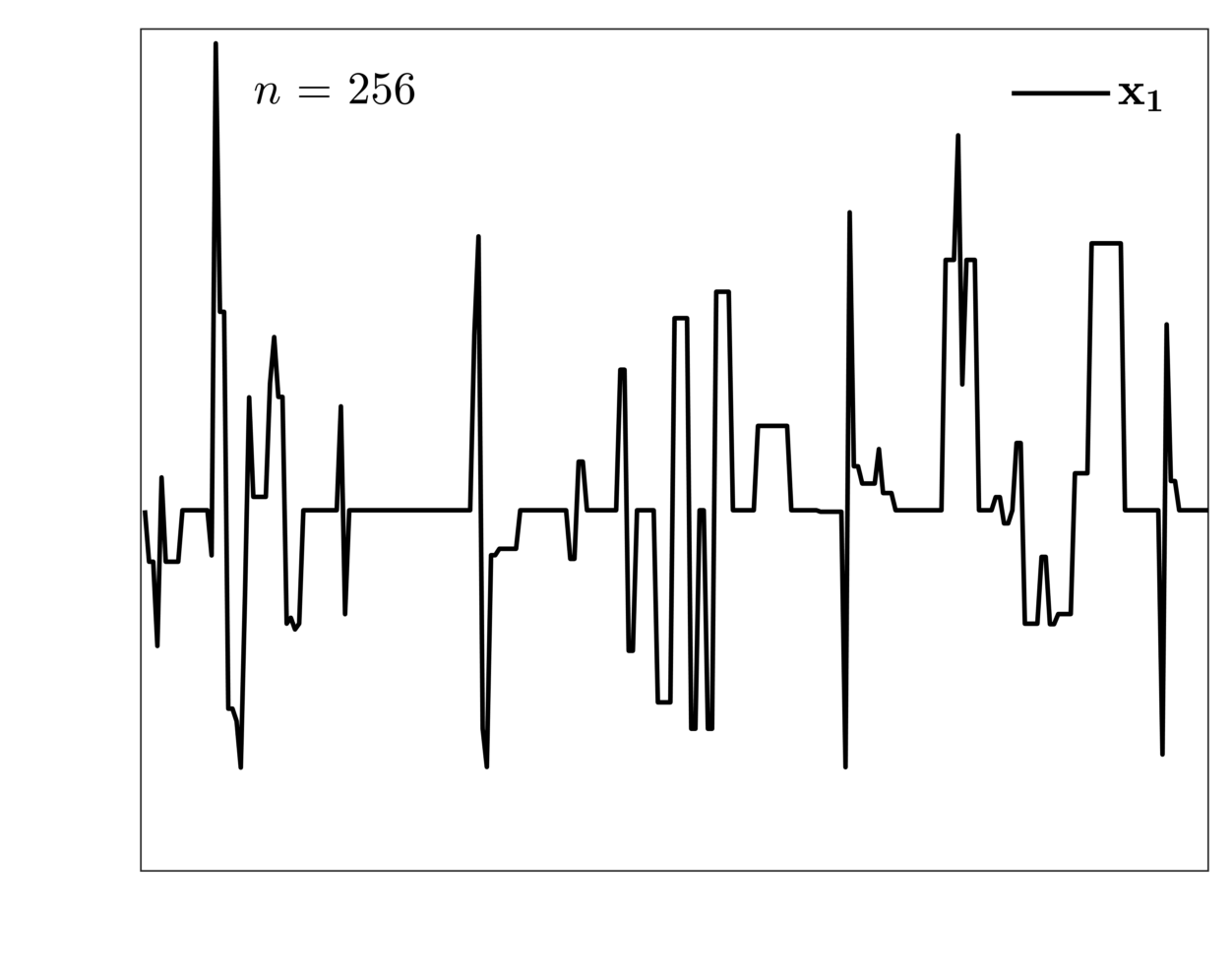}
		\caption{}
		\label{fig:sig:wave:4}
	\end{subfigure}%
	\qquad
	\begin{subfigure}[t]{0.3\textwidth}
		\centering
		\includegraphics[width=\textwidth]{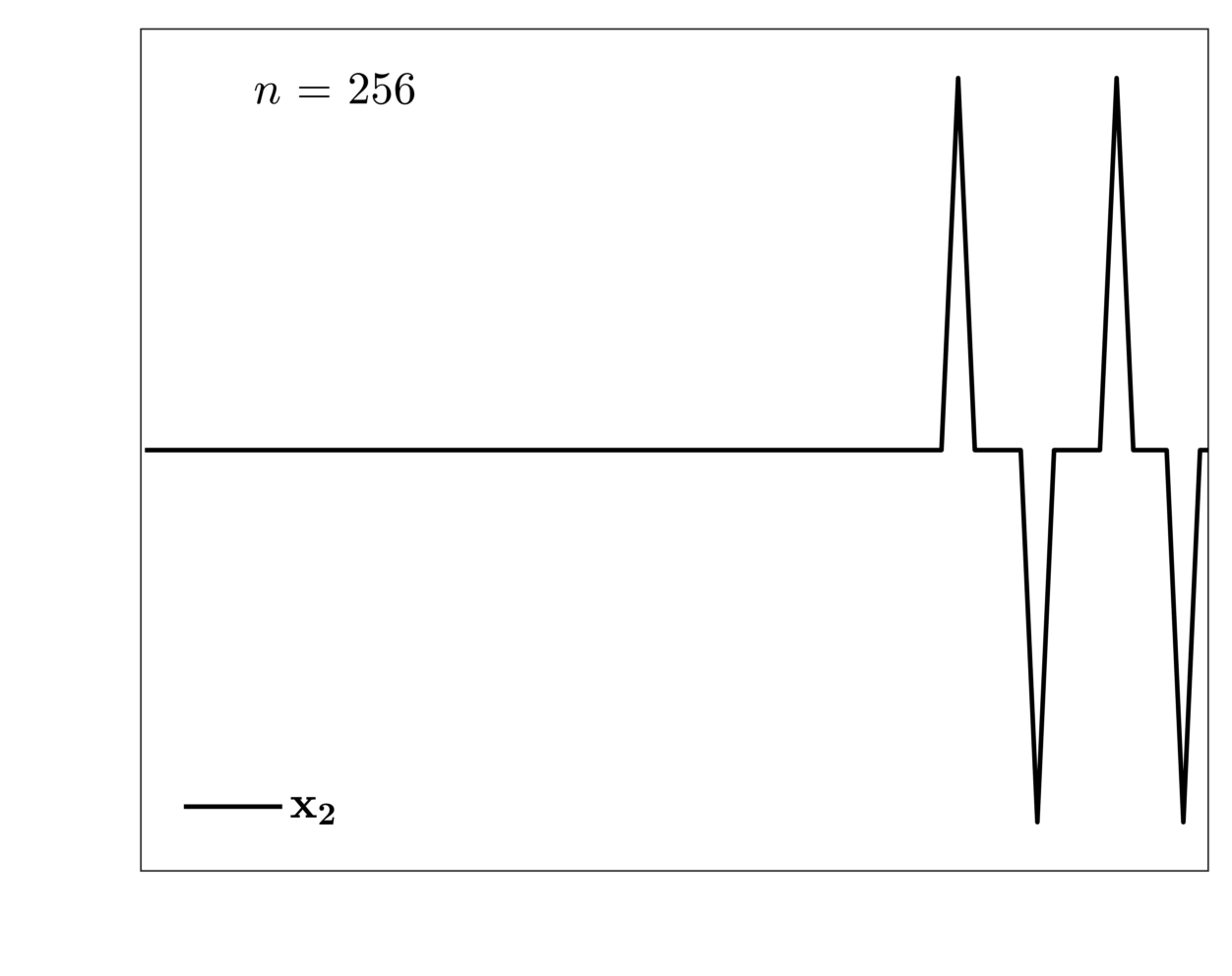}
		\caption{}
		\label{fig:sig:wave:5}
	\end{subfigure}%
	\qquad
	\begin{subfigure}[t]{0.3\textwidth}
		\centering
		\includegraphics[width=\textwidth]{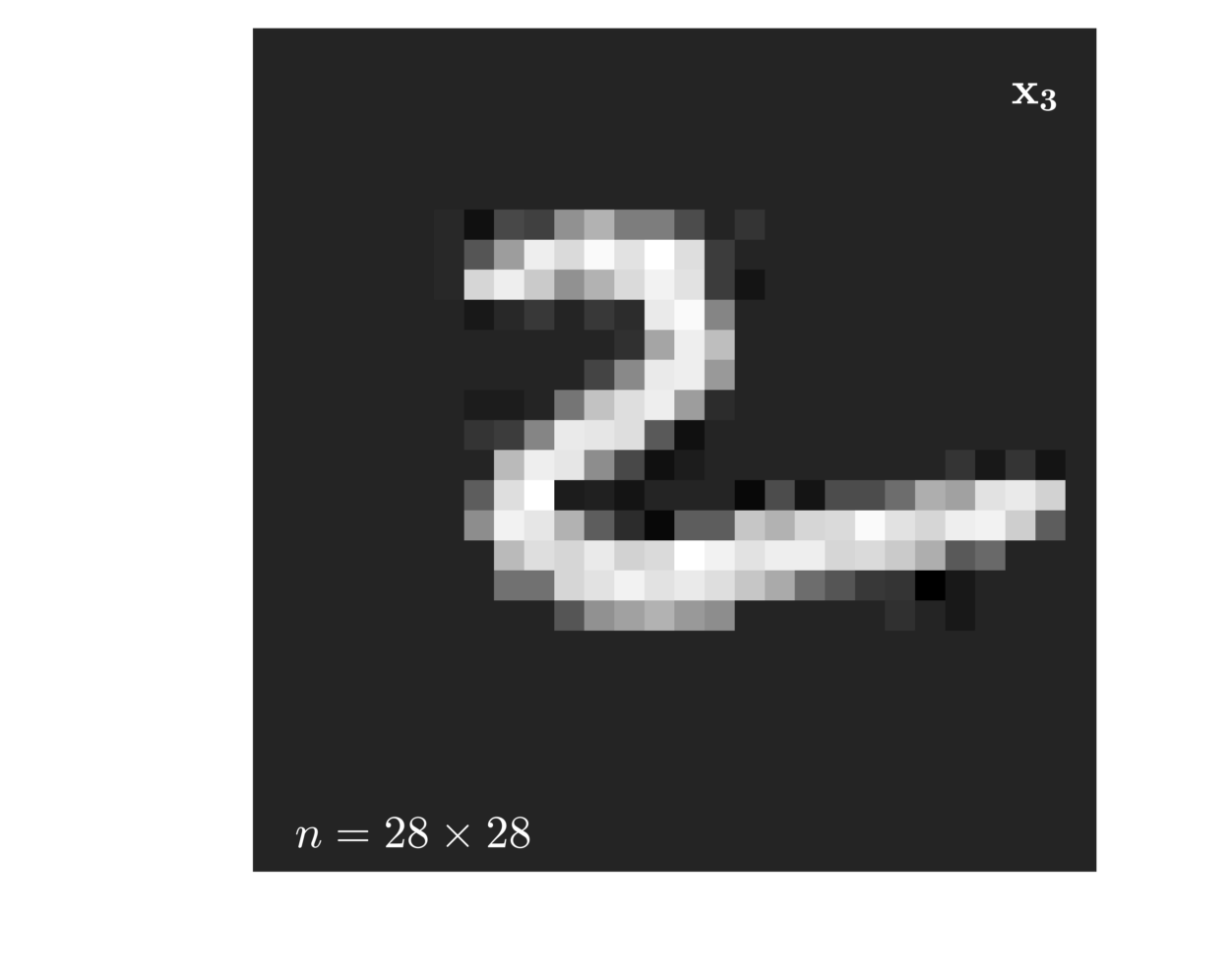}
		\caption{}
		\label{fig:sig:wave:6}
	\end{subfigure}%
	\vspace{.5\baselineskip}
	\begin{subfigure}[t]{0.30\textwidth}
		\centering
		\includegraphics[width=\textwidth]{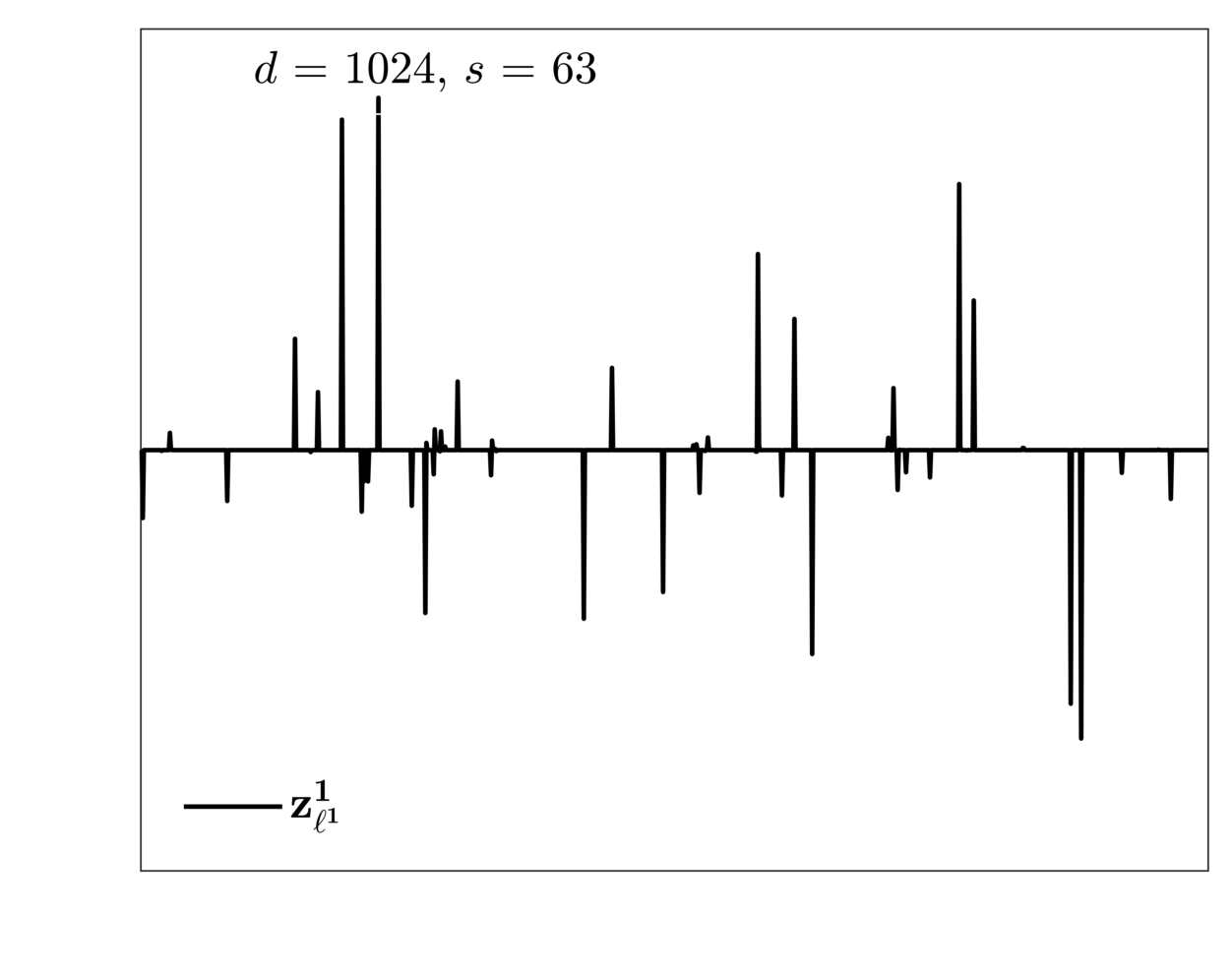}
		\caption{}
		\label{fig:sig:wave:7}
	\end{subfigure}%
	\qquad
	\begin{subfigure}[t]{0.30\textwidth}
		\centering
		\includegraphics[width=\textwidth]{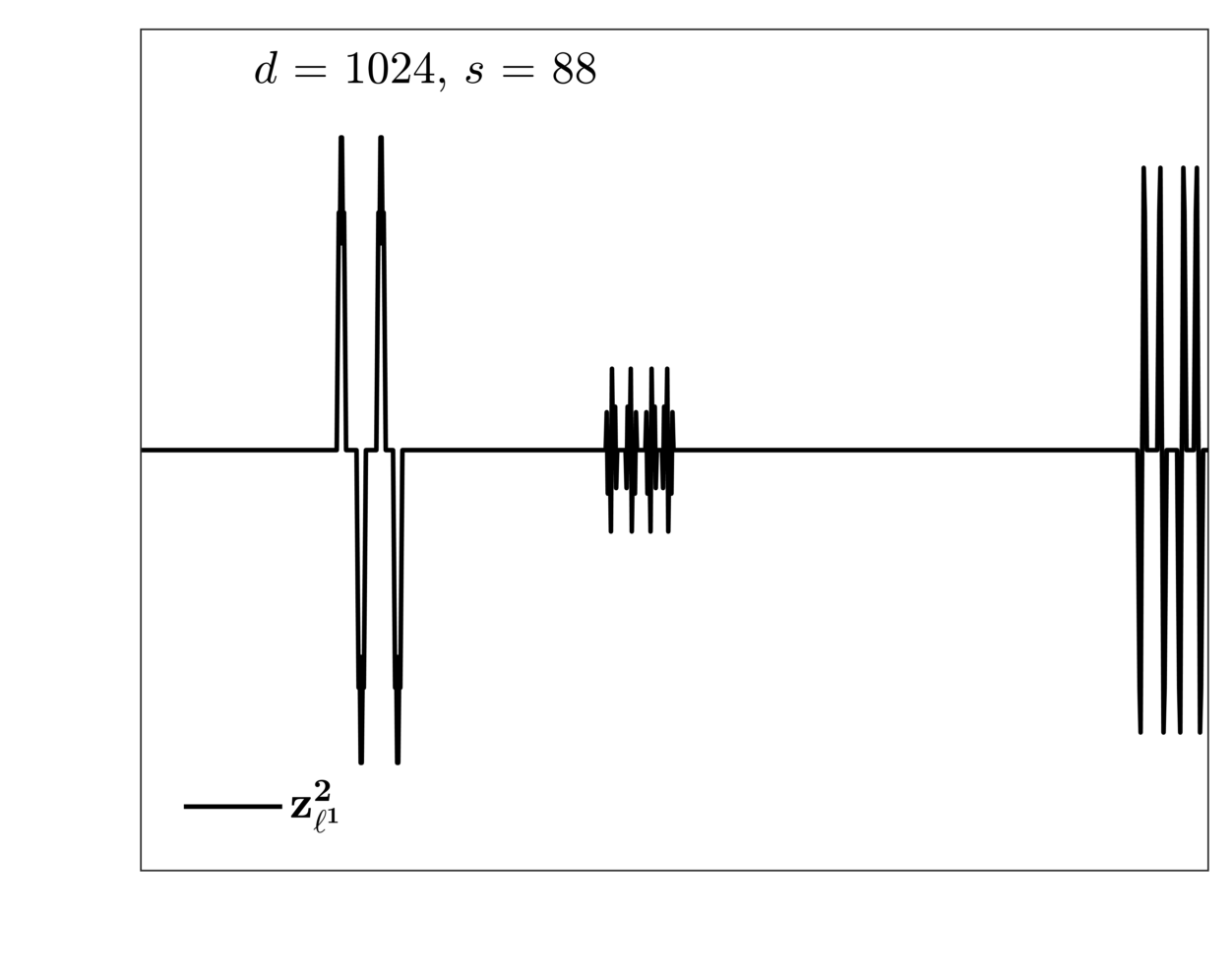}
		\caption{}
		\label{fig:sig:wave:8}
	\end{subfigure}%
	\qquad
	\begin{subfigure}[t]{0.30\textwidth}
		\centering
		\includegraphics[width=\textwidth]{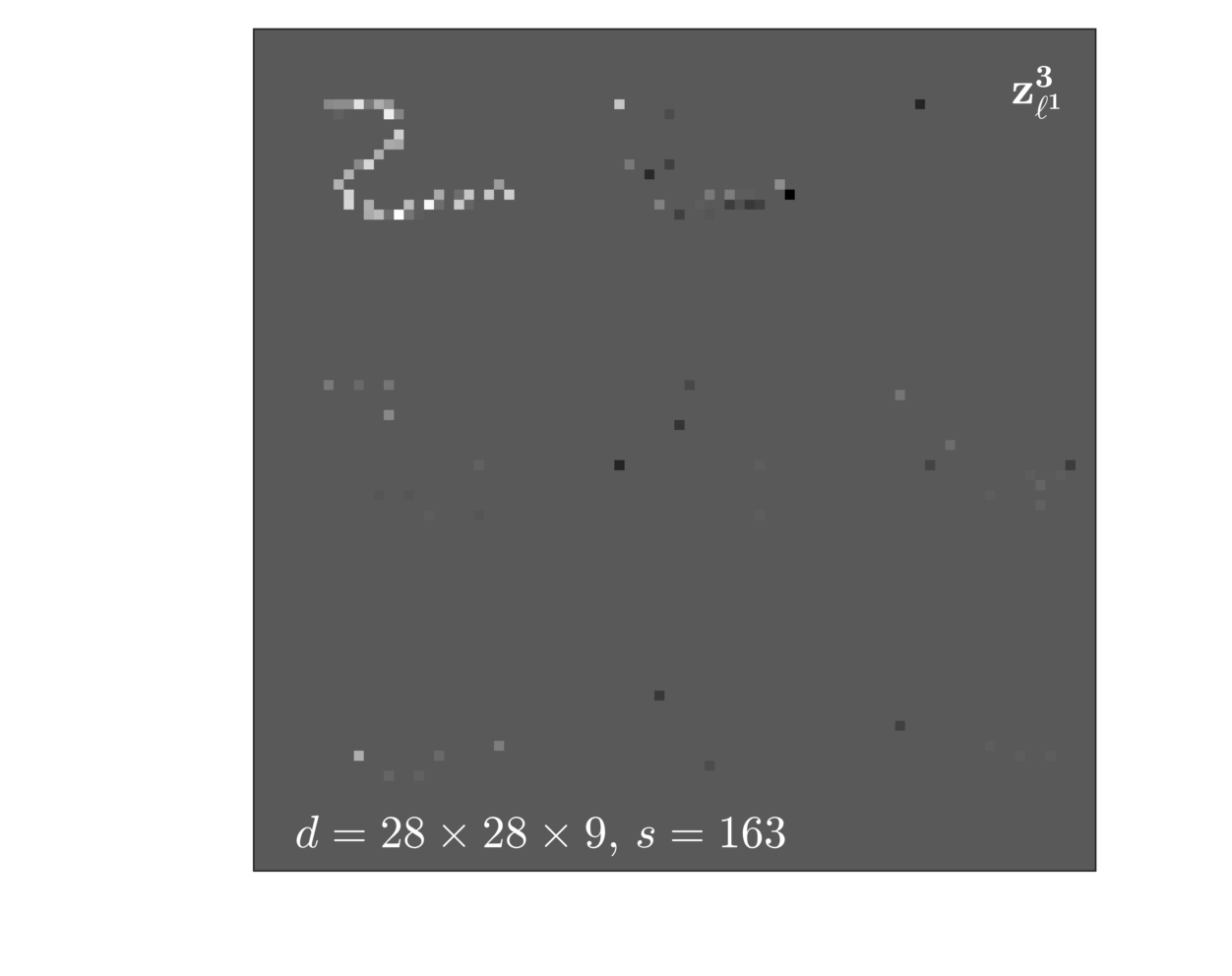}
		\caption{}
		\label{fig:sig:wave:9}
	\end{subfigure}%
	\vspace{.5\baselineskip}
	\begin{subfigure}[t]{0.30\textwidth}
		\centering
		\includegraphics[width=\textwidth]{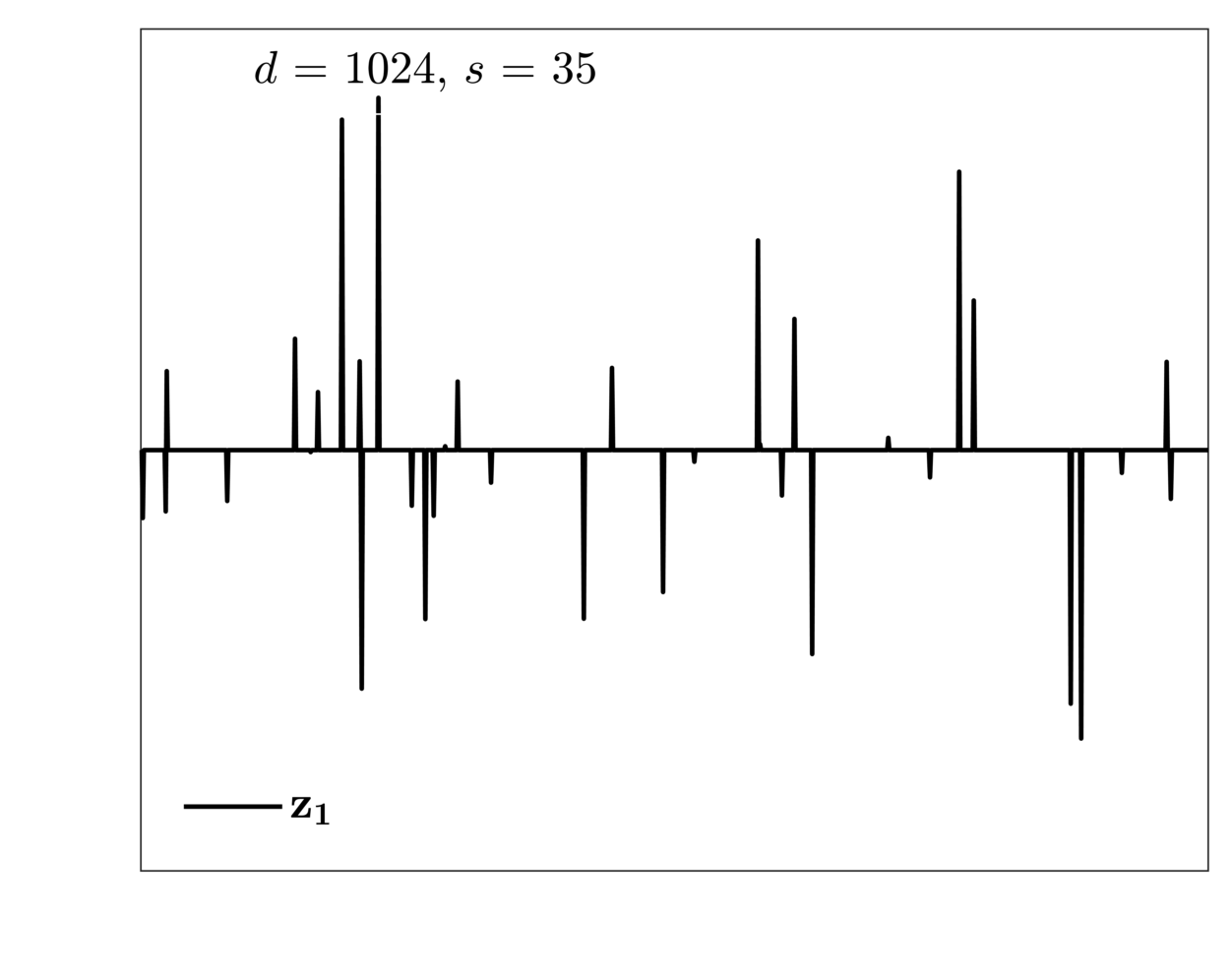}
		\caption{}
		\label{fig:sig:wave:10}
	\end{subfigure}%
	\qquad
	\begin{subfigure}[t]{0.30\textwidth}
		\centering
		\includegraphics[width=\textwidth]{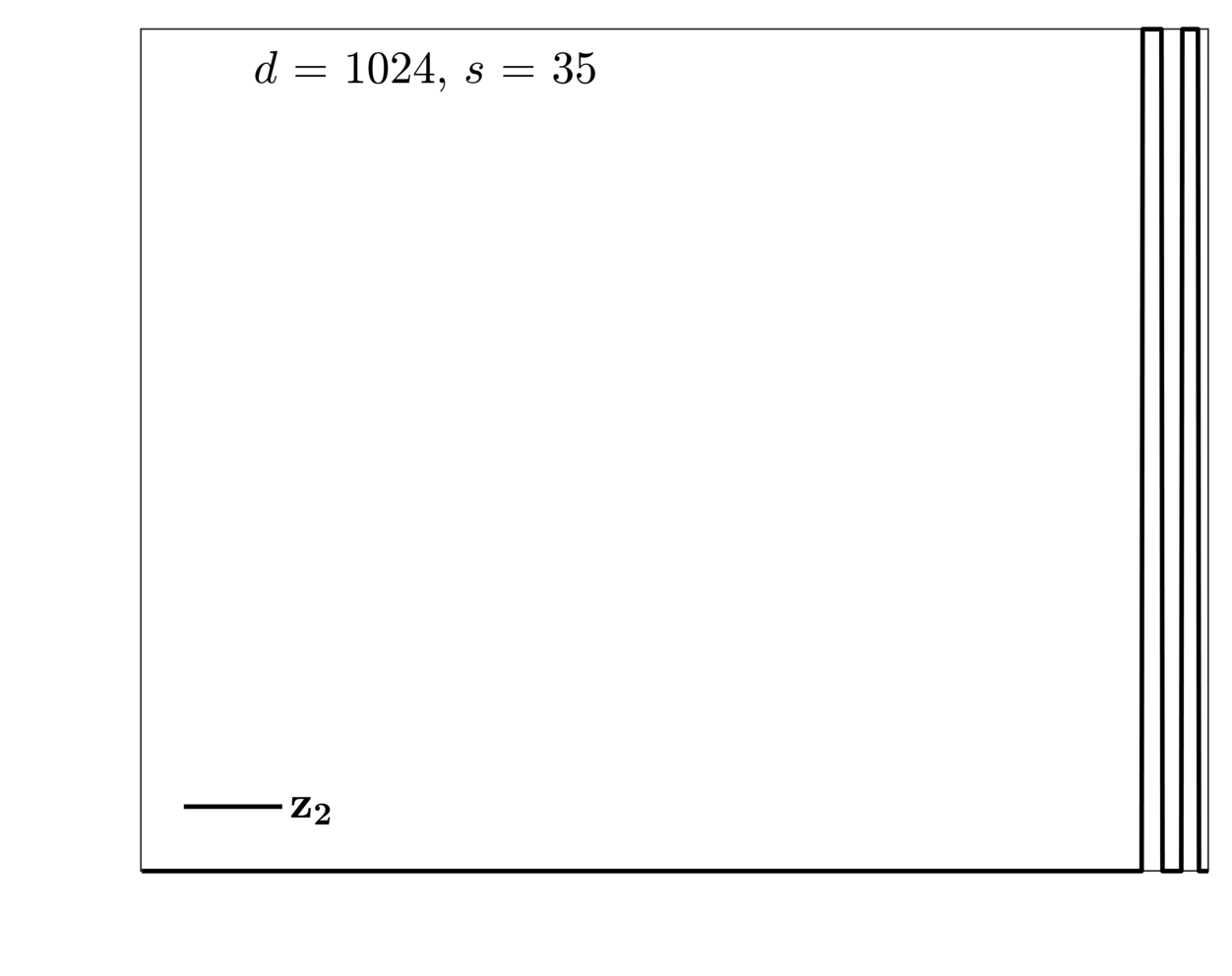}
		\caption{}
		\label{fig:sig:wave:11}
	\end{subfigure}%
	\qquad
	\begin{subfigure}[t]{0.30\textwidth}
		\centering
		\includegraphics[width=\textwidth]{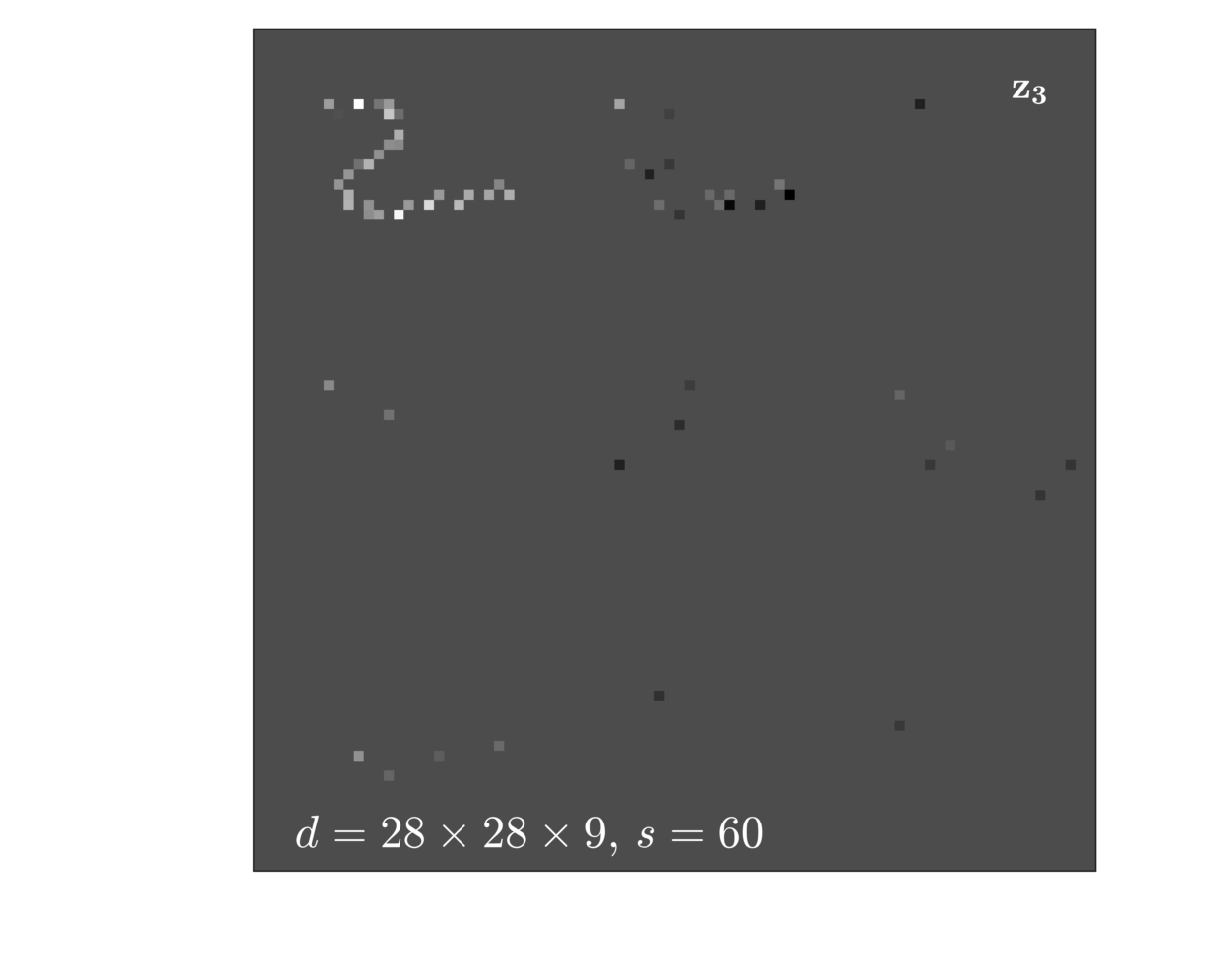}
		\caption{}
		\label{fig:sig:wave:12}
	\end{subfigure}%
	\caption{\textbf{Phase transitions of signal recovery by solving \sigNoisefree{}.} Empirical success rates and other key figures are reported in the first row, where we use the notation $\mathcal{D}_1 = \ds{\Onenorm,\zl^{\smash{i}}}$ and $\mathcal{D}_0 = \ds{\Onenorm,\z_i}$. The signals $\x_i$ that are used in each experiment are shown in the second row. The associated minimal $\ell^1$-representers $\zl^{\smash{i}}$ are displayed in the third row and the original coefficient representations $\z_i$ are shown in the fourth row. The first two columns are relying on a redundant Haar wavelet frame and the third column is based on the dct-2. Note that in all three examples, coefficient recovery is not possible since $\smash{\lmin{\Dict}{\dc{\Onenorm,\zl^{\smash{i}}}} = 0}$.}
	\label{fig:sig:wave}
\end{figure}

\subsection{Robustness to Noise}
\label{sec:robustness_noise}

The purpose of this last numerical simulation is to analyze coefficient and signal recovery with respect to robustness to measurement noise. To that end, we run Experiment~\ref{exp:robust} for different setups, which are reported below.

\begin{experiment}[Robustness to Measurement Noise]\leavevmode
\label{exp:robust}
\vspace{-.25\baselineskip}\hrule\vspace{.5\baselineskip}

\myhangindent{Input: \ }
\expkwd{Input:} Dictionary $\Dict \in \R^{n \times d}$, number of measurements $m$, minimal $\ell^1$-representation $\zl$ of the signal $\gt \in \R^n$, range of noise levels $H$.

\vspace{.5\baselineskip}
\expkwd{Compute:} Repeat the following procedure $100$ times for every $\noiseparam \in H$:
\begin{expstep}
	 \item 
		Draw a standard i.i.d.\ Gaussian random matrix $\Meas \in \R^{m\times n}$ and determine the noisy measurement vector $\y = \Meas\gt + \noiseparam \cdot \noise$, where $\norm{\noise}_2 = 1$.
	\item 
		Solve the program \eqref{eq:coef} to obtain an estimator $\solz \in \R^d$. 
	\item
		Compute and store the recovery errors $\norm{\zl - \solz}_2$ and $\norm{\gt - \sol}_2 = \norm{\Dict \zl - \Dict \solz}_2$.
\end{expstep}
\end{experiment}

\paragraph{Simulation settings} First, we choose the same dictionary and signal combination as in Section~\ref{sec:num_coef} and restrict the noise level to $H = \left\{0,0.05,0.1,0.15,\dots,1 \right\}$. Furthermore, we consider the 1D examples of Section~\ref{sec:sampling_sig}, together with the noise range $H = \left\{0,0.005,0.01,\dots,0.1 \right\}$. Recall that the difference of these two setups is that $\lmin{\Dict}{\dc{\Onenorm,\zl^{\smash{i}}}} > 0$ in the first case, whereas $\lmin{\Dict}{\dc{\Onenorm,\zl^{\smash{i}}}} = 0$ in the second case. In all experiments, we roughly pick the number of measurements as $m \approx \cmw[2]{\Dict \cdot \ds{\Onenorm;\zl^{\smash{i}}}} +40$ to ensure that Theorem~\ref{thm:coeff} (or Theorem~\ref{thm:sig}, respectively) is applicable. The averaged coefficient and signal recovery errors are displayed in Figure~\ref{fig:noise}, together with the theoretical upper bound on the signal error of Equation~\eqref{eq:rec_sig1}. Note that it is not possible to show the corresponding error bound for coefficient recovery. In the first set of examples, we do not have access to $\lmin{\Dict}{\dc{\Onenorm,\zl^{\smash{i}}}}$ and in the last two cases, $\lmin{\Dict}{\dc{\Onenorm,\zl^{\smash{i}}}} = 0$ and therefore Theorem~\ref{thm:coeff} is not applicable. Nevertheless, it is possible to obtain an upper bound for the latter quantity, as outlined in the Appendix~\ref{sec:num_details}. If $\Dict = \Dict_{\texttt{rand}}$, it is additionally possible to use the result on minimum conic singular values of Gaussian matrices to get a lower bound with high probability~\cite[Proposition 3.3]{tropp2014convex}.

\paragraph{Results} We summarize the findings of the results shown in Figure~\ref{fig:noise} below:
\begin{enumerate}[(i)] 
\setcounter{enumi}{9}
  \item If the number of measurements exceeds the sampling rate in Theorem~\ref{thm:sig}, signal recovery via solving the Program~\eqref{eq:sig} is robust to measurement noise. This phenomenon holds true without any further assumptions. Indeed, observe  that in all simulations of Figure~\ref{fig:noise}, the signal error $\norm{\gt - \sol}$ lies below its theoretical upper bound of Equation~\eqref{eq:rec_sig1}. 
  
  \item If $\zl$ is the unique minimal $\ell^1$-representation of $\gt$ (i.e., if $\lmin{\Dict}{\dc{\Onenorm,\zl}} > 0$) and if the number of measurements exceeds the sampling rate in Theorem~\ref{thm:coeff}, it is possible to robustly recovery $\zl$. However, in contrast to signal recovery, the robustness is influenced by the ``stability'' of the minimal $\ell^1$-representation of $\zl$ in $\Dict$, i.e., by the value of $\lmin{\Dict}{\dc{\Onenorm,\zl}}$ in the error bound~\eqref{eq:rec_coef}. Indeed, it is possible that the signal $\gt$ is more robustly recovered than its coefficients $\zl$, or vice versa. This can be seen by comparing coefficient and signal recovery in the Subfigures~\ref{fig:noise:1}-\ref{fig:noise:4}.\footnote{Note that the quantity  $\lmin{\Dict_{\texttt{super}}}{\dc{\Onenorm,\zl^4}}$ is very small. Hence, even for a small amount of noise the error of Equation~\eqref{eq:rec_coef} explodes. For $\noiseparam>0.1$ the error stays roughly constant since the solution $\solz$ of \eqref{eq:coef} is always close to $\vec{0}$.}  If $\lmin{\Dict}{\dc{\Onenorm,\zl}}  \ll 1$, coefficient recovery is less robust than signal recovery, see Subfigures~\ref{fig:noise:1}, \ref{fig:noise:2} and \ref{fig:noise:4}. However, if $\lmin{\Dict}{\dc{\Onenorm,\zl}}  \gg 1$, the contrary holds true, see Subfigure~\ref{fig:noise:3}. 
  
\end{enumerate}
  

\begin{figure}
	\centering
	\begin{subfigure}[t]{0.30\textwidth}
		\centering
		\includegraphics[width=\textwidth]{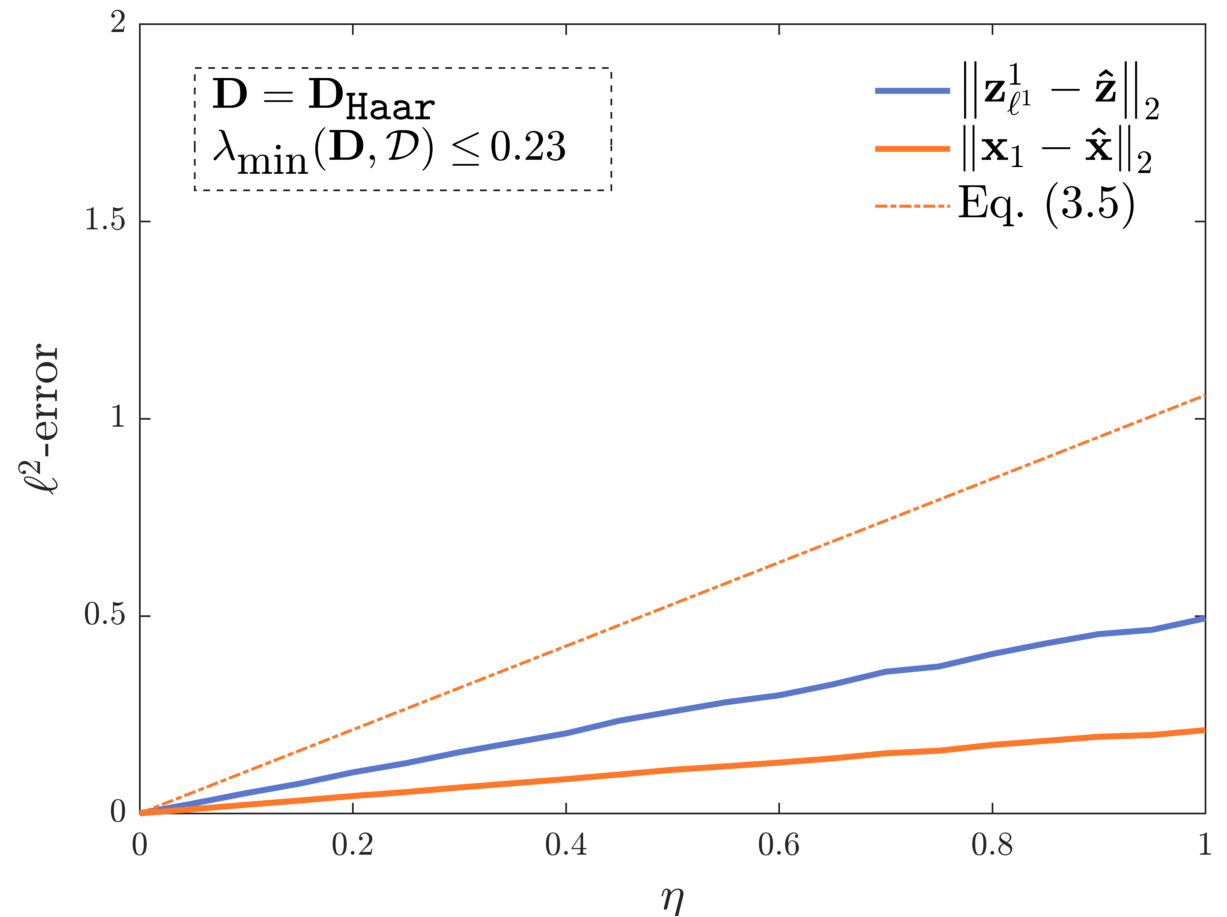}
		\caption{}
		\label{fig:noise:1}
	\end{subfigure}%
	\qquad
	\begin{subfigure}[t]{0.30\textwidth}
		\centering
		\includegraphics[width=\textwidth]{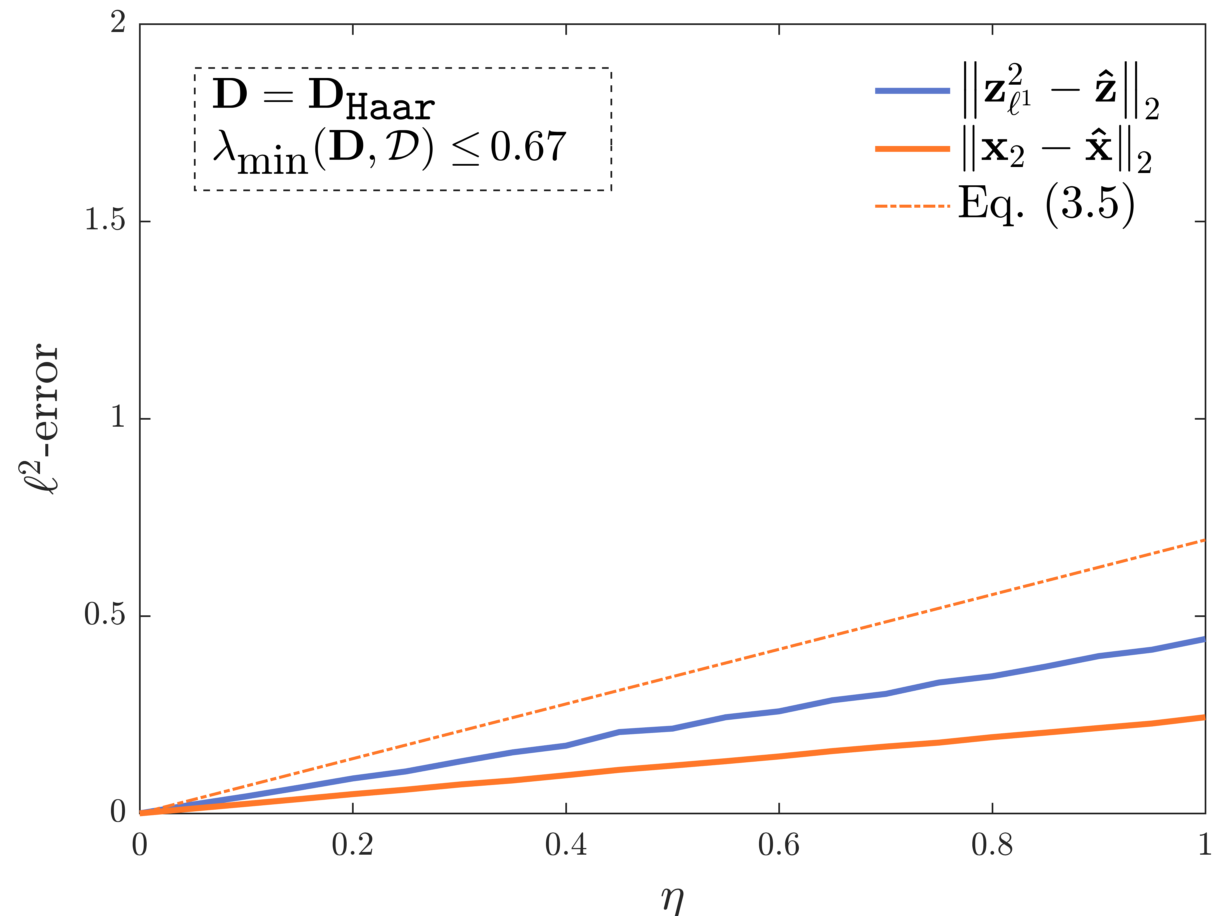}
		\caption{}
		\label{fig:noise:2}
	\end{subfigure}%
	\qquad
	\begin{subfigure}[t]{0.30\textwidth}
		\centering
		\includegraphics[width=\textwidth]{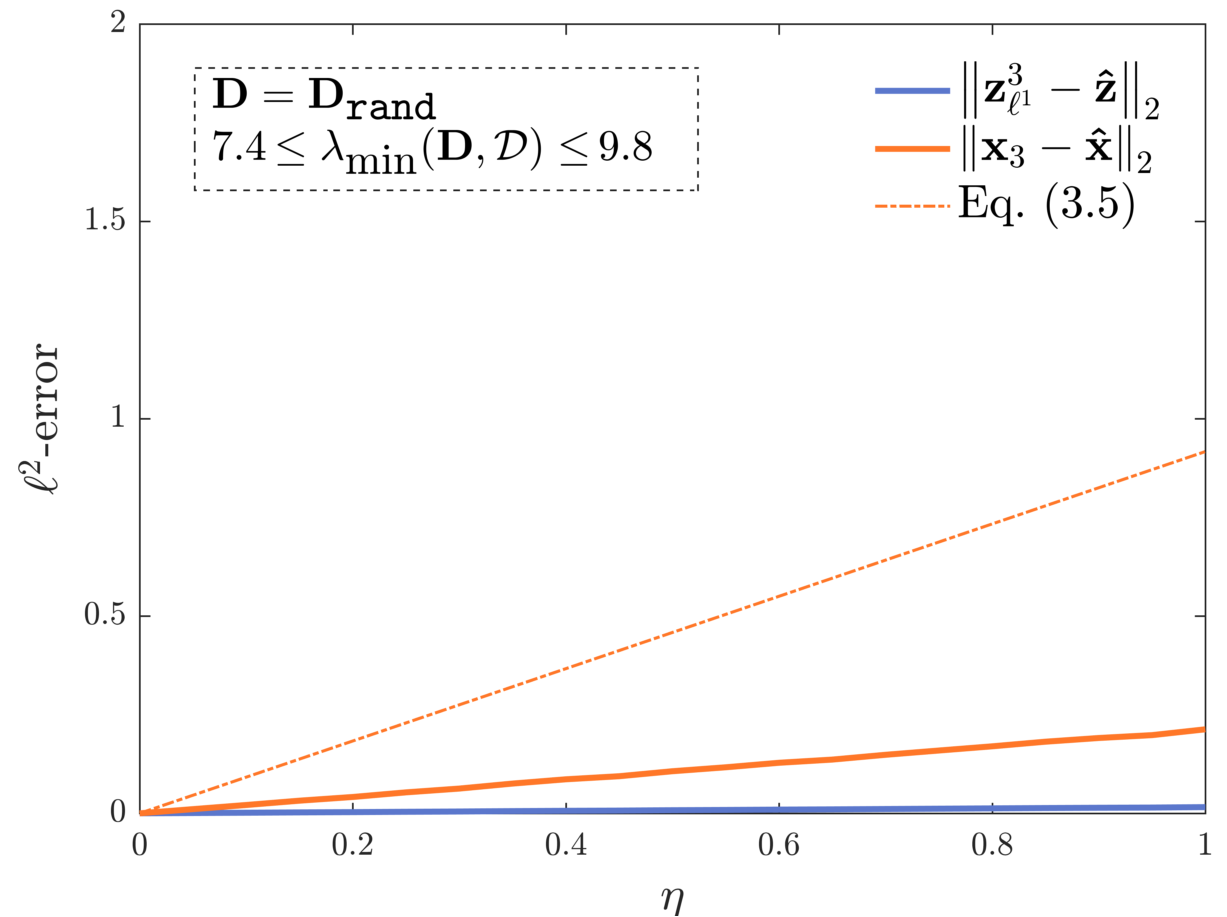}
		\caption{}
		\label{fig:noise:3}
	\end{subfigure}
    \vspace{.5\baselineskip}
    \begin{subfigure}[t]{0.30\textwidth}
		\centering
		\includegraphics[width=\textwidth]{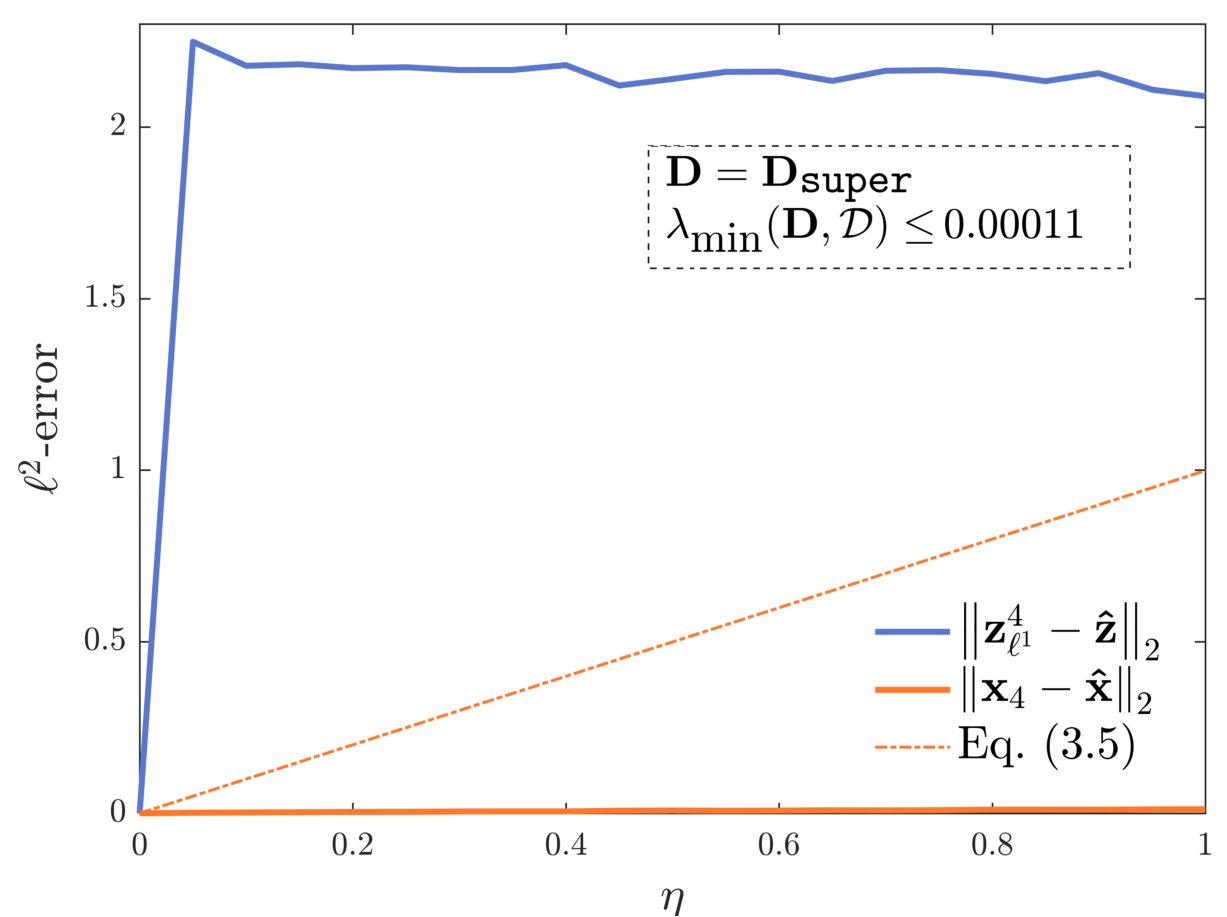}
		\caption{}
		\label{fig:noise:4}
	\end{subfigure}%
	\qquad
    \begin{subfigure}[t]{0.30\textwidth}
		\centering
		\includegraphics[width=\textwidth]{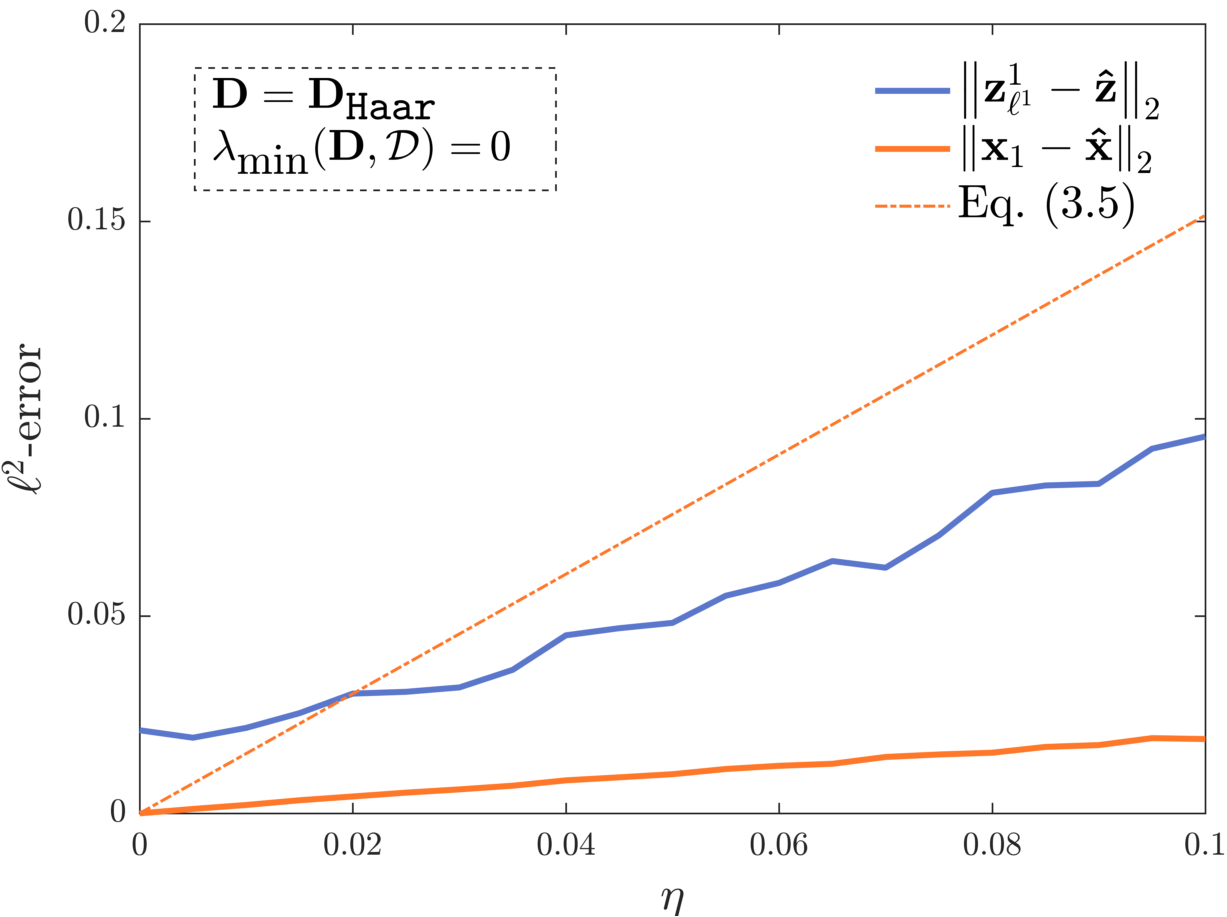}
		\caption{}
		\label{fig:noise:5}
	\end{subfigure}%
	\qquad
	\begin{subfigure}[t]{0.30\textwidth}
		\centering
		\includegraphics[width=\textwidth]{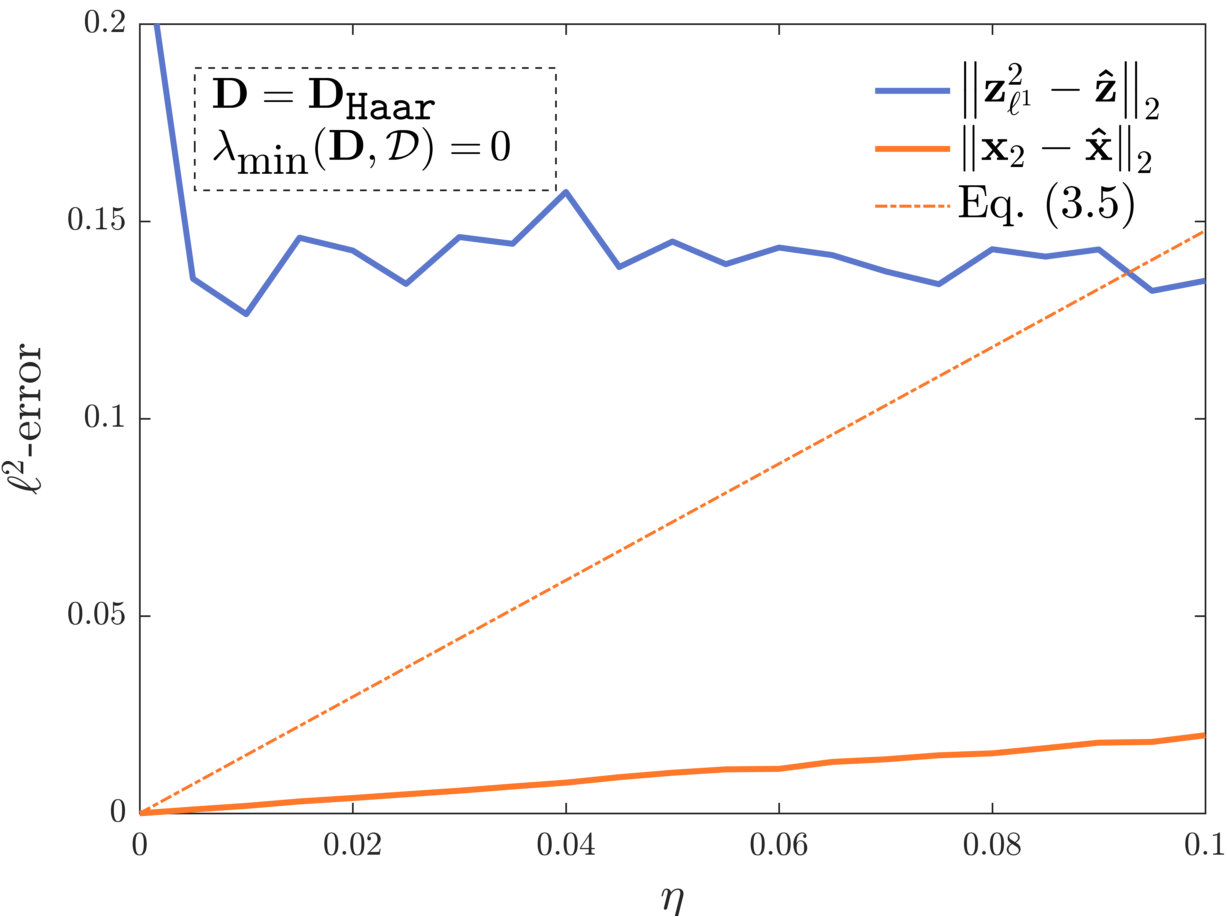}
		\caption{}
		\label{fig:noise:6}
	\end{subfigure}%
	\caption{\textbf{Robustness to measurement noise.} We display the reconstruction errors for a recovery from noisy measurements with an increasing noise level. The first four Subfigures are based on the examples for coefficient recovery of Section~\ref{sec:num_coef}, and the the last two on the  examples based on Haar wavelet of Section~\ref{sec:sampling_sig}. We use the notation $\mathcal{D} = \ds{\Onenorm,\zl^i}$, where $\zl^{\smash{i}} \in \Zlset$.}
	\label{fig:noise}
\end{figure}

\section*{Acknowledgments}
{\smaller Acknowledgements:
M.M. is supported by the DFG Priority Programme DFG-SPP 1798. He wishes to thank Martin Genzel for inspiring discussions. C.B. have been supported by a PEPS « Jeunes chercheuses et jeunes chercheurs » funding in 2017 and 2018 for this work.
P.W. and J.K. are supported by the ANR JCJC OMS.}

\renewcommand*{\bibfont}{\smaller}
\printbibliography[heading=bibintoc]

\appendix


\section{Proofs of Section~\ref{sec:coef_sig}}

\subsection{Proof of Lemma \ref{lem:observation} (Minimal $\ell^1$-Representers)}
\label{sec:proof_lem_observation}
\begin{enumerate}

\item[(b)]  ``$\Rightarrow$'': Assume that $\X = \left\{ \gt \right\}$. First, we show that $\Zlset \subseteq \Zset$: Let $\zl \in \Zlset$. Since $\Dict \zl = \gt$, $\zl$ is in the feasible set of \coefNoisefree. Furthermore, due to $\X = \left\{ \gt \right\}$ it holds true that $\Dict \solz  = \gt$  for each $\solz \in \Zset$. Thus also $\norm{\zl}_1 \leq \norm{\solz}_1$ and therefore $\zl \in \Zset$.  Secondly, we show $\Zset \subseteq \Zlset$. For a $\solz \in \Zset$ it holds true that $D\solz = \gt$ since we have assumed that $\X = \left\{ \gt \right\}$. Thus, $\solz$ is also feasible for $\Zlset$. Since each $\zl \in \Zlset$ is in turn feasible for $\Zset$, we obtain that $\norm{\solz}_1 \leq \norm{\zl}_1$  and therefore $\solz \in \Zlset$.  \newline
``$\Leftarrow$'': If $\Zset = \Zlset$, then $\X = \Dict \cdot \Zset = \Dict \cdot \Zlset = \left\{\gt\right\}$.


\item[(a)] If $\Zset = \left\{ \gtz \right\}$, then also $\X = \Dict \cdot \Zset = \left\{ \gt \right\}$ and (b) would imply $\Zlset = \Zset = \left\{ \gtz \right\}$.


\end{enumerate}

\subsection{Proof of Lemma \ref{lem:gauge_formulation} (The Gauge Formulation)}
\label{sec:proof_gauge_formulation}
By definition,
\begin{align}
  \X & = \Dict \cdot \left( \argmin_{\z \in \R^d} \norm{\z}_1 \quad \mbox{ s.t } \quad  {\norm{\y - \Meas  \Dict  \z}_2 \leq \noiseparam} \right)  \\
  & = \Dict  \cdot \left( \argmin_{\z \in \R^d}  \inf \left\{ \lambda >0 : \z \in \lambda\cdot \Bn[1]{d}  \right\} 
 \quad \mbox{ s.t. } \quad \norm{\y - \Meas  \Dict  \z}_2 \leq \noiseparam \right) \\
 & = \argmin_{\x \in \R^n} \inf \left\{ \lambda >0 : \x \in \lambda \cdot \Dict \cdot\Bn[1]{d}  \right\} \quad \mbox{ s.t. } \quad \norm{\y - \Meas  \x}_2 \leq \noiseparam. 
\end{align}

\subsection{Proof of Lemma \ref{lem:dc} (Descent Cone of the Gauge)}
\label{sec:proof_lem_dc}
We will only prove the first equality and note that the other one follows essentially the same argumentation. Pick any $\zl \in \Zlset$ and  note that $p_{\Dict\cdot\Bn[1]{d}} (\gt) = \norm{\zl}_1$.\\
 ``$\supseteq$'': Let $\h \in \dc{\norm{\cdot}_1,\zl}$, i.e., there exists a $\tau>0$ such that $\norm{\zl + \tau \h}_1 \leq \norm{\zl}_1$. Hence, 
 \[
  p_{\Dict\cdot\Bn[1]{d}} (\Dict  \zl + \tau  \Dict  \h )  = p_{\Dict\cdot\Bn[1]{d}} (\Dict \cdot ( \zl + \tau  \h )) \leq \norm{\zl + \tau \cdot \h}_1 \leq \norm{\zl}_1 = p_{\Dict\cdot\Bn[1]{d}} (\gt),
 \]
 and therefore $\Dict  \h \in \dc{p_{\Dict\cdot\Bn[1]{d}},\gt}$. \\
 ``$\subseteq$'': Let $\x \in \dc{p_{\Dict\cdot\Bn[1]{d}},\gt}$, i.e., there exists $\tau >0$ such that 
 \[
 R \coloneqq p_{\Dict\cdot\Bn[1]{d}} (\gt + \tau  \x) \leq p_{\Dict\cdot\Bn[1]{d}} (\gt) = \norm{\zl}_1.
 \]
 Now, choose $\h \in \Bn[1]{d}$ such that $R \cdot \Dict  \h = \gt + \tau  \x$ and write $\x = \Dict \cdot (R/\tau\cdot \h - 1/\tau \cdot \zl) \eqqcolon \Dict (\zz)$. Observe that 
 \[
  \norm{\zl + \tau \zz}_1 = R \cdot \norm{\h}_1 \leq R \leq \norm{\zl}_1,
 \]
 and therefore $\zz \in \dc{\norm{\cdot}_1,\zl}$.
 
 \begin{remark}
  \label{rem:gauge}
  The proof of ``$\subseteq$'' shows that $\zl$ could be replaced by any other $\gtz$ with $\gt = \Dict\gtz$, which is not necessarily a minimal $\ell^1$-representer of $\gt$. Hence, $\dc{p_{\Dict\cdot\Bn[1]{d}},\gt} \subseteq \Dict \cdot \dc{\norm{\cdot}_1,\gtz}$ and $\ds{p_{\Dict\cdot\Bn[1]{d}},\gt} \subseteq \Dict \cdot \ds{\norm{\cdot}_1,\gtz}$ for any $\gtz$ with $\gt = \Dict\gtz$, with inequality if $\gtz \in \Zlset$.
 \end{remark}

\subsection{Proof of Theorem \ref{thm:coeff} (Coefficient Recovery)}
\label{sec:proof_coeff}

Recalling Proposition \ref{prop:recover}, the goal of the proof is to find a lower bound for the minimum conic singular value $\lmin{\MeasDict}{C} = \inf \left\{\norm{\MeasDict \z}_2 : \z \in C \cap \Sn{d-1} \right\}$, where we use the abbreviated notation $C \coloneqq \dc{\Onenorm;\zl}$ and $\MeasDict \coloneqq \Meas\Dict$. 
Note that by assumption $\norm{\Dict \z}_2 \geq \lmin{\Dict}{C} >0$ for all $\z \in C \cap \Sn{d-1}$.  Thus, we obtain
\begin{align}
 \lmin{\MeasDict}{C} & = \inf \left\{  \frac{\norm{\Meas \Dict \z}_2}{\norm{\Dict \z}_2} \norm{\Dict \z}_2 : \z \in  C \cap \Sn{d-1} \right\} \\
 & \geq \lmin{\Dict}{C} \cdot \inf \{  \norm{\Meas \x}_2 : \x \in \Dict C \cap \Sn{n-1}\} \label{eq:esti}.
\end{align}
Theorem \ref{thm:versh} now implies that there is a numerical constant $\constant>0$ such that with probability at least $1-\e^{-u^2/2}$, we have
\begin{equation}
 \inf \{  \norm{\Meas \x}_2 : \x \in \Dict C \cap \Sn{n-1} \} > \sqrt{m-1} - \constant \cdot \gaussparam^2 \cdot (\cmw{\Dict C} + u).
\end{equation}
Thus, with probability at least $1-\e^{-u^2/2}$, we conclude from the previous steps that
\begin{equation}
\lmin{\MeasDict}{C} > \lmin{\Dict}{C} \cdot \left(\sqrt{m-1} - \sqrt{m_0 - 1} \right).
\end{equation}
The claim of the theorem is then a direct consequence of Proposition \ref{prop:recover}.

\begin{remark}
 The above argumentation may be compared with \cite[Theorem 3.1]{chen2017}. We also show that if $\Dict$ is bounded away from 0 on the intersection $S$ of a closed convex cone $C$ and the sphere, then $\Meas\Dict$ also stays away from 0 on $S$ with high probability. However, an important difference is that our result does not involve $\lmin{\Dict}{C}^{-1}$ as a multiplicative factor in the rate $m_0 \approx \cmw[2]{\Dict\cdot C}$. It therefore allows for a tight description of the sampling rate in the case of noiseless Gaussian measurements; cf.~the discussion subsequent to Theorem~\ref{thm:versh}. Indeed, the numerical experiments of Section~\ref{sec:num_coef} reveal that $\lmin{\Dict}{\dc{\Onenorm;\zl}}$ may be very small, while \coefNoisefree\ still allows for exact recovery from $m\approx\cmw[2]{\Dict\cdot\ds{\Onenorm;\zl} }$ measurements.
\end{remark}

\subsection{Proof of Proposition~\ref{prop:epsilongaussian} (Stable Recovery)}
\label{sec:proof_stable}

Let $\za\in \R^d$ be chosen according to \eqref{eq:suro}, i.e., it satisfies $\norm{\gt - \Dict \za}_2 \leq \varepsilon$ and $\norm{\za}_1 = \norm{\gtz}_1$, where $\gtz$ is any vector with $\gt = \Dict \gtz$. The goal is to invoke \cite[Theorem 6.4]{genzel2017} with
\begin{equation}
  t = \max \left\{ r \cdot \norm{\gt - \Dict \za}_2,\frac{2\noiseparam}{\sqrt{m-1} - c\cdot\gamma^2 \cdot \left( \tfrac{r+1}{r} \cdot (\cmw{\Dict \cdot \ds{\Onenorm;\za}} +1) + u\right) } \right\}.
 \end{equation}
 Thus, we need to verify that $t$ satisfies
 \begin{equation}
  t \geq \frac{2\noiseparam}{\sqrt{m-1} - c\cdot\gamma^2 \cdot \left( w_t(\ds{p_{\Dict\cdot\Bn[1]{d}};\gt}) + u \right)},
 \end{equation}
 where $w_t$ denotes the \emph{local mean width} at scale $t>0$; see \cite[Definition 6.1]{genzel2017} for details on this notion.
To that end, we first observe that
 \begin{align}
  w_t(\ds{p_{\Dict\cdot\Bn[1]{d}};\gt}) & \leq w_t (\Dict \cdot \ds{\Onenorm;\gtz}  + \gt - \gt ) \\
  & \stackrel{\mathllap{t\geq r\cdot\norm{\gt - \Dict\za}_2}}{\leq} w_{r\cdot\norm{\gt - \Dict\za}_2} ( (\Dict \cdot \ds{\Onenorm;\gtz} + \gt) - \gt),
 \end{align}
 where we have used that $\ds{p_{\Dict\cdot\Bn[1]{d}};\gt}  \subseteq \Dict \cdot \ds{\Onenorm;\gtz}$ for any $\gtz$ with $\gt =\Dict \gtz$ in the first step (see Remark~\ref{rem:gauge}), and the monotonicity of the local mean width in the second step. Observe that we have $\Dict \za \in  \Dict \cdot \ds{\Onenorm;\gtz} + \gt$, due to the assumption $\norm{\za}_1 = \norm{\gtz}_1$. Hence, we can make use of \cite[Lemma A.2]{genzel2017} for $K=\Dict \cdot \ds{\Onenorm;\gtz} + \gt$ in order to obtain
 \begin{align}
  w_{r\cdot\norm{\gt - \Dict\za}_2} ( (\Dict \cdot \ds{\Onenorm;\gtz} + \gt) - \gt) & \leq \frac{r+1}{r} \cdot \left( \cmw{(\Dict \cdot \ds{\Onenorm;\gtz} + \gt) - \Dict\za }   + 1 \right) \\
  & = \frac{r+1}{r} \cdot (\cmw{\Dict \cdot \ds{\Onenorm;\za}} +1),
 \end{align}
where the equality follows from:
 \begin{align}
  \Dict \cdot \ds{\Onenorm;\gtz} + \gt  - \Dict \za  & = \Dict \cdot \{\vec{h} \in \R^d : \norm{\gtz + \vec{h}}_1 \leq \underbrace{\norm{\gtz}_1}_{= \norm{\za}_1} \} + \gt - \Dict \za \\ 
  & \stackrel{\mathllap{\vec{h}' = \vec{h} + \gtz - \za}}{=} \Dict \cdot \{\vec{h}' \in \R^d : \norm{\za + \vec{h}'}_1 \leq  \norm{\za}_1 \} = \Dict \cdot  \ds{\Onenorm;\za}.
 \end{align}
We conclude that 
 \begin{align}
  t &\geq \frac{2\noiseparam}{\sqrt{m-1} - c\cdot\gamma^2 \cdot \left( \tfrac{r+1}{r} \cdot (\cmw{\Dict \cdot \ds{\Onenorm;\za}} +1) + u\right)} \\
  & \geq \frac{2\noiseparam}{\sqrt{m-1} - c\cdot\gamma^2\cdot \left(w_t(\ds{p_{\Dict\cdot\Bn[1]{d}};\gt})  + u \right) }.
 \end{align}
 Hence, \cite[Theorem 6.4]{genzel2017} then implies that any minimizer of \eqref{eq:sig} satisfies $\norm{\gt - \sol}_2 \leq t$.\footnote{This argument does actually not cover the case of $\noiseparam = 0$, 
 but here we can simply use that $t = r \cdot \norm{\gt -\Dict \za}_1 > 0$ if $\gt \neq \Dict \za$.} 

\section{Proof of Proposition \ref{prop:conditiongaussian} (Width and Condition Number)}
\label{sec:proof_conditiongaussian}

Let us start with a preliminary lemma, which generalizes Proposition 10.2 in \cite{amelunxen2014edge}.
\begin{lemma}
\label{lem:statdim}
 For a closed convex cone $C\subseteq \R^d$, a dictionary $\Dict \in \R^{n\times d}$ and  a standard Gaussian vector  $\g \sim \mathcal{N} (\vec{0}, \I_n)$, we have that
 \begin{equation}
  \E \left[ \left( \sup_{\z \in C \cap \Bn[2]{d}} \sp{\g}{\Dict\z} \right)^2\right] \leq \left( \E \left[ \sup_{\z \in C \cap \Sn{d-1}} \sp{\g}{\Dict\z} \right] \right)^2 + \lmax[2]{\Dict}{C}.
 \end{equation}
\end{lemma}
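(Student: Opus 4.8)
The plan is to mimic the proof of \cite[Proposition~10.2]{amelunxen2014edge}, which is precisely this statement in the special case $\Dict=\I$, while carrying the dictionary $\Dict$ through every estimate. Assume $C\neq\{\vec{0}\}$ (the remaining case being vacuous), and abbreviate
\begin{equation*}
 G(\g)\coloneqq\sup_{\z\in C\cap\Sn{d-1}}\sp{\g}{\Dict\z},\qquad F(\g)\coloneqq\sup_{\z\in C\cap\Bn[2]{d}}\sp{\g}{\Dict\z},
\end{equation*}
so that the left-hand side of the claimed inequality is $\E[F(\g)^2]$ and the first term on the right-hand side is $(\E[G(\g)])^2$.

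First I would record the elementary identity $F(\g)=\max\bigl(0,G(\g)\bigr)$. Indeed $\vec{0}\in C\cap\Bn[2]{d}$ gives $F\geq 0$; each $\z\in C\cap\Sn{d-1}$ is feasible in the supremum defining $F$, so $F\geq G$; and conversely every nonzero $\z\in C\cap\Bn[2]{d}$ factors as $\norm{\z}_2\cdot\bigl(\z/\norm{\z}_2\bigr)$ with $\z/\norm{\z}_2\in C\cap\Sn{d-1}$ and $\norm{\z}_2\in(0,1]$, whence $\sp{\g}{\Dict\z}\leq\max\bigl(0,G(\g)\bigr)$.

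Next I would observe that $\g\mapsto G(\g)$ is Lipschitz with constant $\lmax{\Dict}{C}$: it is a pointwise supremum of the linear maps $\g\mapsto\sp{\g}{\Dict\z}$ over $\z\in C\cap\Sn{d-1}$, and each such map has Lipschitz constant $\norm{\Dict\z}_2\leq\lmax{\Dict}{C}$. Since moreover $\abs{G(\g)}\leq\lmax{\Dict}{C}\norm{\g}_2$ (as $G(\vec{0})=0$), $G$ is square-integrable, and the Gaussian Poincar\'e inequality applied to $G$ yields $\operatorname{Var}\bigl(G(\g)\bigr)\leq\lmax[2]{\Dict}{C}$.

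Finally, using $\max(0,t)^2\leq t^2$ for all $t\in\R$ together with the two previous steps,
\begin{equation*}
 \E\bigl[F(\g)^2\bigr]=\E\Bigl[\max\bigl(0,G(\g)\bigr)^2\Bigr]\leq\E\bigl[G(\g)^2\bigr]=\bigl(\E[G(\g)]\bigr)^2+\operatorname{Var}\bigl(G(\g)\bigr)\leq\bigl(\E[G(\g)]\bigr)^2+\lmax[2]{\Dict}{C},
\end{equation*}
which is the assertion. I do not expect a genuine obstacle; the one point that needs care is not to compare $\E[F(\g)]$ with $\E[G(\g)]$ directly --- these run the wrong way, since $F=G^{+}\geq G$ --- but rather to pass through $\E[F^2]\leq\E[G^2]$ and only then split off the variance, so that the clean Lipschitz bound $\lmax{\Dict}{C}$ is applied to $G$ rather than to $F$. (Obtaining the unit constant in front of $\lmax[2]{\Dict}{C}$, as opposed to some absolute $c$, requires the sharp-constant Gaussian Poincar\'e inequality, not merely Gaussian concentration of Lipschitz functions.)
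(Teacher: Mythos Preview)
Your proof is correct and follows essentially the same approach as the paper: both show $\E[F^2]\leq\E[G^2]$ (you via the clean identity $F=\max(0,G)$ and $\max(0,t)^2\leq t^2$; the paper via an equivalent but slightly more elaborate argument with the indicator $\mathbf{1}_{\R^d\setminus C^\circ}(\Dict^*\g)$), then establish that $G$ is $\lmax{\Dict}{C}$-Lipschitz, and finally invoke the Gaussian Poincar\'e inequality to bound the variance.
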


\begin{proof}
Define the random variable $Z=Z(\g):= \sup_{\z \in C \cap  \Sn{d-1}} \langle \g, \Dict \z \rangle$. 
In a first step, we prove that 
\begin{align}
\label{eq:lem_statdim_aux}
\E \left[\left( \sup_{\z \in C \cap \Bn[2]{d}} \langle \g, \Dict \z \rangle\right)^2 \right]
\leq \E \left[Z^2 \right].
\end{align}
Indeed, since $Z^2$ is a nonnegative random variable, we obtain
\begin{align*}
\E [Z^2] \geq \E \left[Z^2 \cdot \mathbf{1}_{\R^d\backslash C^{\circ}}(\Dict^*\g) \right] =  \E \left[\left(\sup_{\z \in C \cap  \Sn{d-1}} \langle \g, \Dict \z \rangle\right)^2 \cdot \mathbf{1}_{\R^d\backslash C^{\circ}}(\Dict^*\g) \right],
\end{align*}
where $C^{\circ}$ denotes the \emph{polar cone} of $C$.
Furthermore, it holds true that
\begin{align*}
\E \left[\left(\sup_{\z \in C \cap  \Sn{d-1}} \langle \g, \Dict \z \rangle\right)^2 \cdot \mathbf{1}_{\R^d\backslash C^{\circ}}(\Dict^*\g) \right] = \E \left[\left( \sup_{\z \in C \cap \Bn[2]{d}} \langle \g, \Dict \z \rangle\right)^2 \right].
\end{align*}
Indeed, for an $\x \in \R^n$ such that $\Dict^*\x \notin C^{\circ}$ the equality $\sup_{\z \in C \cap  \Sn{d-1}} \langle \x, \Dict \z \rangle = \sup_{\z \in C \cap \Bn[2]{d}} \langle \x, \Dict \z \rangle$ holds true, because the supremum over the ball occurs at a vector of length 1. On the other hand, when $\Dict^*\x \in C^{\circ}$, one has $\sup_{\z \in C \cap \Bn[2]{d}} \langle \x, \Dict \z  \rangle = 0$. Therefore, \eqref{eq:lem_statdim_aux} is established.

Moreover, observe that the function $\g \mapsto Z(\g)$ is $\lambda_{\max}(\Dict,C)$-Lipschitz. Indeed, for $\f,\g\in \R^n$ and $\z \in C \cap  \Sn{d-1}$ we obtain that
\begin{align*}
\langle \g, \Dict \z \rangle & = \langle \f, \Dict \z \rangle + \langle \g, \Dict \z \rangle - \langle
\f,\Dict \z \rangle  \leq \langle \f, \Dict \z \rangle + \norm{\f-\g}_2 \norm{\Dict \z}_2 \\
& \leq \langle \f, \Dict \z \rangle + \lambda_{\max}(\Dict,C) \norm{\f-\g}_2,
\end{align*}
and therefore by taking the supremum
\begin{align*}
\sup_{\z \in C \cap  \Sn{d-1}} \langle \g, \Dict \z \rangle \leq \sup_{\z \in C \cap  \Sn{d-1}}  \langle \f, \Dict \z \rangle + \lambda_{\max}(\Dict,C) \norm{\f-\g}_2.
\end{align*}
By swapping the roles of $\f$ and $\g$, an analogue estimate can be obtained, which verifies the claimed Lipschitz continuity. Thus, the fluctuation of $Z$ can be bounded as follows:
\begin{align*}
\E \left[ Z^2\right] - \E\left[Z\right]^2 = \E \left[(Z - \E Z)^2 \right] = \text{Var}(Z) \leq \lambda^2_{\max}(\Dict,C), 
\end{align*}
where the last estimate follows from Fact C.3 in \cite{amelunxen2014edge}.
\end{proof}

\paragraph{Back to the proof of Proposition \ref{prop:conditiongaussian}.}
In order to prove Proposition \ref{prop:conditiongaussian}, we continue as follows:
First observe that 
\begin{equation}
\cmw[2]{\Dict \cdot C} = w^2(\Dict \cdot C \cap \Sn{n-1} ) \leq \delta (\Dict \cdot C) = \E \left[\left( \sup_{\x \in \Dict \cdot C \cap \Bn[2]{n} } \sp{\g}{\x} \right)^2 \right],
\end{equation}
where $\delta$ denotes the \emph{statistical dimension}\footnote{The statistical dimension of a convex cone $C \subseteq \R^n$ can be defined as $\delta (C) = \E [(\sup_{\x \in C \cap \Bn[2]{n}} \sp{\g}{\x})^2]$; see \cite[Prop.~3.1]{amelunxen2014edge} for details. It holds true that $\cmw[2]{C} \leq \delta (C) \leq \cmw[2]{C} +1$, which is why both notions are often interchangeable~\cite[Prop.~10.2]{amelunxen2014edge}.}.
Next, it is straightforward to see that 
\begin{equation}
\label{eq:set}
\Dict \cdot C \cap \Bn[2]{n} \subseteq \frac{\Dict}{\lmin{\Dict}{C}} \cdot \left( C \cap \Bn[2]{d} \right),
\end{equation}
which immediately implies that
\begin{equation}
 \delta (\Dict \cdot C)   \leq \frac{1}{\lmin[2]{\Dict}{C}} \cdot \E \left[\left( \sup_{\x \in \Dict\left( C \cap \Bn[2]{d} \right)} \sp{\g}{\x} \right)^2 \right].
\end{equation}
Exercise~7.5.4 in \cite{vershynin2018}  and Lemma \ref{lem:statdim} now allow us to derive the desired bound:
\begin{align}
 \cmw[2]{\Dict \cdot C}  \leq \delta (\Dict \cdot C) & \leq \frac{1}{\lmin[2]{\Dict}{C}} \E \left[\left( \sup_{\x \in \Dict\left( C \cap \Bn[2]{d} \right)} \sp{\g}{\x} \right)^2 \right] \\
 & \stackrel{(\ref{lem:statdim})}{\leq}  \frac{1}{\lmin[2]{\Dict}{C}} \left( \left( \E \left[ \sup_{\x \in \Dict\left( C \cap \Sn{d-1} \right)} \sp{\g}{\x} \right] \right)^2 + \lmax[2]{\Dict}{C} \right) \\
 & \stackrel{(7.5.4)}{\leq} \frac{\norm{\Dict}_2^2}{\lmin[2]{\Dict}{C}} \left( \left( \E \left[ \sup_{\z \in  C \cap \Sn{d-1} } \sp{\g}{\z} \right] \right)^2 + 1 \right) \\
 & = \condi[2]{\Dict}{C} \cdot \left( \cmw[2]{C} + 1\right).
 \end{align}

\section{Proofs of Section \ref{sec:circumangle}}

\subsection{Proof of Proposition \ref{prop:simple_circumcenter} (Circumangle of Polyhedral Cones)}
\label{sec:simple_circumcenter}

Consider a nontrivial pointed polyhedral cone $C=\mathrm{cone}(\x_1,\hdots,\x_k)$ with $\norm{\x_i}_2=1$ for $i\in [k]$ and let $\alpha$ denote its circumangle. Let $\Xb \coloneqq [\x_1,\hdots,\x_k]\in \R^{n\times k}$. Since $C$ does not contain a line, its circumcenter $\ax$ is unique, belongs to the cone and $0\leq \alpha < \pi/2$, see \cite{Henrion10}. This implies that $\Xb^*\ax > \vec{0}$ (element-wise). We have:
\begin{align*}
\cos(\alpha)& = \sup_{\vec{v}\in \Sn{n-1}} \inf_{\x\in C\cap \Sn{n-1}} \langle \x, \vec{v}\rangle=\sup_{\vec{v}\in \Sn{n-1}}  \inf_{\cc\geq \vec{0}, \norm{\Xb\cc}_2^2=1} \langle \Xb\cc, \vec{v}\rangle \\
&\leq \sup_{\vec{v}\in \Sn{n-1}} \inf_{i \in [k]} \langle \vec{e}_i, \Xb^*\vec{v}\rangle=\sup_{\vec{v}\in \Sn{n-1}} \min \Xb^*\vec{v},
\end{align*}
where $\vec{e}_i$ denotes $i$-th standard basis vector in $\R^k$ and we have used an inclusion of sets argument in the inequality. We now argue that the inequality is in fact an equality. To that end, observe that for any $\vec{v} \in \Sn{n-1}$ such that $\Xb^*\vec{v}\geq 0$, we also have that
\begin{align*}
 \inf_{\cc\geq \vec{0}, \norm{\Xb\cc}_2^2=1} \langle \cc, \Xb^*\vec{v}\rangle \geq \inf_{\cc\geq \vec{0}, \norm{\Xb\cc}_2^2\geq 1} \langle \cc, \Xb^*\vec{v}\rangle \geq \inf_{\cc\geq \vec{0}, \langle \one , \cc \rangle \geq 1} \langle \cc, \Xb^*\vec{v}\rangle = \min \Xb^*\vec{v}.
\end{align*}
In this sequence of inequalities, we first used the inclusion of sets, the triangular inequality together with the inclusion of sets and finally the fact that a linear program attains its minimum (if it exists) on an extremal point $\mathrm{Ext}\left(\{\cc\geq 0, \langle \one , \cc \rangle \geq 1\}\right)=\{\eb_1,\hdots,\eb_k\}$. Note that the condition $\Xb^*\vec{v}\geq 0$ ensures the existence of a solution.

Finally the infimum over $\Sn{n-1}$ can be relaxed to $\Bn[2]{n}$, since the supremum is attained on the boundary of the domain. This concludes the proof.

\subsection{Proof of Proposition \ref{prop:circ} (Maximal Width of Polyhedral Cones)}
\label{sec:proof_circ}

The proof of Proposition~\ref{prop:circ} necessitates a basic preliminary result on the Gaussian width of general convex polytopes, which we will proof first.

\paragraph{Bounding the Gaussian Width of Convex Polytopes}

\begin{lemma}
    \label{lem:convexpolywidth}
    Let $K$ be a convex polytope with $k\geq 5$ vertices that is contained in the unit ball of $\R^n$. Then:
    \begin{align*}
         w(K) & \leq \sqrt{2 \log\left(\frac{k}{\sqrt{2\pi}}\right)}   + \frac1{\sqrt{2 \log\left(\frac{k}{\sqrt{2\pi}}\right)}}.
    \end{align*}
\end{lemma}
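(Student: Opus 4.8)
The plan is to reduce the Gaussian mean width of the polytope to the expected maximum of finitely many centered Gaussian variables, and then to estimate that maximum by a truncation argument together with the sharp (Mills ratio) Gaussian tail bound; the refinement of the crude $\sqrt{2\log k}$ estimate to the stated form is exactly what using the Mills ratio in place of the bare sub-Gaussian tail buys us.

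First I would write $K = \convhull{\vec{v}_1,\dots,\vec{v}_k}$ and use that a linear functional on a polytope attains its maximum on the generating set, so that $\sup_{\h\in K}\langle\g,\h\rangle = \max_{i\in[k]}\langle\g,\vec{v}_i\rangle$ and hence $w(K) = \E\bigl[\max_{i\in[k]}\langle\g,\vec{v}_i\rangle\bigr]$. Since $K\subseteq \Bn[2]{n}$ we have $\norm{\vec{v}_i}_2\leq 1$, so each $X_i\coloneqq\langle\g,\vec{v}_i\rangle$ is a centered Gaussian of variance at most $1$. Next I would bound $\E[\max_i X_i]$: using $\E[Z]\leq\int_0^\infty\Prob(Z>t)\,dt$ (valid for any real random variable $Z$) and splitting the integral at a threshold $t_0>0$, one gets $\E[\max_i X_i]\leq t_0 + \int_{t_0}^\infty\Prob(\max_i X_i>t)\,dt$. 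A union bound gives $\Prob(\max_i X_i>t)\leq\sum_{i=1}^k\Prob(X_i>t)$, and since $\mathrm{Var}(X_i)\leq 1$, the Mills ratio bound $\Prob(\mathcal N(0,1)>t)\leq\tfrac1{t\sqrt{2\pi}}e^{-t^2/2}$ (for $t>0$) yields $\Prob(X_i>t)\leq\tfrac1{t\sqrt{2\pi}}e^{-t^2/2}$. Consequently $\int_{t_0}^\infty\Prob(\max_i X_i>t)\,dt\leq \tfrac{k}{t_0\sqrt{2\pi}}\int_{t_0}^\infty e^{-t^2/2}\,dt$.

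Finally I would set $t_0\coloneqq\sqrt{2\log(k/\sqrt{2\pi})}$, which is well-defined and, crucially, satisfies $t_0\geq 1$ precisely because $k\geq 5>\sqrt{2\pi e}$ — this is the only place the hypothesis $k\geq 5$ is used. For $t_0\geq 1$ one has $\int_{t_0}^\infty e^{-t^2/2}\,dt\leq\int_{t_0}^\infty t\,e^{-t^2/2}\,dt = e^{-t_0^2/2}$, and by the choice of $t_0$, $e^{-t_0^2/2}=\sqrt{2\pi}/k$. Substituting back, the tail integral is at most $\tfrac{k}{t_0\sqrt{2\pi}}\cdot\tfrac{\sqrt{2\pi}}{k}=\tfrac1{t_0}$, so $w(K)\leq t_0+\tfrac1{t_0} =\sqrt{2\log(k/\sqrt{2\pi})}+\tfrac1{\sqrt{2\log(k/\sqrt{2\pi})}}$, as claimed.

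I do not anticipate a genuine obstacle here; the result is essentially the sharpened form of the classical maximal inequality for Gaussians. The only delicate points are ensuring that the threshold $t_0$ is at least $1$ so that the estimate $\int_{t_0}^\infty e^{-t^2/2}\,dt\leq e^{-t_0^2/2}$ is legitimate (which is exactly what $k\geq 5$ provides), and using the sharp $\tfrac1{t\sqrt{2\pi}}e^{-t^2/2}$ tail rather than the cruder $e^{-t^2/2}$, since it is the former that replaces the bare $k$ by $k/\sqrt{2\pi}$ inside the logarithm. A tighter input bound on the Gaussian width of a polytope (cf.\ Remark~\ref{rem:poly}) would propagate straightforwardly through this argument.
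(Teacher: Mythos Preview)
Your proposal is correct and follows essentially the same approach as the paper: both use the layer-cake formula $\E[Z]\leq\int_0^\infty\Prob(Z>t)\,dt$, a union bound over the $k$ vertices, the Mills-ratio Gaussian tail estimate, and the same threshold $t_0=\sqrt{2\log(k/\sqrt{2\pi})}$. Your write-up is in fact slightly cleaner in making explicit that $k\geq 5$ is precisely what guarantees $t_0\geq 1$ (so that $\int_{t_0}^\infty e^{-t^2/2}\,dt\leq e^{-t_0^2/2}$ is valid), which the paper asserts without comment.
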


\begin{proof}
    Denote by $\mathcal{V}(K)$ the set of vertices of $K$.
Since the maximum of the scalar products with points of a convex set is attained on a vertex, we get the following union bound for $x>0$ and $\g \sim \mathcal{N}(\vec{0},\Id)$:
    \begin{align}
        \Prob\left[\sup_{\ub \in K}\langle \ub , \g \rangle \geq x \right] & \leq  \sum_{\vb\in \mathcal{V}(K)}   \Prob\left[\langle \vb , \g \rangle \geq x \right] \notag \\
        & \leq \sum_{\vb \in \mathcal{V}(K)} \tfrac{1}{\sqrt{2\pi}\left\lVert \vb \right\rVert_2} \int_{x}^\infty \exp \left( - \frac{y^2}{2 \left\lVert \vb \right\rVert_2^2}   \right) \mathrm{d}y \\
        & \leq \tfrac{k}{\sqrt{2\pi}} \cdot \int_{x}^\infty \exp \left( - \tfrac{y^2}{2}   \right) \mathrm{d}y. \label{difalphagen}
    \end{align}
    Recall the following standard bound on the tail probability of a Gaussian, which will be used in the remainder of the proof:
    \begin{align*}
        \int_{x}^{\infty} \exp (-y^2/2) \mathrm{d}y & \leq \frac{\exp (-x^2/2)}{x}.
    \end{align*}
    Let $x_0 \coloneqq \sqrt{2 \log\left(\frac{k}{\sqrt{2\pi}}\right)} > 1$ and note that $\frac{k}{\sqrt{2\pi}} \exp \left( - \frac{x_0^2}{2} \right) = 1$. We may now use the bound \eqref{difalphagen} in order to obtain:
    \begin{align*}
    w(K)     & \leq \int_{\R_+}  \Prob\left[\sup_{\ub \in K}\langle \ub ,\g \rangle \geq x \right] \mathrm{d}x \\
               & \leq \int_{\R_+} \min \left\{1, \frac{k}{\sqrt{2\pi}} \int_{x}^\infty \exp \left( - \frac{y^2}{2} \right) \mathrm{d}y \right\}  \mathrm{d}x \\
               & \leq x_0 + \frac{k}{\sqrt{2\pi}}  \int_{x_0}^{\infty}   \exp \left( - \frac{x^2}{2} \right) \mathrm{d}x \\
               & \leq x_0 + \frac{1}{x_0}.
    \end{align*}
\end{proof}

\paragraph{Back to the proof of Proposition \ref{prop:circ}}
Equipped with this lemma, we can now prove Proposition \ref{prop:circ}.

 Let $C = \cone{\x_1,\dots,\x_k} \subset \R^n$ be a $k$-polyhedral $\alpha$-cone and let $\ax \in \Sn{n-1}$ be an axis vector such that $C\subseteq C(\alpha,\ax)$. Without loss of generality assume that $\norm{\x_i}_2 = 1$ for $i\in [k]$. Define the affine hyperplane $\mathcal{H} \coloneqq \left\{ \h \in \R^n : \sp{\h}{\ax} = 1 \right\}$ and let $K \coloneqq C \cap \mathcal{H}$. Observe that $K$ is a convex polyhedron with vertices belonging to the set $\left\{ \x_i / \sp{\x_i}{\ax} : i \in [k] \right\} \subseteq \Bn[2]{n}(\cos (\alpha)^{-1})$. Since $\norm{\vec{k}}_2\geq 1$ for all $\vec{k} \in K$, any $\h \in C \cap \Sn{n-1}$ can be written as $\h = \lambda (\kk) \cdot \kk$, where $\kk \in K$ and $0 < \lambda (\kk) \leq 1$.
  Hence, for all $\g \in \R^n$ it holds true that
 \begin{equation}
  \sup_{\h \in \Sn{n-1} \cap C} \sp{\h}{\g} = \sup_{\kk \in K} \left( \lambda (\kk) \cdot \sp{ \kk}{\g} \right) \leq \max \left\{0, \sup_{\kk \in K} \sp{\kk}{\g} \right\}.
 \end{equation}
 Let $\proj^{\ax}, \proj_{\perp}^{\ax}$ denote the orthogonal projections onto $\spann{(\ax)}$ and $\spann{(\ax)}^\perp$, respectively. Observe that for $\kk \in K$ and $\g \in \R^n$ it holds true that
 \begin{equation}
  \sp{\kk}{\g} = \sp{\ax}{\proj^{\ax}(\g)} + \sp{\proj_{\perp}^{\ax} (\kk)}{\proj_{\perp}^{\ax} (\g) } \leq \max \left\{0,\sp{\ax}{\proj^{\ax}(\g)} \right\} +  \sp{\proj_{\perp}^{\ax} (\kk)}{\proj_{\perp}^{\ax} (\g) },
 \end{equation} 
 where the first equality follows from $\proj^{\ax} (\kk) = \ax$. Furthermore, since $\ax \in C$, we have that $\vec{0} \in \proj_{\perp}^{\ax} (K)$ and hence for all $\g \in \R^n$, 
 \begin{equation}
  \sup_{\kk \in K} \sp{\proj_{\perp}^{\ax} (\kk)}{\proj_{\perp}^{\ax} (\g)} \geq 0.
 \end{equation}  
 This allows us to conclude that
 \begin{equation}
  \sup_{\h \in \Sn{n-1} \cap C} \sp{\h}{\g} \leq \max \left\{0, \sup_{\kk \in K} \sp{\kk}{\g} \right\} \leq \max \left\{0,\sp{\ax}{\proj^{\ax}(\g)} \right\} +  \sup_{\kk \in K} \sp{\proj_{\perp}^{\ax} (\kk)}{\proj_{\perp}^{\ax} (\g) }.
 \end{equation}
 Hence, we obtain
 \begin{align}
   \cmw{C} & = \E_{n} \left[ \sup_{\h \in \Sn{n-1} \cap C} \sp{\h}{\g} \right] \\
    & =  \E_{n} \left[ \max\left\{0,\sp{\ax}{\proj^{\ax}(\g)} \right\} + \sup_{\kk \in K} \sp{\proj_{\perp}^{\ax}(\kk)}{\proj_{\perp}^{\ax}(\g)} \right] \\
    & = \E_1 \left[ \max\left\{0,\sp{\ax}{\proj^{\ax}(\g)} \right\} \right] + \E_{n-1} \left[ \sup_{\kk \in \proj_{\perp}^{\ax}(K)} \sp{\kk}{\proj_{\perp}^{\ax}(\g)} \right] \\
    & = \frac{1}{\sqrt{2\pi}} + w (\proj_{\perp}^{\ax}(K)), \label{eq:ineqtemp}
 \end{align}
 where the last equality follows from $\sp{\ax}{\proj^{\ax}(\g) }\sim \mathcal{N} (0,1)$ and the fact that $\proj_{\perp}^{\ax} (\g)$ is an $(n-1)$-dimensional standard  Gaussian vector on $\spann{(\ax)}^\perp$.

    Since $K\subset \mathcal{H} \cap \Bn[2]{n}(\cos (\alpha)^{-1})$, its $(n-1)$-dimensional projection satisfies $ \proj_{\perp}^{\ax}(K) \subset \Bn[2]{n-1}(\tan (\alpha))$. 
    Now, Lemma \ref{lem:convexpolywidth} yields the following bound on the Gaussian width of a polyhedron included in a ball of radius $\tan \alpha$ with at most $k\geq 5$ vertices:
\begin{equation}
  w (\proj_{\perp}^{\ax}(K)) \leq {\tan \alpha} \cdot \left( \sqrt{2\log (k/\sqrt{2\pi})} + 1/\sqrt{2\log (k/\sqrt{2\pi})} \right).
 \end{equation}
The claimed inequality of Proposition~\ref{prop:circ} is then just a consequence of \eqref{eq:ineqtemp}. 

\subsection{Proofs of Section~\ref{sec:geom_desc}}
\label{sec:proof_geom}

\paragraph{Descent Cone of $\ell^1$-Norm (Lemma~\ref{lem:dc_lone})}
We begin by showing a polyhedral description of the descent cone of the $\ell^1$-norm:

Let $\vec{v}$ be any vector such that  $\|\vec{v}\|_1=s$ and $\sign \vec{v}=\sign \z$. Note that $\vec{v}$ and $\z$ enjoy the same descent cone associated to the $\ell^1$-norm, which is easy to see by observing that 
\begin{equation}
\dc{\norm{\cdot}_1,\z} = \left\{\h \in \R^d : \sum_{i \in \Suppc}  \abs{h_i} \leq - \sum_{i \in \Supp} \sign (z_i) \cdot h_i  \right\}. 
\end{equation}
 Therefore, the descent set of $\|\cdot\|_1$ at $\vec{v}$ can be obtained by scaling up the cross-polytope by the factor $\|\vec{v}\|_1=s$ and shifting it by $-\vec{v}$, i.e., 
$$
 \ds{\norm{\cdot}_1,\vec{v}} = \convhull{\pm s \cdot \vec{e}_i - \vec{v} : i \in [d]}.
$$
We conclude by taking the conic hull of the previous set to obtain
$$
\dc{\norm{\cdot}_1,\z} = \dc{\norm{\cdot}_1,\vec{v}}  = \cone{\pm s \cdot \vec{e}_i - \vec{v} : i \in [d]}.
$$

\paragraph{Lineality of Descent Cone of $\ell^1$-Norm (Lemma~\ref{lem:lineality})}
Next, we describe the lineality space and lineality of $\dc{\norm{\cdot}_1,\z}$:

The lineality space of the descent cone at point $\z$ corresponds to the span of the face of the $\ell^1$-ball of minimal dimension containing $\z$.
It can therefore be defined as the span of the vectors joining $\z$ to the vertices of this face, which are exactly the vectors $\sign(z_i)\cdot\vec{e}_i$. 

For a more formal proof for this fact, one could argue as follows: First note that (see for instance Appendix B in \cite{amelunxen2014edge})
\begin{equation}
 \dc{\norm{\cdot}_1,\z}^\circ = \bigcup_{\tau\geq 0} \tau \cdot \partial \norm{\z}_1.
\end{equation}
Since $\partial \norm{\z}_1 = \left\{ \h \in \R^d : \h_{\Supp} = \sign (\z)_{\Supp}, \h_{\Suppc} \in [-1,1]^{d-s}\right\}$, it follows that the polar cone is closed, pointed (i.e., $\dc{\norm{\cdot}_1,\z}^\circ \cap - \dc{\norm{\cdot}_1,\z}^\circ = \left\{ \vec{0} \right\}$) and therefore finitely generated by its extreme rays
\begin{equation}
 \dc{\norm{\cdot}_1,\z}^\circ = \cone{\z^j \in \R^d : j \in [2^{d-s}]},
\end{equation}
where $\z^j_{\Supp} = \sign (\z)_{\Supp}$ and on $\Suppc$ all $2^{d-s}$ combinations $\z^j_{\Suppc} = \left\{-1,1\right\}^{d-s}$. 
Hence, we obtain the following polyhedral description for the descent cone
\begin{equation}
 \dc{\norm{\cdot}_1,\z} = \left\{ \h \in \R^d : \sp{\h}{\z^j} \leq 0 \mbox{ for all } j \in [2^{d-s}] \right\}.
\end{equation}
Using the matrix $\vec{B} \coloneqq \left[\z^1, \dots, \z^{2^{d-s}}\right]^T \in \R^{2^{d-s}\times d}$, the lineality space can then be conveniently expressed as $L_{\dc{\norm{\cdot}_1,\z}} = \ker (\vec{B})$.

On the other hand, observe that for any $\h \in L_{\dc{\norm{\cdot}_1,\z}}$, we can find $\tau>0$ such that $\norm{\z + \tau \cdot \h}_1 \leq \norm{z}_1$
and therefore (by choosing $\tau>0$ small enough)
\begin{equation}
 \sum_{j\in\Supp} \sign (z_j) \cdot (z_j + \tau \cdot h_j) + \sum_{i\in\Suppc} |h_i| \leq \sum_{j \in \Supp} |z_j|.
\end{equation}
Similarly, since also $-\h \in \dc{\norm{\cdot}_1,\z}$, we obtain (again by choosing a small enough $\tau>0$)
\begin{equation}
 \sum_{j\in\Supp} \sign (z_j) \cdot (z_j - \tau \cdot h_j) + \sum_{i\in\Suppc} |h_i| \leq \sum_{j \in \Supp} |z_j|.
\end{equation}
Adding up these two inequalities, we obtain that $\sum_{i\in\Suppc} |h_i| \leq 0$ and hence $h_i=0$ for all $i \in \Suppc$.

Combining this fact with the previous observation, we obtain that
\begin{equation}
 L_{\dc{\norm{\cdot}_1,\z}} = \left\{ \h \in \R^d : \h_{\Suppc} = \vec{0}, \sp{\sign (\z)}{\h} = 0 \right\},
\end{equation}
which is of dimension $s-1$. From this description, we can conclude that for each $i \in \Supp$ the vector $ s \cdot \sign (z_i) \cdot \vec{e}_i - \sign (\z) $ is contained in the later space. Hence, if we can show that 
\begin{equation}
\dim \left(\spann \left(  s \cdot \sign (z_i) \cdot \vec{e}_i -\sign (\z)  : i \in \Supp \right) \right) = s-1,
\end{equation}
we have succeeded in proving the lemma. 
Indeed, consider the matrix $\vec{C} \in \R^{s\times s-1},$ where the columns are formed by $\left(s \cdot \sign (z_i) \cdot \vec{e}_i - \sign (\z) \right)_{\Supp},$ for each $i\in \Supp$, except for one. Then, the matrix $\vec{C}^T \cdot \vec{C} \in \R^{s-1 \times s-1}$ has the value $s^2-s$ on its diagonal and $-s$ everywhere else. Thus it is strictly diagonal dominant and invertible, implying that $\vec{C}$ is of full rank, as desired.

\paragraph{Lineality and Range for Gauge (Proposition~\ref{prop:lindc})}
Lastly, we characterize the range and lineality of $\dc{p_{\Dict\cdot\Bn[1]{d}},\gt}$:

First, observe that a combination of Lemma \ref{lem:dc} and  Lemma \ref{lem:dc_lone} yields that
\begin{align}
\dc{p_{\Dict\cdot\Bn[1]{d}},\gt} &= \Dict \cdot \dc{\norm{\cdot}_1,\zl}  \\
 & = \Dict  \cdot \cone{\pm \bar{s} \cdot \vec{e}_i - \sign (\zl) : i \in [d]} \\
 & = \cone{\pm \bar{s} \cdot \dict_i - \Dict\sign (\zl) : i \in [d]}.
\end{align}
By Lemma~\ref{lem:lineality}, we know how to characterize the lineality of $\dc{\norm{\cdot}_1,\zl}$. Note that for any convex set $C \subseteq \R^d$, it holds true that $\left( {\Dict \cdot C} \right)_L \supseteq \Dict \cdot C_L$, however, the reverse inclusion is not satisfied, in general. Hence, Lemma~\ref{lem:dc} immediately implies $\left( {\dc{p_{\Dict\cdot\Bn[1]{d}},\gt}} \right)_L \supseteq \Dict \cdot \left( {\dc{\norm{\cdot}_1,\zl}} \right)_L$.
 For proving the reverse inclusion $\left( {\dc{p_{\Dict\cdot\Bn[1]{d}},\gt}} \right)_L \subseteq \Dict \cdot \left( {\dc{\norm{\cdot}_1,\zl}} \right)_L$,
  we will now show that if ${ \left( {\dc{p_{\Dict\cdot\Bn[1]{d}},\gt}} \right)_L \not\subseteq \Dict \cdot \left( {\dc{\norm{\cdot}_1,\zl}} \right)_L}$, 
  then $\zl$ did not have maximal support. To that end, pick any vector $\x\in \left( {\dc{p_{\Dict\cdot\Bn[1]{d}},\gt}} \right)_L \setminus \Dict \cdot \left( {\dc{\norm{\cdot}_1,\zl}} \right)_L$ and write $\x = \Dict \cdot \zt$, where $\zt \in \dc{\norm{\cdot}_1,\zl} \setminus \left( {\dc{\norm{\cdot}_1,\zl}} \right)_L$. Since $\x \in \left( {\dc{p_{\Dict\cdot\Bn[1]{d}},\gt}} \right)_L$, we can also chose a $\zo \in \dc{\norm{\cdot}_1,\zl} \setminus \left( {\dc{\norm{\cdot}_1,\zl}} \right)_L$ with $-\x = \Dict \cdot \zo$. Due to $\zi \not\in \left( {\dc{\norm{\cdot}_1,\zl}} \right)_L$ for $i=1,2$, we have that for all $\varepsilon>0$ 
 \begin{equation}
 \label{eq:minus}
  \norm{\zl - \varepsilon \cdot \zi}_1 > \norm{\zl}_1,
 \end{equation}
 however, there exists a small enough $\varepsilon>0$ such that 
 \begin{equation}
 \label{eq:plus}
  \norm{\zl + \varepsilon \cdot \zi}_1 \leq \norm{\zl}_1.
 \end{equation}
For small enough $\varepsilon>0$, inequality \eqref{eq:minus} implies that 
\begin{equation}
 \sum_{j \in \bar{\Supp}} \sign (z_{\ell^1,j}) \cdot  z^i_j - \sum_{j \in \bar{\Supp}^c} |z^i_j| < 0,
\end{equation}
whereas \eqref{eq:plus} means that
\begin{equation}
 \sum_{j \in \bar{\Supp}}\sign (z_{\ell^1,j}) \cdot  z^i_j + \sum_{j \in \bar{\Supp}^c} |z^i_j| \leq 0.
\end{equation}
Summing up the previous two inequalities, we obtain that $\sum_{j \in \bar{\Supp}} \sign (z_{\ell^1,j}) \cdot (z^1_j + z^2_j) < 0$.
Now, define $\z^{\delta} \coloneqq \zl + \delta \cdot (\zt + \zo)$ and observe that for all $\delta>0$ it holds true that $\x = \Dict \cdot \z^{\delta}$. Furthermore, for a small enough $\delta>0$, we have that $\norm{\z^{\delta}}_1 \leq \norm{\zl}_1$. Hence, we can conclude that $\z^{\delta} \in \Zlset$ and therefore even $\norm{\z^{\delta}}_1 = \norm{\zl}_1$. If $\delta>0$ is chosen small enough, this allows us to write
\begin{equation}
  \norm{\z^{\delta}}_1 = \norm{\zl}_1  + \delta \cdot \sum_{j \in \bar{\Supp}} \sign (z_{\ell^1,j}) \cdot (z^1_j + z^2_j) + \delta \cdot \sum_{j \in \bar{\Supp}^c} |z^1_j + z^2_j|,
\end{equation}
and we can conclude that $\sum_{j \in \bar{\Supp}^c} |z^1_j + z^2_j|>0$. However, this means that there is at least one $j \in \bar{\Supp}^c$ such that $\z^{\delta}_j \neq 0$, which shows that $\z$ was indeed not maximal. 
Finally,  Lemma~\ref{lem:lineality} implies that
\begin{equation}
 \dim \left( \left( {\dc{p_{\Dict\cdot\Bn[1]{d}},\gt}} \right)_L \right) = \dim \left( \left( {\dc{\norm{\cdot}_1,\zl}} \right)_L \right) - \dim \left( \ker \Dict_{|\left( {\dc{\norm{\cdot}_1,\zl}} \right)_L } \right) \leq \bar{s}-1,
\end{equation} 
which concludes the proof of first part of the proposition concerning the lineality of $\dc{p_{\Dict\cdot\Bn[1]{d}},\gt}$.

The characterization of the range follows easily. Indeed, let $i\in \Supp$ and consider the vector $\vec{r}_i^- = -\bar{s}\cdot \sign (z_{\ell^1,i}) \cdot \dict_i -\Dict \cdot \sign (\zl)$. Observe that we can write $\vec{r}_i^- = -2 \cdot \Dict \cdot \sign (\zl) - \vec{r}_i^+$. Hence, for any $j\in \bar{\Supp}^c \neq \emptyset$ we obtain that
\begin{equation}
 P_{C_L^\perp} (\vec{r}_i^-) = -2\cdot P_{C_L^\perp} (\Dict \cdot \sign (\zl)) =  \vec{r}_j^{+\perp} +  \vec{r}_j^{-\perp}.
\end{equation}
Thus, $P_{C_L^\perp} (\vec{r}_i^-) \in \mathrm{cone}(\r_j^{\pm \perp}, j\in \bar \Supp^c)$, which concludes the proof.

\subsection{Proof of Theorem~\ref{thm:decomp}}
\label{sec:proof_decomp}

 Let $C=\dc{p_{\Dict \cdot \Bn[1]{d}},\gt}$ and use the orthogonal decomposition provided in Proposition \ref{prop:lindc}:
 \begin{equation}
  \dc{p_{\Dict\cdot\Bn[1]{d}},\gt} = C_L \oplus C_R.
 \end{equation}
 This allows us to estimate
 \begin{equation}
  \cmw[2]{C} \stackrel{(1)}{\leq} \delta (C) \stackrel{(2)}{\leq} \delta (C_L) + \delta (C_R) \stackrel{(3)}{\leq}  \dim (C_L) +  \cmw[2]{C_R} + 1, 
  \label{eq:intermed}
  \end{equation}
  where $\delta$ denotes the \emph{statistical dimension}; see proof of Proposition~\ref{prop:conditiongaussian} in Appendix~\ref{sec:proof_conditiongaussian} for further details on this notion and a justification of $(1)$. Using the statistical dimension as a summary parameter for convex cones brings several advantages. For a direct sum $C_1 \oplus C_2$ of two closed convex cones $C_1,C_2\subseteq \R^n$ it holds true that $\delta (C_1 \oplus C_2) = \delta (C_1) + \delta (C_2)$, explaining $(2)$ in the previous inequalities. Furthermore, for a subspace $C_L \subseteq \R^n$ we have that $\delta (C_L) = \dim (C_L)$, which, together with $\delta (C_R) \leq \cmw[2]{C_R} +1$, justifies $(3)$.  Observe that the estimate of \eqref{eq:intermed} is essentially tight. 
   
   Proposition~\ref{prop:lindc} allows to upper bound $\dim (C_L) + 1$ by $\bar{s}$. The statement then follows by applying Proposition \ref{prop:circ} to the $2(d-\bar{s})$-polyhedral $\alpha$-cone $C_R$. 

\subsection{Proof of Proposition \ref{prop:coherence_bound} (Coherence Bound)}
\label{sec:coherence_bound}

  First, observe that we have
  \begin{align}
	 \tan^2 \left( \angle(\vec{a},\vec{a} + \vec{b}) \right)= \frac{\|\vec{a}\times(\vec{a}+\vec{b})\|_2^2}{\langle \vec{a},\vec{a}+\vec{b}\rangle^2}
	 =\frac{\norm{\vec{a}}_2^2\norm{\vec{b}}_2^2 - \sp{\vec{a}}{\vec{b}}^2}{\left( \norm{\vec{a}}_2^2 +   \sp{\vec{a}}{\vec{b}} \right)^2}&\leq  \frac{\norm{\vec{a}}_2^2\norm{\vec{b}}_2^2}{\left( \norm{\vec{a}}_2^2 +   \sp{\vec{a}}{\vec{b}} \right)^2}, \label{eq:tanalpha}
  \end{align}
  where $\vec{a},\vec{b}\in \R^n$ with $\vec{a}\neq 0$ and $\vec{a}+\vec{b}\neq 0$. 

    Obsere that the assumptions of Proposition~\ref{prop:lindc} are satisfied. Indeed, $s < \tfrac{1}{2}(1+ \mu^{-1}(\Dict))$ guarantees that $\zl$ is the unique minimal $\ell^1$-representer of the associated signal $\Dict \zl$ and that $\Dict \zl \neq \vec{0}$~\cite{Donoho2003,1255564}.  Hence, we want to evaluate the circumangle of the cone generated by the vectors $\r_j^{\pm\perp}=\proj_{C_L^\perp}(\pm s \cdot \dict_j -\Dict \sign(\zl))$ for $j\in S^c$, where $\Supp=\supp(\zl)$. As a proxy for the circumcenter, we can consider the vector $\vec{v} = - \proj_{C_L^\perp}(\Dict \sign(\zl))$ and therefore obtain:
    \begin{align*}
     \tan^2 \alpha\leq \sup_{j\in S^c} \tan^2(\angle(\vec{v},\r_j^{\pm\perp})) =  \sup_{j\in S^c}\tan^2\left(\angle(\vec{v},\vec{v} + \proj_{C_L^\perp}(s \cdot \dict_j))\right).
    \end{align*}
  We can now use the inequality \eqref{eq:tanalpha} with $\vec{a}=\vec{v}$ and $\vec{b}=s\cdot\proj_{C_L^\perp}(\dict_j)$; note that $\vec{v} \neq \vec{0}$, since otherwise we would have $\Dict \zl = \vec{0}$. The expression \eqref{eq:tanalpha} is decreasing w.r.t.~$\norm{\vec{a}}_2^2$. Hence, we shall find a lower bound for $\norm{\vec{v}}_2^2$. The projection $\proj_{C_L^\perp}(\Dict \sign(\zl))$ can be written as $\Dict \sign(\zl)+\vec{w}$ for some vector $\vec{w}\in C_L$. According to the characterization of the lineality space $C_L$ in Proposition~\ref{prop:lindc}, this amounts to saying that
  \begin{align}
  \label{eq:chara}
   \proj_{C_L^\perp}(\Dict \sign(\zl)) &= \sum_{i\in\Supp} c_i \cdot  \sign (z_{\ell^1,i}) \cdot \dict_i, \, \mbox{ with } \, \sum_{i\in \Supp}c_i=s.
  \end{align}
  This yields 
  \begin{align}
   \norm{\vec{v}}_2^2 &\geq \inf_{\vec{c}\in \R^s, \sum_{i\in \Supp} c_i=s} \norm{\sum_{i\in \Supp} c_i  \sign (z_{\ell^1,i}) \dict_i}_2^2
   \\
   & = \inf_{\vec{c}\in \R^s, \sum_{i\in \Supp} c_i=s} \|\vec{c}\|_2^2 + \sum_{i\in \Supp} \sum_{j\in \Supp, j\neq i} c_ic_j\langle \sign(z_{\ell^1,i}) \dict_i,\sign(z_{\ell^1,j}) \dict_j\rangle\\
   &\geq \inf_{\vec{c}\in \R^s, \sum_{i\in \Supp} c_i=s} \|\vec{c}\|_2^2 - \mu \sum_{i\in \Supp} \sum_{j\in \Supp, j\neq i} c_ic_j.  \label{eq:boundonnormv}
  \end{align}
The optimality conditions for this program yield the existence of a Lagrange multiplier $\lambda\in \R$ such that 
$c_i-\mu \sum_{j\neq i} c_j + \lambda=0$ and $\sum_{i\in \Supp} c_i=s$, i.e., $c_i=1$ for all $i\in \Supp$.  Plugging this expression in \eqref{eq:boundonnormv}, we obtain that 
\begin{align*}
\norm{\vec{v}}_2^2 \geq s-\mu(s\cdot(s-1)) \geq s (1-\mu s).
\end{align*}
Together with the following inequalities:
    \begin{align}
        |\langle \vec{a},\vec{b}\rangle| & =s\left\lvert\sp{\vec{v}}{\proj_{C_L^\perp}\dict_j}\right \rvert \leq s\left\lvert \sp{\vec{v}}{\dict_j} \right\rvert \stackrel{\eqref{eq:chara}}{\leq} s^2 \sup_{i\neq j} |\sp{\dict_i}{\dict_j}| =  s^2 \mu, \label{psvd} \\
        \norm{\vec{b}}_2^2 &= s^2\norm{\proj_{C_L^\perp} \dict_j}_2^2  \leq s^2\norm{\dict_j}_2^2 = s^2, \label{normd}
    \end{align}
    we obtain the desired bound
    \begin{align*}
        \tan^2 \alpha & \leq  \frac{\norm{\vec{a}}_2^2\norm{\vec{b}}_2^2}{\left( \norm{\vec{a}}_2^2 +   \sp{\vec{a}}{\vec{b}} \right)^2} \leq \frac{s(1-\mu s)\cdot s^2}{(s(1-\mu s)-s^2\mu)^2} = \frac{s(1-\mu s)}{(1-2\mu s)^2}.
    \end{align*}    

\section{Details on Numerical Experiments}
\label{sec:num_details}
In this subsection, we report on the setup that we have used in all our numerical experiments. 

\paragraph{Phase Transition Plots}

While our results encompass the more general class of subgaussian measurements, we only consider the benchmark of Gaussian matrices, as it is typically done in the compressed sensing literature. When illustrating the performance of results such as Theorem~\ref{thm:coeff}, we only report the quantity $\cmw{\Dict \cdot \ds{\Onenorm;\gtz}}$, ignoring for instance the probability parameter $u$, cf.~\cite{amelunxen2014edge}.   

\paragraph{Some Details on Computations}

Unless stated otherwise, we solve the convex recovery programs such as~\eqref{eq:coef} or \eqref{eq:minl1} using the \texttt{Matlab} toolbox \texttt{cvx} \cite{cvx1,cvx2}. We employ the default settings and set the precision to \texttt{best}. For creating phase transitions, a solution $\sol$ is considered to be ``perfectly recovered'' if the error to the ground truth vector $\gt$ satisfies $\norm{\gt - \sol}_2 \leq 10^{-5}$. This threshold produces stable transitions and seems to reflect the numerical accuracy of \texttt{cvx}.

\paragraph{Computing the Statistical Dimension}

When analyzing the sampling rate predictions of our results, we often report the conic mean width $\cmw[2]{C} = w (C \cap \mathcal{S}^{n-1} )$ of a convex cone $C \subseteq \R^n$. We will now briefly sketch how this quantity is numerically approximated: First recall that the conic mean width is essentially equivalent to the statistical dimension $\delta (C) = \E [\sup_{\x \in C \cap \Bn[2]{n}} \norm{\Pi_C(g)}_2^2]$; cf.~the proof of Proposition \ref{prop:conditiongaussian} in Appendix~\ref{sec:proof_conditiongaussian}. Due to the convexity of $C \cap \Bn[2]{n}$, the statistical dimension is preferred over the conic mean width for numerical simulations. In order to obtain an approximation of $\delta (C)$, we draw $k$ independent samples $\g_1,\dots,\g_k \sim \mathcal{N}(\vec{0},\Id)$ and for each of them we evaluate the projection $\Pi_C(\g_i)$ using quadratic programming. Due to a concentration phenomenon of empirical Gaussian processes, the arithmetic mean over $k=300$ samples yields tight estimates of $\delta (C)$. 

\paragraph{Minimal Conic Singular Values}

As already mentioned computing $\lmin{\Dict}{\dc{\Onenorm,\zl}}$ is out of reach in general. 
In our numerical experiments on coefficient recovery, we nevertheless provide empirical upper bounds on $\lmin{\Dict}{\dc{\Onenorm,\zl}}$. Those are obtained as follows: Let $\gt = \Dict\cdot \zl$ and consider the perturbed $\gtn = \gt + \hat{\noise}$, where $\hat{\noise} \in \R^n$ such that $\norm{\hat{\noise}}_2 \leq \hat{\noiseparam}$. We then define $\solz\in\R^d$ as a solution of the program
\begin{equation}
 \min_{\z \in \R^d} \norm{\z}_1 \quad \mbox{ s.t. } \quad \norm{\gtn - \Dict \z}_2 \leq \hat{\noiseparam}.
\end{equation}
Proposition~\ref{prop:recover} then implies that $\norm{\zl - \solz}_2 \leq 2\hat{\noiseparam} / \lmin{\Dict}{\dc{\Onenorm,\zl}}$. Rearranging the terms in the previous inequality then yields an upper bound for $\lmin{\Dict}{\dc{\Onenorm,\zl}}$. Note thereby that a clever choice of the perturbation $\hat{\noise}$ may result in a tighter bound.

\paragraph{Computing the Circumcenter and the Circumangle}

Computing the circumcenter amounts to solving:
\begin{equation}\label{eq:convex_circumcenter}
\ax \in \argmin_{\vec{v} \in \Bn[2]{n}} \max_{i \in [k]} \langle -\vec{v},\x_i\rangle, 
\end{equation}
where the vectors $\x_i$ are the normalized generators of a nontrivial pointed polyhedral cone; see Proposition~\ref{prop:simple_circumcenter}. This problem is closely related to the so-called \emph{smallest bounding sphere problem} \cite{sylvester1857question}, which has a long and rich history.

Let $g(\vec{v})=\max_{i \in [k]} \langle -\vec{v}, \x_i\rangle$ and $I(\vec{v})$ denote the set of active indices $i$, i.e., the indices satisfying $g(\vec{v})=-\langle  \vec{v},\x_i\rangle$. Then standard convex analysis results state that $\partial g(\vec{v})=\mathrm{conv}(-\x_i, i \in I(\vec{v}))$ and the optimality conditions read 
\begin{equation}
\ax\in \mathrm{conv}(\x_i, i \in I(\ax)) \qquad \mbox{ with } \qquad  \norm{\ax}_2=1, 
\end{equation}
i.e., the normal cone $\{-\ax\}$ to the constraint set should intersect the subdifferential $\partial g(\ax)$.

Problem \eqref{eq:convex_circumcenter} can be solved globally with projected subgradient descents or second order cone programming techniques available in \emph{CVX}.

\end{document}